\newtheorem*{imptremark*}{Important remark}
\newtheorem{fact}[theorem]{Fact}
\title{New Separation Algorithms for Level 1 in Group-based Concatenation Hierarchies}
\title{A generic polynomial time approach to separation by first-order logic without quantifier alternation}
\titlerunning{A generic polynomial time approach to separation by alternation-free first-order logic} 
\author{Thomas Place}{LaBRI, Bordeaux University, France \and \url{http://www.labri.fr/perso/tplace}}{tplace@labri.fr}{}{}
\author{Marc Zeitoun}{LaBRI, Bordeaux University, France \and \url{http://www.labri.fr/perso/zeitoun}}{mz@labri.fr}{}{}
\authorrunning{T. Place and M. Zeitoun} 
\keywords{Automata, Separation, Covering, Concatenation hierarchies, Group languages} 
\begin{document}

\newcommand{\veps}{\ensuremath{\varepsilon}\xspace}
\newcommand{\inv}{^{-1}}
\newcommand{\nat}{\ensuremath{\mathbb{N}}\xspace}
\newcommand{\frB}{\ensuremath{\mathbb{B}}\xspace}
\newcommand{\frP}{\ensuremath{\mathbb{P}}\xspace}
\newcommand{\fri}{\ensuremath{\mathbbm{i}}\xspace}
\newcommand{\frS}{\ensuremath{\mathbbm{S}}\xspace}

\newcommand{\prefsig}[1]{\ensuremath{\mathbb{P}_{#1}}\xspace}


\newcommand{\As}{\ensuremath{\mathcal{A}}\xspace}
\newcommand{\Bs}{\ensuremath{\mathcal{B}}\xspace}
\newcommand{\Cs}{\ensuremath{\mathcal{C}}\xspace}
\newcommand{\Ds}{\ensuremath{\mathcal{D}}\xspace}
\newcommand{\Es}{\ensuremath{\mathcal{E}}\xspace}
\newcommand{\Fs}{\ensuremath{\mathcal{F}}\xspace}
\newcommand{\Gs}{\ensuremath{\mathcal{G}}\xspace}
\newcommand{\Hs}{\ensuremath{\mathcal{H}}\xspace}
\newcommand{\Is}{\ensuremath{\mathcal{I}}\xspace}
\newcommand{\Js}{\ensuremath{\mathcal{J}}\xspace}
\newcommand{\Ks}{\ensuremath{\mathcal{K}}\xspace}
\newcommand{\Ls}{\ensuremath{\mathcal{L}}\xspace}
\newcommand{\Ms}{\ensuremath{\mathcal{M}}\xspace}
\newcommand{\Ns}{\ensuremath{\mathcal{N}}\xspace}
\newcommand{\Os}{\ensuremath{\mathcal{O}}\xspace}
\newcommand{\Ps}{\ensuremath{\mathcal{P}}\xspace}
\newcommand{\Qs}{\ensuremath{\mathcal{Q}}\xspace}
\newcommand{\Rs}{\ensuremath{\mathcal{R}}\xspace}
\newcommand{\Ss}{\ensuremath{\mathcal{S}}\xspace}
\newcommand{\Ts}{\ensuremath{\mathcal{T}}\xspace}
\newcommand{\Us}{\ensuremath{\mathcal{U}}\xspace}
\newcommand{\Vs}{\ensuremath{\mathcal{V}}\xspace}
\newcommand{\Ws}{\ensuremath{\mathcal{W}}\xspace}
\newcommand{\Xs}{\ensuremath{\mathcal{X}}\xspace}
\newcommand{\Ys}{\ensuremath{\mathcal{Y}}\xspace}
\newcommand{\Zs}{\ensuremath{\mathcal{Z}}\xspace}

\newcommand{\Ab}{\ensuremath{\mathbf{A}}\xspace}
\newcommand{\Bb}{\ensuremath{\mathbf{B}}\xspace}
\newcommand{\Cb}{\ensuremath{\mathbf{C}}\xspace}
\newcommand{\Db}{\ensuremath{\mathbf{D}}\xspace}
\newcommand{\Eb}{\ensuremath{\mathbf{E}}\xspace}
\newcommand{\Fb}{\ensuremath{\mathbf{F}}\xspace}
\newcommand{\Gb}{\ensuremath{\mathbf{G}}\xspace}
\newcommand{\Hb}{\ensuremath{\mathbf{H}}\xspace}
\newcommand{\Ib}{\ensuremath{\mathbf{I}}\xspace}
\newcommand{\Jb}{\ensuremath{\mathbf{J}}\xspace}
\newcommand{\Kb}{\ensuremath{\mathbf{K}}\xspace}
\newcommand{\Lb}{\ensuremath{\mathbf{L}}\xspace}
\newcommand{\Mb}{\ensuremath{\mathbf{M}}\xspace}
\newcommand{\Nb}{\ensuremath{\mathbf{N}}\xspace}
\newcommand{\Ob}{\ensuremath{\mathbf{O}}\xspace}
\newcommand{\Pb}{\ensuremath{\mathbf{P}}\xspace}
\newcommand{\Qb}{\ensuremath{\mathbf{Q}}\xspace}
\newcommand{\Rb}{\ensuremath{\mathbf{R}}\xspace}
\newcommand{\Sb}{\ensuremath{\mathbf{S}}\xspace}
\newcommand{\Tb}{\ensuremath{\mathbf{T}}\xspace}
\newcommand{\Ub}{\ensuremath{\mathbf{U}}\xspace}
\newcommand{\Vb}{\ensuremath{\mathbf{V}}\xspace}
\newcommand{\Wb}{\ensuremath{\mathbf{W}}\xspace}
\newcommand{\Xb}{\ensuremath{\mathbf{X}}\xspace}
\newcommand{\Yb}{\ensuremath{\mathbf{Y}}\xspace}
\newcommand{\Zb}{\ensuremath{\mathbf{Z}}\xspace}


\newcommand{\wsuit}{well-suited\xspace}
\newcommand{\Wsuit}{Well-suited\xspace}

\newcommand{\vari}{prevariety\xspace}
\newcommand{\Vari}{Prevariety\xspace}
\newcommand{\varis}{prevarieties\xspace}
\newcommand{\Varis}{Prevarieties\xspace}

\newcommand{\pvari}{positive prevariety\xspace}
\newcommand{\pVari}{Positive prevariety\xspace}
\newcommand{\pvaris}{positive prevarieties\xspace}
\newcommand{\pVaris}{Positive prevarieties\xspace}



\newcommand{\ptime}{\textup{\sc P}\xspace}
\newcommand{\nptime}{\textup{\sc NP}\xspace}
\newcommand{\conptime}{\textup{co-{\sc NP}}\xspace}
\newcommand{\poracle}{\ensuremath{{\text{\sc P}}^{{\text{\sc NP}}}}\xspace}
\newcommand{\exptime}{\textup{\sc EXPTIME}\xspace}
\newcommand{\pexptime}[1]{\textup{#1-}{\sc EXPTIME}\xspace}
\newcommand{\nexptime}{\textup{\sc NEXPTIME}\xspace}
\newcommand{\conexptime}{\textup{co-{\sc NEXPTIME}}\xspace}
\newcommand{\pnexptime}[1]{\textup{#1-}{\sc NEXPTIME}\xspace}
\newcommand{\exporacle}{\ensuremath{{\text{\sc EXPTIME}}^{{\text{\sc NEXPTIME}}}}\xspace}
\newcommand{\pexporacle}[1]{\ensuremath{{\textup{#1-}\text{\sc EXPTIME}}^{\textup{#1-}{\text{\sc NEXPTIME}}}}\xspace}

\newcommand{\logspace}{\textup{\sc LOGSPACE}\xspace}
\newcommand{\nl}{\textup{{\sc NL}}\xspace}
\newcommand{\conl}{\textup{co-{\sc NL}}\xspace}
\newcommand{\pspace}{\textup{\sc PSPACE}\xspace}
\newcommand{\npspace}{\textup{\sc NPSPACE}\xspace}
\newcommand{\expspace}{\textup{\sc EXPSPACE}\xspace}
\newcommand{\pexpspace}[1]{\textup{#1-}{\sc EXPSPACE}\xspace}


\newcommand{\nfa}{{{NFA}}\xspace}
\newcommand{\pnfa}{pre-NFA\xspace}
\newcommand{\dfa}{{{DFA}}\xspace}

\newcommand{\nfas}{{\textup{NFAs}}\xspace}
\newcommand{\pnfas}{{\textup{pre-NFAs}}\xspace}
\newcommand{\dfas}{{\textup{DFAs}}\xspace}

\newcommand{\nfaps}{{\textup{NFA(s)}}\xspace}
\newcommand{\pnfaps}{{\textup{pre-NFA(s)}}\xspace}
\newcommand{\dfaps}{{\textup{DFA(s)}}\xspace}

\newcommand{\lauto}[3]{\ensuremath{L_{#1}(#2,#3)}\xspace}
\newcommand{\alauto}[2]{\lauto{\As}{#1}{#2}}


\newcommand{\dnsep}[2]{\ensuremath{\Is_{#1}[\Ds,#2]}\xspace}
\newcommand{\dnsepca}{\dnsep{\Cs}{\As}}
\newcommand{\gnsep}[2]{\ensuremath{\Is_{#1}[\Gs,#2]}\xspace}
\newcommand{\gnsepca}{\gnsep{\Cs}{\As}}
\newcommand{\gnsepda}{\gnsep{\Ds}{\As}}
\newcommand{\gnsepbpa}{\gnsep{\bpol{\Ds}}{\As}}

\newcommand{\dclos}{\ensuremath{\mathord{\downarrow}}\xspace}
\newcommand{\uclos}{\ensuremath{\mathord{\uparrow}}\xspace}

\newcommand{\dclosp}[1]{\ensuremath{\mathord{\downarrow_{#1}}}\xspace}
\newcommand{\dclosr}{\dclosp{R}}
\newcommand{\dclosq}{\dclosp{Q}}

\newcommand{\imprint}{imprint\xspace}
\newcommand{\imprints}{imprints\xspace}
\newcommand{\Imprint}{Imprint\xspace}
\newcommand{\Imprints}{Imprints\xspace}

\newcommand{\iden}{\veps-approximation\xspace}
\newcommand{\idens}{\veps-approximations\xspace}
\newcommand{\Iden}{\veps-Approximation\xspace}
\newcommand{\Idens}{\veps-Approximations\xspace}

\newcommand{\tame}{multiplicative\xspace}
\newcommand{\Tame}{Multiplicative\xspace}
\newcommand{\wtame}{weak multiplicative\xspace}

\newcommand{\Ratms}{Rating maps\xspace}
\newcommand{\Ratm}{Rating map\xspace}
\newcommand{\ratms}{rating maps\xspace}
\newcommand{\ratm}{rating map\xspace}
\newcommand{\Rata}{Rating algebra\xspace}
\newcommand{\Ratas}{Rating algebras\xspace}
\newcommand{\rata}{rating algebra\xspace}
\newcommand{\ratas}{rating algebras\xspace}

\newcommand{\Nice}{Nice\xspace}
\newcommand{\nice}{nice\xspace}
\newcommand{\effective}{effective\xspace}

\newcommand{\mratm}{multiplicative rating map\xspace}
\newcommand{\mratms}{multiplicative rating maps\xspace}
\newcommand{\Mratm}{Multiplicative rating map\xspace}
\newcommand{\Mratms}{Multiplicative rating maps\xspace}

\newcommand{\pratm}{powerset rating map\xspace}
\newcommand{\pratms}{powerset rating map\xspace}
\newcommand{\Pratm}{Powerset rating map\xspace}
\newcommand{\Pratms}{Powerset rating map\xspace}

\newcommand{\wmratm}{weak multiplicative rating map\xspace}
\newcommand{\wmratms}{weak multiplicative rating maps\xspace}
\newcommand{\Wmratm}{Weak multiplicative rating map\xspace}
\newcommand{\Wmratms}{Weak multiplicative rating maps\xspace}

\newcommand{\gfil}[2]{\ensuremath{\tau^{#2}_{#1}}\xspace}
\newcommand{\gfiln}{\gfil{\Lb,\Ds}{n}}
\newcommand{\gfilp}{\gfil{\Lb}{n}\xspace}
\newcommand{\gfilpp}[1]{\gfil{\Lb}{#1}\xspace}

\newcommand{\fixg}[3]{\ensuremath{Red_{#1}^{#2}(#3)}\xspace}
\newcommand{\fixgln}[1]{\fixg{\alpha}{n}{#1}}

\newcommand{\sdrat}{\ensuremath{\zeta_{\alpha,\rho}}\xspace}
\newcommand{\drat}[1]{\ensuremath{\zeta^{#1}_{\alpha,\rho}}\xspace}
\newcommand{\dratp}[2]{\ensuremath{\zeta^{#1}_{#2}}\xspace}

\newcommand{\sgrat}{\ensuremath{\xi_{\rho}}\xspace}
\newcommand{\grat}[1]{\ensuremath{\xi^{#1}_{\rho}}\xspace}

\newcommand{\quasi}[1]{\ensuremath{\mu_{#1}}\xspace}
\newcommand{\quasir}{\quasi{\rho}}
\newcommand{\quasit}{\quasi{\tau}}
\newcommand{\quasig}{\quasi{\gamma}}

\newcommand{\brataux}[2]{\ensuremath{\xi_{#1}[#2]}\xspace}
\newcommand{\bratauxc}{\brataux{\Cs}{\rho}}
\newcommand{\bratauxd}{\brataux{\Ds}{\rho}}
\newcommand{\bratauxbd}{\brataux{\bool{\Ds}}{\rho}}
\newcommand{\bratauxbc}{\brataux{\bpol{\Cs}}{\rho}}
\newcommand{\bratauxsfc}{\brataux{\sfp{\Cs}}{\rho}}

\newcommand{\bratauxbg}{\brataux{\bpol{\Gs}}{\rho}}
\newcommand{\bratauxsfg}{\brataux{\sfp{\Gs}}{\rho}}

\newcommand{\bratauxbw}{\brataux{\bpol{\Gs^+}}{\rho}}

\newcommand{\lrataux}[3]{\ensuremath{\zeta_{#1}[#2,#3]}\xspace}
\newcommand{\lratauxc}{\lrataux{\Cs}{\alpha}{\rho}}
\newcommand{\lratauxd}{\lrataux{\Ds}{\alpha}{\rho}}
\newcommand{\lratauxpc}{\lrataux{\pol{\Cs}}{\alpha}{\rho}}
\newcommand{\lratauxppc}{\lrataux{\pbpol{\Cs}}{\alpha}{\rho}}
\newcommand{\lratauxpnc}{\lrataux{\pbpol{\Cs}}{\nu}{\rho}}
\newcommand{\lratauxpd}{\lrataux{\pol{\Ds}}{\nu}{\rho}}

\newcommand{\lratrp}[2]{\lrataux{#1}{#2}{\rho}}
\newcommand{\lratdrp}[1]{\lrataux{\Ds}{#1}{\rho}}
\newcommand{\lratdrn}{\lratdrp{n}}

\newcommand{\lratauxpng}{\lrataux{\pbpol{\Gs}}{\nu}{\rho}}

\newcommand{\lratauxpg}{\lrataux{\pol{\Gs}}{\alpha}{\rho}}
\newcommand{\lratauxppg}{\lrataux{\pbpol{\Gs}}{\alpha}{\rho}}

\newcommand{\lratauxppw}{\lrataux{\pbpol{\Gs^+}}{\alpha}{\rho}}


\newcommand{\prin}[2]{\ensuremath{\Is[#1](#2)}\xspace}

\newcommand{\itriv}[1]{\ensuremath{\Is_{\mathit{triv}}[#1]}\xspace}
\newcommand{\ptriv}[1]{\ensuremath{\Ps_{\mathit{triv}}[#1]}\xspace}

\newcommand{\opti}[2]{\ensuremath{\Is_{#1}\left[#2\right]}\xspace}
\newcommand{\copti}[1]{\opti{\Cs}{#1}}
\newcommand{\dopti}[1]{\opti{\Ds}{#1}}
\newcommand{\gopti}[1]{\opti{\Gs}{#1}}
\newcommand{\popti}[3]{\ensuremath{\Ps_{#1}[#2,#3]}\xspace}
\newcommand{\pcopti}[2]{\popti{\Cs}{#1}{#2}}
\newcommand{\pdopti}[2]{\popti{\Ds}{#1}{#2}}

\newcommand{\iopti}[2]{\ensuremath{\fri_{#1}[#2]}\xspace}
\newcommand{\ioptic}[1]{\iopti{\Cs}{#1}}
\newcommand{\ioptid}[1]{\iopti{\Ds}{#1}}
\newcommand{\ioptig}[1]{\iopti{\Gs}{#1}}

\newcommand{\nsep}[2]{\ensuremath{\Is_{#1}[#2]}\xspace}
\newcommand{\nsepc}[1]{\nsep{\Cs}{#1}}
\newcommand{\nsepcm}{\nsepc{\alpha}}
\newcommand{\nsepca}{\nsepc{\As}}
\newcommand{\nsepd}[1]{\nsep{\Ds}{#1}}
\newcommand{\nsepdm}{\nsepd{\alpha}}
\newcommand{\nsepda}{\nsepd{\As}}

\newcommand{\nsepbca}{\nsep{\bpol{\Cs}}{\As}}
\newcommand{\gnsepbca}{\nsep{\bpol{\Cs}}{\Gs,\As}}


\newcommand{\rbpol}[1]{\ensuremath{\Rs^\rho_{#1}}\xspace}
\newcommand{\rbpols}{\ensuremath{\rbpol{S}}\xspace}
\newcommand{\rbpolt}{\ensuremath{\rbpol{T}}\xspace}

\newcommand{\tbpol}[1]{\ensuremath{\Rs_\tau(#1)}\xspace}
\newcommand{\tbpols}{\ensuremath{\tbpol{S}}\xspace}

\newcommand{\redstone}[1]{\ensuremath{\Ss\Ts^{1,\rho}_{#1}}\xspace}
\newcommand{\redstones}{\redstone{S}}

\newcommand{\reddotone}[1]{\ensuremath{\Ds\Ds^{1,\rho}_{#1}}\xspace}
\newcommand{\reddotones}{\reddotone{S}}

\newcommand{\rstd}[1]{\ensuremath{\Ss\Ts^{2,\rho}_{#1}}\xspace}
\newcommand{\rstds}{\rstd{S}}


\newcommand{\ttup}[3]{\ensuremath{\Ts_{#1}^{#2}[{#3}]}\xspace}
\newcommand{\polttup}[1]{\ttup{\pol{\Cs}}{#1}{\Lb}}
\newcommand{\polttupn}{\polttup{n}}

\newcommand{\stup}[1]{\ensuremath{\Ts^{#1}[\Lb]}\xspace}
\newcommand{\stupn}{\stup{n}}
\newcommand{\stupnm}{\stup{n-1}}

\newcommand{\alte}[3]{\ensuremath{\As_{#1}^{#2}[#3]}\xspace}
\newcommand{\polalte}[1]{\alte{\pol{\Cs}}{#1}{\Lb}}

\newcommand{\salte}{\ensuremath{\As[\Lb]}\xspace}

\newcommand{\ite}[2]{\ensuremath{it_{#1}(#2)}\xspace}
\newcommand{\iteh}[1]{\ite{h}{#1}}

\newcommand{\extra}[2]{\ensuremath{ex_{#1}(#2)}\xspace}

\newcommand{\gfix}[1]{\ensuremath{Res(#1)}\xspace}

\newcommand{\igrpopti}{\iopti{\grp}{\rho}}
\newcommand{\grpopti}{\opti{\grp}{\rho}}
\newcommand{\mdopti}{\opti{\md}{\rho}}

\newcommand{\atopti}{\opti{\at}{\rho}}
\newcommand{\attopti}{\opti{\att}{\rho}}
\newcommand{\ltopti}{\opti{\lt}{\rho}}
\newcommand{\lttopti}{\opti{\ltt}{\rho}}


\newcommand{\bsuopti}{\opti{\bscu}{\rho}}
\newcommand{\ptopti}{\opti{\pt}{\rho}}
\newcommand{\pptopti}{\popti{\ppt}{\alpha}{\rho}}


\newcommand{\fodopti}{\opti{\ul}{\rho}}
\newcommand{\fodpopti}{\popti{\ul}{\etaat}{\rho}}


\newcommand{\foopti}{\opti{\fo}{\rho}}
\newcommand{\sfopti}{\opti{\sfr}{\rho}}

\newcommand{\sfsatag}{\sata{G}{\rho}}
\newcommand{\sfsatah}{\sata{H}{\rho}}


\newcommand{\powgopti}{\popti{\pol{\Gs^+}}{\alpha}{\rho}}
\newcommand{\bpwgopti}{\opti{\bpol{\Gs^+}}{\rho}}
\newcommand{\pbpwgopti}{\popti{\pbpol{\Gs^+}}{\alpha}{\rho}}

\newcommand{\pogopti}{\popti{\pol{\Gs}}{\alpha}{\rho}}
\newcommand{\bpgopti}{\opti{\bpol{\Gs}}{\rho}}
\newcommand{\pbpgopti}{\popti{\pbpol{\Gs}}{\alpha}{\rho}}

\newcommand{\pocopti}{\popti{\pol{\Cs}}{\alpha}{\rho}}
\newcommand{\cpocopti}{\popti{\pol{\Cs}}{\etac,\alpha}{\rho}}
\newcommand{\bpopti}{\opti{\bpol{\Cs}}{\rho}}
\newcommand{\cbpopti}{\popti{\bpol{\Cs}}{\etac}{\rho}}
\newcommand{\pbpopti}{\popti{\pbpol{\Cs}}{\alpha}{\rho}}
\newcommand{\cpbpopti}{\popti{\pbpol{\Cs}}{\etac,\alpha}{\rho}}
\newcommand{\sfcopti}{\opti{\sfp{\Cs}}{\rho}}
\newcommand{\csfcopti}{\popti{\sfp{\Cs}}{\etac}{\rho}}

\newcommand{\stau}{\ensuremath{\overline{\tau}}\xspace}
\newcommand{\sgamma}{\ensuremath{\overline{\gamma}}\xspace}
\newcommand{\gpocopti}{\popti{\pol{\Cs}}{\etac,\rho_*}{\gamma}}

\newcommand{\sthoneopti}{\popti{\sthone}{\alpha}{\rho}}
\newcommand{\sthtwoopti}{\popti{\sthtwo}{\alpha}{\rho}}
\newcommand{\asthtwoopti}{\popti{\sthtwo}{\etaat,\alpha}{\rho}}
\newcommand{\sththropti}{\popti{\sththree}{\alpha}{\rho}}
\newcommand{\asththropti}{\popti{\sththree}{\etaat,\alpha}{\rho}}

\newcommand{\stoneopti}{\opti{\stone}{\rho}}
\newcommand{\dotoneopti}{\opti{\dotone}{\rho}}
\newcommand{\sttwoopti}{\opti{\sttwo}{\rho}}
\newcommand{\asttwoopti}{\popti{\sttwo}{\etaat}{\rho}}
\newcommand{\dothoneopti}{\popti{\dothone}{\alpha}{\rho}}
\newcommand{\dothtwoopti}{\popti{\dothtwo}{\alpha}{\rho}}

\newcommand{\upolopti}{\opti{\upol{\Cs}}{\rho}}
\newcommand{\ldetopti}{\opti{\ldet{\Cs}}{\rho}}
\newcommand{\rdetopti}{\opti{\rdet{\Cs}}{\rho}}
\newcommand{\ldetnopti}{\opti{\ldetn{\Cs}}{\rho}}
\newcommand{\rdetnopti}{\opti{\rdetn{\Cs}}{\rho}}

\newcommand{\pupolopti}{\popti{\upol{\Cs}}{\etac}{\rho}}
\newcommand{\pldetopti}{\popti{\ldet{\Cs}}{\etac}{\rho}}
\newcommand{\pldetoptid}{\popti{\ldet{\Ds}}{\etac}{\rho}}
\newcommand{\prdetopti}{\popti{\rdet{\Cs}}{\etac}{\rho}}
\newcommand{\prdetoptid}{\popti{\rdet{\Ds}}{\etac}{\rho}}
\newcommand{\pmdetopti}{\popti{\mdet{\Cs}}{\etac}{\rho}}
\newcommand{\pmdetoptid}{\popti{\mdet{\Ds}}{\etac}{\rho}}
\newcommand{\pldetnopti}{\popti{\ldetn{\Cs}}{\etac}{\rho}}
\newcommand{\pldetnmopti}{\popti{\ldetp{n-1}{\Cs}}{\etac}{\rho}}
\newcommand{\prdetnopti}{\popti{\rdetn{\Cs}}{\etac}{\rho}}
\newcommand{\prdetnmopti}{\popti{\rdetp{n-1}{\Cs}}{\etac}{\rho}}
\newcommand{\pmdetnopti}{\popti{\mdetn{\Cs}}{\etac}{\rho}}
\newcommand{\pmdetnmopti}{\popti{\mdetp{n-1}{\Cs}}{\etac}{\rho}}


\newcommand{\typ}[2]{\ensuremath{[#1]_{#2}}\xspace}
\newcommand{\ctype}[1]{\typ{#1}{\Cs}}
\newcommand{\pctype}[1]{\typ{#1}{\pol{\Cs}}}
\newcommand{\cmult}{\ensuremath{\mathbin{\scriptscriptstyle\bullet}}}
\newcommand{\dtype}[1]{\typ{#1}{\Ds}}
\newcommand{\gtype}[1]{\typ{#1}{\Gs}}

\newcommand{\rtyp}[2]{\ensuremath{\llbracket #1 \rrbracket_{#2}}\xspace}
\newcommand{\crtype}[1]{\rtyp{#1}{\Cs}}
\newcommand{\ertype}[1]{\rtyp{#1}{\varepsilon}}

\newcommand{\sztyp}[1]{\typ{#1}{\stzer}}
\newcommand{\szrtyp}[1]{\rtyp{#1}{\stzer}}

\newcommand{\dztyp}[1]{\typ{#1}{\dotzer}}
\newcommand{\dzrtyp}[1]{\rtyp{#1}{\dotzer}}

\newcommand{\compset}[2]{\ensuremath{\fil{\alpha}{\typ{#2}{#1}}}\xspace}
\newcommand{\compsetc}[1]{\compset{\Cs}{#1}}


\newcommand{\alrat}{\ensuremath{\nu^{\alpha}_{\rho}}\xspace}
\newcommand{\abrat}{\ensuremath{\nu_{\rho}}\xspace}

\newcommand{\glrat}[1]{\ensuremath{\nu^{\alpha}_{\rho,#1}}\xspace}
\newcommand{\glrats}{\glrat{S}}

\newcommand{\bwmrat}[1]{\ensuremath{\eta_{\rho,#1}}\xspace}
\newcommand{\bwmrats}{\bwmrat{S}}

\newcommand{\bpwmrat}[1]{\ensuremath{\lambda_{\rho,#1}}\xspace}
\newcommand{\bpwmrats}{\bpwmrat{S}}
\newcommand{\idpr}{\ensuremath{E_+(\rho)}\xspace}

\newcommand{\pbwmrat}[1]{\ensuremath{\delta^{\alpha}_{\rho,#1}}\xspace}
\newcommand{\pbwmrats}{\pbwmrat{S}}

\newcommand{\pwbwmrat}[1]{\ensuremath{\mu^{\alpha}_{\rho,#1}}\xspace}
\newcommand{\pwbwmrats}{\pwbwmrat{S}}
\newcommand{\idpar}{\ensuremath{E_+(\alpha,\rho)}\xspace}


\tikzset{every state/.style={draw=blue!50!green,very thick,fill=blue!50!green!20}}
\tikzset{statesub/.style={state,minimum size=1.3cm,inner sep=1pt}}
\tikzset{pattstate/.style={state,draw=red!50!yellow,line width=2pt,fill=red!50!yellow!20}}
\tikzset{pdotstate/.style={state,minimum size=0.75cm,inner sep=0.5pt,draw=red!50!yellow,line
		width=2pt,dashed,fill=red!50!yellow!20}}
\tikzset{lstate/.style={state,minimum size=0.65cm,inner sep=0.5pt}}
\tikzstyle{trans}=[shorten >= 1pt,thick,->]

\makeatletter
\tikzstyle{initial by arrow}=   [after node path=
{
	{
		[to path=
		{
			[->,double=none,shorten >= 1pt,thick,every initial by arrow]
			([shift=(\tikz@initial@angle:\tikz@initial@distance)]\tikztostart.\tikz@initial@angle)
			node [shape=coordinate,anchor=\tikz@initial@anchor,draw] {\tikz@initial@text}
			-- (\tikztostart)}]
		edge ()
	}
}]
\makeatother

\makeatletter
\tikzstyle{accepting by arrow}=   [after node path=
{
	{
		[to path=
		{
			[->,double=none,shorten >= 1pt,thick,every accepting by arrow]
			(\tikztostart) -- 
			([shift=(\tikz@accepting@angle:\tikz@accepting@distance)]\tikztostart.\tikz@accepting@angle)
			node [shape=coordinate,anchor=\tikz@accepting@anchor,draw] {\tikz@accepting@text}
		}
		]
		edge ()
	}
}]
\makeatother

\newcommand{\mvline}[3][]{%
	\pgfmathtruncatemacro\hc{#3-1}
	\draw[#1]({$(#2-1-#3.west)!.5!(#2-1-\hc.east)$} |- #2.north) -- ({$(#2-1-#3.west)!.5!(#2-1-\hc.east)$} |- #2.south);
}
\newcommand{\mhline}[3][]{%
	\pgfmathtruncatemacro\hc{#3-1}
	\node[fit=(#2-#3-1),inner sep=0pt](R){};
	\node[fit=(#2-\hc-1),inner sep=0pt](L){};
	\node (K) at ($(R)!0.5!(L)$) {};
	\draw[#1] (K -| #2.west) -- (K -| #2.east);
}

\newcommand{\mhlinel}[3][]{%
	\pgfmathtruncatemacro\hc{#3-1}
	\node[fit=(#2-#3-2),inner sep=0pt](R){};
	\node[fit=(#2-\hc-2),inner sep=0pt](L){};
	\node (K) at ($(R)!0.5!(L)$) {};
	\draw[#1] (K -| #2.west) -- (K -| #2.east);
}


\newcommand{\pstr}[2]{\ensuremath{#1\textup{-}{#2}}\xspace}
\newcommand{\pstc}[1]{\pstr{#1}{\Cs}}
\newcommand{\kstc}{\pstc{k}}


\newcommand{\hiep}[2]{\ensuremath{#1_{#2}}\xspace}
\newcommand{\hiec}[1]{\hiep{\Cs}{#1}}
\newcommand{\hied}[1]{\hiep{\Ds}{#1}}
\newcommand{\hieg}[1]{\hiep{\Gs}{#1}}

\newcommand{\hiesp}[2]{\ensuremath{(#1^+)_{#2}}\xspace}
\newcommand{\hiesc}[1]{\hiesp{\Cs}{#1}}
\newcommand{\hiesg}[1]{\hiesp{\Gs}{#1}}

\renewcommand{\min}{\ensuremath{\text{\scriptsize min}}\xspace}
\renewcommand{\max}{\ensuremath{\text{\scriptsize max}}\xspace}

\newcommand{\mods}{\ensuremath{\mathit{MOD}}\xspace}
\newcommand{\amods}{\ensuremath{\mathit{AMOD}}\xspace}

\newcommand{\amodp}[3]{\ensuremath{\mathit{MOD}^{#1}_{#2,#3}}\xspace}
\newcommand{\amodamd}{\amodp{a}{m}{d}}

\newcommand{\modp}[2]{\ensuremath{\mathit{MOD}_{#1,#2}}\xspace}
\newcommand{\modmd}{\modp{m}{d}}

\newcommand{\sigenr}{\ensuremath{<,+1,\min,\max,\varepsilon}\xspace}
\newcommand{\addsucc}{\ensuremath{+1,\min,\max,\varepsilon}\xspace}


\newcommand{\fo}{\ensuremath{\textup{FO}}\xspace}
\newcommand{\fow}{\mbox{\ensuremath{\fo(<)}}\xspace}
\newcommand{\fows}{\mbox{\ensuremath{\fo(<,+1)}}\xspace}
\newcommand{\fowss}{\mbox{\ensuremath{\fo(\sigenr)}}\xspace}
\newcommand{\foeqp}{\mbox{\ensuremath{\fo(+1)}}\xspace}
\newcommand{\foeq}{\mbox{\ensuremath{\fo(\emptyset)}}\xspace}
\newcommand{\fop}{\mbox{\ensuremath{\fo(+1)}}\xspace}

\newcommand{\fowam}{\mbox{\ensuremath{\fo(<,\amods)}}\xspace}
\newcommand{\fowsam}{\mbox{\ensuremath{\fo(<,+1,\amods)}}\xspace}

\newcommand{\fowm}{\mbox{\ensuremath{\fo(<,\mods)}}\xspace}
\newcommand{\fowsm}{\mbox{\ensuremath{\fo(<,+1,\mods)}}\xspace}
\newcommand{\foeqpm}{\mbox{\ensuremath{\fo(=,+1,\mods)}}\xspace}
\newcommand{\fopm}{\mbox{\ensuremath{\fo(+1,\mods)}}\xspace}
\newcommand{\foeqm}{\mbox{\ensuremath{\fo(=,\mods)}}\xspace}

\newcommand{\rec}{\ensuremath{\textup{REC}}\xspace}
\newcommand{\reg}{\ensuremath{\textup{REG}}\xspace}
\newcommand{\mso}{\ensuremath{\textup{MSO}}\xspace}
\newcommand{\msow}{\ensuremath{\mso(<)}\xspace}
\newcommand{\msows}{\ensuremath{\mso(<,+1,\mods)}\xspace}
\newcommand{\msowsa}{\ensuremath{\mso(<,+1,\amods)}\xspace}
\newcommand{\msop}{\ensuremath{\mso(+1)}\xspace}
\newcommand{\msono}{\mso without first-order variables\xspace}

\newcommand{\fou}{\ensuremath{\fo^1}\xspace}
\newcommand{\fod}{\ensuremath{\fo^2}\xspace}
\newcommand{\fot}{\ensuremath{\fo^3}\xspace}
\newcommand{\foi}{\ensuremath{\fo^i}\xspace}

\newcommand{\folab}{\mbox{\ensuremath{\fou(\emptyset)}}\xspace}
\newcommand{\folabm}{\mbox{\ensuremath{\fou(\mods)}}\xspace}
\newcommand{\folabam}{\mbox{\ensuremath{\fou(\amods)}}\xspace}
\newcommand{\fouw}{\ensuremath{\fou(<)}\xspace}
\newcommand{\foueq}{\ensuremath{\fou(=)}\xspace}
\newcommand{\foup}{\ensuremath{\fou(+1)}\xspace}

\newcommand{\fodw}{\ensuremath{\fod(<)}\xspace}
\newcommand{\fodws}{\ensuremath{\fod(<,+1)}\xspace}
\newcommand{\fodeq}{\ensuremath{\fod(=)}\xspace}
\newcommand{\fodp}{\ensuremath{\fod(+1)}\xspace}
\newcommand{\fodwm}{\ensuremath{\fod(<,\mods)}\xspace}
\newcommand{\fodwam}{\ensuremath{\fod(<,\amods)}\xspace}
\newcommand{\fodwsm}{\ensuremath{\fod(<,+1,\mods)}\xspace}
\newcommand{\fodwsam}{\ensuremath{\fod(<,+1,\amods)}\xspace}
\newcommand{\fodreg}{\ensuremath{\fod(REG)}\xspace}

\newcommand{\fodsi}[1]{\ensuremath{\sic{#1}^2}\xspace}
\newcommand{\fodsn}{\fodsi{n}}

\newcommand{\fodpi}[1]{\ensuremath{\pic{#1}^2}\xspace}
\newcommand{\fodpn}{\fodpi{n}}

\newcommand{\fodb}[1]{\ensuremath{\bsc{#1}^2}\xspace}
\newcommand{\fodbn}{\fodb{n}}

\newcommand{\foieq}{\ensuremath{\foi(=)}\xspace}
\newcommand{\foip}{\ensuremath{\foi(+1)}\xspace}

\newcommand{\fotw}{\ensuremath{\fot(<)}\xspace}
\newcommand{\foiw}{\ensuremath{\foi(<)}\xspace}

\newcommand{\fodiw}{\ensuremath{\fo^2_i(<)}\xspace}
\newcommand{\foduw}{\ensuremath{\fo^2_1(<)}\xspace}
\newcommand{\foddw}{\ensuremath{\fo^2_2(<)}\xspace}
\newcommand{\fodtw}{\ensuremath{\fo^2_3(<)}\xspace}

\newcommand{\fodiws}{\ensuremath{\fo^2_i(<,+1)}\xspace}
\newcommand{\foduws}{\ensuremath{\fo^2_1(<,+1,\min,\max)}\xspace}
\newcommand{\foddws}{\ensuremath{\fo^2_2(<,+1)}\xspace}
\newcommand{\fodtws}{\ensuremath{\fo^2_3(<,+1)}\xspace}


\newcommand{\sic}[1]{\ensuremath{\Sigma_{#1}}\xspace}
\newcommand{\siw}[1]{\ensuremath{\Sigma_{#1}(<)}\xspace}
\newcommand{\siws}[1]{\ensuremath{\Sigma_{#1}(<,+1)}\xspace}
\newcommand{\siwm}[1]{\ensuremath{\Sigma_{#1}(<,\mods)}\xspace}
\newcommand{\siwam}[1]{\ensuremath{\Sigma_{#1}(<,\amods)}\xspace}
\newcommand{\siwsm}[1]{\ensuremath{\Sigma_{#1}(<,+1,\mods)}\xspace}
\newcommand{\siwsam}[1]{\ensuremath{\Sigma_{#1}(<,+1,\amods)}\xspace}
\newcommand{\siwss}[1]{\ensuremath{\Sigma_{#1}(<,+1,\min,\max,\varepsilon)}\xspace}
\newcommand{\siwssm}[1]{\ensuremath{\Sigma_{#1}(<,+1,\min,\max,\varepsilon,\mods)}\xspace}

\newcommand{\sici}{\sic{i}}
\newcommand{\siwi}{\siw{i}}
\newcommand{\siwsi}{\siws{i}}
\newcommand{\siwu}{\siw{1}}
\newcommand{\sicu}{\sic{1}}
\newcommand{\siwsu}{\siws{1}}
\newcommand{\sicd}{\sic{2}}
\newcommand{\siwd}{\siw{2}}
\newcommand{\siwsd}{\siws{2}}
\newcommand{\sict}{\sic{3}}
\newcommand{\siwt}{\siw{3}}
\newcommand{\siwst}{\siws{3}}
\newcommand{\sicq}{\sic{4}}
\newcommand{\siwq}{\siw{4}}
\newcommand{\siwsq}{\siws{4}}

\newcommand{\pic}[1]{\ensuremath{\Pi_{#1}}\xspace}
\newcommand{\piw}[1]{\ensuremath{\Pi_{#1}(<)}\xspace}
\newcommand{\piws}[1]{\ensuremath{\Pi_{#1}(<,+1)}\xspace}
\newcommand{\piwm}[1]{\ensuremath{\Pi_{#1}(<,\mods)}\xspace}
\newcommand{\piwam}[1]{\ensuremath{\Pi_{#1}(<,\amods)}\xspace}
\newcommand{\piwsm}[1]{\ensuremath{\Pi_{#1}(<,+1,\mods)}\xspace}
\newcommand{\piwsam}[1]{\ensuremath{\Pi_{#1}(<,+1,\amods)}\xspace}
\newcommand{\piwss}[1]{\ensuremath{\Pi_{#1}(<,+1,\min,\max,\varepsilon)}\xspace}

\newcommand{\pici}{\pic{i}}
\newcommand{\piwi}{\piw{i}}
\newcommand{\piwsi}{\piws{i}}
\newcommand{\piwu}{\piw{1}}
\newcommand{\picu}{\pic{1}}
\newcommand{\piwsu}{\piws{1}}
\newcommand{\picd}{\pic{2}}
\newcommand{\piwd}{\piw{2}}
\newcommand{\piwsd}{\piws{2}}
\newcommand{\pict}{\pic{3}}
\newcommand{\piwt}{\piw{3}}
\newcommand{\piwst}{\piws{3}}
\newcommand{\picq}{\pic{4}}
\newcommand{\piwq}{\piw{4}}
\newcommand{\piwsq}{\piws{4}}

\newcommand{\dec}[1]{\ensuremath{\Delta_{#1}}\xspace}
\newcommand{\dew}[1]{\ensuremath{\Delta_{#1}(<)}\xspace}
\newcommand{\dews}[1]{\ensuremath{\Delta_{#1}(<,+1)}\xspace}
\newcommand{\dewm}[1]{\ensuremath{\Delta_{#1}(<,\mods)}\xspace}
\newcommand{\dewam}[1]{\ensuremath{\Delta_{#1}(<,\amods)}\xspace}
\newcommand{\dewsm}[1]{\ensuremath{\Delta_{#1}(<,+1,\mods)}\xspace}
\newcommand{\dewsam}[1]{\ensuremath{\Delta_{#1}(<,+1,\amods)}\xspace}
\newcommand{\dewss}[1]{\ensuremath{\Delta_{#1}(<,+1,\min,\max,\varepsilon)}\xspace}

\newcommand{\deci}{\dec{i}}
\newcommand{\dewi}{\dew{i}}
\newcommand{\dewsi}{\dews{i}}
\newcommand{\dewu}{\dew{1}}
\newcommand{\decu}{\dec{1}}
\newcommand{\dewsu}{\dews{1}}
\newcommand{\decd}{\dec{2}}
\newcommand{\dewd}{\dew{2}}
\newcommand{\dewsd}{\dews{2}}
\newcommand{\dewmd}{\dewm{2}}
\newcommand{\dewamd}{\dewam{2}}
\newcommand{\dewsmd}{\dewsm{2}}
\newcommand{\dewsamd}{\dewsam{2}}
\newcommand{\dect}{\dec{3}}
\newcommand{\dewt}{\dew{3}}
\newcommand{\dewst}{\dews{3}}
\newcommand{\decq}{\dec{4}}
\newcommand{\dewq}{\dew{4}}
\newcommand{\dewsq}{\dews{4}}

\newcommand{\bsc}[1]{\ensuremath{\Bs\Sigma_{#1}}\xspace}
\newcommand{\bsw}[1]{\ensuremath{\Bs\Sigma_{#1}(<)}\xspace}
\newcommand{\bswm}[1]{\ensuremath{\Bs\Sigma_{#1}(<,\mods)}\xspace}
\newcommand{\bswam}[1]{\ensuremath{\Bs\Sigma_{#1}(<,\amods)}\xspace}
\newcommand{\bsws}[1]{\ensuremath{\Bs\Sigma_{#1}(<,+1)}\xspace}
\newcommand{\bswsm}[1]{\ensuremath{\Bs\Sigma_{#1}(<,+1,\mods)}\xspace}
\newcommand{\bswsam}[1]{\ensuremath{\Bs\Sigma_{#1}(<,+1,\amods)}\xspace}
\newcommand{\bswss}[1]{\ensuremath{\Bs\Sigma_{#1}(<,+1,\min,\max,\varepsilon)}\xspace}

\newcommand{\bsci}{\bsc{i}}
\newcommand{\bswi}{\bsw{i}}
\newcommand{\bswsi}{\bsws{i}}
\newcommand{\bswu}{\bsw{1}}
\newcommand{\bscu}{\bsc{1}}
\newcommand{\bswsu}{\bsws{1}}
\newcommand{\bscd}{\bsc{2}}
\newcommand{\bswd}{\bsw{2}}
\newcommand{\bswsd}{\bsws{2}}
\newcommand{\bsct}{\bsc{3}}
\newcommand{\bswt}{\bsw{3}}
\newcommand{\bswst}{\bsws{3}}
\newcommand{\bscq}{\bsc{4}}
\newcommand{\bswq}{\bsw{4}}
\newcommand{\bswsq}{\bsws{4}}


\newcommand{\ppt}{\ensuremath{\textup{PPT}}\xspace}
\newcommand{\pptp}[1]{\ensuremath{#1\textup{-}\textup{PPT}}\xspace}
\newcommand{\kppt}{\pptp{k}}
\newcommand{\pt}{\ensuremath{\textup{PT}}\xspace}
\newcommand{\ptp}[1]{\ensuremath{#1\textup{-}\textup{PT}}\xspace}
\newcommand{\kpt}{\ptp{k}}
\newcommand{\ld}{\ensuremath{\textup{LD}}\xspace}
\newcommand{\rd}{\ensuremath{\textup{RD}}\xspace}

\newcommand{\wat}{\ensuremath{\textup{WAT}}\xspace}
\newcommand{\at}{\ensuremath{\textup{AT}}\xspace}

\newcommand{\cgrp}{\ensuremath{\textup{Ab}}\xspace}
\newcommand{\grp}{\ensuremath{\textup{GR}}\xspace}
\newcommand{\sgrp}{\ensuremath{\textup{SGR}}\xspace}

\newcommand{\abg}{\ensuremath{\textup{AMT}}\xspace}
\newcommand{\sabg}{\ensuremath{\textup{SAMT}}\xspace}
\newcommand{\pabg}[1]{\ensuremath{#1\textup{-}\textup{AMT}}\xspace}
\newcommand{\dabg}{\pabg{d}}

\newcommand{\md}{\ensuremath{\textup{MOD}}\xspace}
\newcommand{\smd}{\ensuremath{\textup{SMOD}}\xspace}
\newcommand{\pmd}[1]{\ensuremath{#1\textup{-}\textup{MOD}}\xspace}
\newcommand{\dmd}{\pmd{d}}

\newcommand{\pr}{\ensuremath{\textup{PR}}\xspace}
\newcommand{\wpr}{\ensuremath{\textup{PPR}}\xspace}
\newcommand{\prp}[1]{\ensuremath{#1\textup{-}\textup{PR}}\xspace}
\newcommand{\kpr}{\prp{k}}

\newcommand{\wsu}{\ensuremath{\textup{PSU}}\xspace}
\newcommand{\su}{\ensuremath{\textup{SU}}\xspace}
\newcommand{\psu}[1]{\ensuremath{#1\textup{-}\textup{SU}}\xspace}
\newcommand{\ksu}{\psu{k}}

\newcommand{\sfr}{\ensuremath{\textup{SF}}\xspace}
\newcommand{\sfrp}[1]{\ensuremath{#1\textup{-}\textup{SF}}\xspace}
\newcommand{\ksfr}{\sfrp{k}}
\newcommand{\bsd}{\ensuremath{\textup{SD}}\xspace}

\newcommand{\patt}[1]{\ensuremath{\ensuremath{#1\textup{-ATT}}}\xspace}
\newcommand{\datt}{\patt{d}}
\newcommand{\att}{\ensuremath{\textup{ATT}}\xspace}

\newcommand{\plt}[1]{\ensuremath{#1\textup{-LT}}\xspace}
\newcommand{\pltt}[2]{\ensuremath{(#1,#2)\textup{-LTT}}\xspace}
\newcommand{\klt}{\plt{k}}
\newcommand{\kdltt}{\pltt{k}{d}}
\newcommand{\lt}{\ensuremath{\textup{LT}}\xspace}
\newcommand{\ltt}{\ensuremath{\textup{LTT}}\xspace}
\newcommand{\lmt}{\ensuremath{\textup{LMT}}\xspace}

\newcommand{\stzer}{\textup{ST}\xspace}
\newcommand{\dotzer}{\textup{DD}\xspace}

\newcommand{\sttp}[1]{\hiep{\stzer}{#1}}
\newcommand{\dotdp}[1]{\hiep{\dotzer}{#1}}
\newcommand{\mdp}[1]{\hiep{\md}{#1}}
\newcommand{\abgp}[1]{\hiep{\abg}{#1}}
\newcommand{\grpp}[1]{\hiep{\grp}{#1}}
\newcommand{\mdwp}[1]{\hiep{\smd}{#1}}
\newcommand{\abgwp}[1]{\hiep{\sabg}{#1}}
\newcommand{\grpwp}[1]{\hiep{\sgrp}{#1}}

\newcommand{\sthone}{\sttp{{1}/{2}}}
\newcommand{\stone}{\sttp{1}}
\newcommand{\sthtwo}{\sttp{{3}/{2}}}
\newcommand{\sttwo}{\sttp{2}}
\newcommand{\sththree}{\sttp{{5}/{2}}}
\newcommand{\stthree}{\sttp{3}}
\newcommand{\sthfour}{\sttp{{7}/{2}}}
\newcommand{\stfour}{\sttp{4}}

\newcommand{\dothone}{\dotdp{{1}/{2}}}
\newcommand{\dotone}{\dotdp{1}}
\newcommand{\dothtwo}{\dotdp{{3}/{2}}}
\newcommand{\dottwo}{\dotdp{2}}
\newcommand{\doththree}{\dotdp{{5}/{2}}}
\newcommand{\dotthree}{\dotdp{3}}
\newcommand{\dothfour}{\dotdp{{7}/{2}}}
\newcommand{\dotfour}{\dotdp{4}}


\newcommand{\cocl}[1]{\ensuremath{\mathit{co\textup{-}}\!#1}\xspace}
\newcommand{\bool}[1]{\ensuremath{\mathit{Bool}(#1)}\xspace}
\newcommand{\pol}[1]{\ensuremath{\mathit{Pol}(#1)}\xspace}
\newcommand{\bpol}[1]{\ensuremath{\mathit{BPol}(#1)}\xspace}
\newcommand{\pbpol}[1]{\ensuremath{\mathit{PBPol}(#1)}\xspace}
\newcommand{\upol}[1]{\ensuremath{\mathit{UPol}(#1)}\xspace}
\newcommand{\spol}[1]{\ensuremath{\mathit{SPol}(#1)}\xspace}
\newcommand{\copol}[1]{\ensuremath{\cocl{Pol(#1)}}\xspace}
\newcommand{\capol}[1]{\ensuremath{\pol{#1} \cap \copol{#1}}\xspace}
\newcommand{\sfp}[1]{\ensuremath{\mathit{\textup{SF}}(#1)}\xspace}
\newcommand{\bsdp}[1]{\ensuremath{\mathit{\textup{SD}}(#1)}\xspace}
\newcommand{\dsdp}[1]{\ensuremath{\mathit{\textup{DSD}}(#1)}\xspace}
\newcommand{\upolo}{\ensuremath{\mathit{UPol}}\xspace}
\newcommand{\bpolo}{\ensuremath{\mathit{BPol}}\xspace}

\newcommand{\booln}{\ensuremath{\mathit{Bool}}\xspace}
\newcommand{\poln}{\ensuremath{\mathit{Pol}}\xspace}
\newcommand{\bpoln}{\ensuremath{\mathit{BPol}}\xspace}
\newcommand{\pbpoln}{\ensuremath{\mathit{PBPol}}\xspace}
\newcommand{\sfpn}{\ensuremath{\textup{SF}}\xspace}

\newcommand{\ldet}[1]{\ensuremath{\mathit{LPol}(#1)}\xspace}
\newcommand{\rdet}[1]{\ensuremath{\mathit{RPol}(#1)}\xspace}
\newcommand{\adet}[1]{\ensuremath{\mathit{APol}(#1)}\xspace}
\newcommand{\wadet}[1]{\ensuremath{\mathit{WAPol}(#1)}\xspace}

\newcommand{\idet}[1]{\ensuremath{\ldet{#1} \cap \rdet{#1}}\xspace}
\newcommand{\jdet}[1]{\ensuremath{\ldet{#1} \vee \rdet{#1}}\xspace}
\newcommand{\mdet}[1]{\ensuremath{\mathit{MPol}(#1)}\xspace}
\newcommand{\ldeto}{\ensuremath{\mathit{LPol}}\xspace}
\newcommand{\rdeto}{\ensuremath{\mathit{RPol}}\xspace}
\newcommand{\adeto}{\ensuremath{\mathit{APol}}\xspace}
\newcommand{\wadeto}{\ensuremath{\mathit{WAPol}}\xspace}
\newcommand{\ideto}{\ensuremath{\mathit{IPol}}\xspace}
\newcommand{\jdeto}{\ensuremath{\mathit{JPol}}\xspace}
\newcommand{\mdeto}{\ensuremath{\mathit{MPol}}\xspace}

\newcommand{\ldetp}[2]{\ensuremath{\ldeto_{#1}(#2)}\xspace}
\newcommand{\rdetp}[2]{\ensuremath{\rdeto_{#1}(#2)}\xspace}
\newcommand{\mdetp}[2]{\ensuremath{\mdeto_{#1}(#2)}\xspace}
\newcommand{\idetp}[2]{\ensuremath{\ldetp{#1}{#2} \cap \rdetp{#1}{#2}}\xspace}
\newcommand{\jdetp}[2]{\ensuremath{\ldetp{#1}{#2} \vee \rdetp{#1}{#2}}\xspace}
\newcommand{\ldetn}[1]{\ldetp{n}{#1}}
\newcommand{\rdetn}[1]{\rdetp{n}{#1}}
\newcommand{\mdetn}[1]{\mdetp{n}{#1}}
\newcommand{\idetn}[1]{\idetp{n}{#1}}
\newcommand{\jdetn}[1]{\jdetp{n}{#1}}

\newcommand{\polp}[2]{\ensuremath{Pol_{#2}(#1)}\xspace}
\newcommand{\polk}[1]{\polp{#1}{k}}

\newcommand{\bpolp}[2]{\ensuremath{BPol_{#2}(#1)}\xspace}
\newcommand{\bpolk}[1]{\bpolp{#1}{k}}


\newcommand{\davar}{\ensuremath{\mathbf{DA}}\xspace}
\newcommand{\ldavar}{\ensuremath{\mathbf{LDA}}\xspace}
\newcommand{\jvar}{\ensuremath{\mathbf{J}}\xspace}

\maketitle

\begin{abstract}
  We look at classes of languages associated to the fragment of first-order logic \bsc{1}, in which quantifier alternations are disallowed. Each class is defined by choosing the set of predicates on positions that may be used. Two key such fragments are those equipped with the linear ordering and possibly the successor relation. Simon and Knast proved that these two variants have decidable \emph{membership}: ``does an input regular language belong to the class ?''. We rely on a characterization of \bsc{1} by the operator \bpoln: given an input class \Cs, it outputs a class \bpol{\Cs} that corresponds to a variant of \bsc{1} equipped with special predicates associated to \Cs. We extend the above results in two orthogonal directions. First, we use two kinds of inputs: classes \Gs of \emph{group languages} (\emph{i.e.}, recognized by a DFA in which each letter induces a permutation of the states) and extensions thereof, written $\Gs^+$. The classes \bpol{\Gs} and \bpol{\Gs^+} capture many natural variants of \bsc{1} which use predicates such as the linear ordering, the successor, the modular predicates or the alphabetic modular predicates.
  
  Second, instead of membership, we explore the more general separation problem: decide if two regular languages can be separated by a language from the class under study. We show that separation is decidable for \bpol{\Gs} and \bpol{\Gs^+} when this is the case for \Gs. This was known for \bpol{\Gs} and for two particular classes of the form \bpol{\Gs^+}. Yet, the algorithms were indirect and relied on involved frameworks, yielding poor upper complexity bounds. In contrast, the approach of the paper is direct. We work only with elementary concepts (mainly, finite automata). Our main contribution consists in polynomial time Turing reductions from both \bpol{\Gs}- and \bpol{\Gs^+}-separation to \Gs-separation. This yields polynomial algorithms for many key variants of \bsc{1}, including those equipped with the linear ordering and possibly the successor and/or the modular predicates.
\end{abstract}

\section{Introduction}
\label{sec:intro}
An important question in automata theory is to precisely understand the prominent classes of regular languages of finite words. We are interested in the classes associated to a piece of syntax (such as regular expressions or logic), whose purpose is to specify the languages of such classes. In the paper, we formalize the goal of ``understanding a given class \Cs'' by looking at a decision problem: \Cs-separation. It takes two regular languages $L_1,L_2$ as input and asks whether there exists $K \in \Cs$ such that $L_1 \subseteq K$ and $K \cap L_2=\emptyset$. The key idea is that obtaining an algorithm for \Cs-separation requires a solid understanding of~\Cs.

We investigate a family of classes associated to a fragment of first-order logic written~\bsc{1}. The sentences of \bsc{1} are Boolean combinations of \emph{existential} formulas, \emph{i.e.}, whose prenex normal form has the shape $\exists x_1\exists x_2\cdots\exists x_k\varphi$, with~$\varphi$ quantifier-free. Several classes are associated to \bsc{1}, each determined by the predicates on positions that we allow. In the literature, standard examples of predicates include the linear order ``$<$''~\cite{simonthm}, the successor relation ``$+1$''~\cite{knast83} or modular predicates ``$\mathit{MOD}$''~\cite{ChaubardPS06}. Thus, a generic approach is desirable.

We tackle languages associated to \bsc1 through the operator $\Cs \mapsto \bpol{\Cs}$ defined on classes of languages. It is the composition of the polynomial closure $\Cs \mapsto \pol{\Cs}$ and the Boolean closure $\Cs\mapsto\bool\Cs$ operators: $\bpol\Cs=\bool{\pol\Cs}$. Recall that the polynomial closure of a class \Cs consists of all finite unions of languages of the form $L_0a_1L_1\cdots a_nL_n$, where $n\geq0$, each $a_i$ is a letter and each $L_i$ belongs to~\Cs. Indeed, many classes associated to~\bsc{1} are of the form \bpol{\Cs}~\cite{ThomEqu,PZ:generic18}. In this paper, we look at specific input classes \Cs.

The \emph{group languages} are those recognized by a finite group, or equivalently by a permutation automaton~\cite{permauto} (\emph{i.e.}, which is complete, deterministic \emph{and} co-deterministic). We consider input classes that are either a class \Gs consisting of group languages, or a \wsuit extension  thereof, $\Gs^+$ (roughly, $\Gs^+$ is the least Boolean algebra containing~\Gs and the singleton~$\{\veps\}$). It is known~\cite{PZ:generic18} that if \Gs is a class of group languages, then $\bpol{\Gs} =  \bsc{1}(<,\prefsig{\Gs})$ and $\bpol{\Gs^+} =  \bsc{1}(<,+1,\prefsig{\Gs})$. Here, \prefsig{\Gs} is a set of predicates associated to \Gs: each language~$L$ in \Gs gives rise to a predicate $P_L(x)$, which selects all positions $x$ in a word $w$ such that the prefix of $w$ up to position $x$ (excluded) belongs to $L$. This captures most of the natural examples. In particular, we get signatures including the aforementioned predicates, such as $\{<\}$, $\{<,+1\}$, $\{<,\mathit{MOD}\}$ and $\{<,+1,\mathit{MOD}\}$ (we provide  some more examples in the~paper).

\smallskip\noindent
{\bf State of the art.} Historically, \bpol{\Gs} and \bpol{\Gs^+} were first investigated for  particular input classes. A prominent example is the class of piecewise testable languages~\cite{simonthm}, \emph{i.e.}, the class $\bpol{\stzer} = \bsw{1}$ where $\stzer = \{\emptyset,A^*\}$. It was shown that \bpol{\stzer}-separation is decidable in~\cite{azptsep} using technical algebraic arguments. Simpler polynomial time algorithms were discovered later~\cite{pvzptsep,cmmptsep}. There also exists an involved specialized separation algorithm~\cite{Zetzsche18} for $\bpol{\md} = \bswm{1}$, where \md is the class of modulo languages. Decidability can be lifted to $\bpol{\stzer^+}=\bsws{1}$ (the languages of dot-depth one~\cite{knast83}) and to $\bpol{\md^+} = \bswsm{1}$ via transfer results~\cite{PlaceZ20,prwmodulo}. Unfortunately, this approach yields an exponential complexity blow-up. Recently, a generic approach was developed for \bpol{\Gs}. It is proved in~\cite{pzconcagroup} that if \Gs is a class of group languages with mild hypotheses, \bpol{\Gs}-separation is decidable when \Gs-separation is decidable. Yet, this generic approach is indirect and considers a more general problem: \emph{covering}. Because of this, the algorithms and their proofs are complex and rely on an intricate framework~\cite{pzcovering2}, yielding poor upper complexity bounds. This contrasts with the simple polynomial time procedures presented in~\cite{pvzptsep,cmmptsep} for \bpol{\stzer}. No generic result of this kind is known for the classes \bpol{\Gs^+}.

\smallskip
\noindent
{\bf Contributions.} We give \emph{generic polynomial time Turing reductions} from \bpol{\Gs}- and \bpol{\Gs^+}-separation  to \Gs-separation, where \Gs is a class of group languages with mild properties. We present them as greatest fixpoint procedures which use an oracle for \mbox{\Gs-separation} at each step and run in \emph{polynomial time} (for input languages represented by nondeterministic finite automata). While the proofs are involved, they are self-contained and based exclusively on elementary concepts from automata theory. No particular knowledge on group theory is required to follow them: we only use immediate consequences of the definition of a group.

For \bpol{\Gs}, this new approach is a significant improvement on the results of~\cite{pzconcagroup}. While we do reuse some ideas of~\cite{pzconcagroup}, we complement them with new ones and the presentation is independent. We get a simpler algorithm, which requires only basic notions from automata theory. In particular, one direction of the proof describes a generic construction for building separators in \bpol{\Gs} (when they exist). This serves our main objective: understanding classes of languages. In addition, we obtain much better complexity upper bounds on \bpol{\Gs}-separation. Finally, our techniques can handle \bpol{\Gs^+} as well. This was not the case in~\cite{pzconcagroup}: the generic reduction from \bpol{\Gs^+}-separation to \Gs-separation is a new result.

These results apply to several key classes. Separation is decidable in polynomial time for $\stzer =\{\emptyset, A^*\}$, for the class \md of modulo languages and for the class \grp of \emph{all} group languages~\cite{pzgroup}. Hence, the problem is also decidable in polynomial time for \bpol{\stzer} (\emph{i.e.}, \bsw{1}), \bpol{\stzer^+} (\emph{i.e.}, \bsws{1}),  \bpol{\md} (\emph{i.e.}, \bswm{1}), \bpol{\md^+} (\emph{i.e.}, \bswsm{1}), \bpol{\grp} and \bpol{\grp^+} (the logical characterization of the last two classes is not standard, yet they are quite prominent as well~\cite{margpin85,henckell:hal-00019815}). This reproves a known result for \bpol{\stzer} (in fact, we essentially reprove the algorithm of~\cite{cmmptsep}). The polynomial time upper bounds are new for all other classes. Another application is the class \abg of alphabet modulo testable languages (which are recognized by commutative groups): \bpol{\abg} and \bpol{\abg^+} correspond to \bswam{1} and \bswsam{1} where ``$\mathit{AMOD}$'' is the set of \emph{alphabetic modular predicates}. We obtain the decidability of separation for these classes (this is a new result for \bpol{\abg^+}). However, we do not get a polynomial time upper bound: this is because \abg-separation is \conptime-complete (see~\cite{pzgroup}).

\smallskip
\noindent
{\bf Important remark.}
  Eilenberg's theorem~\cite{EilenbergB} connects some classes of regular languages  (the ``varieties of languages'') with \emph{varieties of finite monoids}. It~raised the hope to solve decision problems on languages (such as membership) by translating them in terms of monoids and solving the resulting purely algebraic questions---without referring to languages anymore.
  In particular, Margolis and Pin~\cite{margpin85,pin:hal-00112635} characterized the algebraic counterpart of \bpol\Gs in Eilenberg's correspondence (when \Gs is a variety) as the ``\emph{semidirect product}'' $\mathsf{J} * \mathsf{G}$, where $\mathsf{J}$ is the variety of monoids corresponding to $\bsw1$ and $\mathsf{G}$ is the one corresponding to~\Gs. The new purely algebraic question is then: ``decide membership of a monoid in $\mathsf{J}*\mathsf{G}$''. Tilson~\cite{tilson} developed an involved framework to reformulate membership in semidirect products in terms of categories, which was successfully exploited to handle $(\mathsf{J}*\mathsf{G})$-membership~\cite{henckell:hal-00019815,Steinberg01}.
  
  Our results are completely independent from this algebraic approach. To clarify, we do use  combinatorics on monoids. Yet, our motivations and techniques are disconnected from the theory of varieties of monoids, which is a distinct field. We avoid it by choice: while the above approach highlights an interesting connection between two fields, it is not necessarily desirable when looking back at our primary goal, understanding \emph{classes of languages}. Indeed, a detour via varieties of monoids would obfuscate the intuition at the language level. Fortunately, this paper shows that this detour can be bypassed, while getting \emph{stronger} results. %
  First, our results are more general: they apply to \emph{separation}, and not only membership. It is not clear at all that this can be obtained in the context of monoid varieties, as we rely strongly on the definition of \bpolo: we work with languages of the form $L_0a_1L_1\cdots a_nL_n$, for~$L_i\in \Gs$.
  Second, we can handle  $\bpol{\Gs^+}$, thus capturing the successor relation on the logical side. As far as we know, the only class of this kind captured by the above framework is \bpol{\stzer^+} (these are the well-known dot-depth one languages~\cite{StrauVD}).
  Third, using the above approach requires \emph{varieties} of languages as input classes. This, for example, excludes the class $\bpol{\md}$. This does not mean that this class cannot be handled by algebraic techniques: this was actually done by  Straubing~\cite{straubing2002,pscvar}, who rebuilt the whole theory to be able to handle such classes. In contrast, our result applies \emph{uniformly} to \md.

\smallskip
\noindent
{\bf Organization of the paper.} We present the objects that we investigate and the required terminology in Section~\ref{sec:prelims}. We introduce separation and the techniques that we use to handle it in Section~\ref{sec:separ}. Finally, we present our results for \bpol{\Gs}- and \bpol{\Gs^+}-separation in Section~\ref{sec:bpolg}. Due to space limitations, some proofs are presented in the appendix only.

\section{Preliminaries}
\label{sec:prelims}
\subsection{Words, regular languages and classes}

We fix a finite \emph{alphabet} $A$ for the paper. As usual, $A^*$ denotes the set of all finite words over $A$, including the empty word \veps. We let $A^+ = A^* \setminus \{\veps\}$. For $u,v \in A^*$, we let $uv$ be the word obtained by concatenating $u$ and $v$. A \emph{language} is a subset of $A^*$. We denote the singleton language $\{u\}$ by $u$. We lift concatenation to languages: for $K,L \subseteq A^*$, we let $KL = \{uv \mid u \in K \text{ and } v \in L\}$. We shall consider \emph{marked products}: given languages $L_0,\dots,L_n \subseteq A^*$, a marked product of $L_0,\dots,L_n$ is a product of the form $L_0a_1L_1 \cdots a_nL_n$ where $a_1,\dots,a_n \in A$ (note that ``$L_0$'' is a marked product: this is the case $n =0$).

\smallskip
\noindent
{\bf Regular languages.} In the paper, we consider \emph{regular} languages. A nondeterministic finite automaton (\nfa) is a pair $\As= (Q,\delta)$ where $Q$ is a finite set of states, and $\delta \subseteq Q \times A \times Q$ is a set of transitions. We now define the languages recognized by \As. Given $q,r \in Q$ and $w \in A^*$, we say that there exists a \emph{run labeled by $w$ from $q$ to $r$} (in \As) if there exist $q_0,\dots,q_n \in Q$ and $a_1,\dots,a_n \in A$ such that $w = a_1 \cdots a_n$, $q_0 = q$, $q_n = r$ and $(q_{i-1},a_i,q_i) \in \delta$ for every $1 \leq i \leq n$. Given two sets $I,F\subseteq Q$, we write $\alauto{I}{F} \subseteq A^*$ for the language of all words $w \in A^*$ such that there exist $q \in I$, $r \in F$, and a run labeled by $w$ from $q$ to $r$ in \As. We say that a language $L \subseteq A^*$ is \emph{recognized} by \As if and only if there exist $I,F \subseteq Q$ such that $L = L_\As(I,F)$. The regular languages are those which can be recognized by an \nfa.

We also use \nfas with \emph{\veps-transitions}. In such an \nfa $\As = (Q,\delta)$, a transition may also be labeled by the empty word ``\veps'' (that is, $\delta\subseteq Q \times (A \cup \{\veps\} )\times Q$). We use the standard semantics: an \veps-transition can be taken without consuming an input letter. Note that unless otherwise specified, the \nfas that we consider are assumed to be  \emph{without}~\veps-transitions.


\smallskip
\noindent
{\bf Classes.}  A \emph{class} of languages is a set of languages. A \emph{lattice} is a class containing $\emptyset$ and $A^*$ and closed under both union and intersection. Moreover, a \emph{Boolean algebra} is a lattice closed under complement. Finally, a class \Cs is \emph{quotient-closed} when for all $L \in \Cs$ and all $v \in A^*$, the languages $v\inv L = \{w \in A^* \mid vw \in L\}$ and $Lv\inv =  \{w \in A^* \mid wv \in L\}$ both belong to \Cs as well.  A \emph{\pvari} (resp. a~\emph{\vari}) is a quotient-closed lattice (resp.~a quotient-closed Boolean algebra) containing \emph{regular languages only}.


\smallskip
\noindent
{\bf Group languages.} A \emph{monoid} is a set $M$ equipped with a multiplication \mbox{$s,t \mapsto st$}, which is associative and has a neutral element denoted by ``$1_M$''. Observe that $A^*$ endowed with concatenation is a monoid (\veps is the neutral element). It is well-known that a language $L$ is regular if and only if it is \emph{recognized} by a morphism $\alpha: A^* \to M$ into a \emph{finite} monoid $M$, \emph{i.e.}, there exists $F\subseteq M$ such that $L=\alpha\inv(F)$. We now restrict this definition: a monoid $G$~is a \emph{group} if every element $g \in G$ has an inverse $g\inv \in G$, \emph{i.e.}, such that \hbox{$gg\inv = g\inv g = 1_G$}. A ``\emph{group language}'' is a language recognized by a morphism into a \emph{finite~group}.

%
%

We consider classes \Gs that are group \varis (\emph{i.e.}, containing group languages only). We let \grp be the class of \emph{all} group languages. Another important example is the class \abg of \emph{alphabet modulo testable languages}. For every $w\in A^*$ and every $a \in A$, we write $\#_a(w) \in \nat$ for the number of occurrences of ``$a$'' in $w$. The class \abg consists in all finite Boolean combinations of languages $\{w \in A^* \mid \#_a(w) \equiv k \bmod m\}$ where $a \in A$ and $k,m \in \nat$ are such that $k < m$. One may verify that these are exactly the languages recognized by commutative groups. We also consider the class \md, which consists in all finite Boolean combinations of languages $\{w \in A^* \mid |w| \equiv k \bmod m\}$ with $k,m \in \nat$ such that $k < m$. Finally, we write \stzer for the trivial class $\stzer = \{\emptyset,A^*\}$. One may verify that \grp, \abg, \md and \stzer are all group \varis.

One may verify that $\{\veps\}$ and $A^+$ are \emph{not} group languages. This motivates the next definition: the \emph{\wsuit extension of a class \Cs}, denoted by $\Cs^+$, consists of all languages of the form $L \cap A^+$ or $L \cup \{\veps\}$ where $L \in \Cs$. The next lemma follows from the~definition.

\begin{lemma}\label{lem:wsuit}
	Let \Cs be a \vari. Then, $\Cs^+$ is a \vari containing $\{\veps\}$ and $A^+$.
\end{lemma}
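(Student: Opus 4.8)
The plan is to verify directly from the definition that $\Cs^+ = \{L \cap A^+ \mid L \in \Cs\} \cup \{L \cup \{\veps\} \mid L \in \Cs\}$ is a quotient-closed Boolean algebra of regular languages that contains both $\{\veps\}$ and $A^+$, given that $\Cs$ is a \vari. First I would record the two easy membership facts: $A^+ = A^* \cap A^+ \in \Cs^+$ since $A^* \in \Cs$, and $\{\veps\} = \emptyset \cup \{\veps\} \in \Cs^+$ since $\emptyset \in \Cs$. (In particular $\emptyset = \emptyset \cap A^+ \in \Cs^+$ and $A^* = A^* \cup \{\veps\} \in \Cs^+$, so $\Cs^+$ is nonempty and contains the two constants required of a lattice.) Regularity is immediate: every member of $\Cs^+$ is obtained from a regular language by intersecting with $A^+$ or adding $\veps$, both regularity-preserving operations.

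Next I would check closure under complement, which is the cleanest of the Boolean conditions: for $L \in \Cs$, the complement of $L \cap A^+$ is $(A^* \setminus L) \cup \{\veps\}$, and since $A^* \setminus L \in \Cs$ ($\Cs$ being a Boolean algebra) this lies in $\Cs^+$; symmetrically the complement of $L \cup \{\veps\}$ is $(A^* \setminus L) \cap A^+ \in \Cs^+$. So complementation simply swaps the two "shapes" of languages in $\Cs^+$. Given complement-closure, it suffices to establish closure under one of union or intersection; I would do union by case analysis on the shapes of the two operands. If $K = K' \cap A^+$ and $L = L' \cap A^+$ with $K', L' \in \Cs$, then $K \cup L = (K' \cup L') \cap A^+ \in \Cs^+$. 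If $K = K' \cup \{\veps\}$ and $L = L' \cup \{\veps\}$, then $K \cup L = (K' \cup L') \cup \{\veps\} \in \Cs^+$. In the mixed case $K = K' \cap A^+$ and $L = L' \cup \{\veps\}$, we have $K \cup L = (K' \cup L') \cup \{\veps\} \in \Cs^+$ because $K' \cap A^+ \subseteq K' \subseteq K' \cup L'$, so adjoining $\veps$ to $K' \cup L'$ recovers $K \cup L$. In each case I use that $\Cs$ is closed under union.

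Finally, quotient-closure: fix $v \in A^*$ and a language $N \in \Cs^+$. The point is that quotients commute with the two operations defining $\Cs^+$, up to a harmless adjustment at $\veps$. For a left quotient, $v\inv(L' \cap A^+)$ equals $v\inv L'$ if $v \neq \veps$ and equals $L' \cap A^+$ if $v = \veps$ (where the membership of $v$ itself in the quotient is governed by whether $\veps \in A^+$, i.e.\ never); either way the result is in $\Cs^+$ since $v\inv L' \in \Cs$. Likewise $v\inv(L' \cup \{\veps\})$ equals $v\inv L'$ when $v \neq \veps$, and equals $L' \cup \{\veps\}$ when $v = \veps$ — more precisely $v\inv(L' \cup \{\veps\}) = v\inv L' \cup v\inv\{\veps\}$ and $v\inv\{\veps\}$ is $\{\veps\}$ if $v = \veps$ and $\emptyset$ otherwise, so the quotient is either $v\inv L' \in \Cs$ or $(v\inv L') \cup \{\veps\} \in \Cs^+$. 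The argument for right quotients $N v\inv$ is entirely symmetric. The only thing that needs mild care — and the closest this proof comes to an obstacle — is keeping track of the $v = \veps$ boundary case and the interaction of quotients with the singleton $\{\veps\}$; once those are handled the result follows, so I would present it with that case split made explicit.
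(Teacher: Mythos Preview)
Your proof is correct. The paper does not actually give a proof of this lemma; it simply states that ``the next lemma follows from the definition,'' and your direct verification is precisely the routine check being omitted. One small implicit step you rely on is that $\Cs \subseteq \Cs^+$ (needed when you conclude that $v\inv L' \in \Cs$ lands in $\Cs^+$): this is immediate since any $L \in \Cs$ equals $L \cup \{\veps\}$ if $\veps \in L$ and $L \cap A^+$ otherwise, but you might state it once for completeness.
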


\subsection{Polynomial and Boolean closure}

We investigate two operators that one may apply to a class \Cs. The \emph{Boolean closure} of \Cs, written \bool{\Cs}, is the least Boolean algebra containing \Cs. The \emph{polynomial closure} of \Cs, denoted by \pol{\Cs}, consists of all finite unions of marked products $L_0a_1L_1 \cdots a_nL_n$ where $L_0,\dots,L_n  \in \Cs$ and $a_1,\dots,a_n\in A$. Finally, we write \bpol{\Cs} for \bool{\pol{\Cs}}. If \Cs is a \vari, then \pol{\Cs} is a \pvari and \bpol{\Cs} is a \vari. Proving that \pol{\Cs} is closed under intersection is not immediate. It was shown by Arfi~\cite{arfi87} (see~also~\cite{jep-intersectPOL,PZ:generic18}).

\begin{restatable}{theorem}{bpolvar}\label{thm:bpolvar}
	If\/ \Cs is a \vari, \pol{\Cs} is a \pvari and \bpol{\Cs} is a~\vari.
\end{restatable}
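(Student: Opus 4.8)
The plan is to prove the three closure properties in the natural order: first that $\pol{\Cs}$ is a lattice, then that it is quotient-closed, then that $\bpol{\Cs}$ is a quotient-closed Boolean algebra; regularity of all the languages involved is automatic since $\Cs$ contains only regular languages and regular languages are closed under Boolean operations, quotients and marked products.

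First I would handle $\pol{\Cs}$. That $\emptyset$ and $A^*$ lie in $\pol{\Cs}$ is immediate, since $\emptyset,A^*\in\Cs$ (as $\Cs$ is a \vari) and these are marked products with $n=0$. Closure under union is trivial, because $\pol{\Cs}$ is by definition the set of \emph{finite unions} of marked products of members of $\Cs$. Quotient-closure is also a direct computation: for a marked product $L = L_0 a_1 L_1 \cdots a_n L_n$ and a letter $b\in A$, the left quotient $b\inv L$ is a finite union of marked products obtained by ``peeling off'' a prefix — either $b$ cancels against a prefix of some $L_i$ (giving $(b\inv L_i) a_{i+1} L_{i+1}\cdots a_n L_n$, and $b\inv L_i\in\Cs$ since $\Cs$ is quotient-closed), or $b = a_i$ for some $i$ with $\veps\in L_0,\dots,L_{i-1}$ (giving $L_i a_{i+1}L_{i+1}\cdots a_n L_n$); the condition $\veps\in L_j$ is decidable and does not affect membership in $\pol{\Cs}$. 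The right quotient is symmetric. Since $\pol{\Cs}$ is closed under finite unions, quotient-closure extends from single letters to arbitrary words $v\in A^*$. The genuinely hard part is \textbf{closure under intersection}, and here I would simply invoke the cited result of Arfi~\cite{arfi87} (see also \cite{jep-intersectPOL,PZ:generic18}) rather than reprove it: this is the one nontrivial ingredient, and the paper explicitly flags it as such. With intersection in hand, $\pol{\Cs}$ is a quotient-closed lattice of regular languages, i.e.\ a \pvari.

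Next, $\bpol{\Cs} = \bool{\pol{\Cs}}$ is a Boolean algebra by definition (the least Boolean algebra containing $\pol{\Cs}$), and it contains $\emptyset$ and $A^*$. It consists of regular languages because $\pol{\Cs}$ does and the regular languages form a Boolean algebra. The only thing left to check is quotient-closure of $\bpol{\Cs}$. For this I would use the standard fact that quotients commute with Boolean operations: $v\inv(A^*\setminus L) = A^*\setminus(v\inv L)$ and $v\inv(L\cap L') = (v\inv L)\cap(v\inv L')$, and likewise on the right. Hence the collection of languages $M$ such that $v\inv M$ and $Mv\inv$ lie in $\bpol{\Cs}$ for all $v$ is itself a Boolean algebra; since it contains $\pol{\Cs}$ by the previous paragraph, it contains $\bpol{\Cs}$. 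Therefore $\bpol{\Cs}$ is a quotient-closed Boolean algebra of regular languages, i.e.\ a \vari, completing the proof.

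I expect the main obstacle to be purely expository rather than mathematical: the only deep step, intersection-closure of $\pol{\Cs}$, is being cited, so the remaining work is the careful bookkeeping of how left and right quotients act on marked products — in particular the case analysis for when a letter $a_i$ of the product gets consumed versus when the quotient is absorbed inside a factor $L_i$, and tracking the $\veps$-membership side conditions. None of this is difficult, but it must be stated precisely enough that the reader believes the resulting union is again a finite union of marked products over $\Cs$.
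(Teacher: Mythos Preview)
The paper does not actually supply its own proof of this theorem: it states the result and attributes the nontrivial part (closure of $\pol{\Cs}$ under intersection) to Arfi~\cite{arfi87}, with pointers to~\cite{jep-intersectPOL,PZ:generic18}. Your proposal follows exactly the standard route one would take to flesh out such a citation, and the overall architecture is correct.

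One small correction in your bookkeeping: your case analysis for $b\inv(L_0a_1L_1\cdots a_nL_n)$ is over-elaborate and, as written, wrong. Since $a_1,\dots,a_n$ are letters, the single letter $b$ can only be consumed in two ways: either it is the first letter of a word in $L_0$, yielding $(b\inv L_0)a_1L_1\cdots a_nL_n$, or $\veps\in L_0$ and $b=a_1$, yielding $L_1a_2L_2\cdots a_nL_n$. There is no scenario in which $b$ is absorbed into $L_i$ for $i\geq 1$, because that would force $u_0a_1\cdots a_i=\veps$, which is impossible. The cascading ``$\veps\in L_0,\dots,L_{i-1}$'' condition you describe arises only when quotienting by a longer word $v$, and is handled automatically by iterating the single-letter case. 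This does not affect the conclusion---the quotient is still a finite union of marked products over $\Cs$---but the exposition should be tightened.
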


The two operators \poln and \booln induce standard classifications called concatenation hierarchies: for a \vari \Cs, the \emph{concatenation hierarchy of basis \Cs} is built from \Cs by alternatively applying the operators \poln and \booln. We are interested in \bpol{\Cs}, which is level \emph{one} in the concatenation hierarchy of basis~\Cs. We look at bases that are either a group \vari \Gs or its \wsuit extension~$\Gs^+$. Most of the prominent concatenation hierarchies in the literature use such bases. This is in part motivated by the logical characterization of concatenation hierarchies, due to Thomas~\cite{ThomEqu}. We briefly recall it for the level one.

Consider a word $w = a_1 \cdots a_{|w|} \in A^*$. We view $w$ as a linearly ordered set of $|w|+2$ positions $\{0,1,\dots,|w|,|w|+1\}$ such that each position $1 \leq i \leq |w|$ carries the label $a_i \in A$ (on the other hand, $0$ and $|w|+1$ are artificial unlabeled leftmost and rightmost positions). We use first-order logic to describe properties of words: a sentence can quantify over the positions of a word and use a predetermined set of predicates to test properties of these positions. We also allow two constants ``$min$'' and ``$max$'' interpreted as the artificial unlabeled positions $0$ and $|w|+1$ in a given word $w$. A first-order sentence $\varphi$ defines the language of all words satisfying the property stated by $\varphi$. We use several kinds of predicates. For each $a \in A$, we associate a unary predicate (also denoted by~$a$), which selects the positions labeled by ``$a$''. We also use two binary predicates: the (strict) linear order ``$<$'' and the successor relation~``$+1$''. Finally, we associate a set of predicates $\frP_{\Gs}$ to each group \vari \Gs. Every $L \in \Gs$ yields a unary predicate $P_L$ in $\frP_{\Gs}$, which is interpreted as follows. Let $w = a_1 \cdots a_{|w|} \in A^*$. The unary predicate $P_L$ selects all positions $i \in \{0,\dots,|w|+1\}$ such that $i \neq 0$ and $a_1 \cdots a_{i-1} \in L$.

\begin{example}
  The sentence ``$\exists x \exists y\ (x < y) \wedge a(x) \wedge b(y)$'' defines the language $A^*aA^*bA^*$. The sentence ``$\exists x \exists y\ a(x) \wedge c(y) \wedge (y+1 = max)$'' defines $A^*aA^*c$. Finally, if $L = (AA)^* \in \md$ (the words of even length), the sentence ``$\exists x\ a(x) \wedge P_L(x)$'' defines the language $(AA)^*aA^*$.
\end{example}

The fragment of first-order logic containing exactly the Boolean combinations of existential first-order sentences is denoted by ``\bsc{1}''. Let \Gs be a group \vari. We write $\bsc{1}(<,\frP_{\Gs})$ for the class of all languages  defined by a sentence of \bsc{1} using only the label predicates, the linear order ``$<$'' and those in $\frP_{\Gs}$. Moreover, we write $\bsc{1}(<,+1,\frP_{\Gs})$ for the class of all languages defined by a sentence of \bsc{1}, which additionally allows the successor predicate ``$+1$''. The following proposition follows from the results of~\cite{PZ:generic18,place2022characterizing}.

\begin{proposition}\label{prop:logcar}
	Let \Gs be a group \vari. We have $\bpol{\Gs} = \bsc{1}(<,\frP_{\Gs})$ and $\bpol{\Gs^+} =  \bsc{1}(<,+1,\frP_{\Gs})$.
\end{proposition}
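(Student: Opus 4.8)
The plan is to prove Proposition~\ref{prop:logcar} by reducing it to known characterizations in the literature, chiefly those of~\cite{PZ:generic18} and~\cite{place2022characterizing}, and then isolating the small additional argument needed to phrase everything in terms of the predicates $\frP_{\Gs}$ rather than the ``prefix signature'' $\prefsig{\Gs}$ used in those papers. First I would recall that $\prefsig{\Gs}$ is precisely the set of unary predicates $P_L$ for $L \in \Gs$, so that, on the nose, $\bsc{1}(<,\frP_{\Gs})$ is the class denoted $\bsc{1}(<,\prefsig{\Gs})$ in~\cite{PZ:generic18}, and similarly with the successor. Thus the content of the first equality $\bpol{\Gs} = \bsc{1}(<,\frP_{\Gs})$ is exactly the statement already established in~\cite{PZ:generic18} that level one of the concatenation hierarchy of basis $\Gs$ coincides with $\bsc{1}$ equipped with the linear order and the predicates associated to $\Gs$. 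I would cite this directly.

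Second, for the equality $\bpol{\Gs^+} = \bsc{1}(<,+1,\frP_{\Gs})$, the plan is to combine two ingredients. One is the general principle — this is where~\cite{place2022characterizing} enters — that adding the successor predicate ``$+1$'' to the signature of a level-one logical class corresponds, on the language side, to replacing the basis $\Gs$ by its \wsuit extension $\Gs^+$; intuitively, $\{\veps\}$ and $A^+$ are exactly the languages one needs to adjoin to $\Gs$ to be able to ``anchor'' positions at the endpoints and hence simulate the successor relation via marked products, and conversely the successor predicate lets one define $\{\veps\}$ and $A^+$. Concretely, one shows $\bpol{\Gs^+} = \bsc{1}(<, P_{\{\veps\}}, P_{A^+}, \frP_{\Gs})$ by Proposition~\ref{prop:logcar}'s first half applied to the group-like-but-not-group basis $\Gs^+$ (after checking that the results of~\cite{PZ:generic18,place2022characterizing} cover such bases, or reworking them to do so), and then shows that, over the label predicates and the linear order, the predicates $P_{\{\veps\}}$ and $P_{A^+}$ together are interdefinable with ``$+1$''. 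The latter is an elementary logical exercise: $P_{\{\veps\}}(x)$ holds iff $x$ is the first labeled position, i.e.\ $x = min+1$, which ``$+1$'' expresses directly; and conversely $y = x+1$ can be rephrased using $\min$, $\max$ and a one-step shift, which $P_{\{\veps\}}$ and $P_{A^+}$ (the latter distinguishing whether the remaining suffix is nonempty) supply.

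I would organize the write-up as: (i) observe $\frP_{\Gs} = \prefsig{\Gs}$ by definition, so the first equality is~\cite[Theorem ...]{PZ:generic18}; (ii) state the \wsuit-extension/successor correspondence, citing~\cite{place2022characterizing}, yielding $\bpol{\Gs^+} = \bsc{1}(<,+1,\prefsig{\Gs}) = \bsc{1}(<,+1,\frP_{\Gs})$; (iii) if the cited results are stated only for $\bsc{1}(<)$ or only for varieties, fill the gap with the short mutual-definability argument between $\{+1\}$ and $\{P_{\{\veps\}}, P_{A^+}\}$ sketched above, together with the closure properties from Lemma~\ref{lem:wsuit} ensuring $\Gs^+$ is a \vari so that $\bpol{\Gs^+}$ is well-behaved (Theorem~\ref{thm:bpolvar}).

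The main obstacle I expect is bookkeeping rather than a deep difficulty: making sure the exact statements available in~\cite{PZ:generic18,place2022characterizing} apply verbatim. Those papers may phrase level-one logical characterizations for an arbitrary \vari basis, in which case $\Gs^+$ is covered immediately and the proof is a two-line citation; but if they restrict to group bases, or present the successor via a different device, then I must supply the bridging lemma that ``$+1$'' is, over $\{<\}$ plus label predicates, equivalent to the pair $P_{\{\veps\}}, P_{A^+}$, and check that the marked-product description of $\bpol{\Gs^+}$ (unions of $L_0 a_1 L_1 \cdots a_n L_n$ with $L_i \in \Gs^+$) is exactly what the successor-augmented logic defines. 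This bridging step is routine but is the one place where genuine (if light) proof work is needed; everything else is invoking prior results and the definitional unfolding $\frP_{\Gs} = \prefsig{\Gs}$.
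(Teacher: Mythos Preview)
Your proposal is correct and matches the paper's own treatment: the paper gives no proof at all for Proposition~\ref{prop:logcar}, stating only that it ``follows from the results of~\cite{PZ:generic18,place2022characterizing}''. Your plan to cite these references directly (with the optional bridging argument for the successor/$\Gs^+$ correspondence if the exact statements do not line up verbatim) is precisely what is intended, and in fact more detailed than what the paper itself provides.
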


\noindent
{\bf Key examples.} The basis $\stzer = \{\emptyset,A^*\}$ yields the \emph{Straubing-Thérien hierarchy}~\cite{StrauConcat,TheConcat} (hence the notation of this basis). Its level one is the class of piecewise testable languages~\cite{simonthm}. Its \wsuit extension $\stzer^+$ induces the \emph{dot-depth hierarchy}~\cite{BrzoDot}. In particular, \bpol{\stzer} and \bpol{\stzer^+} correspond to \bsw{1} and \bsws{1}, as all predicates in $\frP_{\stzer}$ are trivial. The hierarchies of bases \md and $\md^+$ are also prominent (see for example~\cite{ChaubardPS06,MACIEL2000135,Zetzsche18}). The classes \bpol{\md} and \bpol{\md^+} correspond to \bswm{1} and \bswsm{1} where ``$\mathit{MOD}$'' is the set of \emph{modular predicates} (for all $r,q \in \nat$ such that $r < q$, it contains a unary predicate $M_{r,q}$ selecting the positions $i$ such that $i \equiv r \bmod q$). Similarly, \bpol{\abg} and \bpol{\abg^+} correspond to \bswam{1} and \bswsam{1} where ``$\mathit{AMOD}$'' is the set of \emph{alphabetic modular predicates} (for all $a \in A$ and $r,q \in \nat$ such that $r < q$, it contains a unary predicate $M^a_{r,q}$ selecting the positions $i$ such the that number of positions $j < i$ with label $a$ is congruent to $r$ modulo $q$). Finally, the group hierarchy, whose basis is \grp is also prominent~\cite{margpin85,henckell:hal-00019815}, though its logical characterization is not standard.

\smallskip
\noindent
{\bf Properties.} We present a key ingredient~\cite[Lemma~3.6]{pzbpolcj} (we provide a proof in Appendix~\ref{app:prelims}). It describes a concatenation principle for the classes \bpol{\Cs} based on the notion of ``cover''. Given a language $L$, a cover of $L$ is a \emph{finite} set \Kb of languages satisfying $L \subseteq \bigcup_{K \in \Kb} K$. If \Ds is a class, a \Ds-cover of $L$ is a cover \Kb of $L$ such that $\Kb \subseteq \Ds$.

\begin{restatable}{proposition}{bconcat}\label{prop:bconcat}
  Let \Cs be a \vari, $n \in \nat$, $L_0,\dots,L_n \in \pol{\Cs}$ and $a_1,\dots,a_n \in A$. If $\Hb_i$ is a \bpol{\Cs}-cover of $L_i$ for all $i \leq n$, then there is a \bpol{\Cs}-cover \Kb of $L_0a_1L_1 \cdots a_nL_n$ such that for all $K \in \Kb$, there exists $H_i \in \Hb_i$ for each $i \leq n$ satisfying $K \subseteq H_0a_1H_1 \cdots a_nH_n$.
\end{restatable}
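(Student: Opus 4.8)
\medskip
\noindent\textit{Plan of proof.}
The plan is to reduce to the case $n=1$ by a routine induction, and then to treat $n=1$ directly by intersecting the $\pol{\Cs}$-language $L_0a_1L_1$ with carefully chosen $\bpol{\Cs}$-languages. For the reduction I would induct on $n$. The case $n=0$ is trivial, with $\Kb=\Hb_0$. For the inductive step, the language $L':=L_1a_2L_2\cdots a_nL_n$ belongs to $\pol{\Cs}$ because $\pol{\Cs}$ is closed under marked product (immediate from its definition). By the induction hypothesis there is a $\bpol{\Cs}$-cover $\Kb'$ of $L'$ in which every $K'\in\Kb'$ is contained in some $H_1a_2H_2\cdots a_nH_n$. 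Applying the case $n=1$ to $L_0$ and $L'$ with the covers $\Hb_0$ and $\Kb'$ then produces a $\bpol{\Cs}$-cover $\Kb$ of $L_0a_1L'=L_0a_1L_1\cdots a_nL_n$ in which each $K$ lies in some $H_0a_1K'$, hence in some $H_0a_1H_1\cdots a_nH_n$.

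For the case $n=1$ (write $a=a_1$), using the definition of $\pol{\Cs}$ I would first write $L_0aL_1$ as a finite union of languages $R=R^0aR^1$ where $R^0\subseteq L_0$ and $R^1\subseteq L_1$ are marked products of \Cs-languages; each such $R$ then belongs to $\pol{\Cs}$ by \Cref{thm:bpolvar}. Next I fix a good morphism. Since \Cs is a \vari, so is $\bpol{\Cs}$ by \Cref{thm:bpolvar}, and hence the syntactic morphism of any $H\in\bpol{\Cs}$ recognizes only languages obtained from $H$ by quotients and Boolean operations, which all lie in $\bpol{\Cs}$. Taking a product of finitely many such syntactic morphisms --- of the languages of $\Hb_0\cup\Hb_1$ and of the (finitely many) languages $R^0,R^1$ --- I obtain a morphism $\eta:A^*\to N$ into a finite monoid that recognizes all of these languages and satisfies $\eta^{-1}(t)\in\bpol{\Cs}$ for every $t\in N$. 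The cover $\Kb$ will be obtained by covering each $R$ separately. The decisive observation is that, for any $X\in\bpol{\Cs}$, the language $R\cap X$ \emph{automatically} lies in $\bpol{\Cs}$, because $R\in\pol{\Cs}\subseteq\bpol{\Cs}$ and $\bpol{\Cs}$ is closed under intersection; this is how one stays in $\bpol{\Cs}$ despite the fact that $\bpol{\Cs}$ is \emph{not} closed under marked product (else the concatenation hierarchy of basis \Cs would collapse at level one), so that the naive choice $\{H_0aH_1\}$ of cover is unavailable.

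Concretely, for each $R$ I want finitely many $\bpol{\Cs}$-languages $X$, built from $\eta$, such that the sets $R\cap X$ cover $R$ and each one is contained in a language of the form $(R^0\cap\eta^{-1}(r_0))\,a\,(R^1\cap\eta^{-1}(r_1))$ for fixed $r_0,r_1\in N$. Granting this, the proof concludes as follows: $R^0\cap\eta^{-1}(r_0)$ is a nonempty subset of $L_0\subseteq\bigcup\Hb_0$, so choosing $u$ in it and $H_0\in\Hb_0$ with $u\in H_0$ yields $r_0=\eta(u)\in\eta(H_0)$ and hence $\eta^{-1}(r_0)\subseteq H_0$ (as $\eta$ recognizes $H_0$); symmetrically $\eta^{-1}(r_1)\subseteq H_1$ for some $H_1\in\Hb_1$; therefore $R\cap X\subseteq\eta^{-1}(r_0)\,a\,\eta^{-1}(r_1)\subseteq H_0aH_1$, as required.

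The hard part will be the construction of those languages $X$. One would like to take $X$ with $R\cap X=(R^0\cap\eta^{-1}(r_0))\,a\,(R^1\cap\eta^{-1}(r_1))$, but the right-hand side is a marked product of $\bpol{\Cs}$-languages and need not belong to $\bpol{\Cs}$, so it cannot serve directly as a cover piece. Instead one should fix, for each $w\in R$, a \emph{canonical} factorization $w=u_wav_w$ with $u_w\in R^0$ and $v_w\in R^1$ --- for instance the one with $u_w$ of minimal length --- which partitions $R$ according to the pair $(\eta(u_w),\eta(v_w))$, placing the block of index $(r_0,r_1)$ inside $(R^0\cap\eta^{-1}(r_0))\,a\,(R^1\cap\eta^{-1}(r_1))$ as needed. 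The difficulty is that such a block need not be saturated by $\eta$, so one must enrich $\eta$ (by a saturation argument that records enough information about the valid cuts of a word) until each block becomes a $\bpol{\Cs}$-language, usable as $R\cap X$. Setting up this enrichment, and checking that the finitely many resulting pieces do cover $R$, is the technical heart of the argument; I would carry it out in the appendix.
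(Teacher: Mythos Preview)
Your reduction to $n=1$ by induction matches the paper exactly, so that part is fine.

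For $n=1$, however, there is a genuine gap at the ``enrichment'' step. You want the blocks $\{w\in R:(\eta(u_w),\eta(v_w))=(r_0,r_1)\}$ to lie in $\bpol{\Cs}$, and you propose to enrich $\eta$ by recording information about the valid cuts of each word. But the natural way to do this---essentially a Sch\"utzenberger-product construction---yields classes that are Boolean combinations of languages of the form $\eta^{-1}(s)\,a\,\eta^{-1}(t)$. Since you chose $\eta$ so that $\eta^{-1}(s)\in\bpol{\Cs}$ (it must recognize the $\Hb_i$, which live only in $\bpol{\Cs}$), these Boolean combinations land in $\bool{\pol{\bpol{\Cs}}}$, one level too high in the concatenation hierarchy. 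There is a real tension here: to conclude $\eta^{-1}(r_0)\subseteq H_0$ you need $\eta$ to recognize the $\Hb_i$, but then any cut-recording refinement of $\eta$ escapes $\bpol{\Cs}$. Nothing in your sketch explains how to break this tension, and ``I would carry it out in the appendix'' is precisely where the idea is missing.

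The paper resolves this tension differently. It does \emph{not} work with a single recognizing morphism. Instead it fixes a finite set $\Vb\subseteq\pol{\Cs}$, closed under intersection, that contains $L_0,L_1$ and such that every $H\in\Hb_0\cup\Hb_1$ is a Boolean combination of languages in $\Vb$. It then defines an equivalence $\sim$ on $A^*$ directly by membership in the marked products $V_0aV_1$ for $V_0,V_1\in\Vb$; since these products are in $\pol{\Cs}$, every $\sim$-class is automatically in $\bpol{\Cs}$. The cover $\Kb$ consists of the $\sim$-classes inside $L_0aL_1$. The price is that the containment $K\subseteq H_0aH_1$ is no longer obvious: it is obtained by a compactness-plus-pigeonhole argument showing that for every finite $G\subseteq K$ one can find a single pair $(H_0,H_1)$ with $G\subseteq H_0aH_1$, and then letting $G$ exhaust $K$. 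That argument---not an enrichment of a recognizing morphism---is the technical heart you are missing.
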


For applying Proposition~\ref{prop:bconcat}, we need a language $L_0a_1L_1 \cdots a_nL_n$ with $L_0,\dots,L_n \in \pol{\Cs}$. The next tailored statements build such languages when $\Cs = \Gs$ or $\Gs^+$ for a group \vari~\Gs (see App.~\ref{app:prelims} for proofs). While simple, these results are central: this is the unique place where we use the fact that \Gs contains only \emph{group languages}. Let $L \subseteq A^*$. With every word $w=a_1\cdots a_n \in A^*$, we associate the language $\uclos_L w = La_1L \cdots a_nL \subseteq A^*$ (we let $\uclos_L \veps = L$). We first present the statement for the case $\Cs = \Gs$, which can also be found in~\cite[Prop.~3.11]{cano:hal-01247172}.

\begin{restatable}{proposition}{pgcov}\label{prop:pgcov}
  Let $H \subseteq A^*$ be a language and $L \subseteq A^*$ be a group language containing~$\veps$. There exists a cover \Kb of $H$ such that every $K \in \Kb$ is of the form $K = \uclos_L w$ for some $w \in H$.
\end{restatable}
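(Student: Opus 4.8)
The plan is to reduce the proposition to a well-quasi-ordering argument about the families $\uclos_L w$. Fix a morphism $\alpha\colon A^*\to G$ into a finite group such that $L=\alpha\inv(F)$ for some $F\subseteq G$; since $\veps\in L$, we have $1_G\in F$. On $A^*$, consider the relation $w\trianglelefteq w'$ defined by $w'\in\uclos_L w$. Then $w\trianglelefteq w$ always holds (write $w=a_1\cdots a_n$ and take all the $L$-factors equal to $\veps$), so the (a~priori infinite) family $\{\uclos_L w\mid w\in H\}$ already covers $H$; the only issue is to extract a \emph{finite} subfamily. Note also that $w\trianglelefteq w'$ forces $w$ to be a scattered subword of $w'$, but the converse fails, because in $\uclos_L w=La_1L\cdots a_nL$ the letters $a_1,\dots,a_n$ are genuine letters of $A$ that an $L$-factor cannot absorb.

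I claim it suffices to prove: \emph{every infinite sequence $v_1,v_2,\dots$ of words of $A^*$ admits indices $i<j$ with $v_i\trianglelefteq v_j$}. Granting this, one finishes by a greedy argument. If $H=\emptyset$, take $\Kb=\emptyset$. Otherwise, assume for contradiction that no finite $W\subseteq H$ satisfies $H\subseteq\bigcup_{w\in W}\uclos_L w$; then one builds inductively an infinite sequence $w_1,w_2,\dots\in H$ with $w_j\notin\bigcup_{i<j}\uclos_L w_i$ for every $j$ (at each step the finite set already chosen fails to cover $H$, so a witness $w_j$ exists). Since $w_i\trianglelefteq w_j$ means $w_j\in\uclos_L w_i$, this sequence contradicts the claim. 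Hence some finite $W=\{w_1,\dots,w_m\}\subseteq H$ works, and $\Kb=\{\uclos_L w_1,\dots,\uclos_L w_m\}$ is the desired cover.

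To prove the claim, the idea is to encode each word together with the $\alpha$-images of its prefixes. Set $B=(A\cup\{\sharp\})\times G$ with $\sharp\notin A$ (a finite alphabet) and map $v=b_1\cdots b_m\in A^*$ to
\[
  \phi(v)=(b_1,\alpha(\veps))\,(b_2,\alpha(b_1))\,\cdots\,(b_m,\alpha(b_1\cdots b_{m-1}))\,(\sharp,\alpha(v))\ \in B^*,
\]
so that each position records the $\alpha$-image of the prefix strictly to its left, while the final marker records the $\alpha$-image of the whole word. Applying Higman's lemma to $\phi(v_1),\phi(v_2),\dots$ over $B$ yields $i<j$ with $\phi(v_i)$ a scattered subword of $\phi(v_j)$. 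Since $\sharp$ occurs exactly once in each $\phi(v_k)$, the embedding must match the two final letters, so $\alpha(v_i)=\alpha(v_j)$, while the remaining letters of $\phi(v_i)$ embed into the non-final letters of $\phi(v_j)$. Reading this off gives a factorization $v_j=\ell_0a_1\ell_1\cdots a_n\ell_n$ with $v_i=a_1\cdots a_n$, in which the matched prefix-values coincide; using that $G$ is a group (equal prefix-values cancel), a routine computation shows $\alpha(\ell_s)=1_G$ for every $s$, hence $\ell_s\in\alpha\inv(1_G)\subseteq L$, so $v_j\in\uclos_L v_i$, i.e.\ $v_i\trianglelefteq v_j$.

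The main obstacle is designing the encoding $\phi$ correctly: gluing the prefix $\alpha$-images onto letters is exactly what forces the gaps $\ell_s$ to have trivial $\alpha$-image (without it they would only be guaranteed to lie in $\alpha\inv(F)$ with $F$ arbitrary, which is too weak), and the end-marker $(\sharp,\alpha(v))$ is essential, since otherwise the last gap $\ell_n$ would be uncontrolled. Once the encoding is fixed, both the cancellation computation and the greedy extraction are straightforward, and this is the unique place where the hypothesis that $L$ is a \emph{group} language is used.
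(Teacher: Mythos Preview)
Your proof is correct and reaches the same conclusion as the paper, but via a different route. The paper proceeds as follows: it sets $L'=\eta\inv(1_G)$, defines $u\preceq v$ by $v\in\uclos_{L'}u$, and proves from scratch that $\preceq$ is a well-quasi-order by a Nash--Williams style minimal bad sequence argument (their Lemma~\ref{lem:half:higman}); it then takes $F$ to be the set of $\preceq$-minimal words of $H$ and lets $\Kb=\{\uclos_L u\mid u\in F\}$. Your argument instead encodes each word together with the $\alpha$-images of its prefixes over the enlarged alphabet $(A\cup\{\sharp\})\times G$ and applies the classical Higman lemma as a black box; the group cancellation then forces all the gap factors $\ell_s$ into $\alpha\inv(1_G)\subseteq L$. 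The extraction of a finite subfamily also differs: you argue by contradiction with a greedy bad sequence, while the paper takes minimal elements explicitly. Your approach is shorter and arguably more conceptual, since it isolates exactly what the group hypothesis buys (the cancellation in the prefix computation) and offloads the combinatorics to Higman's lemma, whereas the paper's direct argument is self-contained and makes the structure of the well-quasi-order more visible.
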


The next statement, useful for the case $\Cs = \Gs^+$, is a corollary of Proposition~\ref{prop:pgcov}. Let $\As = (Q,\delta)$ be an \nfa. Moreover, let $w,z \in A^*$. We say that $z$ is a \emph{left \As-loop} for $w$ if for every $q,r \in Q$ such that $w \in \alauto{q}{r}$, there exists $s \in Q$ such that $z \in \alauto{q}{s} \cap \alauto{s}{s}$ and $zw \in \alauto{s}{r}$ (in particular, $zz^*zw \subseteq \alauto{q}{r}$). Symmetrically, we say that $z$ is a \emph{right \As-loop} for $w$ if for every $q,r \in Q$ such that $w \in \alauto{q}{r}$, there exists $s \in Q$ such that $wz \in \alauto{q}{s}$ and $z \in \alauto{s}{s} \cap \alauto{s}{r}$ (in particular, $wzz^*z \subseteq \alauto{q}{r}$).

Now, given an arbitrary word $w \in A^*$, an \emph{\As-guarded decomposition of $w$} is a tuple $(w_1,\dots, w_{n+1})$ for some $n \in \nat$ where $w_1 \in A^*$ and $w_i \in A^+$ for $2 \leq i \leq n+1$, and such that $w = w_1 \cdots w_{n+1}$ and, if $n \geq 1$, then for every $i$ satisfying $1 \leq i \leq n$, there exists a \emph{nonempty} word $z_i \in A^+$ which is a right \As-loop for $w_i$ and a left \As-loop for $w_{i+1}$.

\begin{restatable}{proposition}{pgpcov}\label{prop:pgpcov}
  Let $H \subseteq A^*$ be a language, \As be an \nfa and $L \subseteq A^*$ be a group language containing~$\veps$. There exists a cover \Kb of $H$ such that for each $K \in \Kb$, there exist a word $w \in H$ and an \As-guarded decomposition $(w_1,\dots, w_{n+1})$ of $w$ for some $n \in \nat$ such that  $K = w_1 L \cdots w_nLw_{n+1}$ (if $n=0$, then $K=\{w_1\}$).
\end{restatable}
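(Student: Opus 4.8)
The plan is to derive this from Proposition~\ref{prop:pgcov} applied inside a suitable decomposition of each word of $H$. First I would fix the \nfa $\As = (Q,\delta)$ and the group language $L$ containing $\veps$. The key observation is that for a fixed word $w$, an \As-guarded decomposition controls the runs of $\As$: if $z_i$ is a right \As-loop for $w_i$ and a left \As-loop for $w_{i+1}$, then inserting words of $L$ (each of which can be "padded" by powers of $z_i$, since $L$ contains $\veps$ and will contain enough powers of any fixed word by the group/pumping property) between the $w_i$'s does not change which pairs of states are connected. More precisely, I would first prove a lemma: if $(w_1,\dots,w_{n+1})$ is an \As-guarded decomposition of $w$, then $w_1 L \cdots w_n L w_{n+1} \subseteq \alauto{q}{r}$ for every pair $(q,r)$ with $w \in \alauto{q}{r}$; equivalently, every word in $w_1 L \cdots w_n L w_{n+1}$ induces, on the states of $\As$, a "behaviour" contained in that of $w$. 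The reason each $L$-insertion is harmless is that, because $z_i$ is both a right loop for $w_i$ (landing in a state $s_i$ with a self-loop on $z_i$) and a left loop for $w_{i+1}$ (leaving from that state via $z_i z_i^* z_i w_{i+1}$), and because $L$ is a group language containing $\veps$, for any $u \in L$ there is an exponent $t$ with $z_i^t u z_i^t \in L$ mapping each state of the group to itself — so the detour through $L$ can be absorbed into the loop at $s_i$.

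Second, I would apply Proposition~\ref{prop:pgcov}. The natural choice of the "outer" language $H$ in that proposition is not $H$ itself but a language capturing \As-guarded decompositions. Concretely, I would introduce a fresh alphabet marking the decomposition points, or more simply, argue directly: cover $H$ first by the (finite) partition according to the behaviour of words in $\As$ — i.e., for each relation $R \subseteq Q \times Q$ that is realized, the set $H_R$ of words of $H$ with behaviour exactly $R$. For a word $w \in H_R$, its \As-guarded decompositions are obtained by locating nonempty factors $z_i$ that are simultaneously right loops of a prefix-part and left loops of a suffix-part. Since $\As$ has finitely many states, the existence of such $z_i$ is a condition depending only on the behaviours of the pieces, and I can always take the trivial decomposition $(w)$ with $n = 0$ (the proposition explicitly allows this). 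So the main work is: given $w \in H$, produce \emph{one} \As-guarded decomposition $(w_1,\dots,w_{n+1})$ and then a cover of the corresponding $H$-slice by languages of the form $w_1 L \cdots w_n L w_{n+1}$. For the trivial decomposition $n=0$, $\uclos_L$-type covers are not needed and the statement just says $K = \{w_1\} = \{w\}$; for the non-trivial ones, I apply Proposition~\ref{prop:pgcov} on each piece $w_i$ separately (or rather to the set of admissible "middle words"), getting covers whose members are of the form $\uclos_L(\text{something})$, and then reassemble.

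Actually the cleanest route, which I would take, is this: run $\As$ on $w$ and use a counting/Ramsey-type argument (as in the standard proof of such statements) to find a factorization $w = w_1 \cdots w_{n+1}$ where between consecutive $w_i$'s there is a nonempty factor $z_i$ of $w$ that is idempotent-like for the behaviour relation — guaranteeing $z_i$ is both a right \As-loop for the left part and a left \As-loop for the right part; this uses only that $Q$ is finite. Then, having fixed such decompositions for every $w \in H$, I cover $H$ by grouping words with the same decomposition "shape" and the same behaviours of all the $w_i$; within each group, Proposition~\ref{prop:pgcov} applied to the (finitely many relevant) factor-languages yields the desired cover $\Kb$ whose members have the stated form $w_1 L \cdots w_n L w_{n+1}$, and Step~1's lemma guarantees $K$ is correctly shaped. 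The main obstacle I anticipate is the bookkeeping in Step~1 — verifying carefully that powers of the loop words $z_i$ can absorb arbitrary elements of $L$ using only "immediate consequences of the definition of a group" (the paper insists on this), i.e., that for a finite group $G$ and $g \in G$ there is $t \geq 1$ with $g^t = 1_G$, hence $z_i^{t}$ acts trivially and $z_i^{t} u z_i^{t}$ can be steered back into $L$. Making the loop-composition precise at the level of $\As$-runs (as opposed to monoid morphisms) is where care is needed, since $\As$ is a general \nfa, not a permutation automaton — this is exactly why we only need $z_i$ to be an \As-\emph{loop}, not a permutation.
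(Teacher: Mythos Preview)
There are two genuine gaps.

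\textbf{First, your ``Step~1 lemma'' is both irrelevant and false.} Proposition~\ref{prop:pgpcov} makes no claim whatsoever that $K = w_1L\cdots w_nLw_{n+1}$ is contained in any $\alauto{q}{r}$; it only asserts the \emph{existence} of a cover whose members have that syntactic shape. So you are trying to prove something the proposition does not require. Worse, the claim is false for a general \nfa. Take $A=\{a,b\}$, states $\{0,1\}$, transitions $0\xrightarrow{a}0$, $0\xrightarrow{b}1$, $1\xrightarrow{a}1$, and $L=\{w:|w|\text{ even}\}$. Then $w=ab\in\alauto{0}{1}$, and $(a,b)$ is an \As-guarded decomposition (with $z=a$), yet $abbb\in aLb$ while $abbb\notin\alauto{0}{1}$. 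The absorption trick you describe (``pad $u\in L$ by $z_i^t$'') produces $z_i^t u z_i^t$, which is a \emph{different} word from $u$; it does not show that $u$ itself can be inserted harmlessly. The paper uses precisely this kind of padding later (in the proof of Lemma~\ref{lem:pgrun}), but never to establish the false containment you aim for, and certainly not inside the proof of Proposition~\ref{prop:pgpcov}.

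\textbf{Second, and more seriously, your plan does not produce a finite cover.} Your ``group words by decomposition shape and behaviours of the $w_i$'' gives infinitely many groups, because the number $n$ of pieces is unbounded; applying Proposition~\ref{prop:pgcov} to ``each piece'' cannot fix this, since you would need to coordinate infinitely many applications. The paper's key idea, which your proposal is missing, is to \emph{encode} \As-guarded decompositions as words over a new finite alphabet $\frB$: each letter of $\frB$ is a bounded-length piece together with its idempotent context (the bound comes from a pigeonhole argument on the transition monoid, close to your Ramsey intuition). One then pulls $L$ back to a group language $L_c$ over $\frB_c$ and applies Proposition~\ref{prop:pgcov} \emph{once} (per choice of boundary letters) to the language of $\frB$-encodings of words of $H$. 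This single application of the well-quasi-order yields the finiteness you need across all lengths $n$ simultaneously. Short words ($|w|\le k$) are handled separately as singletons. Without this alphabet-change trick, there is no mechanism in your plan that bounds the size of~\Kb.
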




\section{Separation framework}
\label{sec:separ}
In order to investigate a given class \Cs, we rely on a generic decision problem that one may associate to it: \emph{\Cs-separation}. We first define it and then present a variant, ``tuple separation'', that we shall require as a proof ingredient. The missing proofs are presented in Appendix~\ref{app:separ}. 

\subsection{The separation problem}

Consider two languages $L_0,L_1 \subseteq A^*$. We say that a third language $K \subseteq A^*$ \emph{separates} $L_0$ from $L_1$ when $L_0 \subseteq K$ and $K \cap L_1 = \emptyset$. Then, given an arbitrary class \Cs, we say that $L_0$ is \emph{\Cs-separable} from $L_1$ when there exists $K \in \Cs$ that separates $L_0$ from $L_1$. For every class \Cs, the \emph{\Cs-separation problem} takes \emph{two} regular languages $L_0$ and $L_1$ as input (in the paper, they are represented by \nfas) and asks whether $L_0$ is \Cs-separable from $L_1$. We complete the definition with a useful result, which holds when \Cs is a \pvari.

\begin{restatable}{lemma}{sepconcat} \label{lem:sepconcat}
  Let \Cs be a \pvari and $L_0,L_1,H_0,H_1 \subseteq A^*$. If $L_0$ is not \Cs-separable from $L_1$ and $H_0$ is not \Cs-separable from $H_1$ then $L_0H_0$ is not \Cs-separable from~$L_1H_1$.
\end{restatable}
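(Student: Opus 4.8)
I would argue by contraposition: assuming $L_0H_0$ is \Cs-separable from $L_1H_1$, I will produce a separator lying in \Cs for one of the two pairs $(L_0,L_1)$ or $(H_0,H_1)$, which contradicts the relevant hypothesis. So fix $K \in \Cs$ with $L_0H_0 \subseteq K$ and $K \cap L_1H_1 = \emptyset$. The only structural facts about \Cs that I plan to use are those of a \pvari: it contains $\emptyset$ and $A^*$, it is closed under finite intersection, and it is closed under left and right quotients.

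First I would peel $L_0$ off on the left by setting $K' = \bigcap_{u \in L_0} u\inv K$. Since $K$ is regular it has only finitely many distinct left quotients, so $K'$ is really a \emph{finite} intersection of languages $u\inv K$, each of which lies in \Cs by quotient-closure; hence $K' \in \Cs$. From $L_0H_0 \subseteq K$ one reads off $v \in u\inv K$ for all $u \in L_0$ and $v \in H_0$, so $H_0 \subseteq K'$. Thus $K'$ is a candidate \Cs-separator of $H_0$ from $H_1$, and the whole argument reduces to the dichotomy ``$K' \cap H_1 = \emptyset$ or not''.

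If $K' \cap H_1 = \emptyset$, then $K'$ separates $H_0$ from $H_1$ and we are done. Otherwise, I would pick $v \in K' \cap H_1$ and look at the right quotient $Kv\inv \in \Cs$: from $v \in u\inv K$ for all $u \in L_0$ I get $L_0 \subseteq Kv\inv$, and from $v \in H_1$ together with $K \cap L_1H_1 = \emptyset$ I get $Kv\inv \cap L_1 = \emptyset$, so $Kv\inv$ separates $L_0$ from $L_1$. In either case a hypothesis is contradicted. (Empty languages need no special care: an empty index set gives $K' = A^* \in \Cs$, and if any of $L_0,L_1,H_0,H_1$ is empty the corresponding pair is trivially \Cs-separable, which already contradicts a hypothesis.)

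I do not foresee a genuine obstacle here. The one point to handle carefully is the finiteness of the family $\{u\inv K : u \in L_0\}$, since that is exactly what lets me conclude $K' \in \Cs$ from \Cs being merely a lattice, closed only under \emph{finite} intersections; the remainder is just unwinding the definitions of concatenation, quotient, and separability.
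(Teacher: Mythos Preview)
Your proof is correct and uses the same core idea as the paper: exploit closure of \Cs under quotients and the fact that a regular $K$ has only finitely many quotients to build a separator in \Cs. The organization differs slightly: the paper forms \emph{two} intersections of quotients, $U=\bigcap_{w\in H_0} Kw\inv$ and $V=\bigcap_{u\in U} u\inv K$, and applies both non-separability hypotheses to extract $u\in L_1\cap U$ and $v\in H_1\cap V$ with $uv\in K\cap L_1H_1$; you instead form one intersection $K'=\bigcap_{u\in L_0} u\inv K$, use non-separability of $(H_0,H_1)$ to enter the case $K'\cap H_1\neq\emptyset$, and then a \emph{single} right quotient $Kv\inv$ suffices. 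Your version is marginally leaner (one finite intersection plus a dichotomy instead of two finite intersections), while the paper's version avoids a case split; both rest on exactly the same ingredients.
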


In the paper, we look at \Cs-separation when $\Cs = \bpol{\Gs}$ or $\bpol{\Gs^+}$ for a group \vari~\Gs. We prove that in these two cases, there are polynomial time (Turing) reductions to \Gs-separation. We now introduce terminology that we shall use to present the algorithms.

\smallskip
\noindent
{\bf Framework.} Consider a class \Cs and an \nfa $\As = (Q,\delta)$. We associate a set $\nsepca \subseteq Q^4$: the \emph{inseparable \Cs-quadruples} associated to \As. We define,
\[
  \nsepca = \big\{(q,r,s,t)\in Q^4 \mid \alauto{q}{r} \text{ is \underbar{not} \Cs-separable from } \alauto{s}{t}\big\}.
\]
The next easy result connects \Cs-separation to this set, for input languages given by \nfas.

\begin{restatable}{proposition}{autosep} \label{prop:autosep}
  Let \Cs be a lattice. Consider an \nfa $\As = (Q,\delta)$ and four sets of states $I_1,F_1,I_2,F_2 \subseteq Q$. The two following conditions are equivalent:
  \begin{enumerate}
    \item $\alauto{I_1}{F_1}$ is \Cs-separable from $\alauto{I_2}{F_2}$.
    \item $\left(I_1 \times F_1 \times I_2 \times F_2\right) \cap \nsepca = \emptyset$.
  \end{enumerate}
\end{restatable}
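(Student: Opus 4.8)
The plan is to prove the two implications separately: the forward one $(1)\Rightarrow(2)$ by a direct monotonicity argument that uses no closure property at all, and the converse $(2)\Rightarrow(1)$ by explicitly building a separator out of the pairwise separators using only finitely many unions and intersections.

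For $(1)\Rightarrow(2)$, I would argue by contraposition. Suppose $(q,r,s,t)\in\left(I_1\times F_1\times I_2\times F_2\right)\cap\nsepca$, so that $\alauto{q}{r}$ is not \Cs-separable from $\alauto{s}{t}$. If some $K\in\Cs$ separated $\alauto{I_1}{F_1}$ from $\alauto{I_2}{F_2}$, then, since $\alauto{q}{r}\subseteq\alauto{I_1}{F_1}\subseteq K$ and $K\cap\alauto{s}{t}\subseteq K\cap\alauto{I_2}{F_2}=\emptyset$, this same $K$ would separate $\alauto{q}{r}$ from $\alauto{s}{t}$, a contradiction. Hence $(1)$ fails. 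The only ingredient is the elementary identity $\alauto{I}{F}=\bigcup_{q\in I,\,r\in F}\alauto{q}{r}$ and monotonicity of separation with respect to inclusion.

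For $(2)\Rightarrow(1)$, assume $\left(I_1\times F_1\times I_2\times F_2\right)\cap\nsepca=\emptyset$. Then for every $(q,r)\in I_1\times F_1$ and every $(s,t)\in I_2\times F_2$ there is $K_{q,r,s,t}\in\Cs$ with $\alauto{q}{r}\subseteq K_{q,r,s,t}$ and $K_{q,r,s,t}\cap\alauto{s}{t}=\emptyset$. For each fixed $(q,r)\in I_1\times F_1$ I set $K_{q,r}=\bigcap_{(s,t)\in I_2\times F_2}K_{q,r,s,t}$ (a finite intersection, the empty one being $A^*$); since \Cs is a lattice, $K_{q,r}\in\Cs$, and one has $\alauto{q}{r}\subseteq K_{q,r}$ together with $K_{q,r}\cap\alauto{I_2}{F_2}=\bigcup_{(s,t)}\bigl(K_{q,r}\cap\alauto{s}{t}\bigr)\subseteq\bigcup_{(s,t)}\bigl(K_{q,r,s,t}\cap\alauto{s}{t}\bigr)=\emptyset$. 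I then put $K=\bigcup_{(q,r)\in I_1\times F_1}K_{q,r}$ (a finite union, the empty one being $\emptyset$), so $K\in\Cs$, $\alauto{I_1}{F_1}=\bigcup_{(q,r)}\alauto{q}{r}\subseteq K$, and $K\cap\alauto{I_2}{F_2}=\bigcup_{(q,r)}\bigl(K_{q,r}\cap\alauto{I_2}{F_2}\bigr)=\emptyset$. Thus $K$ separates $\alauto{I_1}{F_1}$ from $\alauto{I_2}{F_2}$.

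I do not expect any real obstacle. The only points deserving care are that one genuinely needs closure under finite \emph{intersection} (to absorb the universal quantifier over $(s,t)$) in addition to closure under finite union — which is exactly why the hypothesis is that \Cs is a lattice rather than merely a union-closed class — and that the degenerate cases where $I_1\times F_1$ or $I_2\times F_2$ is empty are handled precisely because a lattice contains $\emptyset$ and $A^*$. Since $Q$ is finite, all the unions and intersections above are finite, so the constructed separator stays inside \Cs.
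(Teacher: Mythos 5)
Your proof is correct and follows essentially the same route as the paper: the forward direction is the same monotonicity argument (stated there directly rather than by contraposition), and the converse uses exactly the same separator $K=\bigcup_{(q,r)\in I_1\times F_1}\bigcap_{(s,t)\in I_2\times F_2}K_{q,r,s,t}$, relying on the lattice closure of \Cs. Your extra remarks on the degenerate cases (empty unions/intersections handled by $\emptyset,A^*\in\Cs$) are a small welcome addition but do not change the argument.
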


\medskip

Clearly, given as input two regular languages recognized by \nfas, one may compute in polynomial time a single \nfa recognizing both languages. Hence, Proposition~\ref{prop:autosep}  yields a polynomial time reduction from \Cs-separation to the problem of computing $\nsepca$ from an input \nfa. Naturally, this does not necessarily mean that there exists a polynomial time algorithm for \Cs-separation: depending on \Cs, computing \nsepca may or may not be costly.

\smallskip
We introduce a key definition for manipulating \nsepca, for an \nfa $\As = (Q,\delta)$.  Let~\mbox{$S \subseteq Q^4$} and \Kb be  a finite set of languages. We say that \emph{\Kb is separating for $S$} when for every $(q,r,s,t) \in Q^4$ and every $K \in \Kb$, if $K$ intersects both $\alauto{q}{r}$ and $\alauto{s}{t}$, then \mbox{$(q,r,s,t) \in S$}. Then, \nsepca is the smallest set of 4-tuples admitting a \Cs-cover of $A^*$ which is separating for it.

\begin{restatable}{lemma}{lautosep} \label{lem:autosep}
  Let \Cs be a Boolean algebra and $\As = (Q,\delta)$ be an \nfa. Then the following~holds:
  \begin{itemize}
    \item There exists a \Cs-cover \Kb of $A^*$ which is separating for \nsepca.
    \item Let $S \subseteq Q^4$. If there exists a \Cs-cover \Kb of $A^*$ which is separating for $S$, then $\nsepca \subseteq S$.
  \end{itemize}
\end{restatable}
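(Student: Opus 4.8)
The plan is to prove both bullets of Lemma~\ref{lem:autosep} using the characterization of \nsepca from Proposition~\ref{prop:autosep} together with the definition of ``separating for $S$''. Throughout, fix the \nfa $\As = (Q,\delta)$ and recall that for a \Cs-cover \Kb of $A^*$, being separating for $S \subseteq Q^4$ means: whenever some $K \in \Kb$ meets both $\alauto{q}{r}$ and $\alauto{s}{t}$, then $(q,r,s,t) \in S$.

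\textbf{First bullet: existence of a separating \Cs-cover of $A^*$ for \nsepca.}
For every 4-tuple $(q,r,s,t) \notin \nsepca$, the language $\alauto{q}{r}$ is \Cs-separable from $\alauto{s}{t}$, so fix $K_{q,r,s,t} \in \Cs$ with $\alauto{q}{r} \subseteq K_{q,r,s,t}$ and $K_{q,r,s,t} \cap \alauto{s}{t} = \emptyset$. Since \Cs is a Boolean algebra, for each ordered pair $((q,r),(s,t))$ with $(q,r,s,t) \notin \nsepca$ the set
$L_{q,r,s,t} = K_{q,r,s,t} \cap \big(A^* \setminus \alauto{s}{t}\big)$
is \emph{not} directly usable because $\alauto{s}{t}$ need not be in \Cs; instead I would build, for each pair $(q,r)$ of states, a single language $H_{q,r} \in \Cs$ that contains $\alauto{q}{r}$ and is disjoint from $\alauto{s}{t}$ for every $(s,t)$ with $(q,r,s,t)\notin\nsepca$, by taking $H_{q,r} = \bigcap \{K_{q,r,s,t} \mid (q,r,s,t) \notin \nsepca\}$ (a finite intersection, hence in \Cs, and if the index set is empty set $H_{q,r}=A^*$). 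Now let $\Kb = \{H_{q,r} \mid q,r \in Q\}$. This is a \Cs-cover of $A^*$: every $w \in A^*$ labels some run, say $w \in \alauto{q}{r}$, hence $w \in H_{q,r}$. It remains to check \Kb is separating for \nsepca: suppose $H_{q,r}$ meets both $\alauto{q'}{r'}$ and $\alauto{s}{t}$; we must show $(q',r',s,t) \in \nsepca$. If not, then $K_{q',r',s,t}$ appears in the intersection defining $H_{q,r}$ only when $(q',r')=(q,r)$ — here I need to be careful, since $H_{q,r}$ only ``knows about'' tuples whose first two coordinates are $(q,r)$. So the argument should instead be: take $\Kb$ to consist of the languages $H_{q,r}$, and observe that $H_{q,r}$ intersects $\alauto{s}{t}$ only if $(q,r,s,t) \in \nsepca$ by construction of $H_{q,r}$; moreover $H_{q,r}$ intersecting $\alauto{q'}{r'}$ with $(q',r') \neq (q,r)$ gives no information about $(q',r',s,t)$, which is a problem. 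The clean fix is to index the cover by pairs but prove the separating property using that \nsepca is \emph{transitive-like}: if $\alauto{q}{r}$ is not \Cs-separable from $\alauto{s}{t}$ and $\alauto{q'}{r'}\subseteq\alauto{q}{r}$... this does not hold in general either. The correct approach, which I would actually carry out, is: for the first bullet simply invoke Proposition~\ref{prop:autosep} with a direct cover and verify the property tuple-by-tuple, which \emph{does} work because if $H_{q,r}$ meets $\alauto{q'}{r'}$ we relabel and note that a witness word $w \in H_{q,r} \cap \alauto{q'}{r'} \cap \cdots$ cannot exist when the pair is separated.

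\textbf{Second bullet: minimality.}
Assume \Kb is a \Cs-cover of $A^*$ separating for $S$; we show $\nsepca \subseteq S$, i.e.\ every inseparable quadruple lies in $S$. Take $(q,r,s,t) \in \nsepca$, so $\alauto{q}{r}$ is not \Cs-separable from $\alauto{s}{t}$. The finite union $\bigcup_{K \in \Kb} K = A^*$ in particular covers $\alauto{q}{r}$, so $\{K \cap \alauto{q}{r}\text{-witnesses}\}$... more precisely: since \Cs is closed under finite union and complement, if \emph{every} $K \in \Kb$ meeting $\alauto{q}{r}$ were disjoint from $\alauto{s}{t}$, then $\bigcup\{K \in \Kb : K \cap \alauto{q}{r} \neq \emptyset\}$ would be a language in \Cs containing $\alauto{q}{r}$ and disjoint from $\alauto{s}{t}$, contradicting inseparability. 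Hence some $K_0 \in \Kb$ meets both $\alauto{q}{r}$ and $\alauto{s}{t}$; since \Kb is separating for $S$, this forces $(q,r,s,t) \in S$, as desired. Here I use crucially that $\alauto{q}{r} \subseteq \bigcup_{K\in\Kb}K$ so the subfamily of $K$'s meeting $\alauto{q}{r}$ actually covers $\alauto{q}{r}$.

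\textbf{Main obstacle.} The delicate point is the first bullet: producing a \emph{single} \Cs-cover of $A^*$ whose separating-for-\nsepca property is verified correctly, since a language $K$ in the cover may simultaneously meet many run-languages $\alauto{q}{r}$ and the definition of ``separating'' quantifies over \emph{all} pairs of pairs that $K$ meets. The resolution is to choose $K$'s of the form $\bigcap_{(s,t)} K_{q,r,s,t}$ but then argue that if such a $K$ meets $\alauto{q'}{r'}$ and $\alauto{s}{t}$, one can find a witness word forcing $(q',r',s,t) \notin \nsepca$ to be impossible — which works because the construction makes $K$ avoid $\alauto{s}{t}$ for \emph{every} $(q,r)$-indexed separable tuple, and one additionally uses that the run-language structure means a word in $K \cap \alauto{q'}{r'}$ also gives a separator candidate for $(q',r')$. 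In the write-up I would streamline this by defining, for each pair $(q',r')$, $H_{q',r'}=\bigcap\{K \in \Cs : \alauto{q'}{r'}\subseteq K,\ K\cap\alauto{s}{t}=\emptyset \text{ for some }(s,t)\text{ with }(q',r',s,t)\notin\nsepca\}$ and checking the two required properties by unwinding definitions; everything else is routine Boolean-algebra closure bookkeeping.
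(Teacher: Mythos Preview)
Your argument for the second bullet is correct and essentially matches the paper's proof: take the union of all $K\in\Kb$ meeting $\alauto{q}{r}$ and observe it would be a separator, contradicting $(q,r,s,t)\in\nsepca$.

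Your argument for the first bullet, however, has a genuine gap that you yourself identify but never close. The cover $\Kb=\{H_{q,r}\mid q,r\in Q\}$ with $H_{q,r}=\bigcap_{(q,r,s,t)\notin\nsepca}K_{q,r,s,t}$ is \emph{not} separating for $\nsepca$ in general. The problem is exactly the one you spot: if $H_{q,r}$ meets both $\alauto{q'}{r'}$ (for some $(q',r')\neq(q,r)$) and $\alauto{s}{t}$, nothing forces $(q',r',s,t)\in\nsepca$. Concretely, suppose $(q,r,s,t)\in\nsepca$ but $(q',r',s,t)\notin\nsepca$, and $\alauto{q}{r}\cap\alauto{q'}{r'}\neq\emptyset$. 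Then $H_{q,r}$ meets $\alauto{q'}{r'}$, and since no constraint prevents $H_{q,r}$ from meeting $\alauto{s}{t}$, the separating property fails. Your subsequent attempts to patch this (``transitive-like'', ``relabel and note that a witness word\ldots'') are not arguments. There is also a secondary issue: your claim that $\Kb$ covers $A^*$ because ``every $w$ labels some run'' is false for an \nfa that is not complete.

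The paper's fix is the standard one you are missing: rather than using the separators $H_{\bar q}$ (one per $\bar q\in Q^4\setminus\nsepca$) directly as the cover, take the \emph{atoms} of the finite Boolean algebra they generate. Formally, define $u\sim v$ iff $u\in H_{\bar q}\Leftrightarrow v\in H_{\bar q}$ for every $\bar q\in Q^4\setminus\nsepca$, and let $\Kb$ be the partition of $A^*$ into $\sim$-classes. Each class is a Boolean combination of the $H_{\bar q}$'s, hence in $\Cs$, and $\Kb$ is trivially a cover of $A^*$. The separating property is now immediate: if $K\in\Kb$ meets both $\alauto{q}{r}$ and $\alauto{s}{t}$ and $(q,r,s,t)\notin\nsepca$, pick $u\in K\cap\alauto{q}{r}$ and $v\in K\cap\alauto{s}{t}$; then $u\in H_{(q,r,s,t)}$ and $v\notin H_{(q,r,s,t)}$, contradicting $u\sim v$. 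The key point is that refining to atoms makes membership in \emph{every} separator constant on each $K$, which is precisely what your construction lacks.
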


\smallskip
\noindent
{\bf Controlled separation.} We present additional terminology tailored to the classes built from a group \vari. Consider two classes \Cs and \Ds  (in practice, \Ds will be a group \vari \Gs and \Cs will be either \bpol{\Gs} or \bpol{\Gs^+}). Let $L_0,L_1 \subseteq A^*$. We say that $L_0$ is \Cs-separable from $L_1$ \emph{under \Ds-control} if there exists $H\in \Ds$ such that $\veps \in H$ and $L_0 \cap H$ is \Cs-separable from $L_1 \cap H$. Given an \nfa $\As = (Q,\delta)$, we associate a set $\dnsepca \subseteq Q^4$: 
\[
  \dnsepca = \big\{(q,r,s,t) \in Q^4 \mid \alauto{q}{r} \text{ is \underbar{not} \Cs-separable from } \alauto{s}{t} \text{ under \Ds-control}\big\}.
\] 
Clearly, we have $\dnsepca \subseteq \nsepca$. Let us connect this new definition to the notion of separating cover presented above. In this case as well, this will be useful in proof arguments.


\begin{restatable}{lemma}{gautosep} \label{lem:gautosep}
  Let $\Cs$ and $\Ds$ be Boolean algebras such that $\Ds \subseteq \Cs$ and let $\As = (Q,\delta)$ be an \nfa. The following properties hold:
  \begin{itemize}
    \item There exists $L \in \Ds$ with $\veps \in L$, and a \Cs-cover \Kb of $L$ which is separating for \dnsepca.
    \item Let $S \subseteq Q^4$. If there exist $L \in \Ds$ with $\veps \in L$, and a \Cs-cover \Kb of $L$ which is separating for $S$, then $\dnsepca \subseteq S$.
  \end{itemize}
\end{restatable}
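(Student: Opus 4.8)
The plan is to mimic the proof of the uncontrolled statement, Lemma~\ref{lem:autosep}, but to thread a single control language through the whole argument. Throughout, I would use only that \Cs and \Ds are Boolean algebras, that $Q^4$ is finite, and the hypothesis $\Ds \subseteq \Cs$. The key elementary facts I would invoke repeatedly are: a finite union (resp.\ intersection) of members of a Boolean algebra stays in that algebra; and an atom of a finite Boolean algebra is, for each generator, either contained in it or disjoint from it.

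For the second item I would argue contrapositively. Fix $L \in \Ds$ with $\veps \in L$ and a \Cs-cover \Kb of $L$ that is separating for $S$, and take a tuple $(q,r,s,t) \notin S$. Because \Kb is separating for $S$, no single member of \Kb can meet both $\alauto{q}{r}$ and $\alauto{s}{t}$. Hence the finite union $K = \bigcup\{M \in \Kb \mid M \cap \alauto{q}{r} \neq \emptyset\}$ lies in \Cs and is disjoint from $\alauto{s}{t}$; moreover it contains $\alauto{q}{r} \cap L$ since \Kb covers $L$ (any $w$ in $\alauto{q}{r}\cap L$ sits in some $M \in \Kb$, and that $M$ then meets $\alauto{q}{r}$). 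Thus $K$ separates $\alauto{q}{r} \cap L$ from $\alauto{s}{t} \cap L$, so $\alauto{q}{r}$ is \Cs-separable from $\alauto{s}{t}$ under \Ds-control with witness $L$, i.e.\ $(q,r,s,t) \notin \dnsepca$. This yields $\dnsepca \subseteq S$.

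For the first item, for each ``good'' tuple $p=(q,r,s,t) \notin \dnsepca$ I would unfold the definition of control: pick $H_p \in \Ds$ with $\veps \in H_p$ and a separator $K_p \in \Cs$ of $\alauto{q}{r} \cap H_p$ from $\alauto{s}{t} \cap H_p$. Since $Q^4$ is finite and \Ds is a Boolean algebra, set $L = \bigcap_{p \notin \dnsepca} H_p \in \Ds$; then $\veps \in L$ and $L \subseteq H_p$ for every good $p$ (with the degenerate case $\dnsepca = Q^4$ handled by $L = A^*$, $\Kb = \{A^*\}$). Now let \Bs be the finite Boolean subalgebra of \Cs generated by $L$ together with all the $K_p$ — here I use $\Ds \subseteq \Cs$ to know $L \in \Cs$ — and let \Kb be the set of atoms of \Bs that are contained in $L$. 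Then \Kb is a finite partition of $L$ into languages of \Cs, hence a \Cs-cover of $L$. To see it is separating for \dnsepca, take $B \in \Kb$ meeting both $\alauto{q}{r}$ and $\alauto{s}{t}$ and suppose $p = (q,r,s,t) \notin \dnsepca$: since $B \subseteq L \subseteq H_p$ we get $\emptyset \neq B \cap \alauto{q}{r} \subseteq H_p \cap \alauto{q}{r} \subseteq K_p$, so the atom $B$ lies inside $K_p$, whence $B \cap \alauto{s}{t} \subseteq K_p \cap H_p \cap \alauto{s}{t} = \emptyset$, contradicting that $B$ meets $\alauto{s}{t}$. Hence $p \in \dnsepca$, as desired.

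The Boolean-algebra bookkeeping is routine. The one point requiring care — and the reason the argument is not a verbatim copy of Lemma~\ref{lem:autosep} — is that the definition of control a priori allows a different control language for each tuple, whereas the statement asks for \emph{one} $L$ good for all of \dnsepca simultaneously; intersecting the finitely many $H_p$ at once is exactly what fixes this, and throwing $L$ itself into the generating set of \Bs is what makes the retained atoms partition $L$ rather than $A^*$. I do not anticipate any obstacle beyond keeping this uniformity straight.
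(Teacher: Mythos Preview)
Your proposal is correct and follows essentially the same route as the paper: for the first item you intersect the finitely many control languages and then take the atoms of the Boolean algebra generated by the separators (the paper phrases this equivalently via $\sim$-classes), and for the second item you build the separator as the union of the cover members meeting $\alauto{q}{r}$, exactly as in the paper. The only minor difference is that in the second item you use $M \cap \alauto{q}{r} \neq \emptyset$ rather than $M \cap \alauto{q}{r} \cap L \neq \emptyset$, which is harmless and arguably slightly cleaner.
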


This notion is only useful if $\{\veps\} \not\in \Ds$. If $\{\veps\} \in \Ds$, then $L_0$ is \Cs-separable from $L_1$ under \Ds-control if and only if either $\veps \not\in L_0$ or $\veps \not\in L_1$. This is why the notion is designed for group \varis: if \Gs is such a class, then $\{\veps\} \not\in \Gs$. In this case, if $\Cs  \in \{\Gs,\Gs^+\}$, then the set \gnsepbca carries more information than \nsepbca. This is useful for the computation: rather than computing \nsepbca directly, our procedures first compute \gnsepbca. The proof is based on Propositions~\ref{prop:bconcat} and~\ref{prop:pgcov} (the latter requires \Gs to consist of group languages).

\begin{restatable}{proposition}{compgnsep} \label{prop:compgnsep}
  Let \Gs be a group \vari, let \Cs be a \vari such that $\Gs \subseteq \Cs$ and let $\As = (Q,\delta)$ be an \nfa. Then, \nsepbca is the least set $S \subseteq Q^4$ that contains \gnsepbca and satisfies the two following conditions:
  \begin{enumerate}
    \item\label{c:1} For all $q,r,s,t\in Q$ and $a \in A$, if $(q,a,r),(s,a,t) \in \delta$, then $(q,r,s,t) \in S$.
    \item\label{c:2} For all $(q_1,r_1,s_1,t_1), (q_2,r_2,s_2,t_2) \in S$, if $r_1 = q_2$ and $t_1 = s_2$, then $(q_1,r_2,s_1,t_2) \in S$.
  \end{enumerate}
\end{restatable}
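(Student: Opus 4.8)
Write $T$ for the least set $S \subseteq Q^4$ that contains \gnsepbca and is closed under conditions~\ref{c:1} and~\ref{c:2}; this is well defined since the candidate sets are closed under intersection and $Q^4$ is one of them. The plan is to prove both inclusions $T \subseteq \nsepbca$ and $\nsepbca \subseteq T$.

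\smallskip
\noindent\emph{The inclusion $T \subseteq \nsepbca$.} It suffices to check that \nsepbca is itself a valid candidate for $S$. First, $\gnsepbca \subseteq \nsepbca$: separability under \Gs-control is weaker than plain separability, as witnessed by $H = A^* \in \Gs$ with $\veps \in H$. For condition~\ref{c:1}: if $(q,a,r),(s,a,t) \in \delta$, then $a \in \alauto{q}{r} \cap \alauto{s}{t}$, so these two languages are not disjoint, hence not \bpol{\Cs}-separable, giving $(q,r,s,t) \in \nsepbca$. For condition~\ref{c:2}: when $r_1 = q_2$ and $t_1 = s_2$, concatenating runs gives $\alauto{q_1}{r_1}\alauto{q_2}{r_2} \subseteq \alauto{q_1}{r_2}$ and $\alauto{s_1}{t_1}\alauto{s_2}{t_2} \subseteq \alauto{s_1}{t_2}$. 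Since \bpol{\Cs} is a \vari by Theorem~\ref{thm:bpolvar}, hence a \pvari, Lemma~\ref{lem:sepconcat} shows that $\alauto{q_1}{r_1}\alauto{q_2}{r_2}$ is not \bpol{\Cs}-separable from $\alauto{s_1}{t_1}\alauto{s_2}{t_2}$; and non-separability is monotone (any separator of a pair separates every pair of subsets), so $\alauto{q_1}{r_2}$ is not \bpol{\Cs}-separable from $\alauto{s_1}{t_2}$, i.e. $(q_1,r_2,s_1,t_2) \in \nsepbca$. Thus $T \subseteq \nsepbca$.

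\smallskip
\noindent\emph{The inclusion $\nsepbca \subseteq T$.} Here the strategy is to build a \bpol{\Cs}-cover of $A^*$ that is separating for $T$, so that Lemma~\ref{lem:autosep} gives $\nsepbca \subseteq T$. By Lemma~\ref{lem:gautosep} applied with $\Ds = \Gs \subseteq \bpol{\Cs}$, fix $L \in \Gs$ with $\veps \in L$ and a \bpol{\Cs}-cover $\Kb_0$ of $L$ that is separating for \gnsepbca. Apply Proposition~\ref{prop:pgcov} with $H = A^*$ and the group language $L$: this yields a cover $\Kb'$ of $A^*$ in which every $K \in \Kb'$ has the form $K = \uclos_L w = La_1L \cdots a_nL$ for some word $w = a_1 \cdots a_n$. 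Each such $K$ is a marked product of languages of $\Gs \subseteq \Cs \subseteq \pol{\Cs}$, and $\Kb_0$ is a \bpol{\Cs}-cover of each factor $L$; Proposition~\ref{prop:bconcat} then provides a \bpol{\Cs}-cover $\Kb_K$ of $K$ such that every $H \in \Kb_K$ satisfies $H \subseteq H_0a_1H_1 \cdots a_nH_n$ with $H_0,\dots,H_n \in \Kb_0$. Then $\Kb = \bigcup_{K \in \Kb'} \Kb_K$ is a finite \bpol{\Cs}-cover of $A^*$. To see it is separating for $T$, take $H \in \Kb$ with $H \subseteq H_0a_1H_1 \cdots a_nH_n$, $H_i \in \Kb_0$, and assume $H$ meets both $\alauto{q}{r}$ and $\alauto{s}{t}$ via witnesses $u$ and $v$. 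Factor $u = u_0a_1u_1 \cdots a_nu_n$ and $v = v_0a_1v_1 \cdots a_nv_n$ with $u_i, v_i \in H_i$, and split the corresponding runs accordingly to get states $q = p_0, p_0', p_1, \dots, p_n' = r$ and $s = \sigma_0, \sigma_0', \sigma_1, \dots, \sigma_n' = t$ with $u_i \in \alauto{p_i}{p_i'}$, $v_i \in \alauto{\sigma_i}{\sigma_i'}$ and $(p_i',a_{i+1},p_{i+1}), (\sigma_i',a_{i+1},\sigma_{i+1}) \in \delta$. Since $H_i$ meets $\alauto{p_i}{p_i'}$ and $\alauto{\sigma_i}{\sigma_i'}$ and $\Kb_0$ is separating for \gnsepbca, we get $(p_i,p_i',\sigma_i,\sigma_i') \in \gnsepbca \subseteq T$; and condition~\ref{c:1} gives $(p_i',p_{i+1},\sigma_i',\sigma_{i+1}) \in T$ for each $i$. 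A straightforward induction alternately chaining these tuples via condition~\ref{c:2} (the accumulated tuple after factor $i$ is $(p_0,p_i',\sigma_0,\sigma_i')$; match it first with the transition tuple, then with the next factor tuple) yields $(q,r,s,t) = (p_0,p_n',\sigma_0,\sigma_n') \in T$.

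\smallskip
\noindent\emph{Main obstacle.} The first inclusion is routine bookkeeping with elementary properties of separability. The real work is in the second inclusion, and specifically in obtaining a \bpol{\Cs}-cover of $A^*$ whose members have the precise shape $H_0a_1H_1 \cdots a_nH_n$ with factors drawn from one fixed cover $\Kb_0$ of a single group language $L$ — this is exactly what lets us re-decompose an arbitrary run of the automaton along a member of the cover and reconnect it to $T$. This is where Propositions~\ref{prop:pgcov} and~\ref{prop:bconcat} are used, and these are the only points where the hypothesis that \Gs consists of group languages matters (through $\uclos_L w$ being a finite marked product); the subsequent chaining argument and the monotonicity facts are then entirely mechanical.
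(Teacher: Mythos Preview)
Your proof is correct and follows essentially the same approach as the paper's own proof: both directions are handled identically, using Lemma~\ref{lem:gautosep} to obtain the \bpol{\Cs}-cover of $L$, Proposition~\ref{prop:pgcov} to cover $A^*$ by languages of the form $\uclos_L w$, Proposition~\ref{prop:bconcat} to refine each piece, and then the chaining argument via conditions~\ref{c:1} and~\ref{c:2} to conclude through Lemma~\ref{lem:autosep}.
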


\begin{proof}
  Let $S \subseteq Q^4$ be the least set containing \gnsepbca and satisfying both conditions. We prove that $S = \nsepbca$. For $S \subseteq \nsepbca$, since $\gnsepbca \subseteq \nsepbca$ by definition, it suffices to prove that $\nsepbca$ satisfies both conditions in the proposition. First, consider $a \in A$ and $q,r,s,t \in Q$ such that $(q,a,r),(s,a,t) \in \delta$. We have $a \in \alauto{q}{r}$  and $a \in \alauto{s}{t}$. Hence, they are not \bpol{\Cs}-separable and $(q,r,s,t) \in \nsepbca$. Now, let $(q_1,r_1,s_1,t_1), (q_2,r_2,s_2,t_2) \in \nsepbca$ such that $r_1 = q_2$ and $t_1 = s_2$. For $i \in \{1,2\}$, we know that \alauto{q_i}{r_i} is not \bpol{\Cs}-separable from $\alauto{s_i}{t_i}$. Since \bpol{\Cs} is a \vari by Theorem~\ref{thm:bpolvar}, it follows from Lemma~\ref{lem:sepconcat} that $\alauto{q_1}{r_1}\alauto{q_2}{r_2}$ is  not \bpol{\Cs} separable from $\alauto{s_1}{t_1}\alauto{s_2}{t_2}$. Since $r_1 = q_2$ and $t_1 = s_2$, it is immediate that $\alauto{q_1}{r_1}\alauto{q_2}{r_2} \subseteq \alauto{q_1}{r_2}$ and $\alauto{s_1}{t_1}\alauto{s_2}{t_2} \subseteq \alauto{s_1}{t_2}$. Hence,  \alauto{q_1}{r_2} is not \bpol{\Cs}-separable from \alauto{s_1}{t_2}  and we get $(q_1,r_2,s_1,t_2) \in \nsepbca$ as desired.

  We turn to the inclusion $\nsepbca \subseteq S$. By Lemma~\ref{lem:gautosep}, there exists $L \in \Gs$ such that $\veps \in L$ and a \bpol{\Cs}-cover \Vb of $L$ which is separating for \gnsepbca. By hypothesis, $L$ is a group language and $\veps \in L$. Hence, Proposition~\ref{prop:pgcov} yields a cover \Pb of $A^*$ such that every $P \in \Pb$ is of the form $P = \uclos_L w_P$ for some word $w_P \in A^*$. Let $P \in \Pb$ and $a_1,\dots, a_n \in A$ be the letters such that $w_P = a_1 \cdots a_n$. We have $P = La_1L \cdots a_nL$ by definition (if $w_P = \veps$, then $P = L$). By definition, $L \in \Gs \subseteq \pol{\Cs}$. Hence, since \Vb is a \bpol{\Cs}-cover of $L$, Proposition~\ref{prop:bconcat} yields a \bpol{\Cs}-cover $\Kb_P$ of $P$ such that for every $K \in \Kb_P$, there are $V_0, \dots,V_n \in \Vb$ such that $K\subseteq V_0a_1V_1 \cdots a_nV_n$. We let $\Kb = \bigcup_{P \in \Pb} \Kb_P$. Since \Pb is a cover of $A^*$ and $\Kb_P$ is a \bpol{\Cs}-cover of $P$ for each $P \in \Pb$, $\Kb$ is a \bpol{\Cs}-cover of $A^*$. We show that \Kb is separating for $S$ which implies that $\nsepbca \subseteq S$ by Lemma~\ref{lem:autosep}.

  Let $(q,r,s,t) \in Q^4$ and $K \in \Kb$ such that we have $x \in K \cap \alauto{q}{r}$ and $y \in K \cap \alauto{s}{t}$. We show that $(q,r,s,t) \in S$. We have $K \in \Kb_P$ for some $P \in \Pb$. Let $a_1,\dots,a_n\in A$ such that $w_P = a_1 \cdots a_n$. By definition, there are $V_0, \dots,V_n \in \Vb$ such that $K \subseteq V_0a_1V_1 \cdots a_nV_n$. Since $x,y \in K$, we get $x_i,y_i \in V_i$ for $0 \leq i \leq n$ such that $x = x_0a_1x_1 \cdots a_n x_n$ and $y = y_0a_1y_1 \cdots a_n y_n$. Since $x \in \alauto{q}{r}$, we get $q_i,r_i \in Q$ for $0 \leq i \leq n$ such that $q_0 = q$, $r_n = r$, $x_i \in \alauto{q_i}{r_i}$ for $0 \leq i \leq n$ and $(r_{i-1},a_i,q_i) \in \delta$ for $1 \leq i \leq n$. Finally, since $y \in \alauto{s}{t}$, we get $s_i,t_i \in Q$ for $0 \leq i \leq n$ such that $s_0 = s$, $t_n = t$, $y_i \in \alauto{s_i}{t_i}$ for $0 \leq i \leq n$ and $(t_{i-1},a_i,s_i) \in \delta$ for $1 \leq i \leq n$. Since $S$ satisfies Condition~\ref{c:1} in the proposition, we get $(r_{i-1},q_i,t_{i-1},s_i) \in S$ for $1 \leq i \leq n$. Since $V_i \in \Vb$ which is separating for \gnsepbca and $x_i,y_i \in V_i$, we also get $(q_i,r_i,q_i,t_i) \in \gnsepbca$ for $0 \leq i \leq n$. Thus, Condition~\ref{c:2} in the proposition yields $(q_0,r_0,s_n,t_n) \in S$, \emph{i.e.} $(q,r,s,t) \in S$ as desired.
\end{proof}

Proposition~\ref{prop:compgnsep} provides a least fixpoint algorithm for computing the set \nsepbca from \gnsepbca. Combined with Proposition~\ref{prop:autosep}, this yields a polynomial time reduction from \bpol{\Cs}-separation to computing \gnsepbca from an \nfa. We shall prove that when $\Cs \in \{\Gs,\Gs^+\}$, there are polynomial time reductions of the latter problem to \Gs-separation.  

\subsection{Tuple separation}

This generalized variant of separation is taken from~\cite{pzbpol}. We shall use it as a proof ingredient: for every lattice~\Cs, it is connected to the classical separation problem for \bool{\Cs}. For every $n \geq 1$, we call ``\emph{$n$-tuple}'' a tuple of $n$ languages $(L_1,\dots,L_n)$. In the sequel, given another language $K$, we shall write $(L_1,\dots,L_n) \cap K$ for the $n$-tuple $(L_1 \cap K, \dots, L_n \cap K)$. Let \Cs be a lattice, we use induction on $n$ to define the \emph{\Cs-separable $n$-tuples}:
\begin{itemize}
  \item If $n = 1$, a $1$-tuple $(L_1)$ is \Cs-separable when $L_1 = \emptyset$.
  \item If $n \geq 2$, an $n$-tuple $(L_1,\dots,L_n)$ is \Cs-separable when there exists $K \in \Cs$ such that $L_1 \subseteq K$ and $(L_2,\dots,L_n) \cap K$ is \Cs-separable. We call $K$ a \emph{separator} of $(L_1,\dots,L_n)$.
\end{itemize}
One may verify that classical separation is the special case $n = 2$. We generalize \Ds-controlled separation to this setting. For a class \Ds, we say that an $n$-tuple $(L_1,\dots,L_n)$ is \Cs-separable under \Ds-control if there exists $H\in\Ds$ such that $\veps \in H$ and $(L_1,\dots,L_n) \cap H$ is \Cs-separable.

We complete the definition with two simple properties of tuple separation (see Appendix~\ref{app:separ} for the proofs). The second one is based on closure under quotients and generalizes Lemma~\ref{lem:sepconcat}.

\begin{restatable}{lemma}{tuptriv} \label{lem:tuptriv}
  Let \Cs be a lattice and let $(L_1,\dots,L_n),(H_1,\dots,H_n)$ be two $n$-tuples. If $L_1 \cap \cdots \cap L_n \neq \emptyset$, then $(L_1,\dots,L_n)$ is not \Cs-separable. Moreover, if $L_i \subseteq H_i$ for every $i \leq n$ and $(L_1,\dots,L_n)$ is not \Cs-separable, then $(H_1,\dots,H_n)$ is not \Cs-separable either.
\end{restatable}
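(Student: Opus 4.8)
The plan is to prove both assertions by induction on the length $n$ of the tuples, directly unwinding the recursive definition of \Cs-separable $n$-tuples. No closure property of the lattice \Cs is actually needed: in each inductive step I will reuse the very separator $K \in \Cs$ provided by the hypothesis, never constructing a new one.

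For the first assertion, I would fix a word $w \in L_1 \cap \cdots \cap L_n$ and argue as follows. If $n = 1$, then $L_1 \neq \emptyset$, so $(L_1)$ is not \Cs-separable by definition. If $n \geq 2$, suppose for contradiction that some $K \in \Cs$ witnesses \Cs-separability of $(L_1,\dots,L_n)$, i.e.\ $L_1 \subseteq K$ and $(L_2,\dots,L_n) \cap K$ is \Cs-separable. Then $w \in L_1 \subseteq K$, so $w \in L_i \cap K$ for every $i$, and hence the common intersection of the $(n-1)$-tuple $(L_2,\dots,L_n) \cap K$ is nonempty. By the induction hypothesis this tuple is \emph{not} \Cs-separable, a contradiction.

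For the second assertion, I would prove the contrapositive: if $L_i \subseteq H_i$ for all $i \leq n$ and $(H_1,\dots,H_n)$ is \Cs-separable, then $(L_1,\dots,L_n)$ is \Cs-separable. For $n = 1$, \Cs-separability of $(H_1)$ means $H_1 = \emptyset$, whence $L_1 \subseteq H_1 = \emptyset$, so $(L_1)$ is \Cs-separable. For $n \geq 2$, let $K \in \Cs$ be a separator of $(H_1,\dots,H_n)$. Then $L_1 \subseteq H_1 \subseteq K$, and $L_i \cap K \subseteq H_i \cap K$ for every $2 \leq i \leq n$; since $(H_2,\dots,H_n) \cap K$ is \Cs-separable, the induction hypothesis yields that $(L_2,\dots,L_n) \cap K$ is \Cs-separable, so $K$ is also a separator of $(L_1,\dots,L_n)$.

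There is no genuine obstacle here; the only point to watch is that the induction must be on the tuple length $n$, and that intersecting every component with a fixed $K$ preserves both the nonemptiness of the common intersection (first assertion) and the inclusions $L_i \subseteq H_i$ (second assertion) --- exactly what makes the induction hypothesis applicable after peeling off the first component.
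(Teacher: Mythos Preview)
Your proof is correct. The paper actually omits the proof of this lemma entirely (despite the forward reference to the appendix, Appendix~\ref{app:separ} jumps straight to Lemma~\ref{lem:tupconcat}), treating it as routine; your induction on the tuple length is precisely the natural argument one would supply.
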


\begin{restatable}{lemma}{tupconcat} \label{lem:tupconcat}
  Let \Cs be a \pvari, $n \geq 1$ and let $(L_1,\dots,L_n),(H_1,\dots,H_n)$ be two $n$-tuples, which are not \Cs-separable. Then, $(L_1H_1,\dots,L_nH_n)$ is not \Cs-separable either.
\end{restatable}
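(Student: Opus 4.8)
The plan is to argue by induction on $n$, lifting the quotient technique that proves Lemma~\ref{lem:sepconcat} (the case $n=2$) and combining it with the inductive hypothesis. For the base case $n=1$, a $1$-tuple is \Cs-separable exactly when its only component is empty; since $(L_1)$ and $(H_1)$ are not \Cs-separable we have $L_1\neq\emptyset$ and $H_1\neq\emptyset$, hence $L_1H_1\neq\emptyset$ and $(L_1H_1)$ is not \Cs-separable.

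For the inductive step, assume the statement for $(n-1)$-tuples over an arbitrary \pvari, fix $n\geq 2$ and two $n$-tuples $(L_1,\dots,L_n)$ and $(H_1,\dots,H_n)$ that are not \Cs-separable. Unfolding the definition of \Cs-separability (using $n\geq 2$), it suffices to fix an arbitrary $K\in\Cs$ with $L_1H_1\subseteq K$ and prove that $(L_2H_2\cap K,\dots,L_nH_n\cap K)$ is not \Cs-separable. Since $K$ is regular it has only finitely many distinct left and right quotients, so I would introduce the two languages
\[
  K_1=\bigcap_{u\in L_1} u\inv K \qquad\text{and}\qquad M_1=\bigcap_{v\in K_1} Kv\inv,
\]
both of which lie in \Cs because \Cs is a quotient-closed lattice and the above intersections are finite. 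A direct verification then gives $H_1\subseteq K_1$, $L_1K_1\subseteq K$, and consequently $L_1\subseteq M_1$.

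Next I would feed $K_1$ and $M_1$ into the non-separability of the two original tuples. Since $(H_1,\dots,H_n)$ is not \Cs-separable and $H_1\subseteq K_1\in\Cs$, the definition forces $(H_2\cap K_1,\dots,H_n\cap K_1)$ to be non-\Cs-separable; likewise, since $(L_1,\dots,L_n)$ is not \Cs-separable and $L_1\subseteq M_1\in\Cs$, the tuple $(L_2\cap M_1,\dots,L_n\cap M_1)$ is non-\Cs-separable. Applying the inductive hypothesis to these two $(n-1)$-tuples shows that $\big((L_2\cap M_1)(H_2\cap K_1),\dots,(L_n\cap M_1)(H_n\cap K_1)\big)$ is not \Cs-separable. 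To finish, for each $i$ one checks the inclusion $(L_i\cap M_1)(H_i\cap K_1)\subseteq L_iH_i\cap K$: the inclusion in $L_iH_i$ is immediate, and for the inclusion in $K$ one uses that every $u\in M_1$ satisfies $uv\in K$ for all $v\in K_1$, in particular for $v\in H_i\cap K_1$. The monotonicity part of Lemma~\ref{lem:tuptriv} then transports non-separability to $(L_2H_2\cap K,\dots,L_nH_n\cap K)$, which is what was needed.

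The subtle point — and the reason a naive induction fails — is that the definition of tuple separability intersects the tail of the tuple with the first separator $K$, which breaks the product shape of the components $L_iH_i$ and prevents a direct recursion. The whole role of the two-step quotient construction, first taking the left quotient of $K$ by $L_1$ to get $K_1$ and then the right quotient of $K$ by $K_1$ to get $M_1$, is to realign things so that the tails of the $L$- and $H$-tuples, restricted to $M_1$ and $K_1$ respectively, recombine inside $K$. Pinning down these two languages and checking that they stay in \Cs (this is exactly where regularity and quotient-closure are used) is the main obstacle; once they are in hand, the inductive hypothesis together with Lemma~\ref{lem:tuptriv} finishes the argument with only routine verifications.
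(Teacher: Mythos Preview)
Your proof is correct and follows essentially the same approach as the paper's: induction on $n$, with the inductive step handled by building two auxiliary languages in \Cs via iterated quotients of $K$, feeding them into the non-separability of the two input tuples, applying the induction hypothesis, and concluding with Lemma~\ref{lem:tuptriv}. The only cosmetic difference is the order of the quotient construction: the paper first takes $U=\bigcap_{w\in H_1}Kw^{-1}$ (so $L_1\subseteq U$) and then $V=\bigcap_{u\in U}u^{-1}K$ (so $H_1\subseteq V$), whereas you take the symmetric route, but the verifications and the use of $UV\subseteq K$ (resp.\ $M_1K_1\subseteq K$) are identical.
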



A theorem of~\cite{pzbpol} connects tuple \Cs-separation for a lattice \Cs to \bool{\Cs}-separation: $L_0$ is \bool{\Cs}-separable from $L_1$ if and only if $(L_0,L_1)^p$ is \Cs-separable for some $p \geq 1$. Here, $(L_0,L_1)^p$ denotes the $2p$-tuple obtained by concatenating $p$ copies of $(L_0,L_1)$. For example, $(L_0,L_1)^3 = (L_0,L_1,L_0,L_1,L_0,L_1)$. We use a corollary applying to \Ds-controlled separation. Proofs for both the original theorem of~\cite{pzbpol} and the corollary are available in Appendix~\ref{app:separ}.

\begin{restatable}{corollary}{ccovtsep} \label{cor:covtsep}
  Let $\Cs$ and $\Ds$ be two lattices such that $\Ds \subseteq \Cs$ and let $L_0,L_1 \subseteq A^*$. The following properties are equivalent:
  \begin{enumerate}
    \item $L_0$ is \bool{\Cs}-separable from $L_1$ under \Ds-control.
    \item There exists $p \geq 1$ such that $(L_0,L_1)^p$ is \Cs-separable under \Ds-control.
  \end{enumerate}
\end{restatable}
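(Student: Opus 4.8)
The plan is to derive Corollary~\ref{cor:covtsep} directly from the theorem of~\cite{pzbpol} that is quoted just above its statement: namely, $L_0$ is \bool{\Cs}-separable from $L_1$ if and only if $(L_0,L_1)^p$ is \Cs-separable for some $p \geq 1$. The key observation is that \Ds-controlled separation is, by definition, just ordinary separation after intersecting both inputs with a suitable language $H \in \Ds$ with $\veps \in H$, and that this intersection commutes with the tuple power operation: $(L_0,L_1)^p \cap H = (L_0 \cap H, L_1 \cap H)^p$. So I would first unfold the definitions of \Ds-controlled \bool{\Cs}-separability and \Ds-controlled \Cs-separability of a $2p$-tuple, then apply the cited theorem to the pair $(L_0 \cap H, L_1 \cap H)$, and finally repackage.

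In more detail, for the direction (1)$\Rightarrow$(2): assume $L_0$ is \bool{\Cs}-separable from $L_1$ under \Ds-control. Then there is $H \in \Ds$ with $\veps \in H$ such that $L_0 \cap H$ is \bool{\Cs}-separable from $L_1 \cap H$. By the theorem of~\cite{pzbpol} applied to these two languages, there exists $p \geq 1$ such that $(L_0 \cap H, L_1 \cap H)^p$ is \Cs-separable. Since $(L_0 \cap H, L_1 \cap H)^p = (L_0,L_1)^p \cap H$, this exactly says that $(L_0,L_1)^p$ is \Cs-separable under \Ds-control (witnessed by the same $H$), giving~(2). Conversely, for (2)$\Rightarrow$(1): if $(L_0,L_1)^p$ is \Cs-separable under \Ds-control, there is $H \in \Ds$ with $\veps \in H$ such that $(L_0,L_1)^p \cap H = (L_0 \cap H, L_1 \cap H)^p$ is \Cs-separable; the theorem then yields that $L_0 \cap H$ is \bool{\Cs}-separable from $L_1 \cap H$, i.e.\ $L_0$ is \bool{\Cs}-separable from $L_1$ under \Ds-control, which is~(1).

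The main thing to be careful about — and essentially the only nontrivial point — is the bookkeeping identity $(L_0,L_1)^p \cap H = (L_0 \cap H, L_1 \cap H)^p$, which is immediate from the componentwise definition of $\cap$ on tuples, together with checking that the theorem of~\cite{pzbpol} is being invoked with the right inputs and that the witness $H$ can be shared between the two formulations. There is no real obstacle here: the corollary is a direct relativization of a known equivalence, and the hypothesis $\Ds \subseteq \Cs$ is not even needed for this particular argument (it is listed for uniformity with the rest of the framework, and because it is harmless). The only genuine content is deferred to the appendix, where the underlying theorem of~\cite{pzbpol} is itself reproved; the corollary is a two-line consequence once that is in hand.
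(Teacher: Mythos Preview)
Your proposal is correct and follows essentially the same approach as the paper: unfold the definition of \Ds-controlled separation to obtain a witness $H \in \Ds$ with $\veps \in H$, apply Theorem~\ref{thm:covtsep} to the pair $(L_0 \cap H, L_1 \cap H)$, and use the bookkeeping identity $(L_0,L_1)^p \cap H = (L_0 \cap H, L_1 \cap H)^p$ to repackage. Your observation that the hypothesis $\Ds \subseteq \Cs$ is not actually used here is also correct.
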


\smallskip

We only use the contrapositive of $1) \Rightarrow 2)$ in Corollary~\ref{cor:covtsep}. We complete the presentation with two important lemmas about tuple separation for \pol{\Ds} and \pol{\Ds^+}. We use them to prove that tuples are not separable (see Appendix~\ref{app:separ} for the proofs). Note that in practice, \Ds will be a group \vari \Gs. Yet, the results are true regardless of this hypothesis.

\begin{restatable}{lemma}{pgsound} \label{lem:pgsound}
  Let \Ds be a \vari and $(L_1,\dots,L_n)$ an $n$-tuple which is not \pol{\Ds}-separable under \Ds-control. Then, $(\{\veps\},L_1,\dots,L_n)$ is not \pol{\Ds}-separable.
\end{restatable}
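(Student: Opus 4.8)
The goal is to show: if $(L_1,\dots,L_n)$ is not $\pol{\Ds}$-separable under $\Ds$-control, then the $(n+1)$-tuple $(\{\veps\},L_1,\dots,L_n)$ is not $\pol{\Ds}$-separable. Since $\pol{\Ds}$ is a $\vari$ (Theorem~\ref{thm:bpolvar}), this is a statement about a specific polynomial closure. I would argue by contradiction.

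**Approach.** Suppose for contradiction that $(\{\veps\},L_1,\dots,L_n)$ \emph{is} $\pol{\Ds}$-separable, and let $K \in \pol{\Ds}$ be a separator for the first coordinate, so that $\{\veps\}\subseteq K$ and $(L_1,\dots,L_n)\cap K$ is $\pol{\Ds}$-separable. The crucial point is that $\veps \in K$. Now $K$ is a finite union of marked products $L_0'a_1L_1'\cdots a_mL_m'$ with each $L_i'\in\Ds$; since $\veps\in K$, at least one of these marked products contains $\veps$, and a marked product $L_0'a_1L_1'\cdots a_mL_m'$ contains $\veps$ only if $m=0$, i.e. it is a single language $H = L_0' \in \Ds$ with $\veps \in H$. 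Hence there is $H\in\Ds$ with $\veps\in H$ and $H\subseteq K$. By the monotonicity half of Lemma~\ref{lem:tuptriv} (smaller coordinates stay non-separable, or rather: I need the reverse direction — separable restricted to a smaller language), I would instead observe: since $(L_1,\dots,L_n)\cap K$ is $\pol{\Ds}$-separable, and $H\subseteq K$, intersecting everything further with $H$ preserves separability — this is where I must be a little careful, because restricting the \emph{target} of a separation to a smaller set keeps it separable (a separator $K'$ for $(L_1,\dots,L_n)\cap K$ also works, with the same $K'$, as a separator for $(L_1,\dots,L_n)\cap H$ once we note $(L_i\cap K)\cap K' \supseteq (L_i\cap H)\cap K'$ and inductively unfold the definition). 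So $(L_1,\dots,L_n)\cap H$ is $\pol{\Ds}$-separable with $H\in\Ds$, $\veps\in H$ — which is exactly the statement that $(L_1,\dots,L_n)$ is $\pol{\Ds}$-separable under $\Ds$-control, contradicting the hypothesis.

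**Key steps, in order.** First, set up the contradiction and extract a separator $K\in\pol{\Ds}$ of the first coordinate with $\veps\in K$. Second, use the explicit form of elements of $\pol{\Ds}$ to deduce $\veps\in K$ forces a marked product of length $0$ inside the union, yielding $H\in\Ds$ with $\veps\in H\subseteq K$. Third, verify by an induction on $n$ (unwinding the recursive definition of "$\pol{\Ds}$-separable $n$-tuple") that separability of $(L_1,\dots,L_n)\cap K$ descends to separability of $(L_1,\dots,L_n)\cap H$ whenever $H\subseteq K$ — essentially a restriction-monotonicity lemma for tuple separation. Fourth, read off that this contradicts the assumed non-separability under $\Ds$-control.

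**Main obstacle.** The only genuinely non-routine point is the third step: tuple separability is \emph{not} obviously monotone under shrinking the ambient language, because the recursive definition interleaves choosing a separator $K'\in\Cs$ with intersecting the tail by $K'$. I expect it to go through because each $K'$ chosen for $(L_1,\dots,L_n)\cap K$ can be \emph{reused verbatim} for $(L_1,\dots,L_n)\cap H$: one checks $L_1\cap H\subseteq L_1\cap K\subseteq K'$, and then the tail $(L_2,\dots,L_n)\cap H\cap K'$ is coordinatewise contained in $(L_2,\dots,L_n)\cap K\cap K'$, so the induction hypothesis applies. This is precisely the kind of bookkeeping that Lemma~\ref{lem:tuptriv}'s second assertion is designed to support, and I would cite it (or its straightforward extension) rather than redo the induction in full.
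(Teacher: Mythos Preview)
Your proposal is correct and follows essentially the same route as the paper: prove the contrapositive, extract a separator $K \in \pol{\Ds}$ with $\veps \in K$, use the marked-product description of $\pol{\Ds}$ to find $H \in \Ds$ with $\veps \in H \subseteq K$, and then invoke the monotonicity half of Lemma~\ref{lem:tuptriv} to conclude that $(L_1,\dots,L_n) \cap H$ is $\pol{\Ds}$-separable. One small slip: $\pol{\Ds}$ is only a \emph{\pvari}, not a \vari, by Theorem~\ref{thm:bpolvar}; but the argument never needs closure under complement, so this is harmless.
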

\begin{restatable}{lemma}{pgpsound} \label{lem:pgpsound}
  Let \Ds be a \vari and $w \in A^+$. If $(L_1,\dots,L_n)$ is not \pol{\Ds^+}-separable under \Ds-control, then $(w^+,w^+L_1w^+,\dots,w^+L_nw^+)$ is not \pol{\Ds^+}-separable.
\end{restatable}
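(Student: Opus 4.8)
\noindent
The plan is to first unfold the outermost layer of the definition of tuple separation. Since the first component of the $(n{+}1)$-tuple is $w^+$, the tuple $(w^+,w^+L_1w^+,\dots,w^+L_nw^+)$ fails to be $\pol{\Ds^+}$-separable precisely when, for \emph{every} $K \in \pol{\Ds^+}$ with $w^+ \subseteq K$, the $n$-tuple $(w^+L_1w^+ \cap K,\dots,w^+L_nw^+ \cap K)$ is not $\pol{\Ds^+}$-separable. So I would fix such a $K$ and establish the latter. Recall also that $\Ds^+$ is a \vari by Lemma~\ref{lem:wsuit}, hence $\pol{\Ds^+}$ is a \pvari by Theorem~\ref{thm:bpolvar}; this will let me invoke Lemma~\ref{lem:tupconcat}.

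The crux is a structural statement about $K$, call it \textbf{Claim~A}: there exist $a,b \geq 1$ and a language $H \in \Ds$ with $\veps \in H$ such that $w^aHw^b \subseteq K$. Granting Claim~A, the lemma follows quickly. Because $H \in \Ds$ and $\veps \in H$, the assumption that $(L_1,\dots,L_n)$ is not $\pol{\Ds^+}$-separable under $\Ds$-control yields that $(L_1 \cap H,\dots,L_n \cap H)$ is not $\pol{\Ds^+}$-separable. The constant tuples $(\{w^a\},\dots,\{w^a\})$ and $(\{w^b\},\dots,\{w^b\})$ have nonempty common intersection, hence are not $\pol{\Ds^+}$-separable by Lemma~\ref{lem:tuptriv}. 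Two applications of Lemma~\ref{lem:tupconcat} (concatenating $\{w^a\}$ on the left and $\{w^b\}$ on the right) then show that $(w^a(L_1 \cap H)w^b,\dots,w^a(L_n \cap H)w^b)$ is not $\pol{\Ds^+}$-separable. Finally, for each $i$ we have $w^a(L_i \cap H)w^b \subseteq w^+L_iw^+$ (as $a,b \geq 1$) and $w^a(L_i \cap H)w^b \subseteq w^aHw^b \subseteq K$, so $w^a(L_i \cap H)w^b \subseteq w^+L_iw^+ \cap K$; the monotonicity clause of Lemma~\ref{lem:tuptriv} then gives that $(w^+L_1w^+ \cap K,\dots,w^+L_nw^+ \cap K)$ is not $\pol{\Ds^+}$-separable, as wanted.

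To prove Claim~A I would argue as follows. Write $K = P_1 \cup \dots \cup P_N$ with each $P_\ell$ a marked product over $\Ds^+$. From $w^+ \subseteq K$ and the pigeonhole principle, some $P := P_\ell$ contains $w^k$ for infinitely many $k$; since $\{k \mid w^k \in P\}$ is eventually periodic (it is the preimage of the regular language $P$ under the morphism sending a fresh letter to $w$), we may even take an arithmetic progression: $w^{q_0+iq} \in P$ for all $i \in \nat$, with $q_0,q \geq 1$. Writing $P = M_0a_1M_1 \cdots a_mM_m$ and fixing, for each $i$, a decomposition $w^{q_0+iq} = u^{(i)}_0a_1u^{(i)}_1 \cdots a_mu^{(i)}_m$ with $u^{(i)}_j \in M_j$, I would pass to a sub-progression of $i$ so that each factor $u^{(i)}_j$ is either constant or has length tending to infinity (at least one is unbounded). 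The constant factors pin down a fixed prefix of $w^{q_0+iq}$ before the first unbounded factor and a fixed suffix after the last one; and an unbounded factor, occupying a fixed window inside a power of $w$, has the form (a fixed suffix of $w$)$\,\cdot\,w^{e_i}\,\cdot\,$(a fixed prefix of $w$) with $e_i \to \infty$. Such a factor is nonempty, so it lies not only in $M_j \in \Ds^+$ but in the associated language $N_j \in \Ds$; a pumping argument with an idempotent power $w^{p'}$ of $w$ (modulo a monoid recognizing the relevant $N_j$), combined with quotients — which keep us inside $\Ds$ — shows that $K$ contains a language of one of the shapes $w^{a_0}Gw^{b_0}$, $w^{a_0}G$, $Gw^{b_0}$ or $G_0w^dG_1$, where $G,G_0,G_1 \in \Ds$ each contain the submonoid $(w^{p'})^*$ and $a_0,b_0,d \geq 1$; which shape occurs depends on whether the unbounded factor(s) lie in the interior of $P$ or abut one of its two ends. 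In each shape, Claim~A follows by a \emph{reflection trick}: if $G \in \Ds$ contains $w^{p'}$ then $w^{p'}\cdot\big((w^{p'})^{-1}G\big) \subseteq G$ and $\big(G(w^{p'})^{-1}\big)\cdot w^{p'} \subseteq G$, with $(w^{p'})^{-1}G$ and $G(w^{p'})^{-1}$ still in $\Ds$ and containing $\veps$; peeling off a $w^{p'}$ in this way creates the missing power of $w$ on a deficient side while keeping an $\Ds$-language in the middle (for the shape $G_0w^dG_1$, first replace $G_0$ by its element $w^{p'}$ to reach the one-sided shape $w^{p'+d}G_1$, then reflect).

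The main obstacle is Claim~A, and within it the combinatorial bookkeeping of the decompositions: tracking how the flanking constants combine with partial copies of $w$ into genuine powers of $w$, and — especially — the degenerate cases where the growth of $w^k$ is absorbed by the first or the last factor of $P$ (so that a power of $w$ is ``missing'' on one side), which is exactly what the reflection trick repairs. Everything outside Claim~A is routine use of the already-established properties of tuple separation.
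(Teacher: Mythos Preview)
Your reduction to Claim~A and the deduction of the lemma from Claim~A are correct and match the paper's strategy. The difference lies entirely in how Claim~A is established, and there the paper's argument is much shorter.

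Rather than tracking an infinite family of decompositions along an arithmetic progression, the paper picks a \emph{single} large power. Let $k$ bound the number of marking letters in every marked product occurring in the finite union defining~$K$, and look at $w^{2(k+1)} \in w^+ \subseteq K$. Some product $K_0a_1K_1\cdots a_mK_m$ with $m\le k$ contains it, giving a decomposition $w^{2(k+1)} = u_0a_1u_1\cdots a_mu_m$ with $u_j\in K_j$. The $m\le k$ letters $a_j$ cannot meet every one of the $k+1$ disjoint blocks $ww$ inside $w^{2(k+1)}$, so some $u_i$ contains $ww$ as an infix: $u_i = xwwy$ with $u_0a_1\cdots a_ix = w^{\ell_1}$ and $ya_{i+1}\cdots a_mu_m = w^{\ell_2}$. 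Since $u_i\in K_i\in\Ds^+$ and $u_i\neq\veps$, we have $u_i\in H$ for the underlying $H\in\Ds$; setting $H' = (xw)^{-1}H(wy)^{-1}\in\Ds$ (quotient closure) yields $\veps\in H'$ and $w^{\ell_1+1}H'w^{\ell_2+1}\subseteq K_0a_1\cdots a_mK_m\subseteq K$. This is precisely your Claim~A with $a=\ell_1+1$, $b=\ell_2+1$, obtained in one step with no case analysis.

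Your route (infinite progressions, stabilising factor lengths, the reflection trick for the boundary shapes) can in principle be pushed through, but the bookkeeping you yourself flag as ``the main obstacle'' is genuine: once several factors are unbounded, their start and end positions modulo $|w|$ are not all simultaneously fixed, so the asserted form ``(fixed suffix of $w$)$\cdot w^{e_i}\cdot$(fixed prefix of $w$)'' needs more justification than you give. None of this machinery is needed; one large power and a single two-sided quotient suffice.
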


\section{\texorpdfstring{Separation Algorithms for \bpol{\Gs} and \bpol{\Gs^+}}{Separation Algorithms for BPol(G) and BPol(G+)}}
\label{sec:bpolg}
\newcommand{\gnsbg}{\gnsep{\bpol{\Gs}}{\As}}
\newcommand{\tautg}{\ensuremath{\tau_{\As,\Gs}}\xspace}
\newcommand{\gnsbgp}{\gnsep{\bpol{\Gs^+}}{\As}}
\newcommand{\betg}{\ensuremath{\tau^+_{\As,\Gs}}\xspace}

For a group \vari \Gs, we now consider \bpol{\Gs}- and \bpol{\Gs^+}-separation. We rely on the notions of Section~\ref{sec:separ}: given an arbitrary \nfa $\As = (Q,\delta)$, we~present a generic characterization of the inseparable \bpol{\Gs}- and \bpol{\Gs^+}-quadruples under~\Gs control associated to \As, \emph{i.e.}, of the subsets $\gnsbg$ and $\gnsbgp$ of $Q^4$. Thanks to Proposition~\ref{prop:compgnsep}, this also yields characterizations of \nsep{\bpol{\Gs}}{\As}  and of \nsep{\bpol{\Gs^+}}{\As}, which in turn, in view of Proposition~\ref{prop:autosep}, yield reductions from both \bpol{\Gs}- and \bpol{\Gs^+}-separation to \Gs-separation. These polynomial time reductions are therefore \emph{effective} when \Gs-separation~is~decidable.

\subsection{Statements}

Let \Gs be a group \vari and let $\As = (Q,\delta)$ be an \nfa. We present characterizations of  $\gnsbg$ and $\gnsbgp$. They follow the same pattern, but each of them depends on a specific function from $2^{{Q^{4}}}$ to $2^{{Q^{4}}}$, which we first~describe.

\smallskip\noindent{\textbf{Characterization of \gnsbg.}} We use a function  $\tautg:2^{{Q^{4}}}\to2^{{Q^{4}}}$. For $S \subseteq Q^4$, we define the set $\tautg(S) \subseteq Q^4$. The definition is based on an auxiliary \nfa $\Bs_{S} = (Q^3,\gamma_{S})$ \emph{\underbar{with \veps-transitions}}, which depends on $S$. Its states are triples in~$Q^3$. The set $\gamma_{S} \subseteq Q^3 \times  (A \cup \{\veps\}) \times Q^3$ includes two kinds of transitions. First, given $a \in A$ and $s_1,s_2,s_3,\,t_1,t_2,t_3 \in Q$, we let $\big((s_1,s_2,s_3),a,(t_1,t_2,t_3)\big) \in \gamma_S$ if and only if $(s_1,a,t_1) \in \delta$, $(s_2,a,t_2) \in \delta$ and $(s_3,a,t_3) \in \delta$. Second, for every state $q_1 \in Q$ and every $(q_2,r_2,q_3,r_3) \in S$, we add the following \veps-transition: $((q_1,q_2,q_3),\veps,(q_1,r_2,r_3)) \in \gamma_S$. We represent this construction process graphically in Figure~\ref{fig:gauto}.

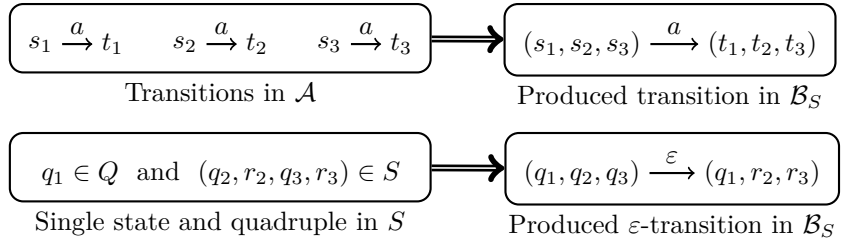
\begin{figure}[htb!]
    \centering
    \begin{tikzpicture}[scale=0.95]

        \node (s1) at (0.0,0) {$s_1$};
        \node (t1) at (1.0,0) {$t_1$};
        
        \node (q2) at (2.0,0.0) {$s_2$};
        \node (r2) at (3.0,0.0) {$t_2$};

        \node (q3) at (4.0,0.0) {$s_3$};
        \node (r3) at (5.0,0.0) {$t_3$};

        \draw[thick,->] (s1) to node[above] (l1) {$a$} (t1);
        
        \draw[thick,->] (q2) to node[above] {$a$} (r2);
        \draw[thick,->] (q3) to node[above] {$a$} (r3);
        
        \node[fit=(s1)(r3)(l1),draw,thick,rounded corners=5pt,label=below:Transitions in \As] (box1) {};

        \node (b) at (7.5,0.0) {$(s_1,s_2,s_3)$};
        \node (c) at (10.0,0.0) {$(t_1,t_2,t_3)$};
        \draw[thick,->] (b) to node[above] (l2) {$a$} (c);
        \node[fit=(l2)(b)(c),draw,thick,rounded corners=5pt,label=below:Produced transition in $\Bs_S$] (box2) {};
        \draw[line width=1pt,double,->] (box1) to (box2);

        \node (h1) at (2.5,-1.8) {$q_1 \in Q$ and $(q_2,r_2,q_3,r_3) \in S$};

            \node (as1) at (0.0,-1.8) {$s_1$};
        \node (at1) at (1.0,-1.8) {$t_1$};
        
        \node (aq2) at (2.0,-1.8) {$s_2$};
        \node (ar2) at (3.0,-1.8) {$t_2$};

        \node (aq3) at (4.0,-1.8) {$s_3$};
        \node (ar3) at (5.0,-1.8) {$t_3$};

        \draw[thick,->] (as1) to node[above] (al1) {$a$} (at1);
        
        \draw[thick,->] (aq2) to node[above] {$a$} (ar2);
        \draw[thick,->] (aq3) to node[above] {$a$} (ar3);
        
        \node[fit=(as1)(ar3)(al1),fill=white,draw,thick,rounded corners=5pt,label=below:Single state and quadruple in $S$] (box3) {};
        
        \node (h1) at (2.5,-1.8) {$q_1 \in Q\ $ and $\ (q_2,r_2,q_3,r_3) \in S$};

            \node (e) at (7.5,-1.8) {$(q_1,q_2,q_3)$};
        \node (f) at (10.0,-1.8) {$(q_1,r_2,r_3)$};
        \draw[thick,->] (e) to node[above] (l3) {$\veps$} (f);
        \node[fit=(l3)(e)(f),draw,thick,rounded corners=5pt,label=below:Produced \veps-transition in $\Bs_S$] (box4) {};
        \draw[line width=1pt,double,->] (box3) to (box4);

    \end{tikzpicture}
    \caption{Construction of the transitions in the auxiliary automaton $\Bs_S$}
    \label{fig:gauto}
\end{figure}

\begin{remark}
    The \nfa $\Bs_S$ and its counterpart $\Bs_S^+$ (which we define below as a means to handle \bpol{\Gs^+}) are the \emph{only} \nfas with \veps-transitions considered in the paper. In particular, the original input \nfa \As is assumed to be \emph{without} \veps-transitions.
\end{remark}

%
%
%
%
%
%
%
%
%
%
%
%
%
%

We are ready to define $\tautg(S) \subseteq Q^4$. For every $(q,r,s,t) \in Q^4$, we let $(q,r,s,t) \in \tautg(S)$ if and only if the two following conditions hold:
\begin{equation} \label{eq:gauto}
    \begin{array}{l}
        \text{$\{\veps\}$ is \emph{not} \Gs-separable from \lauto{\Bs_S}{(s,q,s)}{(t,r,t)}, and}\\
        \text{$\{\veps\}$ is \emph{not} \Gs-separable from \lauto{\Bs_S}{(q,s,q)}{(r,t,r)}}.
    \end{array}
\end{equation}

A set $S\subseteq Q^4$ is \emph{$(\bpoln,*)$-sound} for \Gs and $\As$ if it is a fixpoint for \tautg, \emph{i.e.} $\tautg(S) = S$. We have the following simple lemma which can be verified from the definition (see Appendix~\ref{app:bpolgp} for the proof). It states that $\tautg:2^{{Q^{4}}}\to2^{{Q^{4}}}$ is \emph{increasing} (for inclusion). In particular, this implies that it has a \emph{greatest fixpoint}, \emph{i.e.}, there is a \emph{greatest $(\bpoln,*)$-sound set}.

\begin{restatable}{lemma}{ginc} \label{lem:ginc}
    Let \Gs be a group \vari and let $\As = (Q,\delta)$ be an \nfa. For every $S,S' \subseteq Q^4$, we have $S \subseteq S' \Rightarrow \tautg(S) \subseteq \tautg(S')$.
\end{restatable}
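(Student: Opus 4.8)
The plan is to reduce the statement to a single monotonicity property of the auxiliary-automaton construction $S \mapsto \Bs_S$. First I would note that the letter-transitions of $\Bs_S$ do not depend on $S$ at all, and that the $\veps$-transitions are added \emph{monotonically} in $S$: each quadruple $(q_2,r_2,q_3,r_3) \in S$, together with an arbitrary state $q_1 \in Q$, contributes precisely the $\veps$-transition $((q_1,q_2,q_3),\veps,(q_1,r_2,r_3))$, and nothing else depends on $S$. Hence $S \subseteq S'$ yields $\gamma_S \subseteq \gamma_{S'}$, so $\Bs_{S'}$ has at least all the transitions of $\Bs_S$ (on the same state set $Q^3$). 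Consequently every run in $\Bs_S$ is also a run in $\Bs_{S'}$, and therefore $\lauto{\Bs_S}{p}{p'} \subseteq \lauto{\Bs_{S'}}{p}{p'}$ for all $p,p' \in Q^3$, where these languages are understood as in Section~\ref{sec:prelims}, with $\veps$-transitions handled by the standard semantics.

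The second ingredient is the elementary observation that non-separability is preserved when the \emph{second} language grows: if $L \subseteq L'$ and $K$ is not \Gs-separable from $L$, then $K$ is not \Gs-separable from $L'$. Indeed, any \Gs-separator $K'$ of $K$ from $L'$ satisfies $K \subseteq K'$ and $K' \cap L \subseteq K' \cap L' = \emptyset$, so it would also separate $K$ from $L$, a contradiction. I would then simply combine the two ingredients: take $(q,r,s,t) \in \tautg(S)$. By the defining conditions~(\ref{eq:gauto}) applied to $\Bs_S$, the language $\{\veps\}$ is \emph{not} \Gs-separable from $\lauto{\Bs_S}{(s,q,s)}{(t,r,t)}$ and \emph{not} \Gs-separable from $\lauto{\Bs_S}{(q,s,q)}{(r,t,r)}$. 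Since these two languages are contained in the corresponding languages of $\Bs_{S'}$ by the first step, the monotonicity of non-separability gives that $\{\veps\}$ is not \Gs-separable from $\lauto{\Bs_{S'}}{(s,q,s)}{(t,r,t)}$ nor from $\lauto{\Bs_{S'}}{(q,s,q)}{(r,t,r)}$, i.e. conditions~(\ref{eq:gauto}) hold for $\Bs_{S'}$, so $(q,r,s,t) \in \tautg(S')$. This proves $\tautg(S) \subseteq \tautg(S')$. The existence of a greatest fixpoint (hence a greatest $(\bpoln,*)$-sound set) then follows from Knaster--Tarski on the complete lattice $2^{Q^4}$.

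Honestly, there is no substantial obstacle here: the argument is entirely routine. The only points requiring a little care are (i) getting the direction of the separability-monotonicity right — it is the target language that may grow, which matches the fact that enlarging $S$ only adds $\veps$-transitions (and hence enlarges the recognized languages) rather than imposing new constraints; and (ii) phrasing ``$\veps$-transitions are added monotonically'' precisely, given that $\Bs_S$ is the only automaton with $\veps$-transitions in the paper, so one should be explicit that $\gamma_S \subseteq \gamma_{S'}$ as sets of transitions over $Q^3 \times (A \cup \{\veps\}) \times Q^3$.
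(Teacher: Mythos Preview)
Your proposal is correct and follows essentially the same approach as the paper's proof: show $\gamma_S \subseteq \gamma_{S'}$, deduce the language inclusions $\lauto{\Bs_S}{p}{p'} \subseteq \lauto{\Bs_{S'}}{p}{p'}$, and then use that non-\Gs-separability of $\{\veps\}$ from a language is preserved when that language grows. The only difference is that you spell out the separability-monotonicity step and add the Knaster--Tarski remark, whereas the paper leaves these implicit.
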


\smallskip
We may now state the first key theorem of the paper. It applies to \bpol{\Gs}-separation.

\begin{restatable}{theorem}{gauto}\label{thm:gauto}
  Let \Gs be a group \vari and $\As = (Q,\delta)$ an \nfa. Then, \gnsbg is the greatest $(\bpoln,*)$-sound subset of $Q^4$ for \Gs and \As.
\end{restatable}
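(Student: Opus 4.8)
The plan is to prove the two inclusions separately. For the easy direction, I would show that $\gnsbg$ is $(\bpoln,*)$-sound, i.e. that $\tautg(\gnsbg) = \gnsbg$. By Lemma~\ref{lem:ginc} (monotonicity) this then immediately gives that $\gnsbg$ is contained in the greatest fixpoint, so it suffices to prove $\tautg(\gnsbg) \subseteq \gnsbg$ and $\gnsbg \subseteq \tautg(\gnsbg)$. One of these two inclusions (I expect $\gnsbg \subseteq \tautg(\gnsbg)$, the ``soundness of the characterization'' direction) should be the genuinely hard part, and the other should follow from the constructive/separator-building machinery. Concretely, to show $\gnsbg \subseteq \tautg(\gnsbg)$: take $(q,r,s,t) \in \gnsbg$, so $\alauto{q}{r}$ is not \bpol{\Gs}-separable from $\alauto{s}{t}$ under \Gs-control; I must produce, for each of the two symmetric conditions in~\eqref{eq:gauto}, a witness showing $\{\veps\}$ is not \Gs-separable from the relevant language of $\Bs_{\gnsbg}$. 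The idea is that a path in $\Bs_S$ from $(s,q,s)$ to $(t,r,t)$ tracks three synchronized runs of \As together with ``jumps'' justified by elements already known to be in $S=\gnsbg$; one reads off from non-separability under \Gs-control a word $h \in \Gs$-control set that cannot be $\Gs$-separated from $\veps$, and the synchronized-runs structure lets one lift $h$ (or a suitable power/product coming from $\uclos_L$) to a word accepted by $\Bs_S$ between the required states. Here I would lean on Proposition~\ref{prop:pgcov} (the only place group-ness is used) and Lemma~\ref{lem:sepconcat}/Corollary~\ref{cor:covtsep}, exactly as in the proof of Proposition~\ref{prop:compgnsep}.

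For the reverse inclusion, that the greatest $(\bpoln,*)$-sound set $S$ satisfies $S \subseteq \gnsbg$: the strategy is the contrapositive in spirit — I assume $(q,r,s,t) \notin \gnsbg$ for some tuple in $S$ and derive a contradiction with $\tautg(S)=S$, or, more usefully, I directly build a \bpol{\Gs}-cover under \Gs-control that is separating for $S$, which by Lemma~\ref{lem:gautosep} forces $\gnsbg \subseteq S$... wait, that is the wrong direction. Let me restate: I want $S \subseteq \gnsbg$, equivalently I want to show that every tuple \emph{outside} $\gnsbg$ is outside $S$. Since $S = \tautg(S)$, a tuple $(q,r,s,t) \in S$ satisfies both clauses of~\eqref{eq:gauto} relative to $\Bs_S$; I must turn this into non-\bpol{\Gs}-separability under \Gs-control of $\alauto{q}{r}$ from $\alauto{s}{t}$. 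The mechanism: from ``$\{\veps\}$ not \Gs-separable from $\lauto{\Bs_S}{(s,q,s)}{(t,r,t)}$'' and the symmetric statement, I extract, via Corollary~\ref{cor:covtsep}, tuples $(L_0,L_1)^p$ that are not \Gs-separable under \Gs-control, and then assemble — using that $S$ is closed under the concatenation operations implicit in $\Bs_S$ and using Lemmas~\ref{lem:tupconcat} and~\ref{lem:pgsound} — a single word or family witnessing that the two original automaton languages merge inside every candidate \bpol{\Gs} separator. This is essentially an induction on the length of an accepting run in $\Bs_S$, where each \veps-transition invokes the inductive hypothesis (a tuple in $S$ is already known to be in $\gnsbg$) and each \As-labeled transition just appends a letter synchronously.

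The main obstacle I anticipate is the bookkeeping in the ``sound'' direction ($\gnsbg \subseteq \tautg(\gnsbg)$, i.e. the greatest fixpoint is \emph{large enough}): one must carefully match the triple-state structure of $\Bs_S$ (states $(s,q,s)$ and $(t,r,t)$, with the first and third coordinates coinciding) to the actual combinatorics of how a word simultaneously labels a run $q \to r$ and a run $s \to t$ in \As, while the \Gs-control language $H$ with $\veps \in H$ is threaded through everything. The triple — rather than pair — of coordinates is there precisely to encode the ``controlled'' aspect and the symmetry between the two languages, and getting the invariant right for the induction (what exactly a path in $\Bs_S$ between these states certifies about non-separability) is where the real work lies. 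Once that invariant is correctly formulated, both inclusions should reduce to applications of Proposition~\ref{prop:bconcat}, Proposition~\ref{prop:pgcov}, Lemma~\ref{lem:gautosep} and Corollary~\ref{cor:covtsep} together with straightforward inductions on run length, much in the style of the proof of Proposition~\ref{prop:compgnsep} already given.
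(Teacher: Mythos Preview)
Your overall architecture is in the right neighborhood, but the direction $S \subseteq \gnsbg$ (where $S$ is the greatest $(\bpoln,*)$-sound set) has a genuine gap. You propose an induction on the length of a run in $\Bs_S$ in which ``each \veps-transition invokes the inductive hypothesis (a tuple in $S$ is already known to be in $\gnsbg$)''. That is circular: you are in the middle of proving $S \subseteq \gnsbg$, so you cannot assume it for the quadruple labeling the \veps-transition. The paper breaks this circularity with a different induction parameter. It introduces, for every $n \geq 1$ and every $(q_1,r_1,q_2,r_2)\in S$, an explicit $n$-tuple $T_n(q_1,r_1,q_2,r_2)$ alternating the two languages, and proves by induction on $n$ that $T_n$ is not \pol{\Gs}-separable under \Gs-control (Proposition~\ref{prop:gsound}). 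Inside that proof one does do a sub-induction on run length in $\Bs_S$ (Lemma~\ref{lem:gsinduc}), but at an \veps-transition coming from $(q_2,t_2,q_3,t_3)\in S$ one appeals to the \emph{outer} hypothesis at level $n-1$ (so $T_{n-1}(q_2,t_2,q_3,t_3)$ is not \pol{\Gs}-separable under \Gs-control), and then uses Lemma~\ref{lem:pgsound} to prepend $\{\veps\}$ and return to level $n$. Only once all $T_n$ are handled does Corollary~\ref{cor:covtsep} convert this into non-\bpol{\Gs}-separability under \Gs-control. This double induction is the missing idea in your sketch; without it your run-length induction does not terminate. Relatedly, your reading of the triple $(s,q,s)$ is off: the first coordinate records an \emph{honest} run of \As (no \veps-jumps), yielding a concrete word $w\in\alauto{s_1}{t_1}$, while coordinates two and three take the \veps-jumps --- this is exactly why the statement of Lemma~\ref{lem:gsinduc} has the shape $(\{w\})\cdot T_{n-1}(\cdots)$.

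For the other direction, $\gnsbg \subseteq S$, your plan (show $\gnsbg$ is itself a fixpoint) differs from the paper's. The paper does not prove $\tautg(\gnsbg)=\gnsbg$ directly; instead it iterates $\tautg$ downward from $Q^4$ and shows each iterate is ``good'' (admits $L\in\Gs$ with $\veps\in L$ and a separating \bpol{\Gs}-cover of $L$), via Proposition~\ref{prop:bgcomp}, whose engine is the cover construction of Lemma~\ref{lem:grun} using Proposition~\ref{prop:pgcov} and Proposition~\ref{prop:bconcat}. Since the limit is $S$, $S$ is good and Lemma~\ref{lem:gautosep} gives $\gnsbg\subseteq S$. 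Your variant can be made to work (indeed $\gnsbg$ is good by the first bullet of Lemma~\ref{lem:gautosep}, so one application of Proposition~\ref{prop:bgcomp} plus the second bullet gives $\gnsbg\subseteq\tautg(\gnsbg)$), but note that the substantive content --- the separator-building of Lemma~\ref{lem:grun} --- is the same either way; you are not bypassing it. What your route does \emph{not} give you is the inclusion $\tautg(\gnsbg)\subseteq\gnsbg$, for which you would again need the $T_n$ machinery above.
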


\medskip\noindent{\textbf{Characterization of \gnsbgp.}} The characterization of \gnsbgp is analogous. Roughly, the only difference is that we modify the definition of the auxiliary automaton $\Bs_S$. Let \Gs be a group \vari and $\As = (Q,\delta)$ be an \nfa. We define a new function \mbox{$\betg: 2^{Q^4} \to 2^{Q^4}$}. For $S \subseteq Q^4$, we define $\betg(S) \subseteq Q^4$ using another auxiliary \nfa $\Bs^+_{S} = (Q^3,\gamma^+_{S})$ \emph{with \veps-transitions}. Its states are triples in $Q^3$ and $\gamma^+_{S} \subseteq Q^3 \times  (A \cup \{\veps\}) \times Q^3$ contains two kinds of transitions. First, for $a \in A$ and $s_1,s_2,s_3,t_1,t_2,t_3 \in Q$, we let $\big((s_1,s_2,s_3),a,(t_1,t_2,t_3)\big) \in \gamma^+_S$ if and only if $(s_1,a,t_1) \in \delta$, $(s_2,a,t_2) \in \delta$ and $(s_3,a,t_3) \in \delta$. Second, for all $q_1 \in Q$ and all $(q_2,r_2,q_3,r_3) \in S$, if  $A^+ \cap \alauto{q_1}{q_1} \cap \alauto{q_2}{q_2} \cap \alauto{q_3}{q_3} \cap \alauto{r_2}{r_2} \cap \alauto{r_3}{r_3} \neq \emptyset$, then we add the following \veps-transition: $((q_1,q_2,q_3),\veps,(q_1,r_2,r_3)) \in \gamma^+_S$. We represent this construction in Figure~\ref{fig:pgauto}.

\begin{figure}[htb!]
    \centering
    \begin{tikzpicture}[scale=0.95]

	  \node (s1) at (-0.8,0) {$s_1$};
	  \node (t1) at (0.4,0) {$t_1$};

	  \node (q2) at (1.6,0.0) {$s_2$};
	  \node (r2) at (2.8,0.0) {$t_2$};

	  \node (q3) at (4.0,0.0) {$s_3$};
	  \node (r3) at (5.2,0.0) {$t_3$};

	  \draw[thick,->] (s1) to node[above] (l1) {$a$} (t1);

	  \draw[thick,->] (q2) to node[above] {$a$} (r2);
	  \draw[thick,->] (q3) to node[above] {$a$} (r3);

	  \node[fit=(s1)(r3)(l1),draw,thick,rounded corners=5pt,label=below:Transitions in \As] (box1) {};

	  \node (b) at (7.5,0.0) {$(s_1,s_2,s_3)$};
	  \node (c) at (10.0,0.0) {$(t_1,t_2,t_3)$};
	  \draw[thick,->] (b) to node[above] (l2) {$a$} (c);
	  \node[fit=(l2)(b)(c),draw,thick,rounded corners=5pt,label=below:Produced transition in $\Bs^+_S$] (box2) {};
	  \draw[line width=1pt,double,->] (box1) to (box2);

	  \node[text=white] (as1) at (-0.8,-1.8) {$s_1$};
	  \node[text=white] (at1) at (0.4,-1.8) {$t_1$};

	  \node[text=white] (aq2) at (1.6,-1.8) {$s_2$};
	  \node[text=white] (ar2) at (2.8,-1.8) {$t_2$};

	  \node[text=white] (aq3) at (4.0,-1.8) {$s_3$};
	  \node[text=white] (ar3) at (5.2,-1.8) {$t_3$};

	  \draw[thick,white,->] (as1) to node[text=white,above] (al1) {$a$} (at1);

	  \draw[thick,white,->] (aq2) to node[text=white,above] {$a$} (ar2);
	  \draw[thick,white,->] (aq3) to node[text=white,above] {$a$} (ar3);

	  \node (h1) at (2.2,-1.8) {$q_1 \in Q, (q_2,r_2,q_3,r_3) \in S$ and $z \in A^+$};

	  \begin{scope}[xshift=0.2cm]
		\node[anchor=east] (h6) at (1.0,-3.0) {such that};
        
        \node (lq1) at (1.3,-3.05) {$q_1$};
        \draw[thick,->] (lq1) to [loop above] node[above] {$z$} (lq1);
        \node (lq2) at (2.0,-3.05) {$q_2$};
        \draw[thick,->] (lq2) to [loop above] node[above] {$z$} (lq2);
        \node (lq3) at (2.7,-3.05) {$r_2$};
        \draw[thick,->] (lq3) to [loop above] node[above] {$z$} (lq3);
        \node (lr2) at (3.4,-3.05) {$q_3$};
        \draw[thick,->] (lr2) to [loop above] node[above] {$z$} (lr2);
        \node (lr3) at (4.1,-3.05) {$r_3$};
        \draw[thick,->] (lr3) to [loop above] node[above] {$z$} (lr3);
	  \end{scope}

	  \node[fit=(as1)(ar3)(al1)(lr3),draw,thick,rounded corners=5pt,label=below:Single state and quadruple in $S$] (box3) {};

	  \node (e) at (7.5,-1.8) {$(q_1,q_2,q_3)$};
	  \node (f) at (10.0,-1.8) {$(q_1,r_2,r_3)$};
	  \draw[thick,->] (e) to node[above] (l3) {$\veps$} (f);
	  \node[fit=(l3)(e)(f),draw,thick,rounded corners=5pt,label=below:Produced \veps-transition in $\Bs^+_S$] (box4) {};
	  \draw[line width=1pt,double,->] ($(box3.north east)!(box4.west)!(box3.south east)$) to (box4);

    \end{tikzpicture}
    \caption{Construction of the transitions in the auxiliary automaton $\Bs^+_S$}
    \label{fig:pgauto}
\end{figure}

We are ready to define $\betg(S) \subseteq Q^4$. For every $(q,r,s,t) \in Q^4$, we let $(q,r,s,t) \in \betg(S)$ if and only if the two following conditions hold:
\begin{equation} \label{eq:gpauto}
    \begin{array}{l}
        \text{$\{\veps\}$ is \emph{not} \Gs-separable from \lauto{\Bs^+_S}{(s,q,s)}{(t,r,t)}, and}\\
        \text{$\{\veps\}$ is \emph{not} \Gs-separable from \lauto{\Bs^+_S}{(q,s,q)}{(r,t,r)}}.
    \end{array}
\end{equation}
A set $S\subseteq Q^4$ is \emph{$(\bpoln,+)$-sound} for \Gs and $\As$ if it is a fixpoint for \betg, \emph{i.e.} $\betg(S) = S$. The following monotonicity lemma implies that there is a \emph{greatest $(\bpoln,+)$-sound set} (see Appendix~\ref{app:bpolgp}).

\begin{restatable}{lemma}{pginc} \label{lem:pginc}
    Let \Gs be a group \vari and $\As = (Q,\delta)$ an \nfa. For every $S,S' \subseteq Q^4$, we have $S \subseteq S' \Rightarrow \betg(S) \subseteq \betg(S')$.
\end{restatable}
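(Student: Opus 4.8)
The plan is to mirror the proof of Lemma~\ref{lem:ginc}. The two ingredients are: (i) enlarging $S$ can only add transitions to the auxiliary automaton $\Bs^+_S$; and (ii) the defining conditions of $\betg$ are monotone with respect to language inclusion in the relevant argument.

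First I would record that $S \subseteq S'$ implies $\gamma^+_S \subseteq \gamma^+_{S'}$. The ``letter'' transitions of $\Bs^+_S$, those of the shape $\big((s_1,s_2,s_3),a,(t_1,t_2,t_3)\big)$, do not mention $S$ at all, so they are identical for $S$ and $S'$. For the $\veps$-transitions: each is produced from a state $q_1 \in Q$ and a quadruple $(q_2,r_2,q_3,r_3) \in S$, under the side condition $A^+ \cap \alauto{q_1}{q_1} \cap \alauto{q_2}{q_2} \cap \alauto{q_3}{q_3} \cap \alauto{r_2}{r_2} \cap \alauto{r_3}{r_3} \neq \emptyset$. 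The crucial observation is that this side condition depends only on \As\ and on the five states involved --- never on $S$ --- so it is unaffected when passing from $S$ to $S'$. Hence if $(q_2,r_2,q_3,r_3) \in S \subseteq S'$ and the side condition holds, then $\big((q_1,q_2,q_3),\veps,(q_1,r_2,r_3)\big)$ lies in $\gamma^+_{S'}$ as well. This yields $\gamma^+_S \subseteq \gamma^+_{S'}$; consequently every run of $\Bs^+_S$ (including $\veps$-transitions) is a run of $\Bs^+_{S'}$, so $\lauto{\Bs^+_S}{p}{p'} \subseteq \lauto{\Bs^+_{S'}}{p}{p'}$ for all $p,p' \in Q^3$.

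Next I would use the elementary fact that non-separability is monotone in the second argument: if a language $K$ separates $\{\veps\}$ from $L'$ and $L \subseteq L'$, then $K \cap L \subseteq K \cap L' = \emptyset$, so $K$ also separates $\{\veps\}$ from $L$; contrapositively, if $\{\veps\}$ is not \Gs-separable from $L$, it is not \Gs-separable from any superset of $L$. Applying this with $L = \lauto{\Bs^+_S}{(s,q,s)}{(t,r,t)} \subseteq \lauto{\Bs^+_{S'}}{(s,q,s)}{(t,r,t)}$, and likewise with the swapped pair $(q,s,q)$/$(r,t,r)$, both clauses in the definition of $\betg(S)$ carry over to $\betg(S')$. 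Hence $(q,r,s,t) \in \betg(S)$ entails $(q,r,s,t) \in \betg(S')$, which is the lemma.

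I do not expect a genuine obstacle: the argument is routine. The single point worth stating explicitly --- and the one a slick write-up might skip --- is that the emptiness side condition guarding the $\veps$-transitions of $\Bs^+_S$ does not involve $S$, so enlarging $S$ acts purely additively on $\gamma^+_S$. Given that, monotonicity of $\betg$ follows at once from monotonicity of the languages $\lauto{\Bs^+_S}{p}{p'}$ in $S$ together with monotonicity of non-separability.
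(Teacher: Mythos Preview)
Your proposal is correct and follows essentially the same approach as the paper: both arguments observe that $S \subseteq S'$ implies $\gamma^+_S \subseteq \gamma^+_{S'}$, deduce the corresponding language inclusions, and then use monotonicity of non-separability. Your write-up is in fact slightly more explicit than the paper's, which simply asserts ``the definition yields $\gamma^+_S \subseteq \gamma^+_{S'}$'' without isolating the point that the emptiness side condition is independent of $S$.
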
 

\medskip

We may now state our second key theorem. It applies to \bpol{\Gs^+}-separation.

\begin{restatable}{theorem}{pgauto}\label{thm:pgauto}
    Let \Gs be a group \vari and $\As = (Q,\delta)$ an \nfa. Then, \gnsep{\bpol{\Gs^+}}{\As} is the greatest $(\bpoln,+)$-sound subset of $Q^4$ for \Gs and \As.
\end{restatable}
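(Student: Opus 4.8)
The plan is to mirror the structure of the proof of Theorem~\ref{thm:gauto}, since the statement for $\bpol{\Gs^+}$ is completely analogous: I must show that $\gnsep{\bpol{\Gs^+}}{\As}$ is both a $(\bpoln,+)$-sound set and that it contains every $(\bpoln,+)$-sound set. The two inclusions correspond to the two standard directions of a separation argument. For \emph{soundness} of $\gnsep{\bpol{\Gs^+}}{\As}$ (that is, $\betg(\gnsep{\bpol{\Gs^+}}{\As}) = \gnsep{\bpol{\Gs^+}}{\As}$, or at least $\betg(\gnsep{\bpol{\Gs^+}}{\As}) \supseteq \gnsep{\bpol{\Gs^+}}{\As}$ together with the fixpoint equation via monotonicity), I would argue that if $(q,r,s,t)$ is \emph{not} in $\betg(\gnsep{\bpol{\Gs^+}}{\As})$, then one of the two $\Gs$-separability conditions in \eqref{eq:gpauto} holds, and from a $\Gs$-separator of $\{\veps\}$ from $\lauto{\Bs^+_S}{(s,q,s)}{(t,r,t)}$ (resp.\ the symmetric one) I would \emph{construct} a $\bpol{\Gs^+}$-separator of $\alauto{q}{r}$ from $\alauto{s}{t}$ under $\Gs$-control, using Proposition~\ref{prop:bconcat} to assemble marked products from the pieces read along successful runs in $\Bs^+_S$. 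This is the direction that explains the shape of the automaton $\Bs^+_S$: the $\veps$-transitions encode the inductive use of already-established inseparability quadruples, and the side condition $A^+ \cap \alauto{q_1}{q_1} \cap \cdots \cap \alauto{r_3}{r_3} \neq \emptyset$ is exactly what allows inserting a nonempty loop word $z$, hence exploiting Proposition~\ref{prop:pgpcov} and Lemma~\ref{lem:pgpsound} in place of Proposition~\ref{prop:pgcov} and Lemma~\ref{lem:pgsound}.

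**The completeness direction.** For the reverse inclusion, I would take an arbitrary $(\bpoln,+)$-sound set $S$ and show $S \subseteq \gnsep{\bpol{\Gs^+}}{\As}$; equivalently, I would show that if $(q,r,s,t) \in S$ then $\alauto{q}{r}$ is not $\bpol{\Gs^+}$-separable from $\alauto{s}{t}$ under $\Gs$-control. Since $S = \betg(S)$, the pair $(q,r,s,t)$ satisfies both conditions of \eqref{eq:gpauto} with respect to $\Bs^+_S$. The idea is to run the contrapositive of Corollary~\ref{cor:covtsep}: from the non-$\Gs$-separability of $\{\veps\}$ from the relevant $\Bs^+_S$-languages, build, for every $p \geq 1$, a $2p$-tuple of the form $(\alauto{q}{r},\alauto{s}{t})^p$ that is not $\pol{\Gs^+}$-separable under $\Gs$-control, which then prevents $\bpol{\Gs^+}$-separation under $\Gs$-control. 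Concretely, a word $w$ witnessing non-$\Gs$-separability of $\{\veps\}$ from $\lauto{\Bs^+_S}{(s,q,s)}{(t,r,t)}$ — interpreted as a word labeling a run in $\Bs^+_S$ that mixes $A$-transitions (common to three copies of $\As$) and $\veps$-transitions (each justified by an element of $S$ and a nonempty loop word) — should, by an induction on the number of $\veps$-transitions and repeated application of Lemma~\ref{lem:pgpsound} and Lemma~\ref{lem:tupconcat}, yield the desired non-separable tuple. Closure of $\Gs$ under the relevant operations and the fact that $\Gs$ consists of group languages are what make the $\Gs$-control bookkeeping go through.

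**Main obstacle.** I expect the delicate point to be the interaction between the $+$ (i.e.\ $\Gs^+$) flavour and the loop-word side condition in the definition of $\Bs^+_S$: in the $\bpol{\Gs}$ case (Theorem~\ref{thm:gauto}) every $\veps$-transition coming from a quadruple of $S$ is always present, whereas here a quadruple only contributes a $\veps$-transition when there is a common \emph{nonempty} word looping at all five relevant states. In the soundness direction this means the separator construction must be careful to only ever concatenate genuinely nonempty blocks $w_i$ separated by loop words, exactly matching the notion of an $\As$-guarded decomposition from Proposition~\ref{prop:pgpcov}; in the completeness direction it means the tuple built for the contrapositive of Corollary~\ref{cor:covtsep} must have the shape $(w^+, w^+L_1w^+, \dots)$ demanded by Lemma~\ref{lem:pgpsound}, so the chosen loop word $z$ has to be threaded consistently through the whole induction. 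Verifying that these combinatorial constraints are simultaneously satisfiable — i.e.\ that the side condition in $\gamma^+_S$ is neither too weak (producing a separator that does not actually lie in $\bpol{\Gs^+}$) nor too strong (missing some genuinely inseparable quadruple) — is where the real work lies; the rest is a faithful adaptation of the $\bpol{\Gs}$ argument with Propositions~\ref{prop:pgcov} and~\ref{prop:bconcat} and Lemma~\ref{lem:pgsound} replaced by their $+$-counterparts, Proposition~\ref{prop:pgpcov} and Lemma~\ref{lem:pgpsound}.
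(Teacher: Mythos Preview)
Your completeness direction---showing that every $(\bpoln,+)$-sound set $S$ satisfies $S \subseteq \gnsbgp$---is correct and matches the paper: one proves by induction on $n$ that the tuples $T_n$ are not $\pol{\Gs^+}$-separable under $\Gs$-control (using Lemma~\ref{lem:pgpsound} at each $\veps$-transition of $\Bs^+_S$, and Lemma~\ref{lem:tupconcat} to concatenate), and then invokes Corollary~\ref{cor:covtsep}.

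Your soundness direction, however, has a genuine gap. You propose to show $\gnsbgp \subseteq \betg(\gnsbgp)$ directly, by turning a $\Gs$-separator of $\{\veps\}$ from a language of $\Bs^+_{\gnsbgp}$ into a $\bpol{\Gs^+}$-separator under $\Gs$-control. The paper does \emph{not} argue this way; instead it iterates $\betg$ from $Q^4$ and proves that each iterate is ``good'', where ``good'' now means \emph{both} admitting a separating $\bpol{\Gs^+}$-cover \emph{and} being multiplication-closed. This second condition is new relative to the $\bpol{\Gs}$ proof and is precisely what your proposal misses when you describe the argument as ``a faithful adaptation''. The reason it appears is the following. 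In the $\bpol{\Gs}$ case, the cover from Proposition~\ref{prop:pgcov} has the form $Ua_1U\cdots a_nU$, and when you thread two words $x,y$ through it and read them in $\As$, each block $V_i \in \Vb$ directly produces a quadruple $(s_i,t_i,q_i,r_i) \in R$, which is exactly the $\veps$-transition needed in $\Bs_R$. In the $\bpol{\Gs^+}$ case, Proposition~\ref{prop:pgpcov} produces $\As$-guarded decompositions $w_1U\cdots w_nUw_{n+1}$, and one must insert the loop words $u_i$ (powers of the $z_i$) to land in $L$ and to satisfy the nonemptiness side-condition of $\gamma^+_R$. Doing so shifts the states at the junctions (from $t_i,r_i,s_{i+1},q_{i+1}$ to primed versions $t'_i,r'_i,s'_{i+1},q'_{i+1}$), and the $\veps$-transition you now need corresponds to a \emph{product} of three quadruples of $R$: one from $V_i$ as before, and two more coming from $u_i$ viewed through the separating cover. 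Without $R$ being multiplication-closed this composition cannot be carried out, and the construction of the separator breaks down. The paper handles this by proving separately that $\betg(R)$ is always multiplication-closed (a short argument via Lemma~\ref{lem:sepconcat} and concatenation of runs in $\Bs^+_R$), and then carrying the property through the iteration.
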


\medskip

Let us discuss the consequences of  Theorems~\ref{thm:gauto} and~\ref{thm:pgauto}. Since $\Bs_S$ and $\Bs^+_S$ can be computed from \As and $S$, one can compute $\tautg(S)$ and $\betg(S)$ from $S$ provided that \Gs-separation is decidable. Hence, if \Gs-separation is decidable, Theorem~\ref{thm:gauto} (resp.~Theorem~\ref{thm:pgauto}) yields a \emph{greatest} fixpoint procedure for computing \gnsbg (resp.~\gnsep{\bpol{\Gs^+}}{\As}). Indeed, consider the sequence of subsets defined by $S_0 = Q^4$, and $S_{n} = \tautg(S_{n-1})$ for $n \geq 1$. By definition, computing $S_n$ from $S_{n-1}$ boils down to deciding \Gs-separation. Since \tautg is increasing by Lemma~\ref{lem:ginc}, we get a decreasing sequence $Q^4 = S_0 \supseteq S_1 \supseteq S_2 \cdots$. Moreover, since $Q^4$ is finite, this sequence stabilizes at some point: there exists $n \in \nat$ such that $S_n = S_{j}$ for all $j \geq n$. One may verify that $S_n$ is the greatest $(\bpoln,*)$-sound subset of $Q^4$. By Theorem~\ref{thm:gauto}, it follows that $S_n = \gnsbg$. Likewise, the sequence $T_n$ defined by $T_0=Q^4$ and $T_{n} = \betg(T_{n-1})$ is computable when \Gs-separation is decidable, and, since it is decreasing, it stabilizes. By Theorem~\ref{thm:pgauto}, its stabilization value is~\gnsep{\bpol{\Gs^+}}{\As}.

By Proposition~\ref{prop:compgnsep}, \nsep{\bpol{\Gs}}\As  (resp. \nsep{\bpol{\Gs^+}}{\As}) can be computed from \gnsbg (resp.~\gnsbgp) via a \emph{least} fixpoint procedure. Altogether, by Proposition~\ref{prop:autosep}, we get reductions from \bpol{\Gs}- and \bpol{\Gs^+}-separation to \Gs-separation. One may verify that these are polynomial time reductions (we mean ``reduction'' in the Turing sense: \bpol{\Gs}- and \bpol{\Gs^+}-separation can be decided in polynomial time using an oracle for \Gs-separation).

Now, it is known that separation can be decided in polynomial time  for the classes \stzer, \md and \grp (this is trivial for \stzer, see~\cite{pzgroup} for \md and \grp). Hence, we obtain from Theorem~\ref{thm:gauto} that separation is decidable in polynomial time for \bpol{\stzer} (\emph{i.e.}, \bsw{1}), \bpol{\md} (\emph{i.e.}, \bswm{1}) and \bpol{\grp}. This was well-know for \bpol{\stzer} (the class of piecewise testable languages, see~\cite{cmmptsep,pvzptsep}). For the other two, decidability was known~\cite{Zetzsche18,pzconcagroup} but not the polynomial time upper bound. Using Theorem~\ref{thm:pgauto}, we also obtain that separation is decidable in polynomial time for \bpol{\stzer^+} (\emph{i.e.}, the languages of dot-depth one or equivalently \bsws{1}), \bpol{\md^+} (\emph{i.e.}, \bswsm{1}) and \bpol{\grp^+}. Decidability was already known for \bpol{\stzer^+} and \bpol{\md^+}: the results can be obtained indirectly by reduction to \bpol{\stzer}-separation using transfer theorems~\cite{PlaceZ20,prwmodulo}. Yet, the polynomial time upper bounds are new as the transfer theorems have a built-in exponential blow-up. Moreover, decidability of separation is a new result for \bpol{\grp^+}.

Finally, the statement applies to \bpol{\abg} and \bpol{\abg^+} (\emph{i.e.}, \bswam{1} and \bswsam{1}). This is a new result for \bpol{\abg^+}. Yet, since \abg-separation is \conptime-complete when the alphabet is part of the input~\cite{pzgroup} (the problem being in \ptime for a fixed alphabet), the complexity analysis is not entirely immediate. However, one may verify that the procedures yield \conptime algorithms
for both \bpol{\abg}- and \bpol{\abg^+}-separation. We summarize the upper bounds in Figure~\ref{fig:bpol:gcomp}.

\begin{figure}[!htb]
    \begin{center}
        \begin{tikzpicture}
            \matrix (M) [matrix of nodes, column  sep=5mm,row  sep=4mm,draw,very thick,rounded corners=3pt,nodes={align=center,text width = 1cm,anchor=center}]
            {
                \node[text width = 6cm] (t1) {Input class \Gs}; & \stzer    & \md  & \abg &  \grp \\
                \node[text width = 6cm] (t2) {\bpol{\Gs}- and \bpol{\Gs^+}-separation}; &   \ptime & \ptime  & \conptime & \ptime\\
            };
            
            \foreach \col in {3,4,5} {%
                \mvline[thick]{M}{\col}
            }
            
            \node (L) at ($(t2.east)!0.5!(M-2-2.west)$) {};

            \draw[thick] (L |- M.north) -- (L |- M.south);
            \node (K) at ($(t1)!0.5!(t2)$) {};

            \draw[thick] (K -| M.west) -- (K -| M.east);

        \end{tikzpicture}
    \end{center}
    \caption{Complexity of separation (for input languages represented by \nfas).}
    \label{fig:bpol:gcomp}
\end{figure}

\subsection{Proof of Theorem~\ref{thm:gauto}}

We now concentrate on the proof of Theorem~\ref{thm:gauto}. The key ingredients in this argument are Proposition~\ref{prop:pgcov} and Lemma~\ref{lem:pgsound}. On the other hand, the proof of Theorem~\ref{thm:pgauto} is postponed to Appendix~\ref{app:bpolgp}. It is based on similar ideas. Roughly, we replace Proposition~\ref{prop:pgcov} and Lemma~\ref{lem:pgsound} (which are tailored to classes \bpol{\Gs}) by their counterparts for \bpol{\Gs^+}: Proposition~\ref{prop:pgpcov} and Lemma~\ref{lem:pgpsound}. However, note that proving Theorem~\ref{thm:pgauto} is technically more involved as manipulating the automaton $\Bs_{S}^+$ in the definition of \betg requires more work.

We fix a group \vari \Gs and an \nfa $\As = (Q,\delta)$. Let $S \subseteq Q^4$ be the greatest $(\bpoln,*)$-sound subset for \Gs and \As. We prove that $S = \gnsbg$.

\smallskip
\noindent
{\bf First part: $S\subseteq\gnsbg$.} We use \emph{tuple separation} and Lemma~\ref{lem:pgsound}. Let us start with some terminology. For every $n \geq 1$ and $(q_1,r_1,q_2,r_2) \in Q^4$, we associate an $n$-tuple of languages, written $T_n(q_1,r_1,q_2,r_2)$. We use induction on $n$ and tuple concatenation to present the definition. If $n = 1$ then, $T_1(q_1,r_1,q_2,r_2) = \big(\alauto{q_2}{r_2}\big)$. If $n > 1$, then,
\[
T_n(q_1,r_1,q_2,r_2) = \left\{\begin{array}{ll}
    (\alauto{q_2}{r_2}) \cdot T_{n-1}(q_1,r_1,q_2,r_2) & \text{if $n$ is odd} \\
    (\alauto{q_1}{r_1}) \cdot T_{n-1}(q_1,r_1,q_2,r_2) & \text{if $n$ is even.}
\end{array}\right.
\]
For example, we have $T_3(q_1,r_1,q_2,r_2) = (\alauto{q_2}{r_2},\alauto{q_1}{r_1},\alauto{q_2}{r_2})$.

\begin{restatable}{proposition}{gsound}\label{prop:gsound}
    For every $n \geq 1$ and $(q_1,r_1,q_2,r_2) \in S$, the $n$-tuple $T_n(q_1,r_1,q_2,r_2)$ is not \pol{\Gs}-separable under \Gs-control. 
\end{restatable}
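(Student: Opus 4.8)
The plan is to argue by induction on $n$, exploiting the fixpoint equation $\tautg(S)=S$ together with one easy observation: $S$ is \emph{symmetric} under the transposition $(q,r,s,t)\mapsto(s,t,q,r)$, since this map merely exchanges the two conditions of~\eqref{eq:gauto}, so $\tautg$ always returns a symmetric set and hence so does its fixpoint. For the base case $n=1$, we have $T_1(q_1,r_1,q_2,r_2)=(\alauto{q_2}{r_2})$, and a $1$-tuple $(L)$ fails to be \pol{\Gs}-separable under \Gs-control exactly when $L\cap H\neq\emptyset$ for every $H\in\Gs$ with $\veps\in H$. Now $(q_1,r_1,q_2,r_2)\in S=\tautg(S)$ and the first line of~\eqref{eq:gauto} give that $\{\veps\}$ is not \Gs-separable from \lauto{\Bs_{S}}{(q_2,q_1,q_2)}{(r_2,r_1,r_2)}, which is contained in $\alauto{q_2}{r_2}$ because \veps-transitions never move the first component of $\Bs_{S}$; hence every such $H$ meets $\alauto{q_2}{r_2}$, settling the base case.

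For the inductive step ($n\geq2$), I would fix $H\in\Gs$ with $\veps\in H$ and show that $T_n(q_1,r_1,q_2,r_2)\cap H$ is not \pol{\Gs}-separable, which suffices since $H$ is arbitrary. Write $T_n(q_1,r_1,q_2,r_2)=(M_1,\dots,M_n)$, an alternating tuple whose leading language $M_1$ is $\alauto{q_2}{r_2}$ if $n$ is odd and $\alauto{q_1}{r_1}$ if $n$ is even. Pick the line of~\eqref{eq:gauto} whose first \As-component realises $M_1$ (line~1 when $n$ is odd, line~2 when $n$ is even); it furnishes a word $w\in H$ with an accepting run of $\Bs_{S}$. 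Reading off the run yields a factorisation $w=v_0v_1\cdots v_m$ and, for each of the $m$ \veps-transitions, a quadruple $\kappa_j\in S$: the first \As-component follows an \As-path labelled by the whole of $w$ and realising $M_1$ (so $w\in M_1$), while the second and third components describe two ``zigzags'' of \As-paths along $v_0,\dots,v_m$, glued at the $m$ junctions by the two pairs of the $\kappa_j$'s. By the induction hypothesis applied to each $\kappa_j$ — possibly after transposing its two pairs, which is legitimate by the symmetry of $S$ and necessary so that they align with $M_2,\dots,M_n$ — the $(n-1)$-tuple $T_{n-1}(\kappa_j)$ is not \pol{\Gs}-separable under \Gs-control.

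To assemble a non-separable $n$-tuple inside $T_n(q_1,r_1,q_2,r_2)\cap H$, let $N\in\Gs$ be the neutral class of $H$: taking $\alpha\colon A^*\to G$ to be the syntactic morphism of $H$, the singleton $\{1_G\}$ is a Boolean combination of sets of the form $\alpha(u\inv Hv\inv)$, so $N:=\alpha\inv(1_G)$ lies in $\Gs$ (using only closure under Boolean operations and quotients) and contains \veps. Since $N\cap H'\in\Gs$ for all $H'\in\Gs$, the tuple $T_{n-1}(\kappa_j)\cap N$ is still not \pol{\Gs}-separable under \Gs-control, so Lemma~\ref{lem:pgsound} yields that the $n$-tuple $U_j:=(\{\veps\})\cdot\big(T_{n-1}(\kappa_j)\cap N\big)$ is not \pol{\Gs}-separable. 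Each constant $n$-tuple $V_j:=(\{v_j\},\dots,\{v_j\})$ is not \pol{\Gs}-separable by Lemma~\ref{lem:tuptriv} (its coordinates intersect in $\{v_j\}$), and \pol{\Gs} is a \pvari by Theorem~\ref{thm:bpolvar}; so, by repeated use of Lemma~\ref{lem:tupconcat}, the $n$-tuple $P=(P_1,\dots,P_n)$ formed by concatenating $V_0,U_1,V_1,U_2,\dots,U_m,V_m$ coordinatewise is not \pol{\Gs}-separable. The \veps-coordinates of the $U_j$'s collapse $P_1$ to $\{v_0v_1\cdots v_m\}=\{w\}$; for $i\geq2$ the $i$-th coordinates of the $U_j$'s are the junction languages from the $\kappa_j$'s, so $P_i$ telescopes along the corresponding \As-component of $\Bs_{S}$ into exactly $M_i$; and $P_i\subseteq H$, because each word of $P_i$ reads $v_0z_1v_1\cdots z_mv_m$ with all $z_j\in N$, hence has the same $\alpha$-image as $w\in H$. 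Therefore $P_i\subseteq M_i\cap H$ for every $i$, and Lemma~\ref{lem:tuptriv} gives that $T_n(q_1,r_1,q_2,r_2)\cap H$ is not \pol{\Gs}-separable, completing the induction.

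The step I expect to be the real obstacle is the inclusion $P_i\subseteq H$: the glued tuple $P$ is genuinely \emph{wider} than $T_n\cap H$ in its non-leading coordinates (the junction languages contribute words other than $w$), so it must be forced back inside the control language $H$ — and this is precisely why the induction hypothesis is invoked under $N$-control, with $N$ the \veps-neutral class of $H$, rather than merely under \Gs-control. A secondary, bookkeeping-type difficulty is keeping the alternation of $\alauto{q_2}{r_2}$ and $\alauto{q_1}{r_1}$ in $T_n$ synchronised with the zigzags delivered by $\Bs_{S}$, which is handled by the parity-dependent choice of line in~\eqref{eq:gauto} and the transposition of the $\kappa_j$'s for one parity.
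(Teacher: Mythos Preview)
Your proof is correct and follows essentially the same approach as the paper: both argue by induction on $n$, use the fixpoint $S=\tautg(S)$ together with~\eqref{eq:gauto} to obtain a word $w\in H$ with a run in $\Bs_S$, then decompose that run and apply the induction hypothesis (via Lemma~\ref{lem:pgsound}) at each \veps-transition before concatenating with Lemma~\ref{lem:tupconcat}. The only notable difference is tactical: the paper packages the run-decomposition as a separate Lemma~\ref{lem:gsinduc} and handles the ``stay inside $H$'' issue at the very end by absorbing $H\in\pol{\Gs}$ into the first separator (so your anticipated obstacle is lighter than you expected), whereas you thread the neutral class $N$ through the induction---both devices work, and your explicit use of the symmetry of $S$ is exactly what the paper's ``by symmetry'' gestures at.
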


By definition, Proposition~\ref{prop:gsound} implies that for all $p \geq 1$ and $(q_1,r_1,q_2,r_2) \in S$, the $2p$-tuple $(\alauto{q_1}{r_1},\alauto{q_2}{r_2})^p$ is not \pol{\Gs}-separable under \Gs-control. By Corollary~\ref{cor:covtsep}, it follows that $\alauto{q_1}{r_1}$ is not \bpol{\Gs}-separable from $\alauto{q_2}{r_2}$ under \Gs-control, \emph{i.e.}, that $(q_1,r_1,q_2,r_2)  \in \gnsbg$. We get $S \subseteq \gnsbg$ as desired.

We prove Proposition~\ref{prop:gsound} by induction on $n$. We fix $n \geq 1$ for the proof. In order to exploit the hypothesis that $S$ is $(\bpoln,*)$-sound, we need a property of the \nfa $\Bs_S = (Q^3,\gamma_S)$ used to define \tautg. When $n \geq 2$, this is where we use induction on $n$ and Lemma~\ref{lem:pgsound}.

\begin{restatable}{lemma}{gsinduc}\label{lem:gsinduc}
    Let $(s_1,s_2,s_3),(t_1,t_2,t_3) \in Q^3$ and $w \in \lauto{\Bs_{S}}{(s_1,s_2,s_3)}{(t_1,t_2,t_3)}$. Then, $w \in \alauto{s_1}{t_1}$ and, if $n \geq 2$, the $n$-tuple $(\{w\})  \cdot T_{n-1}(s_{2},t_{2},s_{3},t_{3})$ is not \pol{\Gs}-separable.
\end{restatable}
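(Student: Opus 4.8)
The plan is to prove Lemma~\ref{lem:gsinduc} by induction on the length of a run $\rho$ of $\Bs_S$ — $\veps$-transitions included — witnessing $w \in \lauto{\Bs_S}{(s_1,s_2,s_3)}{(t_1,t_2,t_3)}$, establishing both assertions (namely $w \in \alauto{s_1}{t_1}$, and non-\pol{\Gs}-separability of $(\{w\}) \cdot T_{n-1}(s_2,t_2,s_3,t_3)$ when $n \geq 2$) simultaneously. The only ingredient needed beyond the definition of $\Bs_S$ is the induction hypothesis of Proposition~\ref{prop:gsound}: for every $m<n$ and every quadruple in $S$, the associated $m$-tuple $T_m$ is not \pol{\Gs}-separable under \Gs-control. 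Since Lemma~\ref{lem:gsinduc} is invoked only inside the inductive step $n\geq2$ of Proposition~\ref{prop:gsound}, whose base case $n=1$ is proved separately, this introduces no circularity. For $n=1$ the second assertion is vacuous and the first holds trivially in every case below. When $\rho$ is empty, $(s_1,s_2,s_3)=(t_1,t_2,t_3)$ and $w=\veps$, so $s_1=t_1$ gives $\veps\in\alauto{s_1}{t_1}$; and if $n\geq2$ every component of $(\{w\}) \cdot T_{n-1}(s_2,s_2,s_3,s_3)$ contains $\veps$, so Lemma~\ref{lem:tuptriv} yields non-\pol{\Gs}-separability.

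When the last transition of $\rho$ reads a letter $a\in A$, I would split $\rho$ into a run $\rho'$ from $(s_1,s_2,s_3)$ to some $(u_1,u_2,u_3)$ labelled $w'$ followed by a transition $((u_1,u_2,u_3),a,(t_1,t_2,t_3))$, so $w=w'a$ and $(u_j,a,t_j)\in\delta$ for $j=1,2,3$. The inductive hypothesis gives $w'\in\alauto{s_1}{u_1}$, hence $w\in\alauto{s_1}{t_1}$; and, for $n\geq2$, that $(\{w'\})\cdot T_{n-1}(s_2,u_2,s_3,u_3)$ is not \pol{\Gs}-separable. I would then multiply this $n$-tuple componentwise by the constant tuple $(\{a\},\dots,\{a\})$ — which is not \pol{\Gs}-separable, its components having nonempty intersection (Lemma~\ref{lem:tuptriv}) — via Lemma~\ref{lem:tupconcat}; the resulting non-separable tuple has first component $\{w\}$ and, for $i\geq2$, component $\alauto{s_2}{u_2}\{a\}\subseteq\alauto{s_2}{t_2}$ or $\alauto{s_3}{u_3}\{a\}\subseteq\alauto{s_3}{t_3}$ (which one is dictated by $i$ and the parity of $n-1$), hence it sits componentwise below $(\{w\})\cdot T_{n-1}(s_2,t_2,s_3,t_3)$, and the monotonicity clause of Lemma~\ref{lem:tuptriv} closes this case.

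When the last transition of $\rho$ is an $\veps$-transition, it must have the form $((t_1,q_2,q_3),\veps,(t_1,t_2,t_3))$ with $(q_2,t_2,q_3,t_3)\in S$, and removing it leaves a run $\rho'$ from $(s_1,s_2,s_3)$ to $(t_1,q_2,q_3)$ still labelled $w$. The inductive hypothesis gives $w\in\alauto{s_1}{t_1}$ and, for $n\geq2$, that $(\{w\})\cdot T_{n-1}(s_2,q_2,s_3,q_3)$ is not \pol{\Gs}-separable. Since $n-1<n$ and $(q_2,t_2,q_3,t_3)\in S$, the induction hypothesis of Proposition~\ref{prop:gsound} makes $T_{n-1}(q_2,t_2,q_3,t_3)$ not \pol{\Gs}-separable under \Gs-control, hence $(\{\veps\})\cdot T_{n-1}(q_2,t_2,q_3,t_3)$ not \pol{\Gs}-separable by Lemma~\ref{lem:pgsound}. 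Combining these two $n$-tuples componentwise with Lemma~\ref{lem:tupconcat} (using $\{w\}\{\veps\}=\{w\}$) yields a non-separable tuple whose components for $i\geq2$ are $\alauto{s_2}{q_2}\alauto{q_2}{t_2}$ or $\alauto{s_3}{q_3}\alauto{q_3}{t_3}$ — the positional pattern of $T_{n-1}$ guaranteeing the shared middle state lines up — hence $\subseteq\alauto{s_2}{t_2}$, resp.\ $\subseteq\alauto{s_3}{t_3}$ by composition of runs in \As, i.e.\ componentwise below $(\{w\})\cdot T_{n-1}(s_2,t_2,s_3,t_3)$; Lemma~\ref{lem:tuptriv} concludes.

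The main obstacle I anticipate is not conceptual but bookkeeping: one must pin down, for each position $i$, whether the $i$-th slot of $T_{n-1}$ is a ``first-pair'' or a ``second-pair'' slot, check that this depends only on $i$ and on the parity of $n-1$ (not on the states), and thereby confirm that the componentwise products assembled in the last two cases align along the intended intermediate states $u_2,u_3$ (letter case), resp.\ $q_2,q_3$ ($\veps$ case). Keeping the two nested inductions well-founded — this lemma invoking Proposition~\ref{prop:gsound} only at strictly smaller levels $m<n$ — also deserves explicit mention.
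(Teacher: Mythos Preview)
Your proposal is correct and follows essentially the same approach as the paper: a sub-induction on the number of transitions in the run of $\Bs_S$, with the base case handled via Lemma~\ref{lem:tuptriv}, the letter case via Lemma~\ref{lem:tupconcat} plus monotonicity, and the $\veps$-case via the outer induction hypothesis of Proposition~\ref{prop:gsound} combined with Lemma~\ref{lem:pgsound}. The only cosmetic difference is that the paper phrases the inductive step uniformly by first establishing that $(\{x\})\cdot T_{n-1}(q_2,t_2,q_3,t_3)$ is not \pol{\Gs}-separable for the last-consumed symbol $x\in A\cup\{\veps\}$ and then applies Lemma~\ref{lem:tupconcat} once, whereas you use the constant tuple $(\{a\},\dots,\{a\})$ in the letter case; both routes rely on the same positional-pattern observation you flag in your last paragraph.
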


\begin{proof}
    Since $w \in \lauto{\Bs_{S}}{(s_1,s_2,s_3)}{(t_1,t_2,t_3)}$, there exists a run labeled by $w$ from $(s_1,s_2,s_3)$ to $(t_1,t_2,t_3)$ in $\Bs_S$. We use a sub-induction on the number of transitions involved in that run. First, assume that no transitions are used: we have $w = \veps$ and $(s_1,s_2,s_3) = (t_1,t_2,t_3)$. Clearly, $\veps \in \alauto{s_1}{s_1}$ and, if $n \geq 2$, the $n$-tuple $(\{\veps\})  \cdot T_{n-1}(s_{2},s_{2},s_{3},s_{3})$ is not \pol{\Gs}-separable by Lemma~\ref{lem:tuptriv} since $\veps \in \alauto{s_2}{s_2} \cap \alauto{s_3}{s_3}$. We now assume that at least one transition is used and consider the last one: we have $(q_1,q_2,q_3) \in Q^3$, $w' \in A^*$ and $x \in A\cup \{\veps\}$ such that $w = w'x$, $w' \in \lauto{\Bs_{S}}{(s_1,s_2,s_3)}{(q_1,q_2,q_3)}$ and $((q_1,q_2,q_3),x,(t_1,t_2,t_3)) \in \gamma_S$. By induction, we have $w' \in \alauto{s_1}{q_1}$ and, if $n \geq 2$, the $n$-tuple $(\{w'\})  \cdot T_{n-1}(s_{2},q_{2},s_{3},q_{3})$ is not \pol{\Gs}-separable. We prove that $x \in \alauto{q_1}{t_1}$ and, if $n \geq 2$, the $n$-tuple $(\{x\})  \cdot T_{n-1}(q_{2},t_{2},q_{3},t_{3})$ is not \pol{\Gs}-separable. It will then be immediate that $w = w'x  \in \alauto{s_1}{t_1}$ and, if $n \geq 2$, Lemma~\ref{lem:tupconcat} implies that $(\{w\})  \cdot T_{n-1}(s_{2},t_{2},s_{3},t_{3})$ is not \pol{\Gs}-separable.
    
    We consider two cases depending on whether $x \in A$ or $x = \veps$. First, if $x = a \in A$, then $(q_i,a,t_i) \in \delta$ for $i = \{1,2,3\}$. Clearly, this implies that $a \in \alauto{q_1}{t_1}$ and, if  $n \geq 2$, then $(\{a\})  \cdot T_{n-1}(q_{2},t_{2},q_{3},t_{3})$ is not \pol{\Gs}-separable by Lemma~\ref{lem:tuptriv} since $a \in \alauto{q_2}{t_2} \cap \alauto{q_3}{t_3}$. Assume now that $x = \veps$: we are dealing with an \veps-transition. By definition of $\gamma_S$, we have $q_1 = t_1$ and $(q_2,t_2,q_3,t_3) \in S$. The former yields $\veps \in \alauto{q_1}{t_1}$. Moreover, if $n \geq 2$, since $(q_2,t_2,q_3,t_3) \in S$, it follows from induction on $n$ in Proposition~\ref{prop:gsound} that the $(n-1)$-tuple $T_{n-1}(q_2,t_2,q_3,t_3)$ is not \pol{\Gs}-separable under \Gs-control. Combined with Lemma~\ref{lem:pgsound}, this yields that $(\{\veps\})  \cdot T_{n-1}(q_2,t_2,q_3,t_3)$ is not \pol{\Gs}-separable, as desired.
\end{proof}

We may now complete the proof of Proposition~\ref{prop:gsound}. By symmetry, we only treat the case when $n$ is odd and leave the case when it is even to the reader. Let $(q_1,r_1,q_2,r_2) \in S$, we have to prove that $T_n(q_1,r_1,q_2,r_2)$ is not \pol{\Gs}-separable under \Gs-control. Hence, we fix $H \in \Gs$ such that $\veps \in H$ and prove $H \cap T_n(q_1,r_1,q_2,r_2)$ is not \pol{\Gs}-separable. Since $S$ is $(\bpoln,*)$-sound, we have $\tautg(S) = S$, which implies that $(q_1,r_1,q_2,r_2) \in \tautg(S)$. Hence, it follows from~\eqref{eq:gauto} that $\{\veps\}$ is not \Gs-separable from $\lauto{\Bs_{S}}{(q_2,q_1,q_2)}{(r_2,r_1,r_2)}$. Since $H \in \Gs$ and $\veps \in H$, we get a word $w \in H \cap \lauto{\Bs_{S}}{(q_2,q_1,q_2)}{(r_2,r_1,r_2)}$. By Lemma~\ref{lem:gsinduc}, we have $w \in  H \cap \alauto{q_2}{r_2}$. This completes the proof when $n = 1$. Indeed, in that case we have $T_1(q_1,r_1,q_2,r_2) = (\alauto{q_2}{r_2})$ and since $H \cap \alauto{q_2}{r_2} \neq \emptyset$, it follows that $H \cap T_1(q_1,r_1,q_2,r_2)$ is not \pol{\Gs}-separable, as desired. If $n \geq 2$, then Lemma~\ref{lem:gsinduc} also implies that $(\{w\})  \cdot T_{n-1}(q_{1},r_{1},q_{2},r_{2})$ is not \pol{\Gs}-separable. Since $w \in H \cap \alauto{q_2}{r_2}$, Lemma~\ref{lem:tuptriv} yields that $(H \cap \alauto{q_2}{r_2}) \cdot T_{n-1}(q_1,r_1,q_2,r_2)$ is not \pol{\Gs}-separable. Thus, since $H \in \Gs \subseteq \pol{\Gs}$, one may verify that the $n$-tuple $(H \cap \alauto{q_2}{r_2}) \cdot (H \cap T_{n-1}(q_1,r_1,q_2,r_2))$ is not \pol{\Gs}-separable. By definition, this exactly says that $H \cap T_n(q_1,r_1,q_2,r_2)$ is not \pol{\Gs}-separable, completing the proof.

\medskip
\noindent
{\bf Second part: $\gnsbg \subseteq S$.} In the sequel, we say that an arbitrary set $R \subseteq Q^4$ is \emph{good} if there exists $L \in \Gs$ such $\veps \in L$ and a \bpol{\Gs}-cover \Kb of $L$ which is separating for $R$. 

\begin{restatable}{proposition}{bgcomp}\label{prop:bgcomp}
    Let $R \subseteq Q^4$. If $R$ is good, then $\tautg(R)$ is good as well.
\end{restatable}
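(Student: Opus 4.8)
The plan is to unpack the hypothesis into a cover, refine it into a new cover of a group language, and then verify the two separation conditions of~\eqref{eq:gauto} by exhibiting runs in the auxiliary automaton $\Bs_R$. Since $R$ is good, fix $L \in \Gs$ with $\veps \in L$ together with a $\bpol{\Gs}$-cover $\Kb$ of $L$ that is separating for $R$. I must produce $L' \in \Gs$ with $\veps \in L'$ and a $\bpol{\Gs}$-cover $\Kb'$ of $L'$ that is separating for $\tautg(R)$, and I will take $L' = L$. To build $\Kb'$, I would first apply Proposition~\ref{prop:pgcov} to the group language $L$ with $H = L$, obtaining a cover of $L$ whose members are languages $\uclos_L w = L a_1 L \cdots a_n L$ with $w = a_1 \cdots a_n \in L$; since $L \in \Gs \subseteq \pol{\Gs}$ and $\Kb$ is a $\bpol{\Gs}$-cover of $L$, Proposition~\ref{prop:bconcat} then refines each $\uclos_L w$ into a $\bpol{\Gs}$-cover of it whose members are included in marked products $V_0 a_1 V_1 \cdots a_n V_n$ with all $V_i \in \Kb$. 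Intersecting each such member with $L$ (legitimate since $\bpol{\Gs}$ is a \vari by Theorem~\ref{thm:bpolvar}, hence closed under intersection) and collecting everything yields a $\bpol{\Gs}$-cover $\Kb'$ of $L$ in which every $K' \in \Kb'$ satisfies $K' \subseteq L$ and $K' \subseteq V_0 a_1 V_1 \cdots a_n V_n$ for some $V_0,\dots,V_n \in \Kb$ and some word $a_1 \cdots a_n \in L$.

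The core of the argument is then to check that $\Kb'$ is separating for $\tautg(R)$. Fix $(q,r,s,t) \in Q^4$ and $K' \in \Kb'$ such that $x \in K' \cap \alauto{q}{r}$ and $y \in K' \cap \alauto{s}{t}$, and write $K' \subseteq V_0 a_1 V_1 \cdots a_n V_n$ as above. Decomposing $x$ and $y$ along $a_1 \cdots a_n$ and reading off the underlying runs in $\As$ produces states $p_i, p_i', o_i, o_i'$ with $p_0 = q$, $p_n' = r$, $o_0 = s$, $o_n' = t$, with the factors $x_i \in V_i$ labelling $p_i \to p_i'$ and $y_i \in V_i$ labelling $o_i \to o_i'$, and with $a_{i+1}$-transitions joining consecutive factors. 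Since $V_i \in \Kb$ meets both $\alauto{p_i}{p_i'}$ and $\alauto{o_i}{o_i'}$ and $\Kb$ is separating for $R$, each $(p_i, p_i', o_i, o_i')$ lies in $R$, so $\Bs_R$ contains the $\veps$-transitions $((c, p_i, o_i), \veps, (c, p_i', o_i')) \in \gamma_R$ for every $c \in Q$. Now, for an arbitrary $G \in \Gs$ with $\veps \in G$, I would build a word $w_G \in G$ labelling a run in $\Bs_R$ from $(s,q,s)$ to $(t,r,t)$: the run alternates synchronized letter-blocks — obtained by feeding suitable factors of $x$ and of $y$, together with the linking letters $a_i$, to the three tracks — with the $\veps$-jumps just produced, arranged so that track $1$ follows the run of $y$ from $s$ to $t$, track $3$ follows the run of $y$ with jumps, and track $2$ follows the run of $x$ from $q$ to $r$ with jumps. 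To land inside $G$, I would fix a morphism onto a finite group recognizing both $G$ and $L$ and repeat the blocks enough times for the total image to be the identity (possible because $\veps \in G \cap L$, so the identity class of that group is contained in $G$). This witnesses the first line of~\eqref{eq:gauto}; the second line follows by the symmetric construction. Hence $(q,r,s,t) \in \tautg(R)$, so $\Kb'$ is separating for $\tautg(R)$ and $\tautg(R)$ is good.

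The hard part will be the construction of $w_G$: the $\Bs_R$-run must consume a single word while simultaneously simulating the $\As$-run of $x$ (from $q$ to $r$) on one track and that of $y$ (from $s$ to $t$) on the other two, even though $x$ and $y$ are a priori unrelated words. Reconciling them forces a precise interleaving of the $R$-jumps with short synchronized blocks, and this is exactly the place where the assumption that $\Gs$ — hence $L$ — consists of group languages is indispensable: it is used both to absorb the constructed word into the control language $G$ via a pumping argument and to make available the simultaneous runs needed inside each block. A secondary, purely technical point, which I would settle already in the construction of $\Kb'$ rather than here, is to make the factorization of $x$ (and of $y$) witnessing membership in $L a_1 L \cdots a_n L$ agree with the one witnessing membership in $V_0 a_1 V_1 \cdots a_n V_n$; this is arranged by passing to a common refinement of the two covers involved.
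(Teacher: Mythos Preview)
Your plan has a genuine gap in the core step: the construction of the word $w_G$ witnessing that $\{\veps\}$ is not \Gs-separable from $\lauto{\Bs_R}{(s,q,s)}{(t,r,t)}$ for an \emph{arbitrary} $G \in \Gs$ with $\veps \in G$.

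First, the run you describe is not well-typed. In $\Bs_R$, every \emph{labeled} transition moves all three tracks synchronously on the \emph{same} letter, while an $\veps$-transition fixes track~1 and only moves tracks~2 and~3 via a quadruple of $R$. Hence the label of any run from $(s,q,s)$ to $(t,r,t)$ is a single word that track~1 reads in full along an $\As$-run from $s$ to $t$; there is no way to have ``track~1 follow the run of $y$'' while ``track~2 follows the run of $x$''. The only words you have at hand that admit an $\As$-run from $s$ to $t$ are (essentially) $y$ itself, and its letters give you no control over where tracks~2 and~3 land when reading the interstitial factors $y_i$.

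Second --- and this is the decisive obstruction --- even granting a run from $(s,q,s)$ to $(t,r,t)$ with some label $v$, you would still need a word in $G$ for \emph{every} $G \in \Gs$ containing $\veps$, where $G$ is chosen \emph{after} $\Kb'$. Your ``repeat the blocks'' idea would need the run to be a loop (so that one may iterate it and force the label into the identity class of a group recognizing $G$), but $(s,q,s) \neq (t,r,t)$ in general, so no pumping is available. Taking $L' = L$ offers no leverage here: knowing that $v$ (or $a_1\cdots a_n$, or $y$) lies in the \emph{old} $L$ says nothing about membership in an unrelated $G$.

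The paper sidesteps both issues by \emph{not} verifying~\eqref{eq:gauto} directly. Instead it first builds a \emph{new} language $L \in \Gs$ tailored to $\Bs_R$ (Lemma~\ref{lem:gbl}): $L$ is the intersection of one separator for each language recognized by $\Bs_R$ that \emph{is} \Gs-separable from $\{\veps\}$. Thus, exhibiting a single run whose label lies in this particular $L$ already certifies non-\Gs-separability. Then, rather than covering $L$ globally, the paper builds for each pair $(q,r)$ a language $H_{q,r}$ from a cover of $\alauto{q}{r} \cap L$ (Lemma~\ref{lem:grun}); the Higman-type words $w_P$ produced by Proposition~\ref{prop:pgcov} therefore lie in $\alauto{q}{r} \cap L$, so that track~1 can legitimately read $w_P$ from $q$ to $r$, and $w_P$ automatically lands in the new $L$. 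The final cover $\Kb$ is the partition of $L$ induced by the family $\{H_{q,r}\}$, and the symmetric condition in~\eqref{eq:gauto} is obtained by swapping the roles of $(q,r)$ and $(s,t)$ and using $H_{s,t}$ --- with a \emph{different} witness word --- rather than trying to reuse the same run.
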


We use Proposition~\ref{prop:bgcomp} to complete the proof. Let $S_0 = Q^4$ and $S_i = \tautg(S_{i-1})$ for $i \geq 1$. By Lemma~\ref{lem:ginc}, we have $S_0 \supseteq S_1 \subseteq S_2 \supseteq \cdots$ and there is $n \in \nat$ such that $S_n$ is the greatest $(\bpoln,*)$-sound subset for \Gs and \As, \emph{i.e.}, such that $S_n = S$. Since $S_0$ is good ($\{A^*\}$ is a \bpol{\Gs}-cover of $A^* \in \Gs$ which is separating for $S_0 = Q^4$), Proposition~\ref{prop:bgcomp} implies that $S_i$ is good for all $i \in \nat$.  Thus, $S = S_n$ is good. We get  $L \in \Gs$ such that $\veps \in L$ and a \bpol{\Gs}-cover \Kb of $L$ which is separating for $S$. Lemma~\ref{lem:gautosep} then yields $\gnsbg \subseteq S$ as desired.

\begin{remark}
    The proof of Proposition~\ref{prop:bgcomp} actually provides a construction for building $L \in \Gs$ such that $\veps \in L$ and a \bpol{\Gs}-cover \Kb of $L$ which is separating for $S$ (yet, this involves building separators in \Gs, see Lemma~\ref{lem:gbl}). As we have now established that $S = \gnsbg$, one may then follow the proof of Proposition~\ref{prop:compgnsep} to build a \bpol{\Gs}-cover $\Hb$ of $A^*$ which is separating for \nsep{\bpol{\Gs}}{\As}. Finally, \Hb encodes separators for all pairs of languages recognized by \As which are \bpol{\Gs}-separable (this is the proof of Lemma~\ref{lem:autosep} presented in Appendix~\ref{app:separ}). Altogether, we get a way to build separators in \bpol{\Gs}, when they exist.
\end{remark}

\medskip

We now prove Proposition~\ref{prop:bgcomp}. Let $R \subseteq Q^4$ be good. We have to build $L \in \Gs$ with $\veps \in L$ and a \bpol{\Gs}-cover \Kb of $L$ which is separating for $\tautg(R)$ (which will prove that $\tautg(R)$ is good as well). We first build $L$ (this part is independent from our hypothesis on~$R$).


\begin{restatable}{lemma}{gbl}\label{lem:gbl}
    There exists $L \in \Gs$ such that $\veps \in L$ and for every $(q,r,s,t) \in Q^4$, if $\lauto{\Bs_R}{(q,s,q)}{(r,t,r)} \cap L \neq \emptyset$ and $\lauto{\Bs_R}{(s,q,s)}{(t,r,t)} \cap L \neq \emptyset$, then $(q,r,s,t) \in \tautg(R)$.
\end{restatable}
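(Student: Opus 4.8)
The plan is to build $L$ directly, as a finite intersection of languages from \Gs, choosing one witness for each quadruple that lies outside $\tautg(R)$. As announced just before the statement, this construction does not use the hypothesis that $R$ is good; it relies only on the fact that \Gs is a \vari and that $Q^4$ is finite.

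Concretely, I would fix a quadruple $(q,r,s,t) \in Q^4$ with $(q,r,s,t) \notin \tautg(R)$ and unwind the definition of $\tautg(R)$ given by~\eqref{eq:gauto}. Since $(q,r,s,t)$ does not belong to $\tautg(R)$, at least one of the two clauses in~\eqref{eq:gauto} must fail, so $\{\veps\}$ is \Gs-separable from $\lauto{\Bs_R}{(s,q,s)}{(t,r,t)}$ or from $\lauto{\Bs_R}{(q,s,q)}{(r,t,r)}$. In either case I pick a language $K_{(q,r,s,t)} \in \Gs$ with $\veps \in K_{(q,r,s,t)}$ such that $K_{(q,r,s,t)}$ is disjoint from the corresponding one of those two languages. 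I then set $L$ to be the intersection of all the $K_{(q,r,s,t)}$ taken over $(q,r,s,t) \in Q^4 \setminus \tautg(R)$, with the convention $L = A^*$ when this set is empty. Since $Q^4$ is finite this is a finite intersection, and since \Gs is a \vari it is closed under finite intersection and contains $A^*$, so $L \in \Gs$; moreover $\veps$ lies in every $K_{(q,r,s,t)}$, hence $\veps \in L$.

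It remains to verify the stated implication, which I would prove by contraposition: if $(q,r,s,t) \notin \tautg(R)$, then $L \subseteq K_{(q,r,s,t)}$, and by construction $K_{(q,r,s,t)}$ — hence also $L$ — is disjoint from one of $\lauto{\Bs_R}{(s,q,s)}{(t,r,t)}$ and $\lauto{\Bs_R}{(q,s,q)}{(r,t,r)}$, so $L$ cannot meet both. This is exactly the contrapositive of what we need. The argument is essentially routine; the one point that needs care is the direction of the separability statement, namely that $(q,r,s,t) \notin \tautg(R)$ yields \Gs-\emph{separability} of $\{\veps\}$ from (at least) one of the two automaton languages, rather than inseparability. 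Once this is correctly extracted from~\eqref{eq:gauto}, the finiteness of $Q^4$ and the closure of \Gs under intersection finish the proof, and no property of $R$ is used.
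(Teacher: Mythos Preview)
Your proof is correct and follows essentially the same approach as the paper: build $L$ as a finite intersection of \Gs-separators witnessing the failure of~\eqref{eq:gauto}. The only cosmetic difference is that the paper indexes the intersection over all languages recognized by $\Bs_R$ from which $\{\veps\}$ is \Gs-separable (there are finitely many since $\Bs_R$ has finitely many states), whereas you index over the quadruples in $Q^4 \setminus \tautg(R)$; both indexings are finite and lead to the same conclusion.
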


\begin{proof}
    Let \Hb be the \emph{finite} set of all languages recognized by $\Bs_R$ such that $\{\veps\}$ is \Gs-separable from $H$. For every $H \in \Hb$, there exists $L_H \in \Gs$ such that $\veps \in L_H$ and $L_H \cap H = \emptyset$. We define $L = \bigcap_{H \in \Hb} L_H \in \Gs$. It is clear that $\veps \in L$. Moreover, given $(q,r,s,t) \in Q^4$, if $\lauto{\Bs_R}{(q,s,q)}{(r,t,r)} \cap L \neq \emptyset$ and $\lauto{\Bs_R}{(s,q,s)}{(t,r,t)} \cap L \neq \emptyset$, it follows from the definition of $L$ that $\{\veps\}$ is not \Gs-separable from both \lauto{\Bs_R}{(q,s,q)}{(r,t,r)} and \lauto{\Bs_R}{(s,q,s)}{(t,r,t)}. It follows from~\eqref{eq:gauto} in the definition of \tautg that $(q,r,s,t) \in \tautg(R)$.
\end{proof}

We fix $L \in \Gs$ as described in Lemma~\ref{lem:gbl} for the remainder of the proof. We now build the \bpol{\Gs}-cover \Kb of $L$ using the hypothesis that $R$ is good and Proposition~\ref{prop:pgcov}.

\begin{restatable}{lemma}{grun}\label{lem:grun}
    For all $(q,r) \in Q^2$, there is $H_{q,r} \in \bpol{\Gs}$ such that $\alauto{q}{r}\cap L\subseteq H_{q,r}$ and for all pairs $(s,t) \in Q^2$, if $\alauto{s}{t} \cap H_{q,r} \neq \emptyset$ then $\lauto{\Bs_R}{(q,s,q)}{(r,t,r)} \cap L \neq \emptyset$.
\end{restatable}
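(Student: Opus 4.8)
The plan is to build $H_{q,r}$ as a finite union of blocks of one $\bpol{\Gs}$-cover of $A^*$, obtained by feeding the group-language structure of a suitable shrinking of $L$ into Proposition~\ref{prop:pgcov} and refining with Proposition~\ref{prop:bconcat} and the goodness of $R$; the implication is then extracted by exhibiting an explicit run in $\Bs_R$. Concretely, using the goodness of $R$ I would fix $L_0\in\Gs$ with $\veps\in L_0$ and a $\bpol{\Gs}$-cover $\Vb$ of $L_0$ that is separating for $R$. Put $L' = L\cap L_0\in\Gs$; then $\veps\in L'$, $L'\subseteq L$, and---replacing each $V\in\Vb$ by $V\cap L'$---we may assume that $\Vb$ is a $\bpol{\Gs}$-cover of $L'$ whose members are contained in $L'$ and which is still separating for $R$ (all immediate, using that $\Gs$ is a variety). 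Applying Proposition~\ref{prop:pgcov} with $H = A^*$ and the group language $L'$ yields a cover $\Pb$ of $A^*$ whose members are the languages $\uclos_{L'}w_P = L'a_1L'\cdots a_nL'$, one for each word $w_P = a_1\cdots a_n$. For each $P\in\Pb$, since $L'\in\Gs\subseteq\pol{\Gs}$ and $\Vb$ is a $\bpol{\Gs}$-cover of $L'$, Proposition~\ref{prop:bconcat} supplies a $\bpol{\Gs}$-cover $\Kb_P$ of $P$ in which every $K$ satisfies $K\subseteq V_0a_1V_1\cdots a_nV_n$ for some $V_0,\dots,V_n\in\Vb$ (so $V_i\subseteq L'$). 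Let $\Kb = \bigcup_{P\in\Pb}\Kb_P$, a $\bpol{\Gs}$-cover of $A^*$, and set
\[
  H_{q,r} \;=\; \bigcup\,\bigl\{\,K\in\Kb \;\big|\; K\cap\alauto{q}{r}\cap L \neq \emptyset\,\bigr\}.
\]
This is a finite union of members of $\bpol{\Gs}$, hence lies in $\bpol{\Gs}$, and $\alauto{q}{r}\cap L\subseteq H_{q,r}$ since $\Kb$ covers $A^*$.

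For the implication, suppose $\alauto{s}{t}\cap H_{q,r}\neq\emptyset$, witnessed by a word $y$. Then some $K\in\Kb$ contains $y$ and meets $\alauto{q}{r}\cap L$, say at $x$; write $K\in\Kb_P$ with $w_P = a_1\cdots a_n$, so $K\subseteq V_0a_1V_1\cdots a_nV_n$ for suitable $V_i\in\Vb$ with $V_i\subseteq L'$. Factor $x = x_0a_1x_1\cdots a_nx_n$ and $y = y_0a_1y_1\cdots a_ny_n$ with $x_i,y_i\in V_i$; all these factors therefore lie in $L'\subseteq L$. Reading the two factorizations off runs in \As gives states $c_i,d_i$ with $c_0 = q$, $d_n = r$, $x_i\in\alauto{c_i}{d_i}$, $(d_{i-1},a_i,c_i)\in\delta$, and states $e_i,f_i$ with $e_0 = s$, $f_n = t$, $y_i\in\alauto{e_i}{f_i}$, $(f_{i-1},a_i,e_i)\in\delta$. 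Since $x_i,y_i\in V_i\in\Vb$ and $\Vb$ is separating for $R$, we obtain $(c_i,d_i,e_i,f_i)\in R$ for each $i$, and, because the hypothesis ``$V_i$ meets both $\alauto{c_i}{d_i}$ and $\alauto{e_i}{f_i}$'' is symmetric in the two pairs of states, also $(e_i,f_i,c_i,d_i)\in R$. It then remains to manufacture a word $z\in L$ admitting a run in $\Bs_R$ from $(q,s,q)$ to $(r,t,r)$: one takes $z = u_0a_1u_1\cdots a_nu_n$ where each $u_i$ is a word of $L$ with the same image as $x_i$ under a fixed morphism onto a finite group recognizing $L$ (so $z\in L$ because $x\in L$), chosen moreover so that in $\Bs_R$ one coordinate traces an \As-run $q\to r$ along the $c_i,d_i$, a second traces $s\to t$ along the $e_i,f_i$, the third traces $q\to r$ again, the $u_i$-portions being crossed by means of the $\veps$-transitions of $\Bs_R$ licensed by $(c_i,d_i,e_i,f_i)\in R$ and $(e_i,f_i,c_i,d_i)\in R$. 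This yields the required element of $\lauto{\Bs_R}{(q,s,q)}{(r,t,r)}\cap L$.

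The main obstacle is this last assembly. An $\veps$-transition of $\Bs_R$ moves the second and third coordinates together while freezing the first, so the three coordinates of the run cannot be steered independently; one must interleave the runs of $x$ and $y$ and decide precisely where to insert $\veps$-transitions, while simultaneously choosing each block $u_i$ so that it is usable by all three coordinates at once (typically by taking $u_i$ to be $x_i$ padded with suitable \As-loops) and still has the prescribed image in the recognizing group, so that $z$ remains in $L$. Reconciling these requirements is exactly where the group structure of $L$ enters, and amounts, for the automaton $\Bs_R$, to the analogue of the combinatorics behind Propositions~\ref{prop:pgcov} and~\ref{prop:bconcat}.
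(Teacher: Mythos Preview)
Your construction has a genuine gap at exactly the point you flag as ``the main obstacle.'' An $\veps$-transition in $\Bs_R$ freezes the first coordinate, so whatever word $z$ you exhibit, the first coordinate of $\Bs_R$ must read $z$ letter by letter along an $\As$-run from $q$ to $r$; hence $z$ must itself belong to $\alauto{q}{r}$. You also need $z\in L$. But every labeled transition of $\Bs_R$ moves \emph{all three} coordinates simultaneously, so while the first coordinate is reading a block $u_i$, the second and third coordinates must also be reading $u_i$ along $\As$-runs. There is no reason a single word $u_i$ should simultaneously realize the required moves in all three coordinates: you only know $x_i\in\alauto{c_i}{d_i}$ and $y_i\in\alauto{e_i}{f_i}$, not that some common word does both, and the quadruples $(c_i,d_i,e_i,f_i)\in R$ give you $\veps$-transitions, not labeled ones. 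Your padding-with-loops idea does not resolve this, and no amount of group-morphism bookkeeping produces such a $u_i$.

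The missing idea is simple but decisive: apply Proposition~\ref{prop:pgcov} not to $A^*$ but to $H=\alauto{q}{r}\cap L$ (using the group language $U$ coming from goodness of $R$). Then every $P\in\Pb$ has the form $\uclos_U w_P$ with $w_P\in\alauto{q}{r}\cap L$. The witness word is $w_P=a_1\cdots a_n$ itself: it already lies in $L$, and it already has an $\As$-run $q=p_0,\dots,p_n=r$ for the first coordinate. Now pick $x\in K\cap\alauto{s}{t}$ and $y\in K\cap\alauto{q}{r}$, factor them through $V_0a_1\cdots a_nV_n$, extract states $s_i,t_i$ (for $x$) and $q_i,r_i$ (for $y$), and note that $x_i,y_i\in V_i$ with $\Vb$ separating for $R$ gives $(s_i,t_i,q_i,r_i)\in R$, hence $\veps$-transitions $((p_i,s_i,q_i),\veps,(p_i,t_i,r_i))$. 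Interleaving these $\veps$-transitions with the $a_i$-transitions $((p_{i-1},t_{i-1},r_{i-1}),a_i,(p_i,s_i,q_i))$ yields a run in $\Bs_R$ labeled by $w_P$ from $(q,s,q)$ to $(r,t,r)$. Three words are in play---$w_P$ for the first coordinate, $x$ for the second, $y$ for the third---and this is precisely what your two-word setup cannot supply.
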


\begin{proof}
    Since $R$ is good, there are $U \in \Gs$ such that $\veps\in U$ and a \bpol{\Gs}-cover \Vb of $U$ which is separating for $R$. We use them to build $H_{q,r}$. Since $U$ is a group language and $\veps \in U$, Proposition~\ref{prop:pgcov} yields a cover \Pb of $\alauto{q}{r} \cap L$ such that every $P \in \Pb$ is of the form $P = \uclos_U w_P$ where $w_P \in \alauto{q}{r} \cap L$. For every $P \in \Pb$, we build a \bpol{\Gs}-cover $\Kb_P$ of $P$. Let $a_1,\dots,a_n \in A$ be the letters such that $w_P = a_1 \cdots a_n$. We have $P = Ua_1U \cdots a_nU$. Since $U \in \Gs \subseteq \pol{\Gs}$ and \Vb is a \bpol{\Gs}-cover of $U$, Proposition~\ref{prop:bconcat} yields a \bpol{\Gs}-cover $\Kb_P$ of $P$ such that for every $K \in \Kb_P$, there exist $V_0,\dots,V_n \in \Vb$ satisfying $K \subseteq V_0a_1V_1 \cdots a_nV_n$. We define $H_{q,r}$ as the union of all languages $K$ such that $K \in \Kb_P$ for some $P \in \Pb$ and $\alauto{q}{r} \cap K \neq \emptyset$. Clearly, $H_{q,r} \in \bpol{\Gs}$. Moreover, since \Pb is a cover of $\alauto{q}{r} \cap L$, and $\Kb_P$ is a cover of $P$ for each $P \in \Pb$, it is clear that $\alauto{q}{r}\cap L\subseteq H_{q,r}$. We now fix $(s,t) \in Q^2$ such that $\alauto{s}{t} \cap H_{q,r} \neq \emptyset$ and show that $\lauto{\Bs_R}{(q,s,q)}{(r,t,r)} \cap L \neq \emptyset$. By definition of $H_{q,r}$, we get $P \in \Pb$ and $K \in \Kb_P$ such that $\alauto{q}{r} \cap K\neq \emptyset$ and $\alauto{s}{t}\cap K \neq \emptyset$. By definition, $P = \uclos_U w_P$ with $w_P \in \alauto{q}{r} \cap L$. Hence, it suffices to prove that $w_P \in \lauto{\Bs_R}{(q,s,q)}{(r,t,r)}$.

    We fix $x \in \alauto{s}{t}\cap K$ and $y \in \alauto{q}{r} \cap K$. 
	Recall that $w_P = a_1 \cdots a_n$ (if $n = 0$, then $w_P = \veps$). Since $w_P \in \alauto{q}{r}$, we may consider the corresponding run in \As: we get $p_0,\dots,p_n \in Q$ such that $p_0 = q$, $p_n = r$ and $(p_{i-1},a_i,p_{i}) \in \delta$ for $1 \leq i \leq n$. Moreover, since $K \in \Kb_P$ and $w_P = a_1 \cdots a_n$, we have $K \subseteq  V_0a_1V_1 \cdots a_nV_n$ for $V_0, \dots,V_n \in \Vb$ (if $n = 0$, then $K \subseteq V_0$). Since $x,y \in K$, we get $x_i,y_i \in V_i$ for $0 \leq i \leq n$ such that $x = x_0a_1x_1 \cdots a_n x_n$ and $y = y_0a_1y_1 \cdots a_n y_n$. Since $x \in \alauto{s}{t}$, we get $s_0,t_0,\dots,s_n,t_n \in Q$ such that $s_0 = s$, $t_n = t$, $x_i \in \alauto{s_i}{t_i}$ for $0 \leq i \leq n$, and $(t_{i-1},a_i,s_i) \in \delta$ for $1 \leq i \leq n$. Symmetrically, since $y \in \alauto{q}{r}$, we get $q_0,r_0,\dots,q_n,r_n \in Q$ such that $q_0 = q$, $r_n = r$, $y_i \in \alauto{q_i}{r_i}$ for $0 \leq i \leq n$, and $(r_{i-1},a_i,q_i) \in \delta$ for $1 \leq i \leq n$. By definition of $\gamma_R$, it is immediate that $((p_{i-1},t_{i-1},r_{i-1}),a_i,(p_{i},s_i,q_i)) \in \gamma_R$ for $1 \leq i \leq n$. Since $V_i \in \Vb$ and $\Vb$ is separating for $R$, the fact that $x_i,y_i \in V_i$ implies that $(s_i,t_i,q_i,r_i) \in R$ for $0 \leq i \leq n$. Hence, $((p_{i},s_{i},q_{i}),\veps,(p_{i},t_i,r_i)) \in \gamma_R$ by definition. Thus, we get a run labeled by $w_P$ from $(p_0,s_0,q_0)$ to $(p_n,t_n,r_n)$ in $\Bs_R$, \emph{i.e.}, $w_P \in \lauto{\Bs_R}{(q,s,q)}{(r,t,r)}$ as desired.
\end{proof}

We may now build \Kb. Let $\Hb = \big\{H_{q,r} \mid (q,r) \in Q^2\big\}$. Consider the following equivalence $\sim$ defined on $L$: given $u,v\in L$, we let $u \sim v$ if and only if $u \in H_{q,r} \Leftrightarrow v \in H_{q,r}$ for every $(q,r) \in Q^2$. We let \Kb as the partition of $L$ into $\sim$-classes. Clearly, each $K \in \Kb$ is a Boolean combination involving the languages in \Hb (which belong to \bpol{\Gs}) and $L \in \Gs$. Hence, $\Kb$ is a \bpol{\Gs}-cover of $L$. We now prove that it is separating for $\tautg(R)$. Let $q,r,s,t \in Q$ and $K \in \Kb$ such that there are $u \in \alauto{q}{r} \cap K$ and $v \in \alauto{s}{t} \cap K$. We show that $(q,r,s,t) \in \tautg(R)$. By definition of \Kb, we have $u,v \in L$ and $u \sim v$. In particular, $u \in \alauto{q}{r} \cap L$ which yields $u \in H_{q,r}$ by definition in Lemma~\ref{lem:grun}. Together with $u \sim v$, this yields $v \in H_{q,r}$. Hence, $\alauto{s}{t} \cap H_{q,r} \neq \emptyset$ and Lemma~\ref{lem:grun} yields $\lauto{\Bs_R}{(q,s,q)}{(r,t,r)} \cap L \neq \emptyset$. One may now use a symmetrical argument to obtain $\lauto{\Bs_R}{(s,q,s)}{(t,r,t)} \cap L \neq \emptyset$. By definition of $L$ in Lemma~\ref{lem:gbl}, this yields $(q,r,s,t) \in \tautg(R)$, completing the proof.

\section{Conclusion}
\label{sec:conc}
In this paper, we proved that for every group \vari \Gs, there exist generic polynomial time Turing reductions from \bpol{\Gs}- and \bpol{\Gs^+}-separation to \Gs-separation, for input languages represented by \nfas. While a generic reduction from \bpol{\Gs}-separation to \Gs-separation was already developed in~\cite{pzconcagroup}, it relied on an involved machinery, which required to dig into a more general problem than \bpol{\Gs}-separation, namely ``\bpol{\Gs}-covering''. In particular, the techniques from~\cite{pzconcagroup} do not provide any way to~build separators in \bpol{\Gs} (when they exist). They also yield poor upper complexity bounds. At last, the results of~\cite{pzconcagroup} do not apply to \bpol{\Gs^+}. In this case, even the existence of a generic reduction is new. It would be interesting to unify ideas of the present paper with the techniques of~\cite{pzconcagroup}, to  lift them to the setting of \bpol{\Gs}- and \bpol{\Gs^+}-covering. We leave this for further~work.

Our results imply that separation is decidable in \emph{polynomial time} for a number of standard classes: the piecewise testable languages (\emph{i.e.}, \bpol{\stzer} or equivalently \bsw{1}), the languages of dot-depth one (\emph{i.e.}, \bpol{\stzer^+} or equivalently \bsws{1}), the classes \bpol{\md} and \bpol{\md^+} (\emph{i.e.}, $\bsc{1}(<,MOD)$ and $\bsc{1}(<,+1,MOD)$) and the classes \bpol{\grp} and \bpol{\grp^+}. While this was well-known for the piecewise testable languages~\cite{pvzptsep,cmmptsep}, all other results are new---not only regarding the complexity, but even regarding the decidability. Actually, it is shown in~\cite{Masopust18} that \bpol{\stzer}-separation is \ptime-complete. It turns out that the reduction of~\cite{Masopust18}, from the circuit value problem, adapts to prove the \ptime-completeness of separation for all of the above classes (we leave the details for further work). Finally, our results also apply to the classes \bpol{\abg} and \bpol{\abg^+}  (\emph{i.e.}, $\bsc{1}(<,AMOD)$ and $\bsc{1}(<,+1,AMOD)$): we obtain that separation is in \conptime. While this is currently unknown, we conjecture that this is a \emph{tight} upper bound. Indeed, it is known that \abg-separation is \conptime-complete~\cite{pzgroup}.



\bibliography{main}

\appendix

\section{Appendix to Section~\ref{sec:prelims}}
\label{app:prelims}
We start with the proof of Proposition~\ref{prop:bconcat}. Let us first recall the statement.

\bconcat*

\begin{proof}
	We first handle the case when $n = 1$ (\emph{i.e.}, there are two languages $L_0,L_1 \in \pol{\Cs}$) and then lift the result to the general case using a simple induction. For the sake of avoiding clutter, we write $\Ds = \pol{\Cs}$ in the proof.

	\smallskip
	\noindent
	{\bf Case $n = 1$.} Consider two languages $L_0,L_1\in \Ds$ and $a \in A$. Moreover, let $\Hb_0$ and $\Hb_1$ be \bool{\Ds}-covers of $L_0$ and $L_1$. We need to build an appropriate \bool{\Ds}-cover \Kb of $L_0aL_1$. By hypothesis, every language in $\Hb_0 \cup \Hb_1$ is a Boolean combination of languages in \Ds. Moreover, $L_0,L_1\in \Ds$. Hence, there exists a \emph{finite} set of languages $\Ub \subseteq \Ds$ containing $L_0,L_1$ and  such that every language $H \in \Hb_0 \cup \Hb_1$ is a Boolean combination of languages in \Ub. We define \Vb as the set containing all finite intersections of languages in \Ub. Clearly, \Vb remains finite and since $\Ds = \pol{\Cs}$ is a lattice by Theorem~\ref{thm:bpolvar}, we have $\Ub \subseteq \Vb \subseteq \Ds$. We let $\Pb = \{V_0aV_1 \mid V_0,V_1 \in \Vb\}$. We have $L_0aL_1 \in \Pb$ by definition and $\Pb \subseteq \Ds$ since $\Ds = \pol{\Cs}$ is closed under marked product . We now use \Pb to define an equivalence on~$A^*$. Given $w,w' \in A^*$, we write $w\sim w'$ if and only if $w\in P\Leftrightarrow w' \in P$ for every $P \in \Pb$. Since \Pb is finite, we know that $\sim$ has finite index. Moreover, by definition, every $\sim$-class is a Boolean combination of languages in $\Pb \subseteq\Ds$,  which means that it belongs to \bool{\Ds}.
	
	Since $L_0aL_1 \in \Pb$, the definition implies that $L_0aL_1$ is a finite union of $\sim$-classes. We define \Kb as the set containing all $\sim$-classes in this union. This is a \bool{\Ds}-cover of $L_0aL_1$ by definition. It remains to prove that for every $K \in \Kb$, there exist $H_0 \in \Hb_0$ and $H_1 \in \Hb_1$ such that $K \subseteq H_0aH_1$. We fix $K \in \Kb$ for the proof and use the following lemma.
	
	\begin{lemma} \label{lem:hintro:boolc}
		Let $G \subseteq K$ be a \emph{finite} language. There exists $H_0 \in \Hb_0$ and $H_1 \in \Hb_1$ such that $G \subseteq H_0aH_1$.
	\end{lemma}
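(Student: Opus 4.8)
The statement is proved by induction on the cardinality of the finite language $G$. Before starting, I record an observation used throughout: since $L_0,L_1 \in \Ub \subseteq \Vb$, the language $L_0aL_1$ belongs to $\Pb$, hence is a union of $\sim$-classes; as $K$ is one of the $\sim$-classes whose union is $L_0aL_1$, we get $K \subseteq L_0aL_1$. It is also harmless to assume $H \subseteq L_0$ for every $H \in \Hb_0$ and $H \subseteq L_1$ for every $H \in \Hb_1$: replacing $\Hb_0$ by $\{H \cap L_0 \mid H \in \Hb_0\}$ (and symmetrically for $\Hb_1$) still yields a \bool{\Ds}-cover, because $L_0 \in \Ds$ and \bool{\Ds} is closed under intersection, and any pair witnessing the conclusion for the shrunk covers also witnesses it for the original ones.

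\textbf{Base case.} Suppose $|G| \le 1$. If $G = \emptyset$, any pair $(H_0,H_1) \in \Hb_0 \times \Hb_1$ works (these covers are nonempty). If $G = \{w\}$, then $w \in K \subseteq L_0aL_1$, so $w = uav$ with $u \in L_0$ and $v \in L_1$; since $\Hb_0$ is a cover of $L_0$, pick $H_0 \in \Hb_0$ with $u \in H_0$, and similarly $H_1 \in \Hb_1$ with $v \in H_1$; then $w \in H_0aH_1$.

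\textbf{Inductive step and the main obstacle.} Assume $|G| \ge 2$. Fix $w \in G$ and some $w' \in G \setminus \{w\}$. By the induction hypothesis applied to $G \setminus \{w\}$, there is a pair $(H_0,H_1) \in \Hb_0 \times \Hb_1$ with $G \setminus \{w\} \subseteq H_0aH_1$; in particular $w' = u'av'$ with $u' \in H_0$ and $v' \in H_1$. The aim is to produce a decomposition of $w$ lying in $H_0aH_1$, so that the \emph{same} pair works for all of $G$. The only link between $w$ and $w'$ is that $w \sim w'$, i.e. they agree on membership in every language of $\Pb = \{V_0aV_1 \mid V_0,V_1 \in \Vb\}$. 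To exploit this, put $V_0 = \bigcap\{U \in \Ub \mid u' \in U\} \in \Vb$ and $V_1 = \bigcap\{U \in \Ub \mid v' \in U\} \in \Vb$; then $w' \in V_0aV_1 \in \Pb$, hence $w \in V_0aV_1$, which yields a decomposition $w = uav$ with $u \in V_0$ and $v \in V_1$. Since $V_0$ (resp.\ $V_1$) is contained in every language of $\Ub$ containing $u'$ (resp.\ $v'$), the word $u$ belongs to at least the same languages of $\Ub$ as $u'$, and likewise for $v$ and $v'$.

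The delicate point — the step I expect to be the main obstacle — is to upgrade ``$u$ lies in every language of $\Ub$ containing $u'$'' to ``$u \in H_0$''. Here $H_0$ is only a \emph{Boolean} combination of languages of $\Ub$: the word $u'$ sits in one atom of that finite Boolean algebra, and this atom is included in $H_0$, but the transferred word $u$ only belongs to the \emph{positive hull} of that atom, so its own atom may be strictly larger and need not be contained in $H_0$. This genuine gap reflects the fact that $\sim$ is deliberately coarse: it tests only the positive marked products $V_0aV_1$ with $V_i \in \Vb \subseteq \Ds$, because the cover \Kb must stay inside \bool{\Ds} and complemented factors in $\Pb$ are not allowed. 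I would close the gap by strengthening the setup before the construction, enlarging $\Ub$ (equivalently, refining $\Vb$ and $\Pb$) enough that, for the finitely many pieces $H$ of $\Hb_0 \cup \Hb_1$ and the finitely many relevant positive hulls, membership in the pieces is already determined by membership in the enlarged $\Vb$; alternatively, one may run the induction as a minimal-counterexample argument, transferring a \emph{chosen} decomposition in both directions between two $\sim$-equivalent words of $G$ and deriving a contradiction. Once $u \in H_0$ and $v \in H_1$ are secured, $w \in H_0aH_1$ follows, which completes the induction and hence Lemma~\ref{lem:hintro:boolc}.
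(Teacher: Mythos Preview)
You have correctly isolated the crux: the transfer via $w \sim w'$ and the intersection $V_0 = \bigcap\{U \in \Ub : u' \in U\}$ only yields the one-way relation ``$u$ lies in every $V \in \Vb$ that contains $u'$'', and this is too weak to force $u$ into the \emph{Boolean} combination $H_0$. Your proposed fixes, however, do not close the gap. Enlarging $\Ub$ cannot help: the whole construction requires $\Ub,\Vb \subseteq \Ds = \pol{\Cs}$ so that $\Pb \subseteq \Ds$ and the $\sim$-classes stay in $\bool{\Ds}$; since $\Ds$ is not closed under complement, no enlargement within $\Ds$ makes the one-directional transfer two-directional. The ``minimal counterexample / both directions'' idea is in the right spirit but is not carried out, and a single round trip $w' \to w \to w'$ may land on a different decomposition of $w'$, so it does not yield a contradiction by itself.

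The missing mechanism, supplied in the paper, is a pigeonhole argument that upgrades the one-way preorder to a two-way equivalence. Write $G = \{w_1,\dots,w_n\}$ and set $w \preceq w'$ when $w \in V \Rightarrow w' \in V$ for all $V \in \Vb$. Your transfer step gives: from any decomposition $w_n = uav$ one obtains decompositions $w_i = u_i a v_i$ with $u \preceq u_1 \preceq \cdots \preceq u_n$ and $v \preceq v_1 \preceq \cdots \preceq v_n$. Now iterate this, each time starting from the decomposition of $w_n$ just produced. Since $w_1$ has only finitely many decompositions, pigeonhole eventually closes the chain: one gets $u_1 \preceq \cdots \preceq u_n \preceq u_1$ and $v_1 \preceq \cdots \preceq v_n \preceq v_1$. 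Now all $u_i$ are $\preceq$-equivalent, hence agree on membership in every $V \in \Vb$, hence in every Boolean combination of languages from $\Ub$ --- in particular in every $H \in \Hb_0$. Picking $H_0 \ni u_1$ and $H_1 \ni v_1$ then gives $u_i \in H_0$ and $v_i \in H_1$ for all $i$, so $G \subseteq H_0 a H_1$. The induction on $|G|$ is not needed.
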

	
	We first apply Lemma~\ref{lem:hintro:boolc} to complete the main argument. For each $n \in \nat$, we let $G_n \subseteq K$ as the (finite) language containing all words in $K$ of length at most $n$. Clearly, we have,
	\[
	K = \bigcup_{n \in \nat} G_n \quad \text{and} \quad G_n \subseteq G_{n+1} \text{ for all $n \in \nat$}.
	\]
	For every $n \in \nat$, Lemma~\ref{lem:hintro:boolc} yields $H_{0,n} \in \Hb_0$ and $H_{1,n} \in \Hb_1$ such that $G_n \subseteq H_{0,n}aH_{1,n}$. Since $\Hb_0$ and $\Hb_1$ are finite sets, there exist $H_0\in\Hb_0$ and $H_1\in\Hb_1$ such that $H_{0,n}=H_0$ and $H_{1,n}=H_1$ for infinitely many $n$. Since $G_n \subseteq G_{n+1}$ for every $n \in \nat$, it then follows that $G_n \subseteq H_0aH_1$ for every $n \in \nat$. Finally, since $K = \bigcup_{n \in \nat} G_n$, this implies $K \subseteq H_0aH_1$, completing the main proof.
	
	\smallskip
	
	We turn to the proof of Lemma~\ref{lem:hintro:boolc}. We fix a finite language $G \subseteq K$ for the proof. We exhibit $H_0 \in \Hb_0$ and $H_1 \in \Hb_1$ such that $G \subseteq H_0aH_1$. Let $w_1,\dots,w_n \in A^*$ be the words contained in $G$, \emph{i.e.} $G=\{w_1,\dots,w_n\}$. By definition, we know that $K$ is a $\sim$-class included in $L_0aL_1$. Consequently, we have $w_1,\dots,w_n \in L_0aL_1$ and $w_1 \sim \cdots \sim w_n$. We use the latter property to prove an intermediary fact. Given two words $w,w' \in A^*$, we write $w \preceq w'$ if and only if $w \in V \Rightarrow w' \in V$ for every $V \in \Vb$. Clearly, ``$\preceq$'' is a preorder.
	
	\begin{claim} \label{fct:hintro:bcf}
		For every $u,v \in A^*$ such that $w_n = uav$, there exist $u_1,\dots,u_n,v_1,\dots,v_n \in A^*$ such that $w_i = u_iav_i$ for every $i \leq n$, $u \preceq u_1 \preceq \cdots \preceq u_n$ and $v \preceq v_1 \preceq \cdots \preceq v_n$.
	\end{claim}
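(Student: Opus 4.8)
The plan is to isolate a single \emph{transport} fact about $\sim$ and $\preceq$, and then run an induction on $n$ — stated not just for $w_1,\dots,w_n$ but for an arbitrary family of pairwise $\sim$-equivalent words, so that the hypothesis can be reapplied to a shorter family.

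\emph{Transport fact.} If $x,y\in A^*$, $w=xay$, and $w'\in A^*$ satisfies $w'\sim w$, then $w'$ admits a decomposition $w'=x'ay'$ with $x\preceq x'$ and $y\preceq y'$. To prove it, use that $\Vb$ is closed under finite intersection to let $P_x\in\Vb$ be the least language of $\Vb$ containing $x$, and likewise $Q_y\in\Vb$ the least one containing $y$. Then $w=xay$ witnesses $w\in P_xaQ_y$; since $P_xaQ_y\in\Pb$ and $w\sim w'$, we also get $w'\in P_xaQ_y$, so there is a decomposition $w'=x'ay'$ with $x'\in P_x$ and $y'\in Q_y$. By minimality of $P_x$, every $V\in\Vb$ containing $x$ also contains $P_x$, hence $x'$; thus $x\preceq x'$, and symmetrically $y\preceq y'$. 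The key point is that \emph{one} split of $w'$ handles both sides at once.

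\emph{Induction on $n$.} The base case $n=1$ is trivial: take $u_1=u$ and $v_1=v$. For the inductive step, given $w_n=uav$: first apply the transport fact to $w_{n-1}\sim w_n$ to obtain a decomposition $w_{n-1}=u'av'$ with $u\preceq u'$ and $v\preceq v'$. Next apply the induction hypothesis to the $n-1$ words $w_1,\dots,w_{n-1}$ (still pairwise $\sim$-equivalent) with this chosen decomposition of the last one, $w_{n-1}=u'av'$; this yields decompositions $w_i=u_iav_i$ for $i\le n-1$ with $u'\preceq u_1\preceq\cdots\preceq u_{n-1}$ and $v'\preceq v_1\preceq\cdots\preceq v_{n-1}$. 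Finally, apply the transport fact once more, now to $w_n\sim w_{n-1}$ using the decomposition $w_{n-1}=u_{n-1}av_{n-1}$, to get $w_n=u_nav_n$ with $u_{n-1}\preceq u_n$ and $v_{n-1}\preceq v_n$. Since $\preceq$ is transitive and $u\preceq u'\preceq u_1$ (resp. $v\preceq v'\preceq v_1$), the required chains $u\preceq u_1\preceq\cdots\preceq u_n$ and $v\preceq v_1\preceq\cdots\preceq v_n$ follow.

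The subtlety — and the main thing to get right — is the shape of this induction. A single transport starting from $w_n=uav$ only produces decompositions of the other words whose types are \emph{at least} that of $u$, whereas the chain requires $w_n$ itself to carry the \emph{largest} type. The remedy is the ``out-and-back'' pattern: transport out to $w_{n-1}$, build the monotone chain on $w_1,\dots,w_{n-1}$ there via the induction hypothesis, then transport back to $w_n$ to upgrade its decomposition to the top of the chain. For this to be a legitimate induction the statement must be phrased for an arbitrary family of pairwise $\sim$-equivalent words; note that neither membership in $K$ nor in $L_0aL_1$ is used anywhere — only pairwise $\sim$-equivalence and the existence of the given split $w_n=uav$ (for the transport fact one also wants $A^*\in\Vb$, which is harmless to arrange in the construction of $\Vb$).
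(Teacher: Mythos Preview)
Your proof is correct, and the core ingredient --- the ``transport fact'' --- is exactly what the paper isolates as well (the paper phrases it with $L_u=\bigcap_{\{V\in\Vb\mid u\in V\}}V$ and $L_v$ analogously, observing $L_uaL_v\in\Pb$). So the two proofs share the same key step.

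Where they differ is the organisation, and the paper's is simpler. You set up an out-and-back induction: transport $w_n\to w_{n-1}$, recurse on $w_1,\dots,w_{n-1}$, then transport back $w_{n-1}\to w_n$. This forces you to generalise the statement to arbitrary $\sim$-equivalent families and to spend two transports per level of recursion. The paper instead just \emph{iterates} the transport fact along the cycle
\[
  w_n \;\longrightarrow\; w_1 \;\longrightarrow\; w_2 \;\longrightarrow\; \cdots \;\longrightarrow\; w_{n-1} \;\longrightarrow\; w_n,
\]
producing the chain $u\preceq u_1\preceq\cdots\preceq u_n$ (and likewise for the $v_i$) in one linear pass with $n$ applications of the transport fact and no induction at all. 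Your worry that ``$w_n$ itself must carry the largest type'' is resolved in the paper simply by making $w_n$ the \emph{last} target of the iteration rather than the first source; no out-and-back is needed. Your side remark about $A^*\in\Vb$ applies equally to the paper's $L_u$, and is harmless since $\Vb$ is taken closed under finite (including empty) intersections.
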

	
	\begin{proof}
		We prove the existence of $u_1,v_1 \in A^*$ such that $w_1\! =\! u_1av_1$, $u \preceq u_1$ and $v \preceq v_1$ using the fact that $w_n = uav$ and $w_n \sim w_1$, one may then iterate the argument to build $u_2,\dots,u_n \in A^*$ and $v_2,\dots,v_n \in A^*$. Consider the languages $L_u = \bigcap_{\{V \in \Vb \mid u \in V\}} V$ and $L_v = \bigcap_{\{V \in \Vb \mid v \in V\}} V$. Since \Vb is finite and closed under intersection by definition, we have $L_u,L_v \in  \Vb$. Hence, $L_uL_v \in \Pb$ by definition of \Pb. Moreover, it is clear that $uav \in L_uaL_v$. Therefore, since $uav=w_n \sim w_1$, the definition of $\sim$ implies that $w_1 \in L_uaL_v$. This yields $u_1 \in L_u$ and $v_1 \in L_v$ such that $w_1 = u_1av_1$. Finally, the definitions of $L_u$ and $L_v$ imply that $u \preceq u_1$ and $v \preceq v_1$, completing the proof.
	\end{proof}
	
	Since $w_n \in L_0aL_1$, it can be decomposed as $w_n = uav$ with $u \in L_0$ and $v \in L_1$. Since $w_1$ is a finite word, it admits finitely many decompositions $w_1 = u_1av_1$ with $u_1,v_1 \in A^*$. Therefore, a repeated application of the claim together with the pigeon-hole principle yield $u_1,\dots,u_n,v_1,\dots,v_n \in A^*$ such that $w_i = u_iav_i$ for every $i \leq n$ and,
	\[
	u \preceq u_1 \preceq \cdots \preceq u_n \preceq u_1 \qquad \text{and} \qquad v \preceq v_1 \preceq \cdots \preceq v_n \preceq v_1.
	\]
	Since $u \in L_0$, $v \in L_1$ and $L_0,L_1 \in \Vb$ by definition of \Vb, we get that $u_1 \in L_0$ and $v_1 \in L_1$ by definition of $\preceq$. Therefore, since $\Hb_0$ and $\Hb_1$ are covers of $L_0$ and $L_1$ respectively, there exist $H_0 \in \Hb_0$ and $H_1 \in \Hb_1$ such that $u_1 \in H_0$ and $v_1 \in H_1$. Moreover, we have $u_1 \preceq u_i \preceq u_1$ and $v_1 \preceq v_i \preceq v_1$ for every $i \leq n$. By definition of $\preceq$, this implies that for every language $V \in \Vb$, we have $u_1\in V \Leftrightarrow u_i \in V$ and $v_1\in V\Leftrightarrow v_i \in V$. Since the languages in $\Hb_0 \cup \Hb_1$ are Boolean combinations of those in \Vb, it follows that $u_1\in H \Leftrightarrow u_i \in H$ and $v_1\in H \Leftrightarrow v_i \in H$ for all $H \in \Hb_0 \cup \Hb_1$ and $i \leq n$.  Hence, since $u_1 \in H_0$, $H_0 \in \Hb_0$, $v_1 \in H_1$ and $H_1 \in \Hb_1$, we obtain $u_1,\dots,u_n \in H_0$ and $v_1,\dots,v_n \in H_1$. Altogether, it follows that $G = \{w_1,\dots,w_n\} = \{u_1av_1,\dots,u_nav_n\} \subseteq H_0aH_1$. This concludes the proof of Lemma~\ref{lem:hintro:boolc}.
	
	\medskip
	\noindent
	{\bf General case.} We now use induction on $n \in \nat$ to prove the general case in Proposition~\ref{prop:bconcat}. We fix $L_0,\dots,L_n \in \Ds$,  $a_1,\dots,a_n \in A$ and $\Hb_i$ a \bpol{\Cs}-cover of $L_i$ for all $i \leq n$. We have to construct an appropriate \bool{\Ds}-cover \Kb of $L_0a_1L_1 \cdots a_nL_n$.
	
	The case $n = 0$ is trivial: it suffices to define $\Kb = \Hb_0$. Assume now that $n > 1$. By induction hypothesis, there exists a \bool{\Cs}-cover $\Kb'$ of $L_1a_2L_2 \cdots a_nL_n$ such that for every $K' \in \Kb'$, we have $H_i \in \Hb_i$ for $2 \leq i \leq n$ such that $K' \subseteq H_1a_2H_2 \cdots a_nH_n$. Since $\Ds = \pol{\Cs}$ is closed under marked product, we have $L_1a_2L_2 \cdots a_nL_n \in \Ds$. Hence, since we have a \bool{\Ds}-cover $\Hb_0$ of $L_0 \in \Ds$ and a \bool{\Cs}-cover $\Kb'$ of $L_1a_2L_2 \cdots a_nL_n \in \Ds$, we may use the case $n = 1$ in Proposition~\ref{prop:bconcat} (which we proved above) to get a \bool{\Ds}-cover $\Kb$ of $L_0a_1L_1 \cdots a_nL_n$ such that for every $K \in \Kb$, there exist $H_0 \in \Hb_0$ and $K' \in \Kb'$ which satisfy $K \subseteq H_0a_1K'$. By definition of $\Kb'$, we know that there also exist  $H_i \in \Hb_i$ for $2 \leq i \leq n$ such that $K' \subseteq H_1a_2H_2 \cdots a_nH_n$. Altogether, it follows that $K \subseteq H_0a_1H_1 \cdots a_nH_n$ which completes the proof.
\end{proof}

We now prove Proposition~\ref{prop:pgcov}. Let us first recall the statement.

\pgcov*

\begin{proof}
  Since $L$ is a group language, there exists a morphism $\eta: A^* \to G$ into a finite group~$G$ recognizing $L$. We let $L' = \eta\inv(1_G)$. Clearly, $L'$ is a group language and $\veps \in L'$. Moreover, since $\veps \in L$ and $L$ is recognized by $\eta$, we have $L' \subseteq L$.

  We use $L'$ to define an ordering ``$\preceq$'' on $A^*$. Consider two words $u,v \in A^*$, we write $u \preceq v$ when $v \in \uclos_{L'} u$. By definition of $L'$, it is straightforward to verify that for every $u,v \in A^*$, if $u \preceq v$, then $\eta(u) = \eta(v)$. Since $\veps \in L'$, it~is simple to verify that $\preceq$ is reflexive and antisymmetric. We prove that it is transitive. Let $u,v,w \in A^*$ such that $u \preceq v$ and $v \preceq w$. We show that $u \preceq w$. By definition, we have $v \in \uclos_{L'} u$. Hence, we get $a_1,\dots,a_n \in A$ and $x_0,\dots,x_n \in L'$ such that $u = a_1 \cdots a_n$ and $v = x_0a_1x_1 \cdots a_nx_n$. Since we also have $w \in \uclos_{L'} v$, one may verify that this yields $y_0,\dots,y_n \in A^*$ such that $w = y_0a_1y_1 \cdots a_ny_n$ and $y_i \in \uclos_{L'} x_i$ for every $i \leq n$. The latter property implies that $\eta(x_i) = \eta(y_i)$ for every $i \leq n$. Hence, since $x_0,\dots,x_n\in L' = \eta\inv(1_G)$, we get $y_0,\dots,y_n  \in L'$. We conclude that $w \in \uclos_{L'} u$ which exactly says that $u \preceq w$ as desired. The following lemma states that $\preceq$ is a ``well quasi-order''. A proof is available in~\cite[Proposition 3.10]{cano:hal-01247172}. Here we use a simple generalization of the proof of Higman's lemma.


  \begin{lemma}\label{lem:half:higman}
    Consider an infinite sequence $(u_i)_{i \in \nat}$ of words in $A^*$. There exist $i,j \in \nat$ such that $i<j$ and $u_i \preceq u_j$.
  \end{lemma}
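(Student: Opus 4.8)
The plan is to reduce the statement to Higman's lemma (the scattered subword ordering on words over a \emph{finite} alphabet is a well-quasi-order) by \emph{enriching} each word with the $\eta$-images of its prefixes, where $\eta\colon A^* \to G$ is the morphism into the finite group $G$ with $L' = \eta\inv(1_G)$ fixed above. The key point, and the only place where the \emph{group} hypothesis is used, is that ``lying in $L'$'' is the equation ``having $\eta$-image $1_G$'': so in a factorization $v = x_0a_1x_1\cdots a_nx_n$ the membership conditions $x_i \in L'$ become local constraints on the enriched labels, and $\preceq$ ends up coinciding with a plain scattered subword ordering over a finite alphabet.

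Concretely, I would work over the finite alphabet $B = G \sqcup (G \times A)$ and send a word $v = c_1 \cdots c_N \in A^*$ to $\bar v = \eta(v)\cdot\bar c_1 \cdots \bar c_N \in B^*$, where $\bar c_i = \big(\eta(c_1 \cdots c_{i-1}),\, c_i\big) \in G \times A$; thus $\bar v$ consists of a single ``sentinel'' letter from $G$ followed by $|v|$ letters from $G \times A$. Writing $u \sqsubseteq v$ when $\bar u$ is a scattered subword of $\bar v$, the heart of the proof is the equivalence $u \preceq v \iff u \sqsubseteq v$. For the forward direction: given $v = x_0a_1x_1\cdots a_nx_n$ with $x_i \in L'$ and $a_1\cdots a_n = u$, each $x_i$ lies in $\ker\eta$, so $\eta(v) = \eta(u)$ and the $\eta$-image of the prefix of $v$ ending just before the displayed occurrence of ``$a_k$'' equals $\eta(a_1\cdots a_{k-1})$; hence the sentinel of $\bar u$ matches that of $\bar v$, and the $k$-th pair-letter of $\bar u$ matches the pair-letter of $\bar v$ at the position of that occurrence, witnessing $u \sqsubseteq v$. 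For the converse: the only $G$-letter of $\bar v$ is its sentinel, so any embedding must send the sentinel of $\bar u$ there, forcing $\eta(u) = \eta(v)$; the remaining pair-letters of $\bar u$ are sent to letter positions $p_1 < \cdots < p_n$ of $v$, and cutting $v = x_0 c_{p_1} x_1 \cdots c_{p_n} x_n$ at these positions, the matched first coordinates give $c_{p_k} = a_k$, while comparing the matched $G$-coordinates of consecutive pairs (and using $\eta(v) = \eta(u)$ for the last block) forces $\eta(x_i) = 1_G$ for every $i$ by a one-line cancellation in $G$; hence $v \in L'a_1L'\cdots a_nL' = \uclos_{L'} u$.

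Granting this equivalence, the lemma follows at once: given an infinite sequence $(u_i)_{i\in\nat}$ in $A^*$, apply Higman's lemma to $(\bar u_i)_{i\in\nat}$ over the finite alphabet $B$ to obtain $i<j$ with $\bar u_i$ a scattered subword of $\bar u_j$, i.e. $u_i \sqsubseteq u_j$, hence $u_i \preceq u_j$. (Alternatively, rather than quoting Higman's lemma one may run its standard minimal-bad-sequence argument directly on the enriched words $\bar u_i$, which matches the phrasing in the text.) The only nontrivial ingredient is the equivalence $u \preceq v \iff u \sqsubseteq v$: it is routine bookkeeping on $\eta$-images, but one must be careful about the empty word and about the role of the sentinel letter, which is exactly what makes the enriched encoding capture $\preceq$ on the nose.
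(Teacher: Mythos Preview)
Your argument is correct. The encoding over $B = G \sqcup (G \times A)$ does capture $\preceq$ exactly as the scattered subword order on the enriched words, and the verification you sketch goes through: the sentinel is the unique $G$-letter of $\bar v$, so it must be matched first, forcing $\eta(u)=\eta(v)$; then the matched $G$-coordinates give $\eta(c_1\cdots c_{p_k-1})=\eta(a_1\cdots a_{k-1})$, from which $\eta(x_k)=1_G$ follows by cancellation in $G$ (the last block using $\eta(u)=\eta(v)$). Higman's lemma over the finite alphabet $B$ then finishes immediately. The empty-word case is also handled by your sentinel convention.

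This is, however, a genuinely different route from the paper. The paper does \emph{not} invoke Higman's lemma as a black box; it runs a Nash--Williams minimal bad sequence argument directly on $\preceq$. The specific mechanism there is: pick a minimal bad sequence, find a length-$|G|$ prefix $w$ shared by infinitely many terms, pump inside $w$ to get $w=xyz$ with $\eta(y)=1_G$ (this is where the group hypothesis enters), and observe that deleting $y$ from those terms yields a strictly shorter bad sequence, contradicting minimality. Your approach is more modular and conceptual --- it explains \emph{why} $\preceq$ is a wqo by identifying it with a subword order over a finite alphabet, and it makes the role of the group hypothesis transparent (cancellation turns prefix constraints into factor constraints). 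The paper's approach, by contrast, is self-contained (no external theorem needed) and closer in spirit to the classical proof of Higman's lemma itself. Both use the group property at exactly one point: you use it for cancellation in the converse of the equivalence; the paper uses it to deduce $\eta(y)=1_G$ from $\eta(x)=\eta(xy)$.
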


  \begin{proof}
    We say that a sequence $(u_i)_{i \in \nat}$ is \emph{bad} if $u_i \not\preceq u_j$ for every $i < j$. We need to prove that there exists no bad sequence. We proceed by contradiction and assume that there exists a bad sequence. We first use this hypothesis to construct a specific one. Using induction, we build a particular sequence $(u_i)_{i \in \nat}$ such that for every $i \in \nat$, $u_0,\dots,u_i$ can be continued into a bad sequence and then verify that $(u_i)_{i \in \nat}$ is bad itself.

    We let $u_0$ be a word of minimal length such that $u_0$ can be continued into a bad sequence. Such a word must exist by the assumption that there exists a bad sequence. Assume now that $u_0,\dots,u_i$ have been defined up to some $i \in \nat$. By construction, $u_0,\dots,u_i$ can be continued into a bad sequence. We define $u_{i+1}$ as a word of minimal length such that $u_0,\dots,u_i,u_{i+1}$ can be continued into a bad sequence. This defines $(u_i)_{i \in \nat}$. Observe that it is necessarily bad. Indeed, otherwise, we would have $i<j$ such that $u_i \preceq u_j$ which contradicts the hypothesis that $u_0,\dots,u_j$ can be continued into a bad sequence.

    Consider the set $A^{|G|}$ consisting of all words of length $|G|$. Since $A^{|G|}$ is finite, there exists some word $w \in A^{|G|}$ such that $w$ is a prefix of infinitely many words in the sequence $(u_i)_{i \in \nat}$. We write $i_1 < i_2 < \cdots$ the infinitely many indices such that $w$ is a prefix of $u_{i_k}$, \emph{i.e.} $u_{i_k} = wv_{i_k}$ for some word $v_{i_k} \in A^*$. Since $|w| = |G|$, a pumping argument yields $x,z \in A^*$ and $y \in A^+$ such that $w = xyz$ and $\eta(x) = \eta(xy)$. Since $G$ is a group, it follows that $\eta(y) = 1_G$. We prove that the infinite sequence $u_0,u_1,\dots,u_{i_1-1},xzv_{i_1},xzv_{i_2},xzv_{i_3},\dots$ is bad. In particular, this means that $u_0,\dots,u_{i_1-1},xzv_{i_1}$ can be continued into a bad sequence. This is a contradiction: we have $u_{i_1} = xyzv_{i_1}$ and since $y \in A^+$, this implies that $|xzv_{i_1}| < |u_{i_1}|$. This is not possible since $u_{i_1}$ is defined as a word of minimal length such that $u_0,\dots,u_{i_1-1},u_{i_1}$ can be continued into a bad sequence.

    It remains to prove that $u_0,u_1,\dots,u_{i_1-1},xzv_{i_1},xzv_{i_2},xzv_{i_3},\dots$ is bad. Since $(u_i)_{i \in \nat}$ is bad itself, we already know that for $i < j \leq i_1-1$, we have $u_i \not\preceq u_j$. We now prove that $u_i \not\preceq xzv_{i_h}$ for $i \leq i_1-1$ and $h \geq 1$. By contradiction, if $u_i \preceq xzv_{i_h}$, then $u_i \preceq xyzv_{i_h} = u_{i_h}$ since $\eta(y) = 1_G$. This contradicts the hypothesis that $(u_i)_{i \in \nat}$ is bad. Finally, we show that given $h,k$ such that $1 \leq h < k$, we have $xzv_{i_h} \not\preceq xzv_{i_k}$. By contradiction assume that $xzv_{i_h} \preceq xzv_{i_k}$. One may verify from the definition of $\preceq$ that this yields $v,v' \in A^*$ such that $v_{i_k} = vv'$, $xz \preceq xzv$ and $v_{i_h} \preceq v'$. Moreover, $xz \preceq xzv$ implies that $\eta(xz) = \eta(xzv)$ by definition. Hence, $\eta(v) = 1_G$ since $G$ is a group. It follows that $xyz \preceq xyzv$ by definition of $\preceq$. Since we also have $v_{i_h} \preceq v'$, one may verify from the definition of $\preceq$ that this yields $xyzv_{i_h} \preceq xyzvv'$. Since $u_{i_h} = xyzv_{i_h}$ and $u_{i_k} = xyzv_{i_k} = xyzvv'$, this exactly says that  $u_{i_h} \preceq u_{i_k}$, contradicting the hypothesis that $(u_i)_{i \in \nat}$ is bad.
  \end{proof}

  We may now complete the proof and build the desired cover of the language $H \subseteq A^*$.  We say that a word $v \in H$ is \emph{minimal} if there exists no other word $u \in H$ such that $u \preceq v$. Moreover, we define $F \subseteq H$ as the set of all minimal words of $H$. By definition, we have $u \not\preceq u'$ for every $u,u' \in F$ such that $u \neq u'$.  Hence, it is immediate from Lemma~\ref{lem:half:higman} that $F \subseteq H$ is finite. We define $\Kb = \{\uclos_L u \mid u \in F\}$. It remains to prove that \Kb is a cover of $H$. Since \Kb is finite by definition, we have to prove that for every $v \in H$, there exists $u \in F$ such that $v \in \uclos_L u$. We fix $v$ for the proof. If $v$ is minimal, then $v \in F$ and it is clear that $v\in\uclos_L v$ since $\veps \in L$. Assume now that $v$ is \emph{not} minimal. In that case, there exists another word $u \in H$ which is minimal and such that $u \preceq v$. Since $u$ is minimal, we have $u \in F$. Thus, it suffices to prove that $v \in \uclos_L u$. Since $u \preceq v$, we have $v \in \uclos_{L'} u$ by definition. Moreover, since $L' \subseteq L$, it is immediate that $\uclos_{L'} u \subseteq \uclos_{L} u$. Consequently, we obtain that $v \in \uclos_L u$, which completes the proof.
\end{proof}

We turn to Proposition~\ref{prop:pgpcov}. The statement is as follows.

\pgpcov*

\begin{proof}
  We write $\As = (Q,\delta)$ and consider the \emph{transition morphism} of \As. We let $M = 2^{Q^2}$. It is standard that $M$ is a finite monoid for the following multiplication: given $P,P' \in M$ (\emph{i.e.}, $P,P' \subseteq Q^2$), we let $PP' = \{(q,r) \in Q^2 \mid \text{there is $p \in Q$ such that $(q,p) \in P$ and $(p,r) \in P'$}\}$ (the neutral element is $\{(q,q) \mid q \in Q\}$). The transition morphism $\alpha: A^* \to M$ of \As is defined by $\alpha(a) = \{(q,r) \in Q^2 \mid (q,a,r) \in \delta\}$ for every $a \in A$. Recall that an idempotent $e \in M$ is an element such that $ee=e$.

  We fix $k = |M|^2$ for the proof. We define an auxiliary alphabet \frB. Intuitively, we use a word in $\frB^+$ to represent the \As-guarded decompositions of nonempty words in $A^+$ with length greater than $k$. We write $E \subseteq \alpha(A^+)$ for the set of all idempotents in $\alpha(A^+)$. Consider the following sets (note that the bound on $|w|$ in $\frB_r$ differs from the ones in $\frB_\ell$,~$\frB_c$):
  \[
    \begin{array}{lll}
      \frB_\ell & = & \{(w,f) \in A^+ \times E \mid |w| \leq 2k \text{ and } \alpha(w)f = \alpha(w)\}. \\
      \frB_c & = & \{(e,w,f) \in E \times A^+ \times E \mid |w| \leq 2k \text{ and } e\alpha(w)f = \alpha(w)\}. \\
      \frB_r & = & \{(e,w) \in E \times A^+ \mid |w| \leq k \text{ and } e\alpha(w) = \alpha(w)\}.
    \end{array}
  \]
  We define $\frB = \frB_\ell \cup \frB_r \cup \frB_c$. It is clear from the definition that \frB is finite. We use it as an alphabet and define a morphism $\gamma: \frB^* \to A^*$. Let $b \in \frB$. There exists a nonempty word $w \in A^+$ and $e,f \in E$ such that $b = (w,f) \in \frB_\ell$, $b = (e,w) \in \frB_r$ or $b = (e,w,f) \in \frB_c$. We define $\gamma(b) = w$. Moreover, we write $\gamma_c: \frB_c^* \to A^*$ for the restriction of $\gamma$ to $\frB_c^*$. Finally, we say that a word $x \in \frB^*$ is \emph{well-formed} if $x \in \frB^{}_\ell \frB_c^* \frB^{}_r$ (in particular, $|x| \geq 2$) and $x$ is of the form $x = (w_1,f_1)(e_2,w_2,f_2) \cdots (e_n,w_n,f_n)(e_{n+1},w_{n+1})$ where $f_i = e_{i+1}$ for every $i \leq n$. We have the following lemma.

  \begin{lemma} \label{lem:wformg}
    Let $b_1,\dots,b_m \in \frB$ be letters such that the word $x = b_1 \cdots b_m \in \frB^+$ is well-formed. Then, $(\gamma(b_1),\dots,\gamma(b_m))$ is an \As-guarded decomposition of the word $\gamma(x) \in A^+$.
  \end{lemma}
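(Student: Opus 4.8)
The plan is to unfold the definition of ``well-formed'', translate it into a handful of identities in the relation monoid $M = 2^{Q^2}$, and then read off the required left and right \As-loops. Write $x = (w_1,f_1)(e_2,w_2,f_2)\cdots(e_n,w_n,f_n)(e_{n+1},w_{n+1})$, so that $b_1 = (w_1,f_1) \in \frB_\ell$, $b_i = (e_i,w_i,f_i) \in \frB_c$ for $2 \le i \le n$, $b_{n+1} = (e_{n+1},w_{n+1}) \in \frB_r$, with $f_i = e_{i+1}$ for $i \le n$; note $m = n+1$ and $n \ge 1$, since a well-formed word has at least two letters. Here $\gamma(b_i) = w_i \in A^+$ and $\gamma(x) = w_1 \cdots w_{n+1} \in A^+$, so the ``word part'' of the definition of an \As-guarded decomposition (membership in $A^+$ and equality $\gamma(x) = w_1 \cdots w_{n+1}$) is immediate. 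It remains, for each $i$ with $1 \le i \le n$, to exhibit a nonempty $z_i \in A^+$ that is a right \As-loop for $w_i$ and a left \As-loop for $w_{i+1}$.

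First I would record a standard fact about $M$: if $R \in M$ is idempotent and $(q,r) \in R$, then there is $s \in Q$ with $(q,s),(s,s),(s,r) \in R$. This holds because $R = R^k$ for every $k \ge 1$, so $(q,r) \in R^{|Q|+2}$; a witnessing path of that length has a repeated internal state $s$, and transitivity of $R$ (which follows from $R^2 = R$) gives $(q,s),(s,s),(s,r) \in R$. Next, using that the $e_i,f_i$ are idempotents, the membership conditions defining $\frB_\ell$, $\frB_c$, $\frB_r$ yield, for every $1 \le i \le n$, the identities $\alpha(w_i)f_i = \alpha(w_i)$ and $e_{i+1}\alpha(w_{i+1}) = \alpha(w_{i+1})$: for $i=1$ the first is the $\frB_\ell$-condition; for $2 \le i \le n$ multiply $e_i\alpha(w_i)f_i = \alpha(w_i)$ on the right by $f_i$ and use $f_if_i = f_i$; dually for the second identity, multiplying on the left by $e_{i+1}$ when $i+1 \le n$, and using the $\frB_r$-condition when $i+1 = n+1$. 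Since $f_i = e_{i+1} \in E \subseteq \alpha(A^+)$, pick $z_i \in A^+$ with $\alpha(z_i) = f_i = e_{i+1}$; then $\alpha(z_i)$ is idempotent, $\alpha(w_i z_i) = \alpha(w_i)\alpha(z_i) = \alpha(w_i)$, and $\alpha(z_i w_{i+1}) = \alpha(z_i)\alpha(w_{i+1}) = \alpha(w_{i+1})$.

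Finally I would verify that $z_i$ is a right \As-loop for $w_i$: given $q,r$ with $(q,r) \in \alpha(w_i)$, factor through $\alpha(w_i) = \alpha(w_i)\alpha(z_i)$ to obtain $p$ with $(q,p) \in \alpha(w_i)$ and $(p,r) \in \alpha(z_i)$; apply the idempotent fact to $\alpha(z_i)$ and $(p,r)$ to get $s$ with $(p,s),(s,s),(s,r) \in \alpha(z_i)$; composing, $(q,s) \in \alpha(w_i)\alpha(z_i) = \alpha(w_i z_i)$ while $(s,s),(s,r) \in \alpha(z_i)$, i.e. $w_i z_i \in \alauto{q}{s}$ and $z_i \in \alauto{s}{s} \cap \alauto{s}{r}$, which is exactly the witness the definition of a right \As-loop demands. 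The verification that $z_i$ is a left \As-loop for $w_{i+1}$ is symmetric, using $\alpha(w_{i+1}) = \alpha(z_i)\alpha(w_{i+1})$ instead. Collecting these $z_i$ shows $(w_1,\dots,w_{n+1})$ is an \As-guarded decomposition of $\gamma(x)$. Essentially everything here is bookkeeping; the only non-trivial ingredient is the (standard) idempotent-relation fact, and the only point requiring care is matching the exact quantifier pattern of the left/right \As-loop definitions—which state plays the role of $s$—when composing relations in $M$. The length bounds $|w|\le 2k$, $|w|\le k$ in the definition of $\frB$ play no role in this lemma.
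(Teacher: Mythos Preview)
Your proof is correct and follows essentially the same approach as the paper's own proof: both pick $z_i \in A^+$ as a preimage of the idempotent $f_i = e_{i+1}$, use the identities $\alpha(w_i)f_i = \alpha(w_i)$ and $e_{i+1}\alpha(w_{i+1}) = \alpha(w_{i+1})$, and then appeal to what you call the ``idempotent-relation fact'' (the paper simply says ``a pumping argument'') to locate the required loop state. Your derivation of $\alpha(w_i)f_i = \alpha(w_i)$ from the $\frB_c$-condition $e_i\alpha(w_i)f_i = \alpha(w_i)$ by multiplying with $f_i$ is a step the paper leaves implicit, but otherwise the arguments coincide.
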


  \begin{proof}
    Since $\gamma$ is a morphism, it is immediate from the definition that $\gamma(x) = \gamma(b_1) \cdots \gamma(b_m)$. Hence, it suffices to verify that for every $i < m$, there exists $z_i \in A^+$ which is a right \As-loop for $\gamma(b_i)$ and a left \As-loop for $\gamma(b_{i+1})$. By definition of well-formed words, there exists an idempotent $e_i \in E \subseteq \alpha(A^+)$ such that $\alpha(\gamma(b_i))e_i = \alpha(\gamma(b_i))$ and $e_i\alpha(\gamma(b_{i+1})) = \alpha(\gamma(b_{i+1}))$. We let $z_i \in A^+$ be an antecedent of $e_i$: we have $\alpha(z_i) = e_i \in E$. It remains to prove that $z_i$ is a right \As-loop for $\gamma(b_i)$ and a left \As-loop for $\gamma(b_{i+1})$. By symmetry, we only prove the former. We fix $q,r \in Q$ such that $\gamma(b_i) \in \alauto{q}{r}$ for the proof. By definition of $\alpha$, it follows that $(q,r) \in \alpha(\gamma(b_i))$. Hence, since  $\alpha(\gamma(b_i))e_i = \alpha(\gamma(b_i))$ and $e_i = \alpha(z_i)$, we get $(q,r) \in \alpha(\gamma(b_i)z_i)$ which means that $\gamma(b_i)z_i \in \alauto{q}{r}$. This yields $s \in Q$ such that $\gamma(b_i) \in \alauto{q}{s}$ and $z_i \in \alauto{s}{r}$. Finally, since $e_i = \alpha(z_i)$ is an idempotent and $z_i \in \alauto{s}{r}$, one may verify using a pumping argument that there exists $t \in Q$ such that $z_i \in \alauto{s}{t} \cap \alauto{t}{t} \cap \alauto{t}{r}$. This completes the proof.
  \end{proof}

  Intuitively, Lemma~\ref{lem:wformg} states that every well-formed word in $x \in \frB^+$ encodes an \As-guarded decomposition of some word in $A^+$. We handle the converse direction in the following lemma: for every long enough word $w \in A^+$, there exists an \As-guarded decomposition of $w$ which is encoded by a word in $\frB^+$.

  \begin{lemma} \label{lem:half:wformed}
    For every $w \in A^+$ such that $|w| > k$, there exists $x \in \frB^+$ which is well-formed and such that $w = \gamma(x)$.
  \end{lemma}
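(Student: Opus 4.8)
The plan is to build $x$ by scanning $w=a_1\cdots a_m$ from left to right, cutting it into bounded pieces $w=w_1w_2\cdots w_{n+1}$ and producing alongside a sequence of idempotent ``links'' $f_1,\dots,f_n\in E$, so that $x=(w_1,f_1)(f_1,w_2,f_2)\cdots(f_{n-1},w_n,f_n)(f_n,w_{n+1})$ with $(w_1,f_1)\in\frB_\ell$, $(f_{i-1},w_i,f_i)\in\frB_c$ for $2\le i\le n$, and $(f_n,w_{n+1})\in\frB_r$; such an $x$ is well-formed by construction and satisfies $\gamma(x)=w$. Everything takes place inside the finite monoid $M=2^{Q^2}$ and uses only the transition morphism $\alpha$, together with two elementary facts. (i)~Among the $|M|+1$ images $\alpha(a_1\cdots a_j)$, $0\le j\le|M|$, two coincide (pigeonhole), which yields, for the prefix $u=a_1\cdots a_j$, a nonempty factor $a_{j+1}\cdots a_{j'}$ with $j'\le|M|$ whose image $h$ \emph{right-stabilizes} $\alpha(u)$, that is $\alpha(u)h=\alpha(u)$. (ii)~Every $h\in M$ has an idempotent power $h^{\omega}$ (e.g.\ $h^{|M|!}$); it still right-stabilizes $\alpha(u)$, and it lies in $E$ because it is the image of the nonempty word $(a_{j+1}\cdots a_{j'})^{\omega}$ --- crucially, this idempotent need not be the image of a factor of $w$, so producing it costs nothing in block length.

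First I would produce the leftmost letter. Applying (i) to $w$ gives $0\le j<j'\le|M|$ with $\alpha(a_1\cdots a_j)=\alpha(a_1\cdots a_{j'})$; set $h=\alpha(a_{j+1}\cdots a_{j'})$, $f_1=h^{\omega}\in E$ and $w_1=a_1\cdots a_j$ (if $j=0$ one instead takes $w_1=a_1\cdots a_{j'}$, whose image is $1_M$, and $f_1=1_M\in E$). Then $\alpha(w_1)f_1=\alpha(w_1)$ by (ii), so $(w_1,f_1)\in\frB_\ell$ and $|w_1|\le|M|\le 2k$. The rightmost letter will be handled symmetrically at the very end, using \emph{suffix} images to get a left-stabilizing idempotent; the point is that this idempotent is forced to equal the current link, so that step is not independent --- it will fall out of an invariant carried through the iteration.

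The heart of the argument is the iteration. Given the suffix $v$ of $w$ remaining after the part already treated and the current link $e\in E$, I would maintain an invariant coupling $e$ with $v$ (the natural candidate being that $e$ absorbs, on the left, the image of $v$ and of each of its prefixes), and, as long as $|v|>k$, peel off a prefix $v_1$ with $1\le|v_1|<|v|$ and $|v_1|\le 2k$, together with a new idempotent $f\in E$, such that $(e,v_1,f)\in\frB_c$ and the invariant is re-established for $v'$, where $v=v_1v'$ (so $v'$ is again nonempty). Both $v_1$ and $f$ come from one application of (i) followed by one of (ii); the bound $2k=2|M|^2$ is exactly what gives the room needed to restore the invariant. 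Since $|v|$ strictly decreases, the process halts with $|v|\le k$ and $v$ nonempty; one sets $w_{n+1}=v$, and the invariant gives $f_n\alpha(w_{n+1})=\alpha(w_{n+1})$, i.e.\ $(f_n,w_{n+1})\in\frB_r$. As $|w|>k$, at least one peel occurs, so $x$ is nonempty; by construction it is well-formed and $\gamma(x)=w_1w_2\cdots w_{n+1}=w$, which --- together with Lemma~\ref{lem:wformg} --- completes the proof of Proposition~\ref{prop:pgpcov}.

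The main obstacle is the choice of invariant in the iteration and the proof that it survives one step within $2|M|^2$ letters: plain pigeonhole only produces stabilizers of prefixes, never genuine idempotent factors of $w$, so one must move to idempotent powers in $M$ (harmless, since those powers are images of nonempty words and hence lie in $E$, even though the witnessing words are not factors of $w$) and then arrange the bookkeeping so that the idempotent obtained at one step is compatible, as a left-stabilizer, both with the next block and ultimately with the closing $\frB_r$-letter. Propagating this compatibility is the only genuinely delicate point; the rest is routine pigeonhole and the monoid identities in (i) and (ii). Note that this lemma, like $\gamma$ and $\frB$ themselves, concerns only the finite monoid $M$ and the transition morphism $\alpha$, and uses nothing about groups.
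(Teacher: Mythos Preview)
Your overall plan --- chop $w$ into bounded pieces separated by idempotent ``links'' in $E$ --- is exactly the right shape, and the paper does essentially this (by induction from the right rather than iteration from the left). But there is a genuine gap in the proposal, and it is precisely the point you flag as ``the only genuinely delicate point'' without actually resolving it.

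Your step (i) applies pigeonhole to the $|M|+1$ \emph{prefix} images $\alpha(a_1\cdots a_j)$. This yields an element $h$ (hence an idempotent $f_1=h^{\omega}\in E$) that \emph{right}-stabilizes the chosen prefix: $\alpha(w_1)f_1=\alpha(w_1)$. It gives you nothing about the suffix. Yet your proposed invariant is that the current link $e$ \emph{left}-absorbs every prefix of the remaining suffix $v$, i.e.\ $e\,\alpha(v')=\alpha(v')$. You never establish this for $f_1$, and passing to the idempotent power $h^{\omega}$ does not help: in a general finite monoid, $\alpha(u)h=\alpha(u)$ says nothing about $h\,\alpha(v)$. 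So the invariant is never initialized, and there is no mechanism in the proposal to create it at any later step either --- each application of (i) again produces only a one-sided stabilizer.

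The missing idea, which is also why $k=|M|^2$ rather than $|M|$, is to pigeonhole on \emph{pairs}. The paper looks at the last $k+1$ letters $a_0\cdots a_k$ of $w$ and applies pigeonhole to the $k+1>|M|^2$ pairs $\bigl(\alpha(a_0\cdots a_i),\,\alpha(a_{i+1}\cdots a_k)\bigr)$. A collision at $i<j$ gives a nonempty factor $a_{i+1}\cdots a_j$ whose idempotent power $e$ satisfies \emph{both} $\alpha(u)e=\alpha(u)$ for $u=a_0\cdots a_i$ and $e\,\alpha(v)=\alpha(v)$ for $v=a_{i+1}\cdots a_k$. This single step produces the compatible link you need on both sides at once; the $2k$ bound on $\frB_\ell$ and $\frB_c$ then accommodates the merge with the previous piece. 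If you plug this two-sided pigeonhole into your left-to-right iteration (in place of your one-sided (i)), the argument goes through; as written, it does not.
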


  \begin{proof}
    We proceed by induction on the length of $w$. Since $|w| > k$, there exist $a_0,\dots,a_{k} \in A$ and $w'\in A^*$ such that $w = w'a_0 \cdots a_k$. Since $k = |M|^2$, we may apply the pigeon-hole principle to obtain $i,j$ such that $0 \leq i < j \leq k$, $\alpha(a_0\cdots a_{i}) =  \alpha(a_0\cdots a_{j})$ and $\alpha(a_{i+1} \cdots a_{k}) =  \alpha(a_{j+1}\cdots a_{k})$. Let $u = a_0\cdots a_{i}$ and $v = a_{i+1} \cdots a_{k}$. We have $u,v \in A^+$, $|u|\leq k$ and $|v| \leq k$. Moreover, $w = w'uv$. We consider the idempotent $e = (\alpha(a_{i+1} \cdots a_j))^\omega \in E$. By definition, we have $\alpha(u)e = \alpha(u)$ and $e\alpha(v) = \alpha(v)$. There are now two cases depending on $w'$.

    Assume first that $|w'| \leq k$. In that case $|w'u| \leq 2k$ which implies that $(w'u,e) \in \frB_\ell$ since $\alpha(u)e = \alpha(u)$. Moreover, we have $(e,v) \in \frB_r$ since $|v| \leq k$ and $e\alpha(v) = \alpha(v)$. Consequently, $x = (w'u,e)(e,v) \in \frB^+$ is a well-formed word such that $\gamma(x) = w'uv = w$. Assume now that $|w'| > k$. Since it is clear that $|w'| < |w|$, induction yields a well-formed word $x' \in \frB^+$ such that $\gamma(x') = w'$. By definition $x' = x''(f,v')$ where $x'' \in \frB^+$ and $(f,v') \in \frB_r$. In particular, we have $|v'| \leq k$ and $f\alpha(v') = \alpha(v')$ by definition of $\frB_r$. Hence, $|v'u| \leq 2k$ which implies that $(f,v'u,e) \in \frB_c$ since $\alpha(u)e = \alpha(u)$. Moreover, we have $(e,v) \in \frB_r$ since $|v| \leq k$ and $e\alpha(v) = \alpha(v)$. Let $x = x''(f,v'u,e)(e,v)$. Clearly, $x$ is well-formed since $x' = x''(f,v)$ was well-formed. Moreover, $\gamma(x) = \gamma(x''(f,v'))uv = w'uv = w$. This concludes the proof.
  \end{proof}

  We now prove Proposition~\ref{prop:pgpcov}.  We define $L_c=\gamma_c\inv(L) \subseteq \frB_c^*$. Since $L$ is a group language (over $A$) and $\veps \in L$, one may verify that $L_c$ is also a group language (over $\frB_c$) and $\veps\in L_c$. Let $b_\ell \in \frB_\ell$ and $b_r \in \frB_r$. We define,
  \[
    H_{b_\ell,b_r} = \{x \in \frB_c^* \mid \text{$b_\ell x b_r \in \frB^*$ is well-formed and $\gamma(b_\ell x b_r ) \in H$}\}.
  \]
  Proposition~\ref{prop:pgcov} yields a \emph{finite} set $F_{b_\ell,b_r} \subseteq H_{b_\ell,b_r} \subseteq \frB_c^*$ such that $\{\uclos_{L_c} x \mid x \in F_{b_\ell,b_r}\}$ is a cover of $H_{b_\ell,b_r}$. We are ready to build the desired cover \Kb of $H \subseteq A^*$. For every word $x = b_1 \cdots b_n  \in \frB_c^*$, every  $b_\ell \in \frB_\ell$ and every $b_r \in \frB_r$, we associate the language $[x]_{b_\ell,b_r} = \gamma(b_\ell)L\gamma(b_1)L \cdots \gamma(b_n)L\gamma(b_r) \subseteq A^+$. Finally, we define,
  \[
    \Kb = \{\{w\} \mid w \in H \text{ and } |w| \leq k\} \cup \bigcup_{b_\ell \in \frB_\ell} \bigcup_{b_r \in \frB_r} \{[x]_{b_\ell,b_r} \mid x \in F_{b_\ell,b_r}\}.
  \]
  It remains to prove that \Kb is the desired cover of $H$. First, let us verify that every $K \in \Kb$ is of the form $K = w_1 L \cdots w_nLw_{n+1}$ where $(w_1,\dots, w_{n+1})$ is an \As-guarded decomposition of some word $w \in H$. This immediate if $K = \{w\}$ for some $w \in H$. We have to handle the case when $K = [x]_{b_\ell,b_r}$ for some $x \in F_{b_\ell,b_r}$. By definition, $x \in H_{b_\ell,b_r}$ which means that $b_\ell x b_r \in \frB^*$ is well-formed and $\gamma(b_\ell x b_r ) \in H$. Let $b_1,\dots,b_n \in \frB_c^*$ be the letters such that $x = b_1 \cdots b_n$. Since $b_\ell b_1 \cdots b_n b_r \in \frB^*$ is well-formed, Lemma~\ref{lem:wformg} yields that $(\gamma(b_\ell),\gamma(b_1),\dots,\gamma(b_n),\gamma(b_r))$ is an \As-guarded decomposition of $\gamma(b_\ell x b_r ) \in H$. This concludes the proof since $K=  [x]_{b_\ell,b_r} = \gamma(b_\ell)L\gamma(b_1)L \cdots \gamma(b_n)L\gamma(b_r)$.

  We now prove that \Kb is a cover of $H$. It is immediate by definition that \Kb is finite. Given $w \in H$, we exhibit $K \in \Kb$ such that $w \in K$. This is immediate if $|w|\leq k$: we have $\{w\} \in \Kb$ in that case. We now consider the case $|w|>k$. Lemma~\ref{lem:half:wformed} yields $x \in B^+$ which is well-formed and such that $w = \gamma(x)$. By definition of well-formed words $x = b_\ell y b_r$ where $y \in \frB_c^*$, $b_\ell \in \frB_\ell$ and $b_r \in \frB_r$. Therefore, since $\gamma(x) = w \in H$, we have $y \in H_{b_\ell,b_r}$ by definition. Hence, since $\{\uclos_{L_c} z \mid z \in F_{b_\ell,b_r}\}$ is a cover of $H_{b_\ell,b_r}$, we get $z \in  F_{b_\ell,b_r}$ such that $y \in \uclos_{L_c} z$.  We prove that $w \in [z]_{b_\ell,b_r}$ which concludes the proof since $[z]_{b_\ell,b_r} \in \Kb$ by definition. We have $\uclos_{L_c} z = L_cb_1L_c \cdots b_nL_c$ where $b_1,\dots,b_n \in \frB_c$ are the letters such that $b_1 \cdots b_n = z \in H_{b_\ell,b_r}$. Therefore, since $y \in \uclos_{L_c} z$, this yields $x_0, \dots, x_n \in L_c$ such that $y = x_0b_1x_1 \cdots b_nx_n$. Altogether, it follows that $x = b_\ell x_0b_1x_1 \cdots b_nx_n b_r$. Since $w = \gamma(x)$, we get $w = \gamma(b_\ell) \gamma(x_0) \gamma(b_1) \gamma(x_1) \cdots \gamma(b_n) \gamma(x_n) \gamma(b_r)$. Finally, since $L_c = \gamma_c\inv(L)$ and $x_0,\dots,x_n \in L_c$, we have $\gamma(x_i) \in L$ for every $i \leq n$. Hence, we obtain that $w \in \gamma(b_\ell)L\gamma(b_1)L \cdots \gamma(b_n)L\gamma(b_r)$. This exactly says that $w \in [z]_{b_\ell,b_r}$ since $b_1 \cdots b_n = z$ by definition. This concludes the proof.
\end{proof}

\section{Appendix to Section~\ref{sec:separ}}
\label{app:separ}
We present the missing proofs for the statements in Section~\ref{sec:separ}. We start with those concerning classical separation.

\subsection{Non-separable quadruples}

We first prove Lemma~\ref{lem:sepconcat}. The statement is as follows.

\sepconcat*

\begin{proof}
	Given $K \in \Cs$ such that $L_0H_0 \subseteq K$, we prove that $L_1H_1 \cap K \neq \emptyset$. Consider the two following languages:
	\[
	U = \bigcap_{w \in H_0} Kw\inv \quad \text{and} \quad V = \bigcap_{u \in U} u\inv K.
	\]
	Since $K \in \Cs$ is regular, it has finitely many quotients by the Myhill-Nerode theorem. Therefore, while the above intersections may be infinite, they boil down to finite ones. Since \Cs is a \vari and $K \in \Cs$, it follows that $U,V \in \Cs$. Moreover, $L_0 \subseteq U$. Indeed, if $x \in L_0$, then for every $w \in H_0$ we have $xw \in L_0H_0 \subseteq K$ which yields $x \in U$ by definition of $U$. Therefore, since $L_0$ is not \Cs-separable from $L_1$, we get $L_1 \cap U\neq \emptyset$. We fix $u \in L_1 \cap U$. Additionally, $H_0 \subseteq V$. Indeed, if $y \in H_0$, then for every $x \in U$, we have $xy \in K$ which yields $y \in V$ by definition. Consequently, since $H_0$ is not \Cs-separable from $H_1$, we get $H_1 \cap V \neq \emptyset$. Let $v \in H_1 \cap V$. Altogether, we have $uv \in L_1H_1$, $u \in U$ and $v \in V$. By definition of $V$, we get $uv \in K$. Hence,  $L_1H_1 \cap K \neq \emptyset$ as desired.
\end{proof}

We first consider Proposition~\ref{prop:autosep}. The statement is as follows.

\autosep*

\begin{proof}
	Assume first that $\alauto{I_1}{F_1}$ is \Cs-separable from $\alauto{I_2}{F_2}$. This yields a separator $K\in\Cs$. It is clear that for every $(q_1,r_1,q_2,r_2) \in I_1 \times F_1 \times I_2 \times F_2$, $K \in \Cs$ also separates $\alauto{q_1}{r_1}$ from $\alauto{q_2}{r_2}$ (these two languages are included in $\alauto{I_1}{F_1}$ and $\alauto{I_2}{F_2}$ respectively). Thus, $\left(I_1\times F_1\times I_2\times F_2\right) \cap \nsepca = \emptyset$ by definition of \nsepca.
	
	Conversely, assume that $\left(I_1 \times F_1 \times I_2\times F_2 \right) \cap \nsepca = \emptyset$. By definition, this means that for every $(q_1,r_1,q_2,r_2) \in I_1 \times F_1 \times I_2 \times F_2$, there exists a language $K_{q_1,r_1,q_2,r_2} \in \Cs$ which separates $\alauto{q_1}{r_1}$ from $\alauto{q_2}{r_2}$. Consider the following language:
	\[
	K = \bigcup_{(q_1,r_1) \in I_1 \times F_1} \left(\bigcap_{(q_2,r_2) \in I_2 \times F_2} K_{q_1,r_1,q_2,r_2} \right).
	\]
	Since \Cs is a lattice, we have $K \in \Cs$. Moreover, one may verify that $K$ separates $\alauto{I_1}{F_1}$ from $\alauto{I_2}{F_2}$, concluding the proof.
\end{proof}

We now prove Lemma~\ref{lem:autosep} whose statement is the following.

\lautosep*

\begin{proof}
  By definition of \nsepca, for every quadruple $\bar{q} = (q,r,s,t) \in Q^4 \setminus \nsepca$, there exists $H_{\bar{q}} \in \Cs$ which separates \alauto{q}{r} from \alauto{s}{t}. We use the languages $H_{\bar{q}}$ to define the following equivalence $\sim$ on $A^*$: for $u,v \in A^*$, we let $u \sim v$ if and only if $u \in H_{\bar{q}} \Leftrightarrow v \in H_{\bar{q}}$ for every $\bar{q} \in Q^4 \setminus \nsepca$. Let \Kb be the partition of $A^*$ into $\sim$-classes. By definition, each $K \in \Kb$ is a Boolean combination of languages $H_{\bar{q}} \in \Cs$. Hence, $K \in \Cs$ since \Cs is a Boolean algebra. We conclude that \Kb is a \Cs-cover of $A^*$. Moreover, one may verify from the definition of the languages $H_{\bar{q}}$ that  \Kb is separating for \nsepca.

  For the second assertion, we let $S \subseteq Q^4$ and consider a \Cs-cover \Kb of $A^*$ which is separating for $S$. We show that $Q^4 \setminus S \subseteq Q^4 \setminus \nsepca$. By definition, this boils down to proving that if $(q,r,s,t) \in Q^4 \setminus S$, then \alauto{q}{r} is \Cs-separable from \alauto{s}{t}. We build a separator $H \in \Cs$ from \Kb. Let $H \in \Cs$ be the union of all languages $K \in \Kb$ such that $K \cap \alauto{q}{r} \neq \emptyset$. Clearly, $\alauto{q}{r} \subseteq H$ since \Kb is a cover of $A^*$. It remains to prove that $H \cap \alauto{s}{t} = \emptyset$. By contradiction, assume that $H \cap \alauto{s}{t} \neq \emptyset$. By definition of $H$, this yields $K \in \Kb$ such that $K \cap \alauto{q}{r} \neq \emptyset$ and $K \cap \alauto{s}{t} \neq \emptyset$. Since \Kb is separating for $S$, it follows that $(q,r,s,t) \in S$ which is a contradiction since $(q,r,s,t) \in Q^4 \setminus S$.
\end{proof}

We turn to Proposition~\ref{lem:gautosep}. The statement is as follows.

\gautosep*

\begin{proof}
	For every quadruple $\bar(q) = (q,r,s,t) \in Q^4 \setminus \dnsepca$, there exists $L_{\bar{q}} \in \Ds$ such that $\veps \in \Ds$ and $H_{\bar{q}} \in \Cs$ which separates $\alauto{q}{r} \cap L_{\bar{q}}$ from $\alauto{s}{t}\cap L_{\bar{q}}$. We define $L \in \Ds$ as the intersection of all languages $L_{\bar{q}}$ for $\bar{q} \in Q^4 \setminus \dnsepca$. Clearly, we have $\veps \in L$. Moreover, we define an equivalence $\sim$ on $L$: for $u,v \in L$, we let $u \sim v$ if and only if $u \in H_{\bar{q}} \Leftrightarrow v \in H_{\bar{q}}$ for every $\bar{q} \in Q^4 \setminus \dnsepca$. Finally, we let \Kb be the partition of $L$ into $\sim$-classes. By definition, each $K \in \Kb$ is a Boolean combination involving the languages $H_{\bar{q}} \in \Cs$ and $L \in \Ds \subseteq \Cs$. Hence, $K \in \Cs$ since \Cs is a Boolean algebra. We conclude that \Kb is a \Cs-cover of $L$. Moreover, one may verify from the definition of the languages $H_{\bar{q}}$ that  \Kb is separating for \dnsepca.
	
	For the second assertion, we let $S \subseteq Q^4$. Consider $L \in \Ds$ such that $\veps \in L$ and a \Cs-cover \Kb of $L$ which is separating for $S$. We show that $Q^4 \setminus S \subseteq Q^4 \setminus \dnsepca$. By definition,it suffices to prove that if $(q,r,s,t) \in Q^4 \setminus S$, then $\alauto{q}{r} \cap L$ is \Cs-separable from $\alauto{s}{t} \cap L$. We build a separator $H \in \Cs$ from \Kb. Let $H \in \Cs$ be the union of all languages $K \in \Kb$ such that $K \cap \alauto{q}{r} \cap L \neq \emptyset$. Clearly, $\alauto{q}{r} \cap L \subseteq H$ since \Kb is a cover of $L$. It remains to prove that $H \cap \alauto{s}{t} \cap L = \emptyset$. By contradiction, assume that $H \cap \alauto{s}{t} \cap L \neq \emptyset$. By definition of $H$, this yields $K \in \Kb$ such that $K \cap \alauto{q}{r} \cap L \neq \emptyset$ and $K \cap \alauto{s}{t} \cap L\neq \emptyset$. Since \Kb is separating for $S$, we get $(q,r,s,t) \in S$. This is a contradiction since $(q,r,s,t) \in Q^4 \setminus S$.
\end{proof}

\subsection{Tuple separation}

We now present proofs for the statements concerning tuple separation. We start with Lemma~\ref{lem:tupconcat}.

\tupconcat*

\begin{proof}
  We proceed by induction on $n \geq 1$. If $n = 1$, then $(L_1)$ and $(H_1)$ being not \Cs-separable means that $L_1 \neq \emptyset$ and $H_1 \neq \emptyset$. Hence, $L_1H_1 \neq \emptyset$ which implies that $(L_1H_1)$ is not \Cs-separable. Assume now $n > 1$ and consider $(L_1,\dots,L_n),(H_1,\dots,H_n)$ which are not \Cs-separable. Given $K \in \Cs$ such that $L_1H_1 \subseteq K$, we prove that $(L_2H_2,\dots,L_nH_n) \cap K$ is not \Cs-separable. Consider the two following languages:
  \[
	U = \bigcap_{w \in H_1} Kw\inv \quad \text{and} \quad V = \bigcap_{u \in U} u\inv K.
  \]
  Note that since $K \in \Cs$ is regular, it has finitely many quotients by the Myhill-Nerode theorem. Hence, while the above intersections may be infinite, they boil down to finite ones. Since \Cs is a \vari and $K \in \Cs$, it follows that $U,V \in \Cs$.

  Observe that $L_1 \subseteq U$. Indeed, if $x \in L_1$, then for every $w \in H_1$ we have $xw \in L_1H_1 \subseteq K$ which yields $x \in U$ by definition. Since $(L_1,\dots,L_n)$ is not \Cs-separable, it follows that $(L_2,\dots,L_n) \cap U$ is not \Cs-separable. Moreover, observe that $H_1 \subseteq V$. Indeed, if $y \in H_1$, then for every $x \in U$, we have $xy \in K$ which yields $y \in V$ by definition. Since $(H_1,\dots,H_n)$ is not \Cs-separable, it follows that $(H_2,\dots,H_n) \cap V$ is not \Cs-separable. It now follows from induction on $n$ that $((L_2 \cap U)(H_2 \cap V),\dots,(L_n \cap U)(H_n \cap V))$ is not \Cs-separable. It is clear that $(L_i \cap U)(H_i \cap V) \subseteq (L_iH_i) \cap (UV)$ for every $i \leq n$. Moreover, observe that $UV \subseteq K$. Indeed, if $u \in U$ and $v \in V$, we have $uv \in K$ by definition of $V$. Altogether, it follows from the second assertion in Lemma~\ref{lem:tuptriv} that $(L_2H_2,\dots,L_nH_n) \cap K$ is not \Cs-separable, which completes the proof.
\end{proof}

We turn to Corollary~\ref{cor:covtsep}. As we explained in the main paper, this statement follows from a theorem of~\cite{pzbpol} which we first recall and prove.

\begin{restatable}{theorem}{covtsep}\label{thm:covtsep}
	Let \Cs be a lattice and $L_0,L_1 \subseteq A^*$. The following properties are equivalent:
	\begin{enumerate}
		\item $L_0$ is \bool{\Cs} separable from $L_1$.
		\item There exists $p \geq 1$ and such that $(L_0,L_1)^p$ is \Cs-separable.
	\end{enumerate}
\end{restatable}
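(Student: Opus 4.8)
The plan is to prove Theorem~\ref{thm:covtsep} by first putting the tuple‑separability condition into a workable shape and then connecting each side of the equivalence to a \emph{decreasing chain} inside $\Cs$. First I would unfold the recursive definition of ``$(L_0,L_1)^p$ is $\Cs$-separable''. Unrolling it produces languages $K_1,\dots,K_{2p-1}\in\Cs$ with $M_i\cap K_1\cap\cdots\cap K_{i-1}\subseteq K_i$ for $1\le i\le 2p-1$ and $M_{2p}\cap K_1\cap\cdots\cap K_{2p-1}=\emptyset$, where $M_i=L_0$ for $i$ odd and $M_i=L_1$ for $i$ even. Since $\Cs$ is closed under finite intersection, replacing $K_i$ by $K_1\cap\cdots\cap K_i$ shows, with no loss, that $(L_0,L_1)^p$ is $\Cs$-separable \emph{iff} there is a decreasing chain $K_1\supseteq K_2\supseteq\cdots\supseteq K_{2p-1}$ in $\Cs$ with $L_0\subseteq K_1$, with $L_1\cap K_{2j-1}\subseteq K_{2j}$ and $L_0\cap K_{2j}\subseteq K_{2j+1}$ for all relevant $j$, and with $L_1\cap K_{2p-1}=\emptyset$. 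This reformulation is elementary and uses only that $\Cs$ is a lattice.

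For the implication $2)\Rightarrow 1)$, starting from such a chain I would set $K=K_{2p-1}\cup\bigcup_{j=1}^{p-1}(K_{2j-1}\setminus K_{2j})$, which lies in $\bool{\Cs}$. To see $L_0\subseteq K$: for $w\in L_0$ let $m$ be the largest index with $w\in K_m$ (so $m\ge 1$); the conditions $L_0\cap K_{2j}\subseteq K_{2j+1}$ forbid $m$ from being even, so $m$ is odd and $w$ lies either in $K_{2p-1}$ (if $m=2p-1$) or in some $K_{2j-1}\setminus K_{2j}$; either way $w\in K$. To see $K\cap L_1=\emptyset$: a word $w\in L_1$ avoids $K_{2p-1}$ by the last condition, and if $w\in K_{2j-1}\setminus K_{2j}$ then $L_1\cap K_{2j-1}\subseteq K_{2j}$ forces $w\in K_{2j}$, a contradiction. (Lemma~\ref{lem:tuptriv} is convenient here for the trivial sub‑cases.)

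For the implication $1)\Rightarrow 2)$, I would first establish the normal‑form statement that $\bool{\Cs}$ consists exactly of the languages $\bigcup_{j=1}^{k}(W_{2j-1}\setminus W_{2j})$ where $A^*\supseteq W_1\supseteq\cdots\supseteq W_{2k}$ is a decreasing chain in $\Cs$; this follows by checking that this family contains $\Cs$ and is closed under the Boolean operations, hence is a Boolean algebra containing $\Cs$. Given a separator $K$ written in this form, I would verify that the chain $W_1\supseteq\cdots\supseteq W_{2k}\supseteq\emptyset$ witnesses $\Cs$-separability of $(L_0,L_1)^{k+1}$: each fitting condition reduces to disjointness of $L_0$ (resp.\ $L_1$) from a specific ``block'' of the partition of $A^*$ induced by the chain, which one reads off from $L_0\subseteq K$, from $K\cap L_1=\emptyset$, and from the complementary decomposition $A^*\setminus K=\overline{W_1}\cup\bigcup_{j=1}^{k-1}(W_{2j}\setminus W_{2j+1})\cup W_{2k}$.

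The routine parts are the unrolling above and the membership verifications. The main obstacle is the normal‑form lemma used in the last step: rewriting an arbitrary Boolean combination of lattice elements as a single alternating union over a decreasing chain of $\Cs$-languages. That is where essentially all the work lies; once it is in place, extracting the chain and the value $p=k+1$ is immediate.
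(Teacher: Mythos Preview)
Your argument is correct and follows a genuinely different path from the paper's proof.

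For $2)\Rightarrow 1)$, the paper proceeds by induction on $p$: it peels off the first two separators $K,K'\in\Cs$, applies the induction hypothesis to obtain a \bool{\Cs}-separator $P$ for the restricted pair, and combines them as $(K\cap P)\cup(K\setminus K')$. Your approach instead unrolls the whole definition at once into a decreasing chain $K_1\supseteq\cdots\supseteq K_{2p-1}$ and writes down the separator $K_{2p-1}\cup\bigcup_j(K_{2j-1}\setminus K_{2j})$ explicitly. Your route is more transparent, and in fact your separator is exactly what the paper's induction produces once it is unwound.

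For $1)\Rightarrow 2)$, the paper puts the \bool{\Cs}-separator in disjunctive normal form $K=\bigcup_{i\leq n}(K_i\setminus H_i)$ with $K_i,H_i\in\Cs$ (no nesting assumed) and then argues by induction on $n$, using an auxiliary lemma (Lemma~\ref{lem:sepunion}) stating that if $(L_0,L_1)^p\cap K_1$ and $(L_0,L_1)^p\cap K_2$ are both \Cs-separable with $K_1,K_2\in\Cs$, then so is $(L_0,L_1)^p\cap(K_1\cup K_2)$. You instead invoke the stronger Hausdorff-type normal form: every element of \bool{\Cs} is a telescoping union over a single decreasing chain in \Cs. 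Once that is available, the chain itself is the witness for \Cs-separability of $(L_0,L_1)^{k+1}$, and your verification is immediate.

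The trade-off is clear: the paper avoids the chain normal form at the cost of an extra induction and an auxiliary lemma, while you front-load the work into that normal form and then get a direct, symmetric correspondence between chains and tuple separators. Your identification of the normal form as ``where essentially all the work lies'' is accurate; its proof (showing that the family of chain differences is closed under intersection or union) is not hard but is not a one-liner either, and you should expect to spend a paragraph on it.
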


\begin{proof}
  We first prove that $2) \Rightarrow 1)$. Let $L_0,L_1 \subseteq A^*$ and assume that there exists $p \geq 1$ such that $(L_0,L_1)^p$ is \Cs-separable. We use induction on $p$ to prove that $L_0$ is \bool{\Cs}-separable from~$L_1$. When $p = 1$, $L_0$ is \Cs-separable from $L_1$ and since $\Cs \subseteq \bool{\Cs}$, the result is trivial. Assume that $p \geq 2$. By hypothesis, we have $K,K' \in \Cs$ such that $L_0 \subseteq K$, $L_1 \cap K \subseteq K'$ and $(L_0,L_1)^{p-1} \cap K \cap K'$ is \Cs-separable. Using induction, we then obtain a language $P \in \bool{\Cs}$ separating $L_0 \cap K \cap K'$ from $L_1 \cap K \cap K'$. Consider the language $H = (K \cap P) \cup (K \setminus K') \in \bool{\Cs}$. We prove that $H$ separates $L_0$ from~$L_1$. We begin with $L_0 \subseteq H$. Let $w \in L_0$, we prove that $w \in H$. Clearly, $w \in K$ since $L_0 \subseteq K$. Moreover, either $w \in K'$ and therefore $w \in K \cap P$ since $L_0 \cap K \cap K' \subseteq P$, or $w \not\in K'$ and therefore $w \in K \setminus K'$. Altogether, we conclude that $w \in H$. It remains to prove that $L_1 \cap H = \emptyset$. Let $w \in L_1$, we prove that $w \not\in H$. There are two cases depending on whether $w \in K$. If $w \not\in K$, then clearly $w \not\in K \cap P$ and $w \not\in  K \setminus K'$, hence $w \not\in H$. Otherwise, $w \in L_1 \cap K \subseteq K'$. Therefore, $w \not\in K \setminus K'$ and $w \not\in K \cap P$ since $L_1 \cap K \cap K' \cap P = \emptyset$ by the choice of~$P$. We get $w \not\in H$, which completes the proof.

  \smallskip

  We turn to the implication $1) \Rightarrow 2)$ in Theorem~\ref{thm:covtsep}. We start with an auxiliary lemma.

  \begin{lemma} \label{lem:sepunion}
    Let $k \geq 1$ and $(L_1,\dots,L_k)$ be a $k$-tuple. Moreover, let $K_1,K_2 \in \Cs$ be such that $(L_1,\dots,L_k) \cap K_1$ and $(L_1,\dots,L_k) \cap K_2$ are both \Cs-separable. Then, $(L_1,\dots,L_k) \cap (K_1 \cup K_2)$ is \Cs-separable as well.
  \end{lemma}

  \begin{proof}
    We proceed by induction on $k$. When $k = 1$, then we have $L_1 \cap K_1 = \emptyset$ and $L_1 \cap K_2 = \emptyset$ by hypothesis. Hence, $L_1 \cap (K_1 \cup K_2) = \emptyset$ and $(L_1) \cap (K_1 \cup K_2)$ is \Cs-separable. When $k \geq 2$, for $i = 1,2$, our hypothesis yields a separator $U_i \in \Cs$ for $(L_1,\dots,L_k) \cap K_i$. We prove that $U = (U_1 \cap K_1) \cup (U_2 \cap K_2) \in \Cs$ is a separator for $(L_1,\dots,L_k) \cap (K_1 \cup K_2)$. It is clear that $L_1 \cap (K_1 \cup K_2) \subseteq U$ since we have $L_1 \cap K_1 \subseteq U_1$ and $L_2 \cap K_2 \subseteq U_2$ by definition of $U_1$ and $U_2$. Moreover, we know that $(L_2,\dots,L_k) \cap K_1 \cap U_1$ and $(L_2,\dots,L_k) \cap K_2 \cap U_2$ are both \Cs-separable. Thus, it is immediate from induction that $(L_2,\dots,L_k) \cap U$ is \Cs-separable, concluding the proof.
  \end{proof}

  We now concentrate on proving the implication $1) \Rightarrow 2)$ in Theorem~\ref{thm:covtsep}. Given $L_0,L_1 \subseteq A^*$ which are \bool{\Cs}-separable, we have to prove that there exists $p \geq 1$ such that $(L_0,L_1)^p$ is \Cs-separable. By hypothesis there exists a language $K \in \bool{\Cs}$ such that $L_0 \subseteq K$ and $L_1 \cap K = \emptyset$. By definition, $K$ is the Boolean combination of languages in \Cs. We put it in disjunctive normal form. Each disjunct is an intersection languages belonging to \Cs, or whose complement belongs to \Cs. Since \Cs is lattice, both \Cs and the complement class \cocl{\Cs} are closed under intersection. Therefore, each disjunct in the disjunctive normal form of $K$ is actually of the form $K' \setminus H'$, where $K',H'$ both belong to \Cs (for the case where $K'$ or $H'$ is empty, recall that both $\emptyset$ and $A^*$ belong to \Cs). In other words, there exist $n \geq 1$ and $K_1,\dots,K_n,H_1,\dots,H_n \in \Cs$ such that $K =  \bigcup_{1 \leq i \leq n} (K_i \setminus H_i)$. We use induction on $n \geq 1$ to prove that $(L_0,L_1)^{n+1}$ is \Cs-separable

  Assume first that $n = 1$. We prove that $(L_0,L_1,L_0,L_1) = (L_0,L_1)^2$ is \Cs-separable. By hypothesis, $K = K_1 \setminus H_1$, $L_0 \subseteq K_1\setminus H_1$ and $L_1 \cap (K_1 \setminus H_1) = \emptyset$. Clearly, $L_0 \subseteq K_1 \in \Cs$. Thus, it remains to prove that $(L_1,L_0,L_1) \cap K_1$ is \Cs-separable. Since $L_1 \cap (K_1 \setminus H_1) = \emptyset$, we have $L_1 \cap K_1 \subseteq H_1$. Thus, it now remains to prove that $(L_0,L_1) \cap K_1 \cap H_1$ is \Cs-separable. Since $L_0 \subseteq K_1 \setminus H_1$, we have $L_0 \cap K_1 \cap H_1 = \emptyset$. Thus, it is immediate that $(L_0,L_1) \cap K_1 \cap H_1$ is \Cs-separable, as desired.

  We now assume that $n > 1$. We prove that $(L_0,L_1)^{n+1}$ is \Cs-separable. In the proof, we write $\bar{L}$ for $(2n+1)$-tuple $(L_1) \cdot (L_0,L_1)^n$. Since $L_0 \subseteq K$ and $K = \bigcup_{1 \leq i \leq n} (K_i \setminus H_i)$. We know that $L_0\subseteq\bigcup_{1\leq i \leq n} K_i \in \Cs$. Therefore, it now remains to prove that,
  \[
    \bar{L} \cap \left(\bigcup_{1\leq i \leq n} K_i\right) \quad \text{is \Cs-separable.}
  \]
  In view of Lemma~\ref{lem:sepunion}, since each language $K_i$ belongs to \Cs by hypothesis, it now suffices to prove that $\bar{L} \cap K_i$ is \Cs-separable for every $i \leq n$. We fix $i \leq n$ for the proof. By hypothesis, $L_1 \cap K = \emptyset$ which implies that $L_1 \cap (K_i \setminus H_i) = \emptyset$. Hence, $L_1 \cap K_i \subseteq H_i \in \Cs$. Hence, by definition of $\bar{L}$, proving that $\bar{L} \cap K_i$ is \Cs-separable boils down to proving that $(L_0,L_1)^n \cap K_i \cap H_i$ is \Cs-separable. We use induction on $n$. Let $K' = \bigcup_{j \neq i} (K_j \setminus H_j) \in \bool{\Cs}$ by definition $K'$ is the union $n-1$ languages $K_j \setminus H_j$. Moreover, since $L_0 \subseteq K$ and $L_1 \cap K = \emptyset$, it is immediate that $L_0 \cap K_i \cap H_i \subseteq K'$ and $L_1 \cap K' = \emptyset$. Hence, it follows by induction on $n$ that $(L_0,L_1)^{n} \cap K_i \cap H_i$ is \Cs-separable which completes the proof.
\end{proof}

We may now prove Corollary~\ref{cor:covtsep} itself. We first recall the statement.

\ccovtsep*

\begin{proof}
	Assume first that $L_0$ is \bool{\Cs}-separable from $L_1$ under \Ds-control. By definition, this yields $H \in \Ds$ such that $\veps \in H$ and $L_0 \cap H$ is \bool{\Cs}-separable from $L_1 \cap H$. Hence, Theorem~\ref{thm:covtsep} yields $p \geq 1$ such that $(L_0,L_1)^p \cap H$ is \Cs-separable. We conclude that $(L_0,L_1)^p$ is \bool{\Cs}-separable under \Ds-control, as desired.
	
	Conversely, assume that there exists $p \geq 1$ such that $(L_0,L_1)^p$ is \bool{\Cs}-separable under \Ds-control. We get $H \in \Ds$ such that $\veps \in H$ and $(L_0,L_1)^p \cap H$ is \Cs-separable. Therefore, Theorem~\ref{thm:covtsep} implies that  $L_0 \cap H$ is \bool{\Cs}-separable from $L_1 \cap H$. By definition, we conclude that  $L_0$ is \bool{\Cs}-separable from $L_1$ under \Ds-control, which completes the proof.
\end{proof}

We turn to Lemma~\ref{lem:pgsound}. The statement is as follows.

\pgsound*

\begin{proof}
	We prove the contrapositive. Assume that $(\{\veps\},L_1,\dots,L_n)$ is \pol{\Ds}-separable: there exists $K \in \pol{\Ds}$ such that $\veps \in K$ and $(L_1,\dots,L_n) \cap K$ is \pol{\Ds}-separable. By definition of \pol{\Ds}, $K$ is a finite union of marked product of languages in \Ds. Hence, since $\veps \in K$, there exists a marked product involving a single language $H \in \Ds$  such that $\veps \in H$ in the union defining $K$. In particular, $H \subseteq K$ and Lemma~\ref{lem:tuptriv} implies that $(L_1,\dots,L_n) \cap H$ is \pol{\Ds}-separable. Since $H \in \Ds$ and $\veps \in H$, it follows that  $(L_1,\dots,L_n)$ is \pol{\Ds}-separable under \Ds-control.
\end{proof}


Finally, we prove Lemma~\ref{lem:pgpsound} whose statement is as follows.

\pgpsound*

\begin{proof}
	We prove the contrapositive. Assume that $(w^+,w^+L_1w^+,\dots,w^+L_nw^+)$ is \pol{\Ds^+}-separable. We show that $(L_1,\dots,L_n)$ is \pol{\Ds^+}-separable under \Ds-control. By hypothesis, there exists $K \in \pol{\Ds^+}$ such that $w^+ \subseteq K$, and  $(w^+L_1w^+,\dots,w^+L_nw^+) \cap K$ is \pol{\Ds^+}-separable. By definition, $K$ is a finite union of languages $K_0a_1K_1 \cdots a_mK_m$ with \mbox{$a_1,\dots, a_m \in A$} and $K_0,\dots,K_m \in \Ds^+$. Let $k \in \nat$ such that $m \leq k$ for every marked product $K_0a_1K_1 \cdots a_mK_m$ in the finite union defining $K$. Consider the word $w^{2(k+1)} \in w^+$. Since $w^+ \subseteq K$, we have $w^{2(k+1)} \in K$. Hence, there exists a marked product $K_0a_1K_1 \cdots a_mK_m$ in the finite union defining $K$ (in particular $m \leq k$) such that,
	\[
	w^{2(k+1)} \in K_0a_1K_1 \cdots a_mK_m \subseteq K.
	\]
	We get a word $u_i \in K_i$ for each $i \leq m$ such that $w^{2(k+1)} = u_0a_1u_1 \cdots a_mu_m$. Since $m \leq k$, there exists $i \leq m$ such that $ww$ is an infix of $u_i$.  Thus, we get $x,y \in A^*$ and $\ell_1, \ell_2 \in \nat$ such that $u_i = xwwy$, $u_0a_1u_1 \cdots a_ix = w^{\ell_1}$, $ya_{i+1}u_{i+1} \cdots a_mu_m = w^{\ell_2}$ and $\ell_1+2+ \ell_2 = 2(k+1)$
	
	By definition $K_i \in \Ds^+$ which means that there exists a language $H \in \Ds$ such that either $K_i = H \cup \{\veps\}$ or $K_i = H \cap A^+$. In particular, since $u_i \in K_i$ and $u_i \in A^+$ (recall that $w \in A^+$), we have $xwwy = u_i \in H$. Let $H' = (xw)\inv H (wy)\inv$. By closure under quotients, we have $H' \in \Ds$ and it is clear that $\veps \in H'$ since $xwwy \in H$. Hence, it now suffices to prove that $(L_1,\dots,L_n) \cap H'$ is \pol{\Ds^+}-separable. This will imply as desired that $(L_1,\dots,L_n)$ is \pol{\Ds^+}-separable under \Ds-control. 
	
	By contradiction, assume that $(L_1,\dots,L_n) \cap H'$ is \emph{not} \pol{\Ds^+}-separable.  by Lemma~\ref{lem:tuptriv}, the $n$-tuples $(\{xw\})^n$ and $(\{yw\})^n$ are not \pol{\Ds^+}-separable as well.  Hence, we obtain from Lemma~\ref{lem:tupconcat} that,
	\[
	(xw(L_1 \cap H')wy,\dots,xw(L_n \cap H')wy) \quad \text{is not \pol{\Ds^+}-separable}.
	\]
	By definition of $H$ and $H'$, we know that $xw(L_j \cap H')wy \subseteq xwL_jwy \cap H \subseteq xwL_jwy \cap K_i$ for every $j \leq n$. Hence, we conclude that $(xwL_1wy,\dots,xwL_nwy) \cap K_i$ is not \pol{\Ds^+}-separable. We may now use Lemma~\ref{lem:tuptriv} again to obtain that the $n$-tuples $(\{u_0a_1u_1 \cdots a_i\})^n$ and $(\{a_{i+1}u_{i+1} \cdots a_mu_m\})^n$ are not \pol{\Ds^+}-separable. Therefore, since $u_0a_1u_1 \cdots a_ix = w^{\ell_1}$, $ya_{i+1}u_{i+1} \cdots a_mu_m = w^{\ell_2}$ and $u_i \in K_i$ for every $i \leq m$, one may use Lemma~\ref{lem:tuptriv} and Lemma~\ref{lem:tupconcat} to obtain that,
	\[
	(w^{\ell_1+1}L_1w^{\ell_2+1},\dots,w^{\ell_1+1}L_nw^{\ell_2+1}) \cap (K_0a_1K_1 \cdots a_mK_m)  \quad \text{is not \pol{\Ds^+}-separable}.
	\]
	Since $K_0a_1K_1 \cdots a_mK_m \subseteq K$ and $w^{\ell_1+1}L_jw^{\ell_2+1} \subseteq w^+L_j w^+$, we may apply Lemma~\ref{lem:tuptriv} one last time to obtain that $(w^+L_1w^+,\dots,w^+L_nw^+) \cap K$ is not \pol{\Ds^+}-separable. This is a contradiction.
\end{proof}

\section{Proof of Theorem~\ref{thm:pgauto}}
\label{app:bpolgp}
We provide the missing proofs in Section~\ref{sec:bpolg}. First, we prove Lemma~\ref{lem:ginc} and Lemma~\ref{lem:pginc} which are fairly simple statements. Then, we concentrate on the proof of Theorem~\ref{thm:pgauto}.

\subsection{Lemma~\ref{lem:ginc} and Lemma~\ref{lem:pginc}}

Let us first recall the statement of Lemma~\ref{lem:ginc}.

\ginc*

\begin{proof}
	We assume that $S \subseteq S'$. Let $(q,r,s,t) \in \tautg(S)$. We prove that $(q,r,s,t) \in \tautg(S')$. Consider the \nfas $\Bs_S = (Q^3,\gamma_{S})$ and $\Bs_{S'} = (Q^3,\gamma_{S'})$. Since $S \subseteq S'$, the definition yields $\gamma_S \subseteq \gamma_{S'}$. Hence,  $\lauto{\Bs_S}{(s,q,s)}{(t,r,t)} \subseteq \lauto{\Bs_{S'}}{(s,q,s)}{(t,r,t)}$ and $\lauto{\Bs_S}{(q,s,q)}{(r,t,r)} \subseteq \lauto{\Bs_{S'}}{(q,s,q)}{(r,t,r)}$. Finally, since $(q,r,s,t) \in \tautg(S)$, we know that~\eqref{eq:gauto} holds: $\{\veps\}$ is not \Gs-separable from \lauto{\Bs_S}{(s,q,s)}{(t,r,t)} and $\{\veps\}$ is not \Gs-separable from \lauto{\Bs_S}{(q,s,q)}{(r,t,r)}. Hence, the above inclusions imply that $\{\veps\}$ is not \Gs-separable from \lauto{\Bs_{S'}}{(s,q,s)}{(t,r,t)} and $\{\veps\}$ is not \Gs-separable from \lauto{\Bs_{S'}}{(q,s,q)}{(r,t,r)}. We obtain $(q,r,s,t) \in \tautg(S')$ as desired.
\end{proof}

We turn to Lemma~\ref{lem:pginc}.

\pginc*

\begin{proof}
	We assume that $S \subseteq S'$. Let $(q,r,s,t) \in \betg(S)$. We prove that $(q,r,s,t) \in \betg(S')$. Consider the \nfas $\Bs^+_S = (Q^3,\gamma^+_{S})$ and $\Bs^+_{S'} = (Q^3,\gamma^+_{S'})$. Since $S \subseteq S'$, the definition yields $\gamma^+_S \subseteq \gamma^+_{S'}$. Hence, $\lauto{\Bs^+_S}{(s,q,s)}{(t,r,t)} \subseteq \lauto{\Bs^+_{S'}}{(s,q,s)}{(t,r,t)}$ and $\lauto{\Bs^+_S}{(q,s,q)}{(r,t,r)} \subseteq \lauto{\Bs^+_{S'}}{(q,s,q)}{(r,t,r)}$. Finally, since $(q,r,s,t) \in \betg(S)$, we know that~\eqref{eq:gpauto} holds: $\{\veps\}$ is not \Gs-separable from \lauto{\Bs^+_S}{(s,q,s)}{(t,r,t)} and $\{\veps\}$ is not \Gs-separable from \lauto{\Bs^+_S}{(q,s,q)}{(r,t,r)}. Hence, the above inclusions imply that $\{\veps\}$ is not \Gs-separable from \lauto{\Bs^+_{S'}}{(s,q,s)}{(t,r,t)} and $\{\veps\}$ is not \Gs-separable from \lauto{\Bs^+_{S'}}{(q,s,q)}{(r,t,r)}. We obtain $(q,r,s,t) \in \betg(S')$ as desired.
\end{proof}

\subsection{Theorem~\ref{thm:pgauto}}

Let us first recall the statement.

\pgauto*

 The proof argument is based on the same outline as the one presented for Theorem~\ref{thm:gauto} in the main paper. We fix a group \vari \Gs and an \nfa $\As = (Q,\delta)$. Let $S \subseteq Q^4$ be the greatest $(\bpoln,+)$-sound subset for \Gs and \As. We prove that $S = \gnsbgp$.

\smallskip
\noindent
{\bf First part: $S\subseteq\gnsbgp$.} We use \emph{tuple separation} and Lemma~\ref{lem:pgpsound}. Let us start with terminology. For every $n \geq 1$ and $(q_1,r_1,q_2,r_2) \in Q^4$, we associate an $n$-tuple $T_n(q_1,r_1,q_2,r_2)$. We use induction on $n$ and tuple concatenation to present the definition. If $n = 1$ then, $T_1(q_1,r_1,q_2,r_2) = (\alauto{q_2}{r_2})$. If $n > 1$, then,
\[
T_n(q_1,r_1,q_2,r_2) = \left\{\begin{array}{ll}
	(\alauto{q_2}{r_2}) \cdot T_{n-1}(q_1,r_1,q_2,r_2) & \text{if $n$ is odd} \\
	(\alauto{q_1}{r_1}) \cdot T_{n-1}(q_1,r_1,q_2,r_2) & \text{if $n$ is even.}
\end{array}\right.
\]
We use induction on $n$ to prove the following proposition.

\begin{restatable}{proposition}{gppsound} \label{prop:gppsound}
	For every $n \geq 1$ and $(q_1,r_1,q_2,r_2) \in S$, the $n$-tuple $T_n(q_1,r_1,q_2,r_2)$ is not \pol{\Gs^+}-separable under \Gs-control. 
\end{restatable}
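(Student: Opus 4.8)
The plan is to mirror the proof of Proposition~\ref{prop:gsound}, replacing the auxiliary automaton $\Bs_S$ by $\Bs^+_S$ and the ingredient Lemma~\ref{lem:pgsound} by its $\Gs^+$-counterpart Lemma~\ref{lem:pgpsound}. I would argue by induction on $n$, and, exactly as in Proposition~\ref{prop:gsound}, isolate the combinatorial core in an analogue of Lemma~\ref{lem:gsinduc}: for all $(s_1,s_2,s_3),(t_1,t_2,t_3)\in Q^3$ and every $w\in\lauto{\Bs^+_S}{(s_1,s_2,s_3)}{(t_1,t_2,t_3)}$, one has $w\in\alauto{s_1}{t_1}$ and, if $n\geq 2$, a suitably padded version of the $n$-tuple $(\{w\})\cdot T_{n-1}(s_2,t_2,s_3,t_3)$ is not \pol{\Gs^+}-separable. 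This analogue is proved by a sub-induction on the number of transitions of a fixed run of $\Bs^+_S$ labelled by $w$, peeling off the last transition; the letter-transition case is handled verbatim as in Lemma~\ref{lem:gsinduc} (using Lemma~\ref{lem:tuptriv} and Lemma~\ref{lem:tupconcat}), so all the work is in the \veps-transition case.

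In the \veps-transition case we obtain $q_1=t_1$, a quadruple $(q_2,t_2,q_3,t_3)\in S$, and, by the side condition defining $\gamma^+_S$, a \emph{nonempty} word $z\in A^+$ that loops on each of $q_1,q_2,q_3,t_2,t_3$ in \As. By the outer induction on $n$ (i.e.\ by Proposition~\ref{prop:gppsound} itself at rank $n-1$), $T_{n-1}(q_2,t_2,q_3,t_3)$ is not \pol{\Gs^+}-separable under \Gs-control, so Lemma~\ref{lem:pgpsound} applied with the word $z$ yields that $(z^+,\,z^+M_1z^+,\dots,z^+M_{n-1}z^+)$ is not \pol{\Gs^+}-separable, where $(M_1,\dots,M_{n-1})=T_{n-1}(q_2,t_2,q_3,t_3)$. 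Pointwise-multiplying this $n$-tuple with the one supplied by the sub-induction via Lemma~\ref{lem:tupconcat}, and then absorbing the $z^+$ factors using that $z$ loops on the relevant states (so $z^+\alauto{p}{p'}z^+\subseteq\alauto{p}{p'}$ whenever $z\in\alauto{p}{p}\cap\alauto{p'}{p'}$, and likewise $\alauto{p}{p''}z^+\alauto{p''}{p'}\subseteq\alauto{p}{p'}$), one recovers a tuple that is pointwise below $(\text{first component})\cdot T_{n-1}(s_2,t_2,s_3,t_3)$; Lemma~\ref{lem:tuptriv} then closes the step.

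The main obstacle — and the reason the analogue of Lemma~\ref{lem:gsinduc} must be set up with care — is that Lemma~\ref{lem:pgpsound} introduces $z^+$-padding, whereas the $\Gs$-version Lemma~\ref{lem:pgsound} merely prepends $\{\veps\}$; consequently the first component of the tuple ceases to be the singleton $\{w\}$ along \veps-transitions, which would break the endgame, where that singleton is precisely what lets one enlarge to $H\cap\alauto{q_2}{r_2}$ and then intersect all components with the control language $H$. I would resolve this by threading the control language through the analogue of Lemma~\ref{lem:gsinduc}: fix a morphism $\eta$ onto a finite group $G$ recognizing $H$, and at each \veps-transition use the loop word in $\eta^{-1}(1_G)$ (replace $z$ by $z^{|G|}$, which still loops on the same states and, by Lagrange, has trivial image), so that every padding word lies in $\eta^{-1}(1_G)\subseteq H$ and the first component stays inside $H\cap\alauto{s_1}{t_1}$ while still containing $w$. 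With this in hand the endgame is the same as in Proposition~\ref{prop:gsound}: given $H\in\Gs$ with $\veps\in H$, from $S=\betg(S)$ and~\eqref{eq:gpauto} we get that $\{\veps\}$ is not \Gs-separable from $\lauto{\Bs^+_S}{(q_2,q_1,q_2)}{(r_2,r_1,r_2)}$, we pick $w$ in the intersection with $H$, apply the analogue of Lemma~\ref{lem:gsinduc}, and conclude — using $\Gs\subseteq\Gs^+\subseteq\pol{\Gs^+}$ to intersect tuples with $H$ — that $H\cap T_n(q_1,r_1,q_2,r_2)$ is not \pol{\Gs^+}-separable, which is exactly the claim. (As noted after Proposition~\ref{prop:gsound}, this together with Corollary~\ref{cor:covtsep} gives $(q_1,r_1,q_2,r_2)\in\gnsbgp$, supplying the inclusion $S\subseteq\gnsbgp$ in the proof of Theorem~\ref{thm:pgauto}.)
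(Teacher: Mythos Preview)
Your proposal is correct and follows essentially the same route as the paper's proof (the paper's Lemma~\ref{lem:gpsinduc} is exactly the analogue of Lemma~\ref{lem:gsinduc} you describe, with the control group language $H$ threaded through the statement, and the \veps-transition case handled by raising the loop word to a power with trivial image under a morphism recognizing $H$ before invoking Lemma~\ref{lem:pgpsound}).

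One small imprecision worth flagging: your claim that the first component ``still contains $w$'' after an \veps-transition does not survive the step as written --- after multiplying by $(z^+,\dots)$ via Lemma~\ref{lem:tupconcat} the first component becomes $X'z^+$, which need not contain $w=w'$. The paper sidesteps this by never tracking a specific word: its invariant is simply that $(H\cap\alauto{s_1}{t_1})\cdot T_{n-1}(s_2,t_2,s_3,t_3)$ is not \pol{\Gs^+}-separable, and in the sub-induction one passes through the $\eta$-class $H'=\eta^{-1}(\eta(w'))$ of the prefix and then enlarges to $H$ via Lemma~\ref{lem:tuptriv}. Your endgame goes through regardless, since only ``nonempty subset of $H\cap\alauto{s_1}{t_1}$'' is used there.
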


By definition, Proposition~\ref{prop:gppsound} implies that for every $p \geq 1$ and every $(q_1,r_1,q_2,r_2) \in S$, the $2p$-tuple $(\alauto{q_1}{r_1},\alauto{q_2}{r_2})^p$ is not \pol{\Gs^+}-separable under \Gs-control. By Corollary~\ref{cor:covtsep}, it follows that $\alauto{q_1}{r_1}$ is not \bpol{\Gs^+}-separable from $\alauto{q_2}{r_2}$ under \Gs-control, \emph{i.e.} that $(q_1,r_1,q_2,r_2)  \in \gnsbgp$. We get $S \subseteq \gnsbgp$ as desired.

We prove Proposition~\ref{prop:gppsound} using induction on $n$. We fix $n \geq 1$ for the proof. In order to exploit the fact that $S$ is $(\bpoln,+)$-sound, we need a property of the \nfa $\Bs^+_S = (Q^3,\gamma_S)$ used to define \betg. When $n \geq 2$, this is where we use induction on $n$ and Lemma~\ref{lem:pgpsound}.

\begin{restatable}{lemma}{gpsinduc}\label{lem:gpsinduc}
	Consider $(s_1,s_2,s_3),(t_1,t_2,t_3) \in Q^3$ and a group language $H \subseteq A^*$. Assume that  $H \cap \lauto{\Bs^+_{S}}{(s_1,s_2,s_3)}{(t_1,t_2,t_3)} \neq \emptyset$. Then, $H \cap \alauto{s_1}{t_1} \neq \emptyset$ and, if $n \geq 2$, then the $n$-tuple $(H \cap \alauto{s_1}{t_1})  \cdot T_{n-1}(s_{2},t_{2},s_{3},t_{3})$ is not \pol{\Gs^+}-separable.
\end{restatable}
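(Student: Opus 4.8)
The plan is to follow, line for line, the argument given in the main text for the \bpol{\Gs}-analogue (Lemma~\ref{lem:gsinduc}), substituting Lemma~\ref{lem:pgpsound} for Lemma~\ref{lem:pgsound} and accounting for the loop words attached to the $\veps$-transitions of $\Bs^+_{S}$. Fix a word $w\in H\cap\lauto{\Bs^+_{S}}{(s_1,s_2,s_3)}{(t_1,t_2,t_3)}$ together with a run $\rho$ of $\Bs^+_{S}$ labeled by $w$ witnessing this. For the first assertion, note that on its first coordinate every letter-transition of $\Bs^+_{S}$ projects onto a transition of \As, while every $\veps$-transition leaves the first coordinate unchanged; reading off the first coordinates along $\rho$ therefore yields a run of \As labeled by $w$ from $s_1$ to $t_1$, so $w\in H\cap\alauto{s_1}{t_1}$, whence $H\cap\alauto{s_1}{t_1}\neq\emptyset$. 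This part uses neither induction nor the group hypothesis.

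For the second assertion (so $n\ge 2$), I would prove the following slightly more general statement by induction on the number of transitions of a run: \emph{for every group language $J$, every $u\in J$, and every run of $\Bs^+_{S}$ labeled by $u$ from $(a_1,a_2,a_3)$ to $(b_1,b_2,b_3)$, the $n$-tuple $(J\cap\alauto{a_1}{b_1})\cdot T_{n-1}(a_2,b_2,a_3,b_3)$ is not \pol{\Gs^+}-separable.} The lemma is the instance $J=H$, $u=w$; note that $\Gs^+$ is a \vari by Lemma~\ref{lem:wsuit}, so \pol{\Gs^+} is a \pvari by Theorem~\ref{thm:bpolvar}, and that $u\in J\cap\alauto{a_1}{b_1}$ always holds by the first-coordinate argument above. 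The empty run gives $u=\veps$, $(a_1,a_2,a_3)=(b_1,b_2,b_3)$, and $\veps$ lies in every component of the tuple, so it is not separable by the first part of Lemma~\ref{lem:tuptriv}. Otherwise I would peel the last transition $x$ of the run, write $u=u'x$, let $(c_1,c_2,c_3)$ be the state preceding $x$, and apply the induction hypothesis to $u'$ and to the quotient $Jx\inv$ (which contains $u'$ and is again a group language by closure under quotients), obtaining that $(Jx\inv\cap\alauto{a_1}{c_1})\cdot T_{n-1}(a_2,c_2,a_3,c_3)$ is not \pol{\Gs^+}-separable. Then I would build a companion non-separable $n$-tuple for the transition $x$ alone, concatenate the two using Lemma~\ref{lem:tupconcat}, and re-absorb the result into $(J\cap\alauto{a_1}{b_1})\cdot T_{n-1}(a_2,b_2,a_3,b_3)$ via Lemma~\ref{lem:tuptriv} after checking the componentwise inclusions $\alauto{a_k}{c_k}\alauto{c_k}{b_k}\subseteq\alauto{a_k}{b_k}$ and, when $x=a$, also $(Ja\inv)\{a\}\subseteq J$ and $\alauto{a_1}{c_1}\{a\}\subseteq\alauto{a_1}{b_1}$. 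When $x=a\in A$ the companion tuple is just $(\{a\})\cdot T_{n-1}(c_2,b_2,c_3,b_3)$, not separable by the first part of Lemma~\ref{lem:tuptriv} since $a\in\alauto{c_2}{b_2}\cap\alauto{c_3}{b_3}$ sits in all its components; this case is essentially the \bpol{\Gs} argument.

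The substantial case is that of an $\veps$-transition $\bigl((c_1,c_2,c_3),\veps,(c_1,b_2,b_3)\bigr)$, so $c_1=b_1$, $(c_2,b_2,c_3,b_3)\in S$, and the definition of $\gamma^+_{S}$ supplies a nonempty word $z$ that loops at each of $c_1,c_2,c_3,b_2,b_3$. Since $H$ is recognized by a morphism $\eta$ into a finite group $G$, Lagrange's theorem gives $\eta(z^{|G|})=1_G$, so with $z'=z^{|G|}\in A^+$ we have $J(z')^{+}\subseteq J$ (as $J$ is recognized by $\eta$ too) while $z'$ still loops at all five states. Because $(c_2,b_2,c_3,b_3)\in S$, the induction on $n$ for Proposition~\ref{prop:gppsound} gives that $T_{n-1}(c_2,b_2,c_3,b_3)$ is not \pol{\Gs^+}-separable under \Gs-control, and Lemma~\ref{lem:pgpsound} applied with the word $z'$ turns this into: the $n$-tuple $\bigl((z')^{+}\bigr)\cdot\bigl((z')^{+}L_1(z')^{+},\dots,(z')^{+}L_{n-1}(z')^{+}\bigr)$ is not \pol{\Gs^+}-separable, where $(L_1,\dots,L_{n-1})=T_{n-1}(c_2,b_2,c_3,b_3)$. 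Here $u=u'$ so the group language is unchanged; concatenating the $n$-tuple from $u'$ with the one above via Lemma~\ref{lem:tupconcat} and then shrinking each component — on the first, $(J\cap\alauto{a_1}{c_1})(z')^{+}\subseteq J\cap\alauto{a_1}{b_1}$ since $z'$ loops at $c_1=b_1$ and $\eta(z')=1_G$; on the remaining ones, $\alauto{a_k}{c_k}(z')^{+}\alauto{c_k}{b_k}(z')^{+}\subseteq\alauto{a_k}{b_k}$ since $z'$ loops at $c_k$ and $b_k$ — gives, by Lemma~\ref{lem:tuptriv}, precisely that $(J\cap\alauto{a_1}{b_1})\cdot T_{n-1}(a_2,b_2,a_3,b_3)$ is not \pol{\Gs^+}-separable. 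Feeding the claim back into the proof of Proposition~\ref{prop:gppsound} (using $H\in\Gs\subseteq\pol{\Gs^+}$ and closure of \pol{\Gs^+} under intersection, exactly as in the \bpol{\Gs} case) closes the induction on $n$.

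I expect the $\veps$-transition case to be the main obstacle: in the \bpol{\Gs} setting such a transition contributes only the empty word and is dispatched painlessly by Lemma~\ref{lem:pgsound}, whereas here it contributes a genuine loop word. One must replace it by its idempotent power $z'=z^{|G|}$ so that inserting copies of $z'$ keeps words inside the group language $H$, invoke the wrapped form of Lemma~\ref{lem:pgpsound} (which pads every component by $(z')^{+}$ on both sides), and then re-absorb those pads into the automaton languages $\alauto{\cdot}{\cdot}$ and into $H$ — and it is precisely here, and only here, that the hypothesis that \Gs consists of group languages is used. Everything else is routine bookkeeping about the alternating pattern of states recorded in $T_{n-1}$ and about componentwise concatenations; manipulating the extra $\veps$-transition guard $A^+ \cap \alauto{c_1}{c_1}\cap\cdots\cap\alauto{b_3}{b_3}\neq\emptyset$ of $\Bs^+_{S}$ is what makes this proof more delicate than that of Theorem~\ref{thm:gauto}.
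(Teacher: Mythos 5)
Your proposal is correct and follows essentially the same route as the paper's proof: a sub-induction on the length of the run in $\Bs^+_S$, the letter case dispatched by Lemmas~\ref{lem:tuptriv} and~\ref{lem:tupconcat}, and the $\veps$-case handled by raising the guard word to a power that becomes trivial in the recognizing group and then combining the induction on $n$ from Proposition~\ref{prop:gppsound} with Lemma~\ref{lem:pgpsound} before re-absorbing the $(z')^+$ pads into the languages $\alauto{\cdot}{\cdot}$ and into the group language. The only (cosmetic) deviations are that you pass to the quotient $Jx\inv$ where the paper uses the $\alpha$-class $\alpha\inv(\alpha(w'))$ of the prefix, and that you pad with $(z')^+$ directly where the paper uses $\alpha\inv(1_G)\cap\alauto{q_1}{q_1}$; just take the morphism recognizing $J$ itself (rather than the one recognizing $H$) when forming $z'$, which is an immediate fix since $J$ is assumed to be a group language.
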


\begin{proof}
	By hypothesis, there exists $w \in H \cap \lauto{\Bs^+_{S}}{(s_1,s_2,s_3)}{(t_1,t_2,t_3)}$. Hence, the \nfa $\Bs^+_S$ contains some run labeled by $w$ from $(s_1,s_2,s_3)$ to $(t_1,t_2,t_3)$. We use a sub-induction on the number of transitions involved in that run. When no transitions are used: we have $w = \veps$ and $(s_1,s_2,s_3) = (t_1,t_2,t_3)$. It follows that $w = \veps \in H \cap \alauto{s_1}{t_1}$. Moreover, if $n \geq 2$, the $n$-tuple $(H \cap \alauto{s_1}{t_1})  \cdot T_{n-1}(s_{2},s_{2},s_{3},s_{3})$ is not \pol{\Gs^+}-separable by Lemma~\ref{lem:tuptriv} since $\veps \in \alauto{s_2}{s_2} \cap \alauto{s_3}{s_3}$. We now assume that at least one transition is used. We get a triple $(q_1,q_2,q_3) \in Q^3$, a word $w' \in A^*$ and $x \in A\cup \{\veps\}$ such that we have $w = w'x$, $w' \in \lauto{\Bs^+_{S}}{(s_1,s_2,s_3)}{(q_1,q_2,q_3)}$ and $((q_1,q_2,q_3),x,(t_1,t_2,t_3)) \in \gamma^+_S$. Since $H$ is a group language, it is recognized by a morphism $\alpha: A^* \to G$ into a finite group $G$. Let $H'= \alpha\inv(\alpha(w'))$. Clearly, $H'$ is a group language and $w'\in H'\cap\lauto{\Bs^+_{S}}{(s_1,s_2,s_3)}{(q_1,q_2,q_3)}$. Thus, induction yields that $H' \cap \alauto{s_1}{q_1} \neq \emptyset$ and, if $n \geq 2$, the $n$-tuple $(H' \cap \alauto{s_1}{q_1})  \cdot T_{n-1}(s_{2},q_{2},s_{3},q_{3})$ is not \pol{\Gs^+}-separable. We now consider two cases depending on $x \in A \cup \{\veps\}$.
	
	Assume first that $x = a \in A$: we have $((q_1,q_2,q_3),a,(t_1,t_2,t_3)) \in \gamma^+_S$. By definition, it follows that $(q_i,a,t_i) \in \delta$ for $i = \{1,2,3\}$.  Observe that $(H' \cap \alauto{s_1}{q_1})a \subseteq H \cap \alauto{s_1}{t_1}$. Indeed, if $u \in (H' \cap \alauto{s_1}{q_1})a$, then $u = u'a$ where $u' \in H'$ and $u' \in \alauto{s_1}{q_1}$. Since $H' = \alpha\inv(\alpha(w'))$, the hypothesis that $u' \in H'$ yields $\alpha(u) = \alpha(u'a) = \alpha(w'a) = \alpha(w)$ which implies that $u \in H$ since $w \in H$ and $H$ is recognized by $\alpha$. Moreover, since  $u' \in \alauto{s_1}{q_1}$ and $(q_1,a,t_1) \in \delta$, we get $u = u'a \in \alauto{s_1}{t_1}$. Altogether, this yields $u \in  H \cap \alauto{s_1}{t_1}$ as desired. Since we already know that $H' \cap \alauto{s_1}{q_1} \neq \emptyset$, we get $H \cap \alauto{s_1}{t_1} \neq \emptyset$. Moreover, if $n \geq 2$, since $(q_2,a,t_2),(q_3,a,t_3) \in \delta$, Lemma~\ref{lem:tuptriv} yields that $(\{a\})  \cdot T_{n-1}(q_{2},t_{2},q_{3},t_{3})$ is not \pol{\Gs^+}-separable. Hence, since we already know that $(H' \cap \alauto{s_1}{q_1})  \cdot T_{n-1}(s_{2},q_{2},s_{3},q_{3})$ is not \pol{\Gs^+}-separable and $(H' \cap \alauto{s_1}{q_1})a \subseteq H \cap \alauto{s_1}{t_1}$, it follows from  Lemma~\ref{lem:tupconcat} that $(H \cap \alauto{s_1}{t_1})  \cdot T_{n-1}(s_{2},t_{2},s_{3},t_{3})$ is not \pol{\Gs^+}-separable.
	
	Finally, assume that $x = \veps$: we have  $((q_1,q_2,q_3),\veps,(t_1,t_2,t_3)) \in \gamma^+_S$.  By definition, it follows that $q_1 = t_1$, $(q_2,t_2,q_3,t_3) \in S$ and there exists a nonempty word $y \in A^+$  which belongs to  \alauto{q_1}{q_1}, \alauto{q_2}{q_2}, \alauto{q_3}{q_3}, \alauto{t_2}{t_2} and \alauto{t_3}{t_3}. Since $x = \veps$, we have $w = w'$. Hence, since $w \in H$ and $H$ is recognized by $\alpha$, we obtain that $H' = \alpha(\alpha\inv(w')) \subseteq H$. Since $H' \cap \alauto{s_1}{q_1} \neq \emptyset$ and $q_1 = t_1$, we get $H \cap \alauto{s_1}{t_1} \neq \emptyset$. We now assume that $n \geq 2$. Since $G$ is a finite group, there exists $k \geq 1$ such that $\alpha(y^k) = 1_G$. We write $z = y^k$. By hypothesis on $y$, we also have $z \in \alauto{q_1}{q_1}$. It follows that $z^+ \subseteq \alpha\inv(1_G) \cap \alauto{q_1}{q_1}$. Additionally, since $z$ belongs to \alauto{q_2}{q_2}, \alauto{q_3}{q_3}, \alauto{t_2}{t_2} and \alauto{t_3}{t_3}, we know that $z^+\alauto{q_2}{t_2} z^+ \subseteq \alauto{q_2}{t_2}$ and $z^+\alauto{q_3}{t_3} z^+ \subseteq \alauto{q_3}{t_3}$. Since $(q_2,t_2,q_3,t_3) \in S$, it follows from induction on $n$ in Proposition~\ref{prop:gppsound} that the $(n-1)$-tuple $T_{n-1}(q_2,t_2,q_3,t_3)$ is not \pol{\Gs^+}-separable under \Gs-control. Altogether, we obtain from Lemma~\ref{lem:pgpsound} that the $n$-tuple $(\alpha\inv(1_G) \cap \alauto{q_1}{q_1})  \cdot T_{n-1}(q_2,t_2,q_3,t_3)$ is not \pol{\Gs^+}-separable. Finally, since $q_1 = t_1$ and $H' \subseteq H$, one may verify that  $(H' \cap \alauto{s_1}{q_1})(\alpha\inv(1_G) \cap \alauto{q_1}{q_1}) \subseteq (H \cap \alauto{s_1}{t_1})$. Since we already know that $(H' \cap \alauto{s_1}{q_1})  \cdot T_{n-1}(s_{2},q_{2},s_{3},q_{3})$ is not \pol{\Gs^+}-separable, Lemma~\ref{lem:tupconcat} yields that $(H \cap \alauto{s_1}{t_1}) \cdot T_{n-1}(s_{2},t_{2},s_{3},t_{3})$ is not \pol{\Gs^+}-separable.
\end{proof}

We may now complete the proof of Proposition~\ref{prop:gppsound}. By symmetry, we only treat the case when $n$ is odd and leave the even case to the reader. Let $(q_1,r_1,q_2,r_2) \in S$, we have to prove that $T_n(q_1,r_1,q_2,r_2)$ is not \pol{\Gs^+}-separable under \Gs-control. Hence, we fix $H \in \Gs$ such that $\veps \in H$ and prove $H\cap T_n(q_1,r_1,q_2,r_2)$ is not \pol{\Gs^+}-separable. Since $S$ is $(\bpoln,+)$-sound, we have $\betg(S) = S$ which implies that $(q_1,r_1,q_2,r_2) \in \betg(S)$. Hence, it follows from~\eqref{eq:gpauto} that $\{\veps\}$ is not \Gs-separable from $\lauto{\Bs^+_{S}}{(q_2,q_1,q_2)}{(r_2,r_1,r_2)}$. Since $H \in \Gs$ and $\veps \in H$, it follows that $H \cap \lauto{\Bs^+_{S}}{(q_2,q_1,q_2)}{(r_2,r_1,r_2)} \neq \emptyset$. If $n = 1$,  Lemma~\ref{lem:gpsinduc} yields $H \cap \alauto{q_2}{r_2} \neq \emptyset$. Since $T_1(q_1,r_1,q_2,r_2) = (\alauto{q_2}{r_2})$, we get that $H \cap T_1(q_1,r_1,q_2,r_2)$ is not \pol{\Gs^+}-separable as desired. If $n \geq 2$, then Lemma~\ref{lem:gpsinduc} implies that $(H \cap \alauto{s_1}{t_1})  \cdot T_{n-1}(s_{2},t_{2},s_{3},t_{3})$ is not \pol{\Gs^+}-separable. Thus, since $H \in \Gs \subseteq \pol{\Gs^+}$, one may verify that the $n$-tuple $(H \cap \alauto{q_2}{r_2}) \cdot (H \cap T_{n-1}(q_1,r_1,q_2,r_2))$ is not \pol{\Gs^+}-separable. By definition, this exactly says that $H \cap T_n(q_1,r_1,q_2,r_2)$ is not \pol{\Gs^+}-separable, completing the proof.

\medskip
\noindent
{\bf Second part: $\gnsbgp \subseteq S$.} Consider an arbitrary set $R \subseteq Q^4$. We say that $R$ is multiplication-closed to indicate that for every $(q,r,s,t) \in R$ and $(q',r',s',t') \in R$, if $r = q'$ and $t = s'$, then $(q,r',s,t') \in R$. Moreover, we say that an arbitrary set $R \subseteq Q^4$ is \emph{good} if it is multiplication-closed and there are $L \in \Gs$ such $\veps \in L$ and a \bpol{\Gs^+}-cover \Kb of $L$ which is separating for $R$. 

\begin{restatable}{proposition}{pbgcomp}\label{prop:bgpcomp}
	Let $R \subseteq Q^4$. If $R$ is good, then $\betg(R)$ is good as well.
\end{restatable}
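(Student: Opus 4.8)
This is the exact analogue of Proposition~\ref{prop:bgcomp} for the ``$+$''-setting, and the argument follows the same three-stage outline as in the proof of Theorem~\ref{thm:gauto}, with Proposition~\ref{prop:pgcov} replaced by Proposition~\ref{prop:pgpcov} and Lemma~\ref{lem:pgsound} replaced by Lemma~\ref{lem:pgpsound}. So fix a good set $R \subseteq Q^4$: it is multiplication-closed, and there are $U \in \Gs$ with $\veps \in U$ and a \bpol{\Gs^+}-cover \Vb of $U$ which is separating for $R$. We must produce $L \in \Gs$ with $\veps \in L$ and a \bpol{\Gs^+}-cover \Kb of $L$ which is separating for $\betg(R)$, and we must check that $\betg(R)$ is multiplication-closed. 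I would first dispatch the multiplication-closure of $\betg(R)$: given $(q,r,s,t),(q',r',s',t') \in \betg(R)$ with $r = q'$ and $t = s'$, one concatenates the runs in $\Bs^+_R$ witnessing the two conditions in~\eqref{eq:gpauto} to obtain runs witnessing $(q,r',s,t') \in \betg(R)$; non-separability of $\{\veps\}$ from the larger (concatenated-run) language follows from Lemma~\ref{lem:sepconcat} applied to $\Gs$ together with the fact that the run languages for $\veps$ are stable under concatenation at matching endpoints. (Here multiplication-closure of $R$ itself is what keeps the intermediate $\veps$-transitions of $\Bs^+_R$ available.)

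The construction of $L$ mirrors Lemma~\ref{lem:gbl}: let \Hb be the finite set of languages recognized by $\Bs^+_R$ from which $\{\veps\}$ is \Gs-separable, pick for each $H\in\Hb$ a separator $L_H\in\Gs$ with $\veps\in L_H$, $L_H\cap H=\emptyset$, and set $L = \bigcap_{H\in\Hb}L_H \in \Gs$; then $\veps\in L$ and, for every $(q,r,s,t)\in Q^4$, if $L$ meets both $\lauto{\Bs^+_R}{(q,s,q)}{(r,t,r)}$ and $\lauto{\Bs^+_R}{(s,q,s)}{(t,r,t)}$, then $\{\veps\}$ is not \Gs-separable from either, hence $(q,r,s,t)\in\betg(R)$ by~\eqref{eq:gpauto}. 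Next, the analogue of Lemma~\ref{lem:grun}: for each $(q,r)\in Q^2$ I would build $H_{q,r}\in\bpol{\Gs^+}$ with $\alauto{q}{r}\cap L\subseteq H_{q,r}$ such that $\alauto{s}{t}\cap H_{q,r}\neq\emptyset$ implies $\lauto{\Bs^+_R}{(q,s,q)}{(r,t,r)}\cap L\neq\emptyset$. The difference from the $\Gs$ case is that here I apply Proposition~\ref{prop:pgpcov} (with the \nfa \As and the group language $U\ni\veps$) rather than Proposition~\ref{prop:pgcov}: this produces a cover \Pb of $\alauto{q}{r}\cap L$ whose members have the form $P = w_1Uw_2U\cdots w_nUw_{n+1}$ for an \As-guarded decomposition $(w_1,\dots,w_{n+1})$ of some $w_P\in\alauto{q}{r}\cap L$, the guards $z_i$ being right/left \As-loops. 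Since $U\in\Gs\subseteq\pol{\Gs^+}$ and each $w_i$ is a single word (hence in $\pol{\Gs^+}$), Proposition~\ref{prop:bconcat} gives a \bpol{\Gs^+}-cover $\Kb_P$ of $P$ each of whose members is contained in some $V_0'w_1V_1'w_2\cdots w_nV_n'w_{n+1}$ with $V_i'\in\Vb$; then $H_{q,r}$ is the union of those members that meet $\alauto{q}{r}$. When I trace $x\in\alauto{s}{t}\cap K$ and $y\in\alauto{q}{r}\cap K$ through this decomposition, I recover a run of $\Bs^+_R$ labeled by $w_P$ from $(q,s,q)$ to $(r,t,r)$: the letter-transitions come from the runs of $x,y$ (synchronised on the $a_j$ inside each $w_i$ and on the "spine" run of $w_P$), and at each junction the $\veps$-transitions of $\gamma^+_R$ are available because the relevant quadruple lies in $R$ (by $V_i'$ being separating for $R$) \emph{and} because the guard $z_i\in A^+$ is a common loop on all five states involved, which is precisely the extra nonempty-word condition appearing in the definition of $\gamma^+_R$ in Figure~\ref{fig:pgauto}. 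Finally, as in the proof of Theorem~\ref{thm:gauto}, set $\Hb = \{H_{q,r}\mid (q,r)\in Q^2\}$, let \Kb be the partition of $L$ into classes of the equivalence "$u\sim v$ iff $u\in H_{q,r}\Leftrightarrow v\in H_{q,r}$ for all $(q,r)$" — each class is a Boolean combination of the $H_{q,r}\in\bpol{\Gs^+}$ and of $L\in\Gs\subseteq\bpol{\Gs^+}$, hence lies in \bpol{\Gs^+} — and check it is separating for $\betg(R)$: if $u\in\alauto{q}{r}\cap K$ and $v\in\alauto{s}{t}\cap K$ then $u\sim v$, $u\in H_{q,r}$, so $v\in H_{q,r}$, so $\lauto{\Bs^+_R}{(q,s,q)}{(r,t,r)}\cap L\neq\emptyset$; symmetrically $\lauto{\Bs^+_R}{(s,q,s)}{(t,r,t)}\cap L\neq\emptyset$; hence $(q,r,s,t)\in\betg(R)$ by the defining property of $L$.

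\textbf{Main obstacle.} The delicate point is the analogue of Lemma~\ref{lem:grun}: reconstructing a run of $\Bs^+_R$ from the decomposition $P = w_1Uw_2\cdots w_nUw_{n+1}$. Unlike in the $\Gs$ case, where $P = Ua_1U\cdots a_nU$ and every position between the $U$-blocks is a single labelled letter $a_i$ giving a direct letter-transition, here the blocks $w_i$ are arbitrary nonempty words, so I must run the synchronised triple-automaton $\Bs^+_R$ along each $w_i$ letter-by-letter, and, crucially, I must verify that the guard words $z_i$ coming from the \As-guarded decomposition satisfy the nonemptiness/common-loop side condition that licenses the $\veps$-transitions of $\gamma^+_R$ at the seams between consecutive blocks. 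This is exactly where the definition of $\Bs^+_S$ differs from $\Bs_S$, and where the proof of Theorem~\ref{thm:pgauto} is "technically more involved" as flagged in the text; the \As-loop machinery of Proposition~\ref{prop:pgpcov} was designed precisely to supply those guards, so the work is in carefully matching the loop conditions "right \As-loop for $w_i$, left \As-loop for $w_{i+1}$" to the five-fold state-loop requirement "$z\in A^+\cap\alauto{q_1}{q_1}\cap\alauto{q_2}{q_2}\cap\alauto{q_3}{q_3}\cap\alauto{r_2}{r_2}\cap\alauto{r_3}{r_3}\neq\emptyset$" in $\gamma^+_R$.
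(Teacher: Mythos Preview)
Your overall architecture is right, and the construction of $L$, the final assembly of $\Kb$ from the $H_{q,r}$, and the multiplication-closure of $\betg(R)$ all match the paper. The gap is precisely at the point you flag as the ``main obstacle'', and your proposed resolution of it does not work.

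You claim that at each seam the guard $z_i$ ``is a common loop on all five states involved''. It is not. Recall the definition: $z_i$ being a right \As-loop for $w_i$ means that for every pair $(q,r)$ with $w_i\in\alauto{q}{r}$ there is \emph{some} intermediate state $s$ with $z_i\in\alauto{s}{s}$ and $w_iz_i\in\alauto{q}{s}$, $z_i\in\alauto{s}{r}$. There is no reason this $s$ equals $r$. So for the three runs of $w_i$ ending in $p_i,t_i,r_i$ (the spine, $x$ and $y$), the states on which $z_i$ loops are in general \emph{new} states $p'_i,t'_i,r'_i$; likewise for $w_{i+1}$ you get new starting states $s'_{i+1},q'_{i+1}$. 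Consequently you cannot label the run in $\Bs^+_R$ by $w_P$ itself and use $((p_i,t_i,r_i),\veps,(p_i,s_{i+1},q_{i+1}))$.

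The paper repairs this in two moves. First, it replaces $w_P$ by a new word $w'=w'_1\cdots w'_{n+1}$ obtained by splicing in $u_i=z_i^{k}$ at each seam, with $k$ chosen so that $\alpha(u_i)=1_G$ for a group morphism $\alpha$ recognizing both $L$ and $U$; this keeps $w'\in L$ and, crucially, forces $u_i\in U$. The run of $w'$ in $\Bs^+_R$ then passes through the \emph{primed} states, where $u_i$ genuinely loops, so the five-fold loop condition for $\gamma^+_R$ is met. Second, the quadruple now needed is $(t'_i,s'_{i+1},r'_i,q'_{i+1})$, not $(t_i,s_{i+1},r_i,q_{i+1})$. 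Since $u_i\in U$, it lies in some $V\in\Vb$, and from $u_i\in\alauto{t'_i}{t_i}\cap\alauto{r'_i}{r_i}$ one gets $(t'_i,t_i,r'_i,r_i)\in R$; symmetrically $(s_{i+1},s'_{i+1},q_{i+1},q'_{i+1})\in R$; together with $(t_i,s_{i+1},r_i,q_{i+1})\in R$ from $x_i,y_i\in V_i$, \emph{multiplication-closure of $R$} yields $(t'_i,s'_{i+1},r'_i,q'_{i+1})\in R$. This is the place where multiplication-closure of $R$ is actually used --- not in the proof that $\betg(R)$ is multiplication-closed, which goes through with Lemma~\ref{lem:sepconcat} alone.
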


We use Proposition~\ref{prop:bgpcomp} to complete the proof. Let $S_0 = Q^4$ and $S_i = \betg(S_{i-1})$ for $i \geq 1$. By Lemma~\ref{lem:pginc}, we have $S_0 \supseteq S_1 \subseteq S_2 \supseteq \cdots$ and the is $n \in \nat$ such that $S_n$ is the greatest $(\bpoln,+)$-sound subset for \Gs and \As, \emph{i.e.} such that $S_n = S$. Since $S_0$ is good (it is clearly multiplication-closed and $\{A^*\}$ is a \bpol{\Gs^+}-cover of $A^* \in \Gs$ which is separating for $S_0 = Q^4$), Proposition~\ref{prop:bgpcomp} implies that $S_i$ is good for all $i \in \nat$.  Hence, $S = S_n$ is good. We get  $L \in \Gs$ such $\veps \in L$ and a \bpol{\Gs^+}-cover \Kb of $L$ which is separating for $S$. By Lemma~\ref{lem:gautosep}, this yields $\gnsbgp \subseteq S$ as desired.

\medskip

We turn to Proposition~\ref{prop:bgcomp}. Let $R \subseteq Q^4$ be a good set. We have to prove that $\betg(R)$ is multiplication-closed and build $L \in \Gs$ such $\veps \in L$ and a \bpol{\Gs^+}-cover \Kb of $L$ which is separating for $\betg(R)$. This proves that $\betg(R)$ is good as desired. Let us first prove that $\betg(R)$ is multiplication-closed (we use the hypothesis that $R$ is good).

\begin{lemma} \label{lem:mclos}
	The set $\betg(R) \subseteq Q^4$ is multiplication-closed.
\end{lemma}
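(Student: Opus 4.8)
The plan is to unwind the two definitions involved and reduce the claim to two instances of the principle that non-separability of $\{\veps\}$ is preserved under concatenation. Assume $(q,r,s,t)\in\betg(R)$ and $(q',r',s',t')\in\betg(R)$ with $r=q'$ and $t=s'$; we must show $(q,r',s,t')\in\betg(R)$, that is, that both lines of~\eqref{eq:gpauto} hold for the quadruple $(q,r',s,t')$ (with $S=R$). Writing $\Bs^+_R=(Q^3,\gamma^+_R)$, this amounts to checking that $\{\veps\}$ is not \Gs-separable from $\lauto{\Bs^+_R}{(q,s,q)}{(r',t',r')}$ and that $\{\veps\}$ is not \Gs-separable from $\lauto{\Bs^+_R}{(s,q,s)}{(t',r',t')}$.

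For the first of these, I would exploit $r=q'$ and $t=s'$ to line up endpoints. From $(q,r,s,t)\in\betg(R)$ we know $\{\veps\}$ is not \Gs-separable from $L_1:=\lauto{\Bs^+_R}{(q,s,q)}{(r,t,r)}$, and from $(q',r',s',t')\in\betg(R)$, noting that the source triple $(q',s',q')$ equals $(r,t,r)$, we know $\{\veps\}$ is not \Gs-separable from $L_2:=\lauto{\Bs^+_R}{(r,t,r)}{(r',t',r')}$. Concatenating accepting runs of $\Bs^+_R$ at the shared state $(r,t,r)$ gives the inclusion $L_1L_2\subseteq\lauto{\Bs^+_R}{(q,s,q)}{(r',t',r')}$. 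Since \Gs, being a group \vari, is in particular a \pvari, and since $\{\veps\}=\{\veps\}\{\veps\}$, Lemma~\ref{lem:sepconcat} applied with $\{\veps\},\{\veps\}$ as the left-hand languages and $L_1,L_2$ as the right-hand ones yields that $\{\veps\}$ is not \Gs-separable from $L_1L_2$; and any language separating $\{\veps\}$ from a superlanguage of $L_1L_2$ would also separate it from $L_1L_2$, so $\{\veps\}$ is not \Gs-separable from $\lauto{\Bs^+_R}{(q,s,q)}{(r',t',r')}$ either. The second required non-separability follows by the completely symmetric computation, using instead the triples $(s,q,s)$, $(t,r,t)$ --- which equals the source triple $(s',q',s')$ --- and $(t',r',t')$.

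I do not anticipate a genuine obstacle here: the only point that requires attention is the bookkeeping of the three coordinates, namely checking that the ``pivot'' triple produced by the first quadruple, $(r,t,r)$ (resp.\ $(t,r,t)$), coincides with the source triple $(q',s',q')$ (resp.\ $(s',q',s')$) of the second quadruple, which is exactly what $r=q'$ and $t=s'$ guarantee. Beyond Lemma~\ref{lem:sepconcat} and the trivial composability of runs in an $\veps$-NFA, no further ingredient is needed; in particular the argument uses neither the hypothesis that $R$ is good nor any of the special structure of $\gamma^+_R$, so it actually establishes that $\betg(R)$ is multiplication-closed for every $R\subseteq Q^4$.
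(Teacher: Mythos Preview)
Your proof is correct and follows essentially the same approach as the paper's: both arguments use the hypotheses to line up the pivot triple $(r,t,r)=(q',s',q')$ (resp.\ $(t,r,t)=(s',q',s')$), apply the concatenation lemma for non-separability (the paper cites Lemma~\ref{lem:tupconcat}, you cite the equivalent $n=2$ case Lemma~\ref{lem:sepconcat}), and pass to the superlanguage via run composition. Your closing remark is also on point: despite the paper's parenthetical ``we use the hypothesis that $R$ is good,'' the proof in fact makes no use of it.
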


\begin{proof}
	Let $(q,r,s,t) \in \betg(R)$ and $(q',r',s',t') \in \betg(R)$ such that $r = q'$ and $t = s'$. We need to prove that $(q,r',s,t') \in \betg(R)$. By~\eqref{eq:gpauto} in the definition, this boils down to proving that $\{\veps\}$ is \emph{not} \Gs-separable from \lauto{\Bs^+_R}{(s,q,s)}{(t',r',t')} and \lauto{\Bs^+_R}{(q,s,q)}{(r',t',r')}. By symmetry, we only prove the former. By hypothesis on $(q,r,s,t)$ and $(q',r',s',t')$, we get from~\eqref{eq:gpauto} that $\{\veps\}$ is \emph{not} \Gs-separable from both \lauto{\Bs^+_R}{(s,q,s)}{(t,r,t)} and \lauto{\Bs^+_R}{(s',q',s')}{(t',r',t')}. Since \Gs is a \vari it then follows from Lemma~\ref{lem:tupconcat} that $\{\veps\}$ is not \Gs-separable from the concatenation $\lauto{\Bs^+_R}{(s,q,s)}{(t,r,t)}\lauto{\Bs^+_R}{(s',q',s')}{(t',r',t')}$. Finally, since $(t,r,t) = (s',q',s')$, we know that $\lauto{\Bs^+_R}{(s,q,s)}{(t,r,t)}\lauto{\Bs^+_R}{(s',q',s')}{(t',r',t')} \subseteq \lauto{\Bs^+_R}{(s,q,s)}{(t',r',t')}$. We conclude that $\{\veps\}$ is \emph{not} \Gs-separable from both \lauto{\Bs^+_R}{(s,q,s)}{(t',r',t')} as desired.
\end{proof}

We now build $L \in \Gs$ such that $\veps \in L$ (this part is independent from our hypothesis on $R$).

\begin{restatable}{lemma}{pgbl}\label{lem:pgbl}
	There exists $L \in \Gs$ such that $\veps \in L$ and for every $(q,r,s,t) \in Q^4$, if $\lauto{\Bs^+_R}{(q,s,q)}{(r,t,r)} \cap L \neq \emptyset$ and $\lauto{\Bs^+_R}{(s,q,s)}{(t,r,t)} \cap L \neq \emptyset$, then $(q,r,s,t) \in \betg(R)$.
\end{restatable}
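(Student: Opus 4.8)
The plan is to reuse verbatim the strategy of Lemma~\ref{lem:gbl}, substituting the automaton $\Bs^+_R$ for $\Bs_R$ and condition~\eqref{eq:gpauto} for~\eqref{eq:gauto}; no new idea is needed, and in particular the hypothesis that $R$ is good plays no role in this lemma. The single point to exploit is that $\Bs^+_R$ has finite state set $Q^3$, so that (despite its $\veps$-transitions) it recognizes only finitely many languages of the form $\lauto{\Bs^+_R}{I}{F}$.

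First I would set $\Hb$ to be the \emph{finite} collection of all languages recognized by $\Bs^+_R$ from which $\{\veps\}$ is $\Gs$-separable. For each $H \in \Hb$, fix a separator $L_H \in \Gs$ with $\veps \in L_H$ and $L_H \cap H = \emptyset$. Since $\Gs$ is a \vari, hence closed under finite intersections, the language $L = \bigcap_{H \in \Hb} L_H$ lies in $\Gs$, and it contains $\veps$ because each $L_H$ does. This $L$ is the witness claimed by the lemma.

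Next I would verify the stated implication. Fix $(q,r,s,t) \in Q^4$ and assume that both $\lauto{\Bs^+_R}{(q,s,q)}{(r,t,r)} \cap L \neq \emptyset$ and $\lauto{\Bs^+_R}{(s,q,s)}{(t,r,t)} \cap L \neq \emptyset$. If $\{\veps\}$ were $\Gs$-separable from $\lauto{\Bs^+_R}{(q,s,q)}{(r,t,r)}$, this language would belong to $\Hb$, so $L$ would be contained in the corresponding witness $L_H$ and therefore disjoint from it, contradicting the first assumption; the same argument applied to $\lauto{\Bs^+_R}{(s,q,s)}{(t,r,t)}$ rules out $\Gs$-separability there too. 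Thus $\{\veps\}$ is $\Gs$-separable from neither language, which is exactly condition~\eqref{eq:gpauto} in the definition of \betg, whence $(q,r,s,t) \in \betg(R)$, as required.

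I do not anticipate any real obstacle: the argument is a finiteness-plus-diagonalization trick and uses only that $\Gs$ is closed under finite intersection (being a \vari), with no group-theoretic input and no structural property of $\Bs^+_R$ beyond its finite state set. The only thing to record carefully is that $\Bs^+_R$, even with $\veps$-transitions, gives rise to finitely many recognized languages, so that $\Hb$ is genuinely finite.
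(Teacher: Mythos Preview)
Your proposal is correct and matches the paper's own proof essentially verbatim: define $\Hb$ as the finite set of languages recognized by $\Bs^+_R$ from which $\{\veps\}$ is \Gs-separable, take $L = \bigcap_{H\in\Hb} L_H$ for chosen separators $L_H\in\Gs$, and read off~\eqref{eq:gpauto} from $L$ intersecting the two relevant languages. Your remark that the hypothesis on $R$ being good is not used here is also accurate.
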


\begin{proof}
	Let \Hb be the \emph{finite} set of all languages recognized by $\Bs^+_R$ such that $\{\veps\}$ is \Gs-separable from $H$. For every $H \in \Hb$, there exists $L_H \in \Gs$ such that $\veps \in L_H$ and $L_H \cap H = \emptyset$. We define $L = \bigcap_{H \in \Hb} L_H \in \Gs$. It is clear that $\veps \in L$. Moreover, given $(q,r,s,t) \in Q^4$, if $\lauto{\Bs^+_R}{(q,s,q)}{(r,t,r)} \cap L \neq \emptyset$ and $\lauto{\Bs^+_R}{(s,q,s)}{(t,r,t)} \cap L \neq \emptyset$, it follows from the definition of $L$ that $\{\veps\}$ is not \Gs-separable from both \lauto{\Bs^+_R}{(q,s,q)}{(r,t,r)} and \lauto{\Bs^+_R}{(s,q,s)}{(t,r,t)}. It then follows from~\eqref{eq:gpauto} in the definition of \betg that $(q,r,s,t) \in \betg(R)$.
\end{proof}

We fix $L \in \Gs$ as described in Lemma~\ref{lem:pgbl} for the remainder of the proof. We now build the \bpol{\Gs^+}-cover \Kb of $L$ using the hypothesis that $R$ is good and Proposition~\ref{prop:pgpcov}.

\begin{restatable}{lemma}{pgrun}\label{lem:pgrun}
	For all $(q,r) \in Q^2$, there is $H_{q,r} \in \bpol{\Gs^+}$ such that $\alauto{q}{r}\cap L\subseteq H_{q,r}$ and for all pairs $(s,t) \in Q^2$, if $\alauto{s}{t} \cap H_{q,r} \neq \emptyset$ then $\lauto{\Bs^+_R}{(q,s,q)}{(r,t,r)} \cap L \neq \emptyset$.
\end{restatable}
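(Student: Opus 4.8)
The plan is to follow the pattern of the proof of Lemma~\ref{lem:grun}, with Proposition~\ref{prop:pgcov} replaced by its \bpol{\Gs^+}-analogue Proposition~\ref{prop:pgpcov} and the automaton $\Bs_R$ by $\Bs^+_R$. Fix $(q,r)$. Since $R$ is good there are $U \in \Gs$ with $\veps \in U$ and a \bpol{\Gs^+}-cover $\Vb$ of $U$ that is separating for $R$; replacing each $V \in \Vb$ by $V \cap U$ (which keeps $\Vb$ a \bpol{\Gs^+}-cover of $U$ — as $U \in \Gs \subseteq \bpol{\Gs^+}$ — and keeps it separating for $R$) we may assume $V \subseteq U$ for all $V \in \Vb$. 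Applying Proposition~\ref{prop:pgpcov} to $\alauto{q}{r} \cap L$, the \nfa $\As$ and the group language $U$ gives a cover $\Pb$ of $\alauto{q}{r}\cap L$ whose members have the form $P = w_1 U w_2 U \cdots w_n U w_{n+1}$, where $w_P := w_1 \cdots w_{n+1} \in \alauto{q}{r} \cap L$ and $(w_1,\dots,w_{n+1})$ is an \As-guarded decomposition of $w_P$, with associated nonempty right/left \As-loops $z_1,\dots,z_n$.

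Each $P$ is then refined. First I would observe that, using $\{\veps\}U = U\{\veps\} = U$ to absorb the $U$'s into position, $P$ can be written as a genuine marked product $M_0 b_1 M_1 \cdots b_m M_m$ in which $b_1\cdots b_m = w_P$, each $M_j$ is $\{\veps\} \in \Gs^+$ or $U \in \Gs \subseteq \Gs^+ \subseteq \pol{\Gs^+}$, and $M_j = U$ exactly at the joints of the decomposition. Feeding Proposition~\ref{prop:bconcat} the cover $\{\{\veps\}\}$ for each $\{\veps\}$-factor and $\Vb$ for each $U$-factor then yields a \bpol{\Gs^+}-cover $\Kb_P$ of $P$ such that every $K \in \Kb_P$ satisfies $K \subseteq w_1 V^{(1)} w_2 V^{(2)} \cdots w_n V^{(n)} w_{n+1}$ with $V^{(1)},\dots,V^{(n)} \in \Vb$. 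I set $H_{q,r}$ to be the union of the $K \in \bigcup_{P\in\Pb}\Kb_P$ meeting $\alauto{q}{r}$; it lies in \bpol{\Gs^+}, and $\alauto{q}{r}\cap L \subseteq H_{q,r}$ because $\Pb$ covers $\alauto{q}{r}\cap L$ and each $\Kb_P$ covers $P$.

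For the second property, fix $(s,t)$ with $\alauto{s}{t}\cap H_{q,r}\neq\emptyset$, pick $P$, $K\in\Kb_P$, $x \in \alauto{s}{t}\cap K$, $y \in \alauto{q}{r}\cap K$, and write $x = w_1x_1\cdots w_nx_nw_{n+1}$, $y = w_1y_1\cdots w_ny_nw_{n+1}$ with $x_i,y_i \in V^{(i)} \subseteq U$. Since $w_P \in L$, it is enough to produce \emph{some} word of $L$ lying in $\lauto{\Bs^+_R}{(q,s,q)}{(r,t,r)}$. I would build a run of $\Bs^+_R$ reading a word $w_1z_1^kw_2z_2^k\cdots z_n^kw_{n+1}$ with $k$ a large multiple of the order of the group recognising $L$, so this word is still in $L$. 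The blocks $w_i$ and powers $z_i^k$ are read by letter transitions, the first component following a run of this padded word, the second a run of the correspondingly padded version of $x$ (using the weak loop properties of $z_i$ to keep it in $\alauto{s}{t}$) and the third of $y$; one $\veps$-transition is placed inside each block $z_i^k$, making the second and third components jump over $x_i$ and $y_i$. At the $i$-th $\veps$-transition the required membership $(\cdot,\cdot,\cdot,\cdot)\in R$ holds because $x_i$ and $y_i$ lie in the single piece $V^{(i)}\in\Vb$, which is separating for $R$ — exactly as in Lemma~\ref{lem:grun}; the additional side condition of $\Bs^+_R$, a common nonempty loop at the five states involved, is supplied by a suitable power of $z_i$. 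Then the padded word lies in $\lauto{\Bs^+_R}{(q,s,q)}{(r,t,r)}\cap L$, as desired.

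\smallskip
The main obstacle is precisely this last run construction. Unlike the $\veps$-transitions of $\Bs_R$, those of $\Bs^+_R$ carry the side condition $A^+ \cap \alauto{q_1}{q_1}\cap\cdots\cap\alauto{r_3}{r_3}\neq\emptyset$, and the naive runs of $w_P$, $x$ and $y$ do \emph{not} pass through common loop states at the joints. This is why one has to pad $w_P$ — and, in lock-step, $x$ and $y$ — with powers of the loop words $z_i$ and then argue, via the right/left \As-loop properties of $z_i$ and a pigeonhole argument, that the three runs can be chosen so that at every joint the five boundary states simultaneously admit a loop labelled by a common power of $z_i$; the constraints that the padded word stays in $L$ and that the factors skipped by the $\veps$-transitions stay inside one piece of $\Vb$ are what make the bookkeeping in this step nontrivial. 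Everything else is a routine transcription of the proof of Lemma~\ref{lem:grun}.
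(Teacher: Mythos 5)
Your overall skeleton is the paper's: cover $\alauto{q}{r}\cap L$ via Proposition~\ref{prop:pgpcov}, refine each $P=w_1U\cdots w_nUw_{n+1}$ via Proposition~\ref{prop:bconcat}, take $H_{q,r}$ as the union of the pieces meeting $\alauto{q}{r}$, and witness $\lauto{\Bs^+_R}{(q,s,q)}{(r,t,r)}\cap L\neq\emptyset$ by a version of $w_P$ padded with powers of the loop words $z_i$. But the step you yourself flag as ``nontrivial bookkeeping'' is a genuine gap, and the fix you sketch does not work. The quadruple labelling the $\veps$-transition at the $i$-th joint cannot be the original $(t_i,s_{i+1},r_i,q_{i+1})$ certified by $x_i,y_i\in V^{(i)}$ ``exactly as in Lemma~\ref{lem:grun}'': the side condition of $\Bs^+_R$ requires a nonempty word looping at all five states involved, and no power of $z_i$ (indeed no word at all) need loop at $t_i$, $s_{i+1}$, $r_i$ or $q_{i+1}$. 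The right/left \As-loop property only yields auxiliary states $t'_i,s'_{i+1}$ (and $r'_i,q'_{i+1}$) at which $z_i$ loops, together with $z_i\in\alauto{t'_i}{t_i}$ and $z_i\in\alauto{s_{i+1}}{s'_{i+1}}$; no pigeonhole argument can create loops at the original boundary states themselves (they may have no returning path at all). So the jump must be taken between the primed loop states, and then $x_i,y_i\in V^{(i)}$ alone does not put $(t'_i,s'_{i+1},r'_i,q'_{i+1})$ in $R$. Your fallback idea—keeping the skipped, padded factors inside one piece of \Vb—also fails: \Vb is an arbitrary finite cover with no closure properties, so $z_i^{a}x_iz_i^{b}$ and $z_i^{a'}y_iz_i^{b'}$ need not share a piece.

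The paper closes this gap with two ingredients absent from your sketch. First, the padding exponent is chosen with respect to a morphism recognizing \emph{both} $L$ and $U$ (not just $L$), so that $u_i=z_i^k$ lies in $U$ while keeping the padded word in $L$; being a single word of $U$, $u_i$ lies in some piece of \Vb, and since $u_i\in\alauto{t'_i}{t_i}\cap\alauto{r'_i}{r_i}$ (and symmetrically on the other side of the joint), the separating property yields the ``transport'' quadruples $(t'_i,t_i,r'_i,r_i)\in R$ and $(s_{i+1},s'_{i+1},q_{i+1},q'_{i+1})\in R$. Second, these are composed with $(t_i,s_{i+1},r_i,q_{i+1})\in R$ using the hypothesis that $R$ is \emph{multiplication-closed}—which is exactly why, in the $\Gs^+$ setting, the definition of a good set includes multiplication-closure and why the paper must separately prove that the operator preserves it (Lemma~\ref{lem:mclos}, and Fact~\ref{fct:thezi} is where the composition happens). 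Your proposal never invokes this hypothesis, and without it the quadruple at the jump cannot be certified; splitting the joint into several $\veps$-transitions does not help either, since the intermediate one would again need loops at the unprimed states. A smaller inaccuracy: the witness word is the padded $w'$ only—its second and third coordinates in $\Bs^+_R$ never read $x_i$ or $y_i$ (the $\veps$-transitions between primed states do that work), so there is no ``padded version of $x$ and $y$'' to keep in $\alauto{s}{t}$ or $\alauto{q}{r}$; what must be tracked instead are the primed runs on $w'$ supplied by the loop properties.
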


\begin{proof}
	Since $R$ is good, there are $U \in \Gs$ such that $\veps\in U$ and a \bpol{\Gs^+}-cover \Vb of $U$ which is separating for $R$. We use them to build $H_{q,r}$. Since $U \in \Gs$ and $\veps \in U$ Proposition~\ref{prop:pgpcov} yields a cover \Pb of $\alauto{q}{r} \cap L$ such that for each $P \in \Pb$, there exists a word $w_P \in \alauto{q}{r} \cap L$ and an \As-guarded decomposition $(w_1,\dots, w_{n+1})$ of $w_P$ for some $n \in \nat$ such that  $P = w_1 U \cdots w_nUw_{n+1}$ (if $n=0$, then $P=\{w_1\}$). Now, for every $P \in \Pb$, we build a \bpol{\Gs^+}-cover $\Kb_P$ of $P$ from the cover $\Vb$ of $U$.  Let $(w_1,\dots, w_{n+1})$ be the \As-guarded decomposition of $w_P$ such that $P = w_1 U \cdots w_nUw_{n+1}$ (in particular, this means that $P$ is of the form $U_0a_1U_1 \cdots a_mU_m$ where $a_1 \cdots a_m = w_1 \cdots w_n$ and $U_i = U$ or $U_i = \{\veps\}$ for each $i \leq m$). By definition, \Vb is a \bpol{\Gs^+}-cover of $U \in \Gs \subseteq \pol{\Gs^+}$. Moreover, we have $\{\veps\} \in \Gs^+ \subseteq \pol{\Gs^+}$ by definition of $\Gs^+$ and $\{\{\veps\}\}$ is a \bpol{\Gs^+}-cover of $\{\veps\}$. Hence, Proposition~\ref{prop:bconcat} yields a \bpol{\Gs^+}-cover $\Kb_P$ of $P = w_1 U \cdots w_nUw_{n+1}$ such that for every $K \in \Kb_P$, there exist $V_1,\dots,V_n \in \Vb$ such that $K \subseteq w_1 V_1 \cdots w_nV_nw_{n+1}$. We define $H_{q,r}$ as the union of all languages $K$ such that $K \in \Kb_P$ for some $P \in \Pb$ and $\alauto{q}{r} \cap K \neq \emptyset$. Clearly, $H_{q,r} \in \bpol{\Gs^+}$. Moreover, since \Pb is a cover of $\alauto{q}{r} \cap L$, and $\Kb_P$ is a cover of $P$ for each $P \in \Pb$, it is clear that $\alauto{q}{r}\cap L\subseteq H_{q,r}$. We now fix $(s,t) \in Q^2$ such that $\alauto{s}{t} \cap H_{q,r} \neq \emptyset$ and show that $\lauto{\Bs^+_R}{(q,s,q)}{(r,t,r)} \cap L \neq \emptyset$. By definition of $H_{q,r}$, we get $P \in \Pb$ and $K \in \Kb_P$ such that $\alauto{q}{r} \cap K\neq \emptyset$ and $\alauto{s}{t}\cap K \neq \emptyset$. By definition, $P = w_1 U \cdots w_nUw_{n+1}$ where $(w_1,\dots, w_{n+1})$ is an \As-guarded decomposition of $w_P \in \alauto{q}{r} \cap L$. We use $w_P$ to build a new word $w' \in \lauto{\Bs^+_R}{(q,s,q)}{(r,t,r)} \cap L$. 
	
	We fix $x \in \alauto{s}{t}\cap K$ and $y \in \alauto{q}{r} \cap K$. Since $w_P=w_1 \cdots w_{n+1}$ and $w_P \in \alauto{q}{r}$, we may decompose the corresponding run in \As: we get $p_0,\dots,p_{n+1} \in Q$ such that $p_0 = q$, $p_{n+1} = r$ and $w_i \in \alauto{p_{i-1}}{p_{i}}$ for $1 \leq i \leq n+1$. Moreover, since $K \in \Kb_P$, we have $K \subseteq  w_1 V_1 \cdots w_nV_nw_{n+1}$ for $V_1, \dots,V_n \in \Vb$ (if $n = 0$, then $K \subseteq \{w_1\}$). Since $x,y \in K$, we get $x_i,y_i \in V_i$ for $1 \leq i \leq n$ such that $x = w_1x_1 \cdots w_n x_nw_{n+1}$ and $y = w_1y_1 \cdots w_n y_nw_{n+1}$. Since $x \in \alauto{s}{t}$, we get $s_1,t_1,\dots,s_{n+1},t_{n+1} \in Q$ where $s_1 = s$, $t_{n+1} = t$, $w_i \in \alauto{s_i}{t_i}$ for $1 \leq i \leq n+1$ and $x_i \in \alauto{t_i}{s_{i+1}}$ for $1 \leq i \leq n$. Symmetrically, since $y \in \alauto{q}{r}$, we get $q_1,r_1,\dots,q_{n+1},r_{n+1} \in Q$ with $q_1 = q$, $r_{n+1} = r$, $w_i \in \alauto{q_i}{r_i}$ for $1 \leq i \leq n+1$, and $y_i \in \alauto{r_i}{q_{i+1}}$ for $1 \leq i \leq n$. First, note that when $n = 0$, we have $w_P = w_1$ and the above implies that $w_P \in \alauto{q}{r}$ and $w_P \in \alauto{s}{t}$. Thus, $w_P \in \lauto{\Bs^+_R}{(q,s,q)}{(r,t,r)}$ by definition of the labeled transition in $\Bs^+_R$. This concludes the proof since we also know that $w_P \in L$. We now assume that $n \geq 1$.

	By hypothesis, $(w_1,\dots, w_{n+1})$ is an \As-guarded decomposition. Hence, for $1 \leq i \leq n$, we get $z_i \in A^+$ which is a right \As-loop for $w_i$ and a left \As-loop for $w_{i+1}$. Let $\alpha: A^* \to G$ be a morphism into a finite group $G$ recognizing both $L$ and $U$ (recall that $L$ and $U$ are group languages). Since $g$ is a finite group, there exists $k \geq 1$ such that for each $1 \leq i \leq n$, we have $\alpha(z_i^k) = 1_G$. We let $u_i = z_i^k$ for $1 \leq i \leq n$. One may verify that $u_i$ remains a right \As-loop for $w_i$ and a left \As-loop for $w_{i+1}$. Moreover,  since $\alpha(u_i) = 1_G$, we know that $u_i \in U$ (recall that $\veps \in U$ and $U$ is recognized by $\alpha$). We let $w'_1 = w_1u_1$, $w'_{n+1} =u_nw_{n+1}$ and $w'_i=u_{i-1}w_iu_i$ for $2 \leq i \leq n$. Finally, we let $w' = w'_1 \cdots w'_nw'_{n+1}$ and show that $w' \in L \cap \lauto{\Bs^+_R}{(q,s,q)}{(r,t,r)}$ which completes the proof. First, since $\alpha(u_i) = 1_G$ for $1 \leq i \leq n$, it is immediate that $\alpha(w') = \alpha(w_1 \cdots w_nw_{n+1}) = \alpha(w_P)$. Since $w_P \in L$ which is recognized by $\alpha$, we get $w' \in L$.

	We now concentrate on proving that $w' \in \lauto{\Bs^+_R}{(q,s,q)}{(r,t,r)}$. For $1 \leq i \leq n+1$, we know that $w_i$ belongs to \alauto{p_{i-1}}{p_{i}}, \alauto{s_i}{t_i} and \alauto{q_i}{r_i}. Hence, one may verify from the definition of left/right \As-loops that there are $p'_0,\dots,p'_{n+1} \in Q$, $s'_1,t'_1,\dots,s'_{n+1},t'_{n+1} \in Q$ and $q'_1,r'_1,\dots,q'_{n+1},r'_{n+1} \in Q$ such that,
	\begin{itemize}
		\item $p'_0 = p_0 = q$, $p'_{n+1} = p_{n+1} = r$, $w'_i \in \alauto{p'_{i-1}}{p'_{i}}$ for $1 \leq i \leq n+1$ and $u_i \in \alauto{p'_i}{p'_i}$ for $1 \leq i \leq n$.
		\item $s'_0 = s_0 = s$, $t'_{n+1} = t_{n+1} = t$, $w'_i \in \alauto{s'_{i}}{t'_{i}}$ for $1 \leq i \leq n+1$ and we have $u_i \in \alauto{t'_i}{t'_i} \cap \alauto{t'_i}{t_i} \cap \alauto{s_{i+1}}{s'_{i+1}}\cap \alauto{s'_{i+1}}{s'_{i+1}}$ for $1 \leq i \leq n$.
		\item $q'_0 = q_0 = q$, $r'_{n+1} = r_{n+1} = r$, $w'_i \in \alauto{q'_{i}}{r'_{i}}$ for $1 \leq i \leq n+1$ and we have $u_i \in \alauto{r'_i}{r'_i} \cap \alauto{r'_i}{r_i} \cap \alauto{q_{i+1}}{q'_{i+1}}\cap \alauto{q'_{i+1}}{q'_{i+1}}$ for $1 \leq i \leq n$.
	\end{itemize}
	By definition of the labeled transitions in the \nfa $\Bs^+_R$, it is straightforward to verify that we have $w'_i \in \lauto{\Bs^+_R}{(p'_{i-1},s'_i,q'_i)}{(p'_{i},t'_i,r'_i)}$ for $1 \leq i \leq n+1$. We now prove the following fact. 
	
	\begin{fact} \label{fct:thezi}
		For $1 \leq i \leq n$, we have $((p'_{i},t'_i,r'_i),\veps,(p'_{i},s'_{i+1},q'_{i+1})) \in \gamma^+_R$.
	\end{fact}
	
	\begin{proof}
		We fix $i$ for the proof. Since we know that $u_i \in A^+$ belongs to  \alauto{p'_i}{p'_i},  \alauto{t'_i}{t'_i},  \alauto{r'_i}{r'_i},  \alauto{s'_{i+1}}{s'_{i+1}} and \alauto{q'_{i+1}}{q'_{i+1}}, it suffices to prove that $(t'_i,s'_{i+1},r'_i,q'_{i+1}) \in R$. This will imply that $((p'_{i},t'_i,r'_i),\veps,(p'_{i},s'_{i+1},q'_{i+1})) \in \gamma^+_R$ by definition of $\gamma^+_R$. Recall that $x_i \in \alauto{t_i}{s_{i+1}}$, $y_i \in \alauto{r_i}{q_{i+1}}$ and $x_i,y_i \in V_i$. Since $V_i \in \Vb$ which is \emph{separating} for $R$, it follows that $(t_i,s_{i+1},r_i,q_{i+1}) \in R$. Moreover, $u_i \in U$ which yields $V \in \Vb$ such that $u_i \in V$ since \Vb is a cover of $U$. Hence, since $u_i \in \alauto{t'_i}{t_i}$ and $u_i\in \alauto{r'_i}{r_i}$. The hypothesis that \Vb is separating for $R$ also yields $(t'_i,t_i,r'_i,r_i) \in R$. Symmetrically, one may use the hypotheses that $u_i \in \alauto{s_{i+1}}{s'_{i+1}}$ and $u_i \in \alauto{q_{i+1}}{q'_{i+1}}$ to verify that $(s_{i+1},s'_{i+1},q_{i+1},q'_{i+1}) \in R$. Altogether, since $R$ is multiplication-closed, we get $(t'_i,s'_{i+1},r'_i,q'_{i+1}) \in R$ as desired.
	\end{proof}

	In view of Fact~\ref{fct:thezi}, we obtain $w' = w'_1 \cdots w'_nw'_{n+1} \in \lauto{\Bs^+_R}{(p'_{0},s'_1,q'_1)}{(p'_{n+1},t'_{n+1},r'_{n+1})}$. This exactly says that $w' \in \lauto{\Bs^+_R}{(q,s,q)}{(r,t,r)}$ which completes the proof.
\end{proof}

We may now build \Kb. Let $\Hb = \{H_{q,r} \mid (q,r) \in Q^2\}$. Consider the following equivalence $\sim$ defined on $L$: given $u,v\in L$, we let $u \sim v$ if and only if $u \in H_{q,r} \Leftrightarrow v \in H_{q,r}$ for every $(q,r) \in Q^2$. We let \Kb as the partition of $L$ into $\sim$-classes. Clearly, each $K \in \Kb$ is a Boolean combination involving the languages in \Hb (which belong to \bpol{\Gs^+}) and $L \in \Gs$. Hence, $\Kb$ is a \bpol{\Gs^+}-cover of $L$. It remains to prove that it is separating for $\betg(R)$. Let $q,r,s,t \in Q$ and $K \in \Kb$ such that there are $u \in \alauto{q}{r} \cap K$ and $v \in \alauto{s}{t} \cap K$. By definition of \Kb, we have $u,v \in L$ and $u \sim v$. In particular, we have $u \in \alauto{q}{r} \cap L$ which yields $u \in H_{q,r}$ by definition in Lemma~\ref{lem:pgrun}. Together with $u \sim v$, this yields $v \in H_{q,r}$. Hence, $\alauto{s}{t} \cap H_{q,r} \neq \emptyset$ and Lemma~\ref{lem:pgrun} yields $\lauto{\Bs^+_R}{(q,s,q)}{(r,t,r)} \cap L \neq \emptyset$. One may now use a symmetrical argument to obtain $\lauto{\Bs^+_R}{(s,q,s)}{(t,r,t)} \cap L \neq \emptyset$. By definition of $L$ in Lemma~\ref{lem:pgbl}, this yields $(q,r,s,t) \in \tautg(R)$, completing the proof.

\end{document}